\newtheorem{theorem}{Theorem}
\newtheorem{corollary}[theorem]{Corollary}
\newtheorem{lemma}[theorem]{Lemma}
\newtheorem{observation}[theorem]{Observation}
\newcommand{\ACK}{{\tt ACK}}
\newcommand{\NACK}{{\tt NACK}}
\newcommand{\send}{{\tt send}}
\newcommand{\receive}{{\tt receive}}
\newcommand{\maxser}{{\tt NS}}
\newcommand{\PUSH}{{\tt PUSH}}
\newcommand{\POP}{{\tt POP}}
\newcommand{\ENQUEUE}{{\tt ENQUEUE}}
\newcommand{\DEQUEUE}{{\tt DEQUEUE}}
\newcommand{\TOKEN}{{\sf TOKEN}}
\newcommand{\ENQ}{{\tt ENQ}}
\newcommand{\DEQ}{{\tt DEQ}}
\newcommand{\ENQT}{{\tt ENQ\_T}}
\newcommand{\ENQH}{{\tt ENQ\_H}}
\newcommand{\DEQT}{{\tt DEQ\_T}}
\newcommand{\DEQH}{{\tt DEQ\_H}}
\newcommand{\HEAD}{{\tt HEAD\_TOKEN}}
\newcommand{\TAIL}{{\tt TAIL\_TOKEN}}
\newcommand{\INSERT}{{\tt INSERT}}
\newcommand{\SEARCH}{{\tt SEARCH}}
\newcommand{\DELETE}{{\tt DELETE}}
\newcommand{\BDIRDELETE}{{\tt BlockDirDelete}}
\newcommand{\DIRDELETE}{{\tt DirDelete}}
\newcommand{\DIRINSERT}{{\tt DirInsert}}
\newcommand{\DIRSEARCH}{{\tt DirSearch}}
\newcommand{\CCSYNCH}{{\sf CC-Synch}}
\newcounter{linenum}
\def\codeTabSpace{\hspace*{6mm}}
\newenvironment{code}%
{\begin{tabbing}%
\codeTabSpace \= \hspace*{70mm} \= \hspace*{42mm} \= \kill%
}%
{\end{tabbing}%
}
\newcounter{ind}
\newcommand{\cm}[1]{\hfill \`{\tiny/* #1 */}}
\newcommand{\ocm}[1]{\hfill \`{\tiny/* #1}}
\newcommand{\ccm}[1]{\hfill \`{\tiny#1 */}}
\newcommand{\n}{\addtocounter{ind}{3}\hspace*{3mm}}
\newcommand{\p}{\addtocounter{ind}{-3}\hspace*{-3mm}}
\newcommand{\nl}{\\\stepcounter{linenum}{\scriptsize \arabic{linenum}}\>\hspace*{\value{ind}mm}}
\newcommand{\ul}{\\\>\hspace*{\value{ind}mm}}
\newcommand{\bl}{\\[-2.0mm]\>\hspace*{\value{ind}mm}}
\newcommand{\firstline}{\stepcounter{linenum}{\scriptsize \arabic{linenum}}\>}
\newcommand{\lref}[1]{\linenumref{#1}} 
\newcommand{\lreset}{\setcounter{linenum}{0}}
\newcommand{\here}[1]{[[[{\bf #1}]]]\marginpar{***}}
\newcommand{\ignore}[1]{}
\newcommand{\lchanged}[1]{{\color{black} #1}\normalcolor}
\newcommand{\true}{{\tt true}}
\newcommand{\false}{{\tt false}}
\newcommand{\If}{{\tt if}}
\newcommand{\Elseif}{{\tt else if}}
\newcommand{\Else}{{\tt else}}
\newcommand{\While}{{\tt while}}
\newcommand{\For}{{\tt for}}
\newcommand{\Foreach}{{\tt for each}}
\newcommand{\Do}{{\tt do}}
\newcommand{\Break}{{\tt break}}
\newcommand{\Switch}{{\tt switch}}
\newcommand{\Case}{{\tt case}}
\newcommand{\return}{{\tt return}}
\newcommand{\bool}{\mbox{\tt boolean}}
\newcommand{\void}{\mbox{\tt void}}
\newcommand{\integer}{\mbox{\tt int}}
\newcommand{\CAS}{{\tt CAS}}
\newcommand{\DMA}{\textsc{Dma}}
\newcommand{\formic}{{\sf Formic}}
\newcommand{\GridGain}{{\sf GridGain}}
\newcommand{\Hazelcast}{{\sf Hazelcast}}
\newcommand{\CSTACK}{{\sf CStack}}
\newcommand{\ESTACK}{{\sf EStack}}
\newcommand{\BSTACK}{{\sf HStack}}
\newcommand{\TSTACK}{{\sf TStack}}
\newcommand{\DSTACK}{{\sf DStack}}
\newcommand{\CQUEUE}{{\sf CQueue}}
\newcommand{\TQUEUE}{{\sf TQueue}}
\newcommand{\DQUEUE}{{\sf DQueue}}
\newcommand{\HQUEUE}{{\sf HQueue}}
\newcommand{\CPush}{{\tt ClientPush}}
\newcommand{\CEnq}{{\tt ClientEnqueue}}
\newcommand{\CDeq}{{\tt ClientDequeue}}
\title{Efficient Distributed Data Structures for 
		Future Many-core Architectures}
		\author{Panagiota Fatourou\\
			\small FORTH-ICS \& Univ. of Crete\\[-0.8ex]
			\small \texttt{faturu@ics.forth.gr}
			\and
			Nikolaos D. Kallimanis\\
			\small  ISI / Athena RC \& University of Ioannina\\[-0.8ex]
			\small \texttt{nkallima@isi.gr}
			\and
			Eleni Kanellou\thanks{Contact author. {\bf Address:} FORTH-ICS. N. Plastira 100, Vassilika Vouton. GR-70013 Heraklion (Crete), Greece. 
				{\bf Tel.:} +302810391709}\\
			\small  FORTH-ICS\\[-0.8ex]
			\small \texttt{kanellou@ics.forth.gr}
			\and
			Odysseas Makridakis\\
			\small Univ. of Crete\\[-0.8ex]
			\small \texttt{odmakryd@csd.uoc.gr}
			\and
			Christi Symeonidou\\
			\small Univ. of Crete\\[-0.8ex]
			\small \texttt{chsymeon@ics.forth.gr}
			\\
			\\
			\\
			\\
		}
\date{}
\begin{document}

\maketitle

\begin{abstract}
We study general techniques for implementing distributed data structures 
on top of future many-core architectures with {\bf non cache-coherent or 
partially cache-coherent memory}. With the goal of contributing towards 
what might become, in the future, the concurrency utilities package in 
Java collections for such architectures, we end up with a comprehensive 
collection of data structures by considering different variants of these 
techniques. 
To achieve scalability, we study a generic scheme which makes all our
implementations {\em hierarchical}. We consider a collection of known 
techniques for improving the scalability of concurrent data structures 
and we adjust them to work in our setting. 
We have performed experiments which illustrate that some of these 
techniques have indeed high impact on achieving scalability. Our 
experiments also reveal the performance and scalability power of the 
hierarchical approach. We finally present experiments to study energy 
consumption aspects of the proposed techniques by using an energy model 
recently proposed for such architectures.
\end{abstract}

\section{Introduction}
\label{sec:intro}
The dominant parallelism paradigm used by most high-level, high 
productivity languages such as Java, is that of threads and cache-coherent 
shared memory among all cores. However, cache-coherence does not scale 
well with the number of cores~\cite{novakovic2014scale}. So, future 
many-core architectures are not expected to support cache-coherence across 
all cores. They would rather feature multiple coherence islands, each 
comprised of a number of cores that share a coherent view of a part of the 
memory, but no hardware cache-coherence will be provided among cores of 
different islands. Instead, the islands will be interconnected using fast 
communication channels.	In recent literature, we meet even more aggressive 
approaches with Intel having proposed two fully non cache-coherent 
architectures, Runnemede~\cite{carter2013runnemede} and 
SSC~\cite{6077845}.  Additionally,~\cite{formic} presents the \formic\ 
board, a 512-core non cache-coherent prototype. 

Such architectures impose additional effort in programming them, since 
they require to explicitly code all communication and synchronization 
using messages between processors.
Previous works~\cite{mcilroy:oopsla10,yu1997java,zhu2002jessica2}
indicate the community's interest in bridging the gap between non 
cache-coherent or distributed architectures, and high-productivity 
programming languages by implementing runtime environments, like the 
Java Virtual Machine (JVM), for such architectures, which maintain the 
shared-memory abstraction. 

The difficulty in parallelizing many applications comes from
those parts of the computation that require communication and 
synchronization via data structures. Thus, the design of effective 
concurrent data structures is crucial for many applications.
On this avenue, \url{java.util.concurrent}, which is Java's concurrency 
utilities package (JSR 166)~\cite{lea:book06,javautils} provides a wide 
collection of concurrent data 
structures~\cite{lea:book06,MS96,Pugh,scherer:2006}. 
To run Java-written programs on a non cache-coherent architecture, 
Java's VM must be ported to that architecture, and some fundamental 
communication and synchronization primitives (such as \CAS, locks, and 
others) must be implemented. Normally, once this is done, it will be 
possible to execute applications that employ the concurrency utilities 
package without any code modification. However, the algorithms provided
in the package have been chosen to perform well on shared memory 
architectures. So, they take no advantage of the communication and 
synchronization features of non cache-coherent architectures and do not 
cope with load balancing issues or with the distribution of data among 
processors. Thus, they are expected to be inefficient when executing 
through JVMs ported for such an architecture, even if optimized 
implementations of locks, \CAS, and other primitives are provided on top 
of the architecture. Therefore, there is an urgent need to develop
novel data structures and algorithms, optimized for non cache-coherent 
architectures.

We study general techniques for implementing distributed data structures, 
such as stacks, queues, dequeues, lists, and sets, on many-core 
architectures with non or partially cache-coherent memory. With the goal 
of contributing to what might become, in the future, the concurrency 
utilities package in Java collections for such architectures, we end up, 
by considering different variants of these techniques, ending up with a 
comprehensive collection of data structures, richer than those provided in 
\url{java.util.concurrent}. Our collection, which is based on 
message-passing to achieve the best of performance, facilitates the 
execution of Java-written code on non-cache coherent architectures, 
without any modification, in a highly efficient way. 

To achieve {\em scalability}, that is, maintaining good performance as the 
number of cores increases~\cite{NA1991}, we study a generic scheme, which 
can be used to make all our implementations hierarchical~\cite{DMS15,FK12}.
The {\em hierarchical} version of an implementation exploits the memory 
structure and the communication characteristics of the architecture to 
achieve better performance. Specifically, just one core from each island, 
the {\em island master}, participates in the execution of the implemented 
distributed algorithm, whereas the rest submit their requests to this 
core. To execute the implemented distributed algorithm, the island masters 
must exploit the communication primitives provided for fast communication 
among different islands. In architectures with thousands of cores, we 
could employ a more advanced hierarchical structure, e.g. a tree-like 
hierarchy, of intermediate masters for better scalability. Depending on 
the implemented data structure, the island master employs 
elimination~\cite{Hendler:2004}, batching and other techniques to enhance 
scalability and performance. In the partially cache-coherent case, the 
cores of the same island may synchronize by employing 
combining~\cite{FK12,HIST10}. 

For efficiency, in some of our implementations,  we employ a highly 
scalable distributed hash table (DHT) which uses a simple standard 
technique~\cite{Devine93,hazelcast,shahmirzadiphd} to distribute the data 
on different nodes. Based on it and by employing counting 
networks~\cite{Count1,Count2}, we can come up with fully decentralized, 
scalable implementations of queues and stacks.  We also present 
implementations of (sorted and unsorted) lists, some of which support 
complex operations like range queries. To design our algorithms, we derive 
a theoretical framework which captures the communication characteristics 
of non cache-coherent architectures. This framework may be of independent 
interest. In this spirit, we further provide full theoretical proofs of 
correctness for some of the algorithms we present.

We have performed experiments on top of a non cache-coherent 512-core 
architecture, built using the \formic~\cite{formic} hardware prototyping 
board. We distill the experimental observations into a metric expressing 
the scalability potential of such implementations. The experiments 
illustrate nice scalability characteristics for some of the proposed 
techniques and reveal the performance and scalability power of the 
hierarchical approach.

\section{Related Work}
\label{sec:related}
Distributed transactional memory 
(DTM)~\cite{BAB08,D2STM,DhokePR15,TM2C,DiSTM,Man06,Saad2011,SR11}
is a {\em generic} approach for achieving synchronization, so 
data structures can be implemented on top of it. Transactional memory 
(TM)~\cite{herlihy:isca93,ST1995} is a programming paradigm which provides 
the transaction's abstraction; a {\em transaction} executes a piece of 
code containing accesses to {\em data items}. TM ensures that each 
transaction seems to execute sequentially and in isolation. 
However, it introduces significant performance overheads whenever reads 
from or writes to data items take place, and requires the programmer to 
write code in a transactional-compatible way. (When the transactions 
dynamically allocate data, as well as when they synchronize operations on 
dynamic data structures, compilers cannot detect all possible data races 
without trading performance, by introducing many false positives.) Our 
work is on a different avenue: towards providing a customized library of  
highly-scalable data structures, specifically tailored for non  
cache-coherent machines.

TM$^2$C~\cite{TM2C} is a DTM proposed for non cache-coherent machines. 
The paper presents a simple distributed readers/writers lock service where 
nodes are responsible for controlling access to memory regions. It also 
proposes two contention management (CM) schemes (Wholly and FairCM) that 
could be used to achieve starvation-freedom. However, in Wholly, the 
number of times a transaction $T$ may abort could be as large as the 
number of transactions the process executing $T$ has committed in past, 
whereas in FairCM, progress is ensured under the assumption that there is 
no drift~\cite{Attiya,lamport:cacm78} between the clocks of the different 
processors of the non cache-coherent machine. Read-only transactions in 
TM$^2$C can be slow since they have to synchronize with the lock service
each time they read a data item, and in case of conflict, they must
additionally synchronize with the appropriate CM module and may have to 
restart several times from scratch. Other existing 
DTMs~\cite{BAB08,D2STM,Saad2011a,SR11,SaadR11}, also impose common DTM 
overheads.

The data structure implementations we propose do not cause any space 
overhead, read-only requests are fast, since all nodes that store data of 
the implemented structure search for the requested key in parallel, and 
the number of steps executed to perform each request is bounded. We remark 
that, in our algorithms, information about active requests is submitted to 
the nodes where the data reside, and data are not statically assigned to 
nodes, so our algorithms follow neither the data-flow 
approach~\cite{BF10,SaadR11} nor the control-flow 
approach~\cite{BAB08,Saad2011a} from DTM research.

Distributed directory 
protocols~\cite{Combine,Attiya2015,Ballistic,Spiral,Relay} have been 
suggested for locating and moving objects on a distributed system. Most of 
the directory protocols follow the simple idea that each object is 
initially stored in one of the nodes, and as the object moves around, 
nodes store pointers to its new location. They are usually based either on 
a spanning tree~\cite{Arrow,Relay} or a hierarchical overlay 
structure~\cite{Combine,Ballistic,Spiral}. Remarkably, among them, 
COMBINE~\cite{Combine} attempts to cope with systems in which 
communication is not uniform. Directory protocols could potentially serve 
for managing objects in DTM. However, to implement a DTM system using a 
directory protocol, a contention manager must be integrated with the  
distributed directory implementation. As pointed out in~\cite{Attiya2015}, 
this is not the case with the current contention managers and distributed 
directory protocols. It is unclear how to use these protocols to get 
efficient versions of the distributed data structures we present in this 
paper. 

Previous research results~\cite{KBCL09,KPR+08,LDK+08,Kroll:sigmod94} 
propose how to support dynamic data structures on distributed memory 
machines. Some are restricted on tree-like data 
structures~\cite{LDK+08,Kroll:sigmod94}, other focus on data-parallel 
programs~\cite{KBCL09}, some favor code migration, whereas other focus 
on data replication. We optimize beyond simple distributed memory 
architectures by exploiting the communication characteristics of non 
cache-coherent multicore architectures. Some techniques   
from~\cite{KBCL09,KPR+08,LDK+08} could be of interest though to further 
enhance performance, fault-tolerance and scalability in our 
implementations.

Distributed data structures have also been 
proposed~\cite{Aspnes03,Gribble:2000,Hilford:97,Martin01,DBLP:journals/pvldb/AguileraGS08}
in the context of peer-to-peer systems or cluster computing, where   
dynamicity  and fault-tolerance are main issues. They tend to provide weak 
consistency guarantees. Our work is on a different avenue. 

\Hazelcast~\cite{hazelcast} is an in-memory data grid middleware which 
offers implementations for maps, queues, sets and lists from the Java 
concurrency utilities interface. These implementations are optimized for 
fault tolerance, so some form of replication is supported. Lists and sets 
are stored on a single node, so they do not scale beyond the capacity of 
this node. The queue stores all elements to the memory sequentially before 
flushing them to the datastore. Like \Hazelcast, 
\GridGain~\cite{gridgain}, an in-memory data fabric which connects 
applications with datastores, provides a distributed implementation of 
queue from the Java concurrency utilities interface. The queue can be 
either stored on a single grid node, or be distributed on different grid 
nodes using the datastore that exists below \GridGain. In a similar vein, 
Grappa~\cite{Nelson:2015:LSD:2813767.2813789} is a software system that 
can provide distributed shared memory over a cluster of machines. So-called 
delegate operations are used in order to access shared memory, relieving 
the programmer of having to reason about remote or local memory. However, 
Grappa does not provide the programmer with a data structure library.

We extend some of the ideas from hierarchical lock implementations 
and other synchronization protocols for NUMA cache-coherent 
machines~\cite{DMS15,FK12,D11} to get hierarchical implementations for a 
non cache-coherent architecture. Tudor {et al.}~\cite{TGT15} attempt to 
identify patterns on search data structures, which favor implementations 
that are portably scalable in cache-coherent machines. The patterns they 
came up with cannot be used to automatically generate a concurrent 
implementation from its sequential counterpart; they rather provide hints 
on how to apply optimizations when designing such implementations.

\section{Abstract Description of Hardware}
\label{sec:hw}
Inspired by the characteristics of non cache-coherent  
architectures~\cite{carter2013runnemede} and prototypes~\cite{formic},
we consider an architecture which features $m$ {\em islands} (or 
{\em clusters}), each comprised of $c$ cores (located in one or more 
processors). The main memory is split into modules, with each module 
associated to a distinct island (or core). A fast cache memory is located 
close to each core. No hardware cache-coherence is provided among cores of 
different islands: this means that different copies of the same variable residing on caches
of different islands may be inconsistent. The islands are interconnected 
with fast communication channels.

The architecture may provide cache-coherence for the memory modules of an 
island to processes executing on the cores of the island, i.e. the cores 
of the same island may see the memory modules of the island as 
cache-coherent shared memory. If this is so, we say that the architecture 
is {\em partially cache-coherent}; otherwise, it is {\em non 
cache-coherent}. A process can send messages to other processes by 
invoking \send\ and it can receive messages from other processes by 
invoking \receive. The following means of communication between cores (of 
the same or different islands) are provided:

\vspace*{.1cm}
\noindent
{\em Send/Receive mechanism}: Each core has its own mailbox which is 
a FIFO queue (implemented in hardware). (If more than one processes are 
executed on the same core, they share the same hardware mailbox; in this 
case, the functionality of \send\ and \receive\ can be provided to each 
process through the use of software mailboxes.) A process executing on the 
core can send messages to other processes by invoking \send, and it can 
receive messages from  other processes by invoking \receive. Messages are 
not lost and are delivered in FIFO order. An invocation of \receive\ 
blocks until the requested message arrives. The first parameter of an 
invocation to \send\ determines the core identifier to which the message 
is sent.

\vspace*{.1cm}
\noindent
{\em Reads and Writes through DMA}: A Direct Memory Access (DMA) engine 
allows certain hardware subsystems to access the system's memory without 
any interference with the CPU. We assume that each core can perform 
\DMA($A,B,d$)\ to copy a memory chunk  of size $d$ from memory address $A$ 
to memory address $B$ using a DMA (where $A$ and  $B$ may be addresses 
in a local or a remote memory module). We remark that DMA is not 
executed atomically. It rather consists of a sequence of atomic reads of 
smaller parts, e.g. one or a few  words, (known as {\em bursts}) of the 
memory chunk to be transferred, and atomic writes of these small parts to 
the other memory module. DMA can be used for performance optimization. 
Once the size of the memory chunk to be transferred becomes larger (by a 
small multiplicative factor) than the maximum message size supported by 
the architecture, it is more efficient to realize the transfer using DMA 
(in comparison to sending messages). Specifically, we denote by $MMS$ the 
maximum size of a message supported by the architecture. (Usually, this 
size is equal to either a few words or a cache line). Consider that the 
chunk of data that a core wants to send has size equal to $B$. To send 
these data using messages, the core must send $\frac{B}{MMS}$ messages. Each 
message has a cost $C_M$ to set it up. So, in total, to transfer the data 
using messages, the overhead paid is $\frac{B}{MMS}*C_M$. Setting up a DMA 
has a cost $C_D$, which in most architectures is by a small constant 
factor larger than $C_M$. However, $C_D$ is paid only once for the entire 
transfer of the chunk of data since it is done with a single DMA. 
Additionally, sending and receiving $\frac{B}{MMS}$ messages requires that 
a core's CPU will be involved $\frac{B}{MMS}$ times to send each message 
and another core's CPU will be involved $\frac{B}{MMS}$ times to receive 
these messages. This cost will be greatly avoided when using a single DMA 
to transfer the entire chunk of data. Thus, it is beneficial in terms of 
performance, to use DMA for transferring data whenever the size of the 
data to be transferred is not too small.

\section{Theoretical Framework}
\label{sec:model}
An implementation of a data structure (DS) stores its state in the memory 
modules and provides an algorithm, for each process, to implement each 
operation supported by the data structure. We model the submission and delivery of 
messages sent by processes by including incoming and outgoing message 
buffers in the state of each process (as described in standard 
books~\cite{Attiya,Lynch1996} on distributed computing).

We model each process as a state machine. We model a DMA request as a 
code block which contains a sequence of interleaved burst reads from a 
memory module and burst writes to a memory module. A DMA engine executes 
sequences of DMA requests, so it can also be modeled as a simple state 
machine whose state includes a buffer storing DMA requests that are to be 
executed. A {\em configuration} is a vector describing the state of each 
process (including its message buffers), the state of each DMA engine, the 
state of the caches (or the shared variables in case shared memory is  
supported among the cores of each island), and the states of the memory 
modules. In an {\em initial configuration}, each process and DMA engine is 
in an initial state, the shared variables and the memory modules are in 
initial states and all message and DMA buffers are empty. An {\em event} 
can be either a step by some process, a step by a DMA engine, or the  
delivery of a message; in one step, a process may either transmit exactly 
one message to some process and at least one message to every other 
process, or access (read or write) exactly one shared variable, or 
initiate a DMA transfer, or invoke an operation of the implemented data structure. In 
a {\em DMA step} a burst is read from or written to a memory module. All 
steps of a process should follow the process's algorithm. Similarly, all 
steps of a DMA engine should be steps of the code block that performs 
a DMA request submitted to this DMA engine.

An execution is an alternating sequence of configurations and events 
starting with an initial configuration. The {\em execution interval} of an 
instance of an operation $op$ that completes in an execution $\alpha$ is the subsequence 
of $\alpha$ starting with the configuration preceding the invocation of 
this instance of $op$ and ending with the configurations that follows its 
response. 
If an instance of an operation $op$ does not complete in $\alpha$, then the 
execution interval of $op$ is the suffix of $\alpha$ starting with the configuration 
preceding the invocation of this instance of $op$. An execution $\alpha$ 
imposes a partial order $\prec_{\alpha}$ on the instances of its operations, 
such that for two instances $op_1$ and $op_2$ of some operations in $\alpha$, 
$op_1 \prec_{\alpha} op_2$ if the response of $op_1$ precedes the invocation 
of $op_2$ in $\alpha$. If it neither holds that $op_1 \prec_{\alpha} op_2$ not 
that $op_2 \prec_{\alpha} op_1$, then we say that $op_1$ and $op_2$ are {\em concurrent}. If no two operation instances in $\alpha$ are concurrent, then 
$\alpha$ is a sequential execution. 

A step is {\em enabled} at a configuration $C$, if the process 
or DMA engine will execute this step next time it will be scheduled.
A finite execution $\alpha$ is fair if, for each process $p$ and each DMA 
engine $e$, no step by $p$ or $e$ is enabled at the final configuration 
$C$ of $\alpha$ and all messages sent in $\alpha$ have been delivered by 
$C$. An infinite execution $\alpha$ is fair if the following hold:

\begin{compactitem}
\item for each process $p$,
either $p$ takes infinitely many steps in $\alpha$, or there are infinitely many 
configurations in $\alpha$ such that in each of them (1) no step by $p$ is enabled,
(2) for every prefix of $\alpha$ that ends at such a configuration, all messages sent by $p$ have been delivered.
\item for each DMA engine $e$,
either $e$ takes infinitely many steps in $\alpha$, or there are infinitely many 
configurations in $\alpha$ such that in each of them no step by $e$ is enabled
(we remark that a DMA engine $e$ always have an enabled step as long as its DMA buffer is not empty).
\end{compactitem}

\vspace*{.1cm}
\noindent
{\bf Correctness.} For correctness, we consider {\em 
linearizability}~\cite{herlihy1990linearizability}. 
This means that, for every execution $\alpha$, there is a sequential 
execution $\sigma$, which contains all the completed operations in $\alpha$ 
(and some of the uncompleted ones) so that the response of each operation in 
$\sigma$ is the same as in $\alpha$ and $\sigma$ respects the partial order 
of $\alpha$.
In such a linearizable execution $\alpha$, 
one can assign a {\em linearization 
point} to each completed operation instance (and to some of the 
uncompleted ones) so that the linearization point of each operation occurs 
after the invocation event and before the response event the operation, 
and so that the order of the linearization points is the same as the order 
of the operations in $\sigma$.

\vspace*{.1cm}
\noindent
{\bf Progress.} We aim at designing algorithms that always terminate,
i.e. reach a state where all messages sent have been delivered and no step 
is enabled.  We do not cope with message or process failures. 

\vspace*{.1cm}
\noindent
{\bf Communication Complexity.}
Communication between the cores of the same island is usually faster than that across islands.
Thus, the communication complexity of an algorithm for a non cache-coherent architecture
is measured in two different levels, namely the intra-island communication 
and the inter-island communication. 
The {\em inter-island communication complexity} of an instance $inst$ of 
an operation $op$ 
in an execution $\alpha$ is the total number of messages sent by every core $c$ to cores residing 
on different islands from that of $c$ for executing $inst$ in $\alpha$.
The inter-island communication complexity of $op$ in $\alpha$
is the maximum, over all instances $inst$ of $op$ in $\alpha$, of the 
inter-island communication complexity of $inst$ in $\alpha$.
The inter-island communication complexity of $op$ for an implementation 
$I$ is the maximum, over all executions $\alpha$ produced by $I$,
of the inter-island communication complexity of $op$ in $\alpha$. 
We remark that communication can be measured in a more fine-grained way 
in terms of bytes transferred instead of messages sent,
as described in~\cite{Lynch1996}. For simplicity, we focus on the higher abstraction 
of measuring just the number of messages as described in~\cite{Attiya}.

If the architecture is non cache-coherent, 
then the {\em intra-island communication complexity} 
is defined as follows.
The {\em intra-island communication complexity} of an instance $inst$ of 
$op$ in $\alpha$ 
is the maximum, over all islands, of the total number of messages sent by every core $c$ of an island 
to cores residing on the same island as $c$ for executing $inst$ in $\alpha$.

If the architecture is partially cache-coherent,
then we measure the inter-island communication complexity 
in terms of cache misses following the cache-coherence (CC) shared-memory model (see e.g.~\cite{Herlihy:2008,mellor1991algorithms}).
Specifically, 
in the (CC) shared memory model, accesses to shared variables are performed via cached copies of them;
an access to a shared variable is a {\em cache miss} 
if the cached copy of this variable is invalid. In this case, a cache miss occurs
and a valid copy of the variable should be fetched in the local cache first
before it can be accessed.
Once the  cache miss is served
and as long as the variable is not updated by processes that are being executed
on other cores,
future accesses to the variable by processes that are being executed
on this core  do not lead to further cache misses.
In such a model, the {\em inter-island communication complexity} of an instance $inst$ of an operation $op$
is the maximum, over all islands, of the total number of cache-misses that the cores of the island 
experience to execute $inst$. 

We remark that independently of whether the architecture is partially or fully non cache-coherent,
the intra-island communication complexity of $op$ in $\alpha$
is the maximum, over all instances $inst$ of $op$ in $\alpha$, of the 
intra-island communication complexity of $inst$ in $\alpha$.
Moreover, the intra-island communication complexity of $op$ for an 
implementation $I$ is the maximum, over all executions $\alpha$ produced 
by $I$,
of the intra-island communication complexity of $op$ in $\alpha$. 

The {\em DMA communication complexity} of an instance $inst$ of an operation $op$,
is the total number of $DMA$ requests initiated by every process to execute $inst$;
in a more fine-grained model, we could instead measure the total number of bursts performed by these DMA requests. 
The DMA communication complexity of $op$ in $\alpha$
and of $op$ in $I$ are defined as for the other types of communication 
complexities.

\vspace*{.1cm}
\noindent {\bf Time complexity.} 
Consider a fair execution $\alpha$ of an implementation $I$.
A timed version of $\alpha$ is an enhanced version 
of $\alpha$ where each event has been associated to a non-negative real number, 
the time at which that event occurs. 
We define the delay of a message in a timed version of $\alpha$ 
to be the time that elapses between the computation event that sends 
the message and the event that delivers the message.
We denote by ${\cal T}_{\alpha}$ those timed versions of $\alpha$
for which the following conditions hold: 
(1) the times must start at 0, (2) must be strictly increasing 
for each individual process and the same must hold for each individual DMA engine, 
(3) must increase without bound if the execution is infinite,
(4) the timestamps of two subsequent events by the same process (or the same DMA engine) must differ by at most $1$, and
(4) the delay of each message sent must be no more than one time unit.
Let ${\cal T} = \cup_{\forall \alpha \mbox{ produced by } I}\{ {\cal T}_{\alpha} \}$.

The {\em time} until some event $\rho$ is executed in an execution $\alpha$
is the supremum of the times that can be assigned to $\rho$
in all timed versions of $\alpha$ in ${\cal T}_{\alpha}$. 
The {\em time} between two events in $\alpha$ is the supremum 
of the differences between the times in all timed versions of $\alpha$ in ${\cal T}_{\alpha}$.
The {\em time complexity} of an instance $inst$ of an operation $op$ in $\alpha$ 
is the time between the events of its invocation and its response. 
The {\em time complexity} of $op$ in $\alpha$ is the maximum, 
over all instances $inst$ of $op$ in $\alpha$,
of the time complexity of $inst$ in $\alpha$. The time complexity 
of an operation $op$ for $I$ is the maximum, over all executions $\alpha$
produced by $I$, of the time complexity of $op$ in $\alpha$.

\vspace*{.1cm}
\noindent
{\bf Space Complexity.}
The {\em space complexity} of $I$ is determined by the memory overhead introduced by $I$,
and by the number and type of shared variables employed (in case of 
partial non cache-coherence).

\section{Directory-based Stacks, Queues, and Deques}
\label{sec:directory}
The state of the data structure is stored in a highly-scalable distributed 
{\em directory} whose data are spread over the local memory modules of the 
\maxser\ servers. The directory supports the operations \DIRINSERT,  
\DIRDELETE, \BDIRDELETE, and \DIRSEARCH. \DIRDELETE\ and \BDIRDELETE\ 
both remove elements from the directory; however, \DIRDELETE\ returns 
$\bot$ if the requested element is not contained in the directory, 
while \BDIRDELETE\ blocks until the element is found in the directory.
To implement it, we employ a simple 
highly-efficient distributed hash table implementation (also met 
in~\cite{Devine93,hazelcast,shahmirzadiphd}) where hash collisions are 
resolved by using hash chains ({\em buckets}). Each server stores a number 
of buckets. For simplicity, we consider a simple hash function which 
employs $\mod$ and works even if the key is a negative integer. It returns 
an index which is used to find the server where a request must be sent, 
and the appropriate bucket at this server, in which the element resides 
(or must be stored). 

To perform an operation, each client must first access a fetch\&add object 
to get a unique sequence number which it uses as the key for the requested 
data. This object can be implemented using a designated server, called the 
{\em synchronizer} and denoted by $s_s$. The client then communicates with 
the appropriate server to complete its operation. The server locally 
processes the request and responds to the process that initiated it. This 
approach suites better to workloads where the state of the data structure 
is large.

The hash table implementation we use as our directory is presented in 
Section~\ref{app:dht}, for completeness. Similar hash table designs have 
been presented (or discussed) in~\cite{Devine93,shahmirzadiphd,hazelcast}. 
Section~\ref{app:stack-dir} presents the details of the directory-based 
distributed stack. The directory-based queue implementation is discussed 
in Section~\ref{app:queue-dir}. Section~\ref{app:deque-dir} provides the 
directory-based deque. We remark that our directory-based data structures 
would work even when using a different directory implementation.

\subsection{Distributed Hash Table}
\label{app:dht}

A {\em hash table} stores elements, each containing a key and a value (associated with the key). 
%
Each server stores hash table elements to a local data structure. 
This structure can be a smaller hash table 
or any other data structure (array, list, tree, etc.) and supports the operations 
\INSERT, \SEARCH\ and \DELETE.
To perform an operation on the DHT, a client  $c$
finds the appropriate server to submit its request by 
hashing the key value of interest. 
Then, it sends a message to this server, which performs the operation locally 
and sends back  the result to $c$. 
A server $s$ processes all incoming messages sequentially. 
\ignore{
Each message $s$ 
receives has four fields: (1) the $op$ field that denotes the type of the 
operation (\INSERT, \SEARCH\ or \DELETE), (2) the $key$ field that contains 
the key, (3) the $data$ field, which has a value in case of an insert and 
is equal to $\bot$ otherwise, (3) and the $cid$ field which contains the id of the 
client that initiated the transmission of the message. 
Event-driven pseudocode for a server $s$ is described in Algorithm 
\ref{alg20}. Upon receiving a message (line \lref{ln:128}), $s$ checks whether 
its type is \INSERT\ (line \lref{ln:129}), \SEARCH\ (line \lref{ln:131}) or 
\DELETE\ (line \lref{ln:133}). Depending on the type of the request, the server is 
going to invoke the appropriate function each time, and then send the function's 
result back to the client. 

\begin{algorithm}[!ht]
\caption{Events triggered in a hash table server.}
\label{alg20}
\begin{code}
  \firstline
     HashTable $buckets$ = $\varnothing$;                                      \ul
                                                                               \nl
     a message $\langle op, key, data, cid\rangle$ is received: \label{ln:128} \nl
  \n   \If\ ($op$ == \INSERT) $\lbrace$                         \label{ln:129} \nl
  \n     $status$ = insert($buckets, key, data$);               \label{ln:130} \nl
         send($cid, status$);                                                  \nl
  \p   $\rbrace$ \Elseif\ ($op$ == \SEARCH) $\lbrace$           \label{ln:131} \nl
  \n     $status$ = search($buckets, key$);                     \label{ln:132} \nl
         send($cid, status$);                                                  \nl
  \p   $\rbrace$ \Elseif\ ($op$ == \DELETE) $\lbrace$           \label{ln:133} \nl
  \n     $status$ = delete($buckets, key$);                     \label{ln:134} \nl
         send($cid, status$);                                                  \ul
  \p $\rbrace$                                                                     
  \p 
\end{code}
\end{algorithm}

If the request was for an insert (line \lref{ln:129}), the server calls the 
\texttt{insert()} function. This function searches the local buckets for a 
previously inserted element with the same key. If such an element is found, 
\texttt{insert()} returns a negative acknowledgement (\NACK), denoting that 
the new element is already in the hash table. If the key was not found, it stores it and 
returns an acknowledgement (\ACK). The response from \texttt{insert()} 
is returned to the client. 

For the \SEARCH\ and \DELETE\ messages the action sequence is the same. If the 
server receives a \SEARCH, it executes the function \texttt{search()} that 
searches for the key. If it is not found, \texttt{search()} returns \NACK\, and the 
value of the pair, otherwise. If the server receives a \DELETE, it is going to 
execute the function \texttt{delete()} that searches for the key in order to delete 
it. If the key is found, it deletes it and returns \ACK. Otherwise, it returns 
\NACK.

\begin{algorithm}[!ht]
\caption{Insert, search and delete operations of a client of the hash table.}
\label{alg21}
\begin{code}
 \firstline
    \bool\ \DIRINSERT(\integer\ $cid$, Data $data$, \integer\ $key$) $\lbrace$ \nl
 \n   $sid$ = hash\_function($key$);                                    \nl
      send($sid, \langle\INSERT, data, key, cid\rangle$); \label{ln:135}\nl
      $status$ = receive($sid$);                                        \nl
      \return\ $status$;                                \label{ln:135-1}\ul
 \p $\rbrace$                                                           \ul
                                                                        \nl
    \bool\ \DIRSEARCH(\integer\ $cid$, \integer\ $key$) $\lbrace$  \nl
 \n   $sid$ = hash\_function($key$);                                    \nl
      send($sid, \langle\SEARCH, \bot, key, cid\rangle$); \label{ln:136}\nl
      $status$ = receive($sid$);                                        \nl
      \return\ $status$;                                 \label{ln:137} \ul
 \p $\rbrace$                                                           \ul
                                                                        \nl
    \bool\ \DIRDELETE(\integer\ $cid$, \integer\ $key$) $\lbrace$  \nl
 \n   $sid$ = hash\_function($key$);                                    \nl
      send($sid, \langle\DELETE, \bot, key, cid\rangle$); \label{ln:138}\nl
      $status$ = receive($sid$);                                        \nl
      \return\ $status$;                                 \label{ln:139} \ul
 \p $\rbrace$
\end{code}
\end{algorithm}

The \texttt{\DIRINSERT()}, \texttt{\DIRSEARCH()} and 
\texttt{\DIRDELETE()} functions called by the clients are described in 
Algorithm \ref{alg21}. These functions have all the same instructions, but they 
differ in the type of message that the clients send towards the servers, as seen 
on lines \lref{ln:135}, \lref{ln:136}, and \lref{ln:138}, respectively. 
}
After receiving a response message from the server, all client functions return a 
boolean value depending on whether the operation was successful or not.

\subsection{Directory-Based Stack}
\label{app:stack-dir}
%

	\begin{minipage}{.5\textwidth}

		\begin{algorithm}[H]
			\small
			\caption{Events triggered in the synchronizer of the directory-based stack.}
			\label{alg11}
			\begin{code}
				\firstline
				\lreset
				\integer\ $top\_key=-1$;     \label{ln:90}  \ul 
				\nl
				a message $\langle op,\ cid\rangle$ is received: \label{ln:90.1}                      \nl
				\n   \If\ ($op == \PUSH$)                         \label{ln:91}  \nl
				\n     $top\_key++$;                                        \label{ln:92}  \nl
				\p       send($cid$, $top\_key$);                             \label{ln:96}  \nl
				\If\ ($op$ == \POP) $\lbrace$         \label{ln:96.1}     \nl 
				\n     \If\ ($top\_key \neq -1$)                   \label{ln:97} \nl
				$top\_key--$;                                      \label{ln:99}  \ul           
				\p     $\rbrace$                                                           
				\p    
			\end{code}
		\end{algorithm}
	\end{minipage}

To implement a stack, the synchronizer $s_s$ maintains a 
variable $top$ which stores 
an integer counting the number of elements that are currently in the stack. 

To apply an operation {\em op} a client sends a message to the synchronizer $s_s$. 
If {\em op} is a push operation, $s_s$ uses $top\_key$ variable to assign unique 
keys to the newly inserted data. Each time it receives a push request, $s_s$ 
sends the value stored in $top\_key$ to the client after incrementing it by one. 
Once a client receives a key from $s_s$ for the push operation it has initiated, 
it inserts the new element in the directory by invoking \DIRINSERT. 
Similarly, if {\em op} is a pop operation, $s_s$ sends the value stored in $top\_key$ 
to the client and decrements it by one. The client then invokes \DIRDELETE\ repeatedly, 
until it successfully removes from the directory the element with the received key. 
Notice that keys of elements are greater than or equal to $0$ and 
therefore, $top\_key$ has initial value $-1$, which indicates that 
the stack is empty.

Event-driven pseudocode for the synchronizer is described in Algorithm~\ref{alg11} 
and the code for the \texttt{ClientPush()} and \texttt{ClientPop()} operations, is 
presented in Algorithms~\ref{alg12}~and~\ref{alg13}.

\begin{figure}
	\begin{minipage}{.40\textwidth}\centering
		\begin{algorithm}[H]
			\small
			\caption{Push operation for a client of the directory-based stack.}
			\label{alg12}
			\begin{code}
				\firstline
				\void\ ClientPush(\integer\ $cid$, Data $data$) $\lbrace$ \nl
				\n   
				send($sid, \langle\PUSH, cid\rangle$);   \label{ln:47}  \nl
				$key$ = receive(sid);                  \label{ln:103.1} \nl
				$status$ = \DIRINSERT($key, data$);      \label{ln:103} \nl
				\return\ $status$;                                      \ul
				\p $\rbrace$
			\end{code}
		\end{algorithm}	
	\end{minipage}
\hfill
	\begin{minipage}{.40\textwidth}\centering
		\begin{algorithm}[H]
			\small
			\caption{Pop operation for a client of the directory-based stack.}
			\label{alg13}
			\begin{code}
				\firstline
				Data ClientPop(\integer\ $cid$) $\lbrace$            \nl
				\n     
				send($sid, \langle\POP, cid\rangle$);\label{ln:106}\nl
				$key$ = receive($sid$);     \label{ln:104}         \nl
				\If\ ($key$ == $-1$) 
				$status$ = $\bot$               \label{ln:104.2} \nl
				\Else\ $\lbrace$        \label{ln:104.3} \nl
				\n         \Do\ $\lbrace$              \label{ln:100.1}   \nl
				\n         	$status$ = \DIRDELETE($key$); \label{ln:100}   \nl
				\p     	   $\rbrace$ \While\ ($status == \bot$);\label{ln:101}\ul
				\p     $\rbrace$               			 \label{ln:104.4} \nl
				\return\ $status$;          \label{ln:105}         \ul
				\p $\rbrace$
			\end{code}
		\end{algorithm}
	\end{minipage}
\end{figure}

The synchronizer receives, processes, and responds to clients' messages. 
The messages have an $op$ field that represents the operation to be 
performed (\PUSH\ or \POP), and a $cid$ field with the client's 
identification number, needed for identifying the appropriate client to 
communicate with. 

The \texttt{ClientPop()} function, presented in Algorithm~\ref{alg13} is 
analogous to the push operation: it sends a \POP\ message to $s_s$ and waits 
for its response (line~\lref{ln:104}). Using the key that was received as 
argument,  \DIRDELETE() is repeatedly called (line~\lref{ln:100}). This is 
necessary since another client responsible for inserting the $key$ may not 
have finished yet its insertion. In this case \DIRDELETE\ returns $\bot$ 
(line~\lref{ln:101}). However, since the $key$ was generated previously 
by $s_s$, it is certain that it will be eventually inserted into the 
directory service.

\subsubsection{Proof of Correctness}
Let $\alpha$ be an execution of the directory-based stack implementation.
We assign linearization points to push and pop operations in $\alpha$ 
by placing the linearization point of an operation $op$ in the configuration 
resulting from the execution of line~\lref{ln:96} by the synchronizer for 
$op$. 

Let $op$ be a push or a pop operation invoked by a client $c$ in $\alpha$ 
and assume that the synchronizer executes line~\lref{ln:96} for it. By inspection 
of the pseudocode, we have that this line is executed after the synchronizer 
receives a message by $c$ (line~\lref{ln:90.1}) and before $c$ receives the 
synchronizer's response (line~\lref{ln:103.1} for a push operation, 
line~\lref{ln:104} for a pop operation). By the way the linearization points 
are assigned, we have the following lemma.

\begin{lemma}
The linearization point of a push (pop) operation 
$op$ is placed within its execution interval.
\end{lemma}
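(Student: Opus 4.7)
The plan is to establish the two inequalities that the linearization point (the configuration right after the synchronizer's execution of line~\lref{ln:96} for $op$) lies strictly between the invocation and response events of $op$. Since both events relate operations of a client $c$ and the single synchronizer $s_s$ through message-passing, the argument reduces to chasing the order of messages between $c$ and $s_s$, using that messages can only be received after they are sent and that each process executes its local code in order.

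For the first inequality (invocation precedes linearization point), I would observe that line~\lref{ln:96} is executed by the synchronizer only inside the event handler triggered at line~\lref{ln:90.1}, i.e., only after $s_s$ receives some message $\langle op, cid\rangle$ from $c$. By inspection of \texttt{ClientPush()} and \texttt{ClientPop()}, the only place where $c$ sends such a message to $s_s$ is line~\lref{ln:47} or line~\lref{ln:106}, which is executed after the invocation event of $op$. Since a message can be delivered only after it is sent, the invocation of $op$ precedes the reception at line~\lref{ln:90.1}, which by sequential execution of the synchronizer precedes line~\lref{ln:96}.

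For the second inequality (linearization point precedes response), I would note that immediately after line~\lref{ln:96} the synchronizer sends $top\_key$ back to $c$. The client's next step in both \texttt{ClientPush()} and \texttt{ClientPop()} is a blocking \texttt{receive} at line~\lref{ln:103.1} (respectively line~\lref{ln:104}), which cannot complete until this response is delivered. Hence the execution of line~\lref{ln:96} precedes the reception at line~\lref{ln:103.1}/\lref{ln:104}, which in turn precedes every subsequent step of $c$ within $op$, including the return at line~\lref{ln:105}, i.e., the response event of $op$.

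Combining both directions yields that the linearization point lies in the execution interval of $op$. The main obstacle is really only bookkeeping: we must be careful that when $op$ does not complete (e.g., a pop that forever loops on \DIRDELETE\ because of an unfinished concurrent push) the execution interval is still well-defined as a suffix starting at the configuration preceding the invocation, so the second inequality needs only the precedence of line~\lref{ln:96} over the reception at line~\lref{ln:104}, not the existence of a return. This subtlety aside, the argument is a direct message-ordering chase with no additional case analysis.
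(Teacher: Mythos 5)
Your argument is correct and is essentially the same as the paper's: line~\lref{ln:96} is executed only after $s_s$ has received the client's message (which in turn follows the invocation of $op$), and it precedes the client's blocking \texttt{receive} on line~\lref{ln:103.1} (resp.\ line~\lref{ln:104}), which precedes the response of $op$. The paper gives this as a compressed two-sentence observation placed just before the lemma rather than as a formal proof; your more explicit message-ordering chase is the same reasoning. One small slip: you say the synchronizer sends \emph{immediately after} line~\lref{ln:96}, but line~\lref{ln:96} \emph{is} the send -- this does not affect the argument since the linearization point is the configuration resulting from that very step. Your closing remark about operations that never return (a pop stuck in the \DIRDELETE\ loop) is a correct subtlety the paper leaves implicit.
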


\ignore{
	as 
	follows: 
	The linearization point of a push operation $op$ is placed in 
the configuration resulting from the execution of line~\lref{ln:103} for 
$op$ by the client that invoked it. If $op$ is a pop operation and 
line~\lref{ln:98} is executed for $op$ by $s_s$, then the linearization 
point is placed in the resulting configuration. If $op$ is a pop operation 
for which line~\lref{ln:102} is executed by $s_s$, then we distinguish 
two cases. Let $op'$ be that push operation, which inserts into the 
directory the element that $op$ removes. If the linearization point of 
$op'$ occurs before or at the execution of line~\lref{ln:102} for $op$, 
then $op$ is linearized in the configuration resulting from the execution 
of this line. Otherwise, the linearization point of $op$ is placed 
right after the linearization point of $op'$.}


Denote by $L$ the sequence of operations (which have been assigned 
linearization points) in the order determined by their linearization 
points. Let $C_i$ be the configuration in which the $i$-th operation 
$op_i$ of $L$ is linearized. Denote by $\alpha_i$, the prefix of 
$\alpha$ which ends with $C_i$ and let $L_i$ be the prefix of $L$ 
up until the operation that is linearized at $C_i$. We denote by 
$top_i$ the value of the local variable $top\_key$ of $s_s$ at 
configuration $C_i$; 
Let $C_0$ be the initial configuration an let $top_0 = -1$. 

Notice that since only $s_s$ executes Algorithm \ref{alg11}, 
we have the following.
\begin{observation}
	\label{obs:seq-dstack}
	Instances of Algorithm~\ref{alg11} are executed 
	sequentially, i.e. their execution does not overlap.
\end{observation}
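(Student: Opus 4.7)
The plan is to argue this directly from the model of computation laid out in Section~\ref{sec:model} together with the structural fact that $s_s$ is the only executor of Algorithm~\ref{alg11}. First, I would note that by construction of the directory-based stack, Algorithm~\ref{alg11} is the event-handler code installed on the synchronizer $s_s$, and no other process runs any line of this algorithm. Thus, to show that two instances of Algorithm~\ref{alg11} do not overlap in $\alpha$, it suffices to show that $s_s$ does not begin processing a new incoming message before completing the handler for the previously received one.

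Next, I would unpack the definition of "an instance of Algorithm~\ref{alg11}" using the event model: each instance is triggered by a \receive\ event of $s_s$ (line~\lref{ln:90.1}) and consists of a finite sequence of local steps of $s_s$ (the \If-branches on lines~\lref{ln:91}--\lref{ln:99}, together with at most one \send\ event on line~\lref{ln:96}). Since a process is modeled as a state machine that executes one step at a time, and every step in the handler belongs to the single process $s_s$, the configurations in which $s_s$ performs these steps are strictly ordered in $\alpha$. The key point is that the handler has no blocking primitive inside it: once triggered by the receipt of a message, it only performs local assignments, comparisons, and at most a single \send, so it runs to completion without relinquishing control to another triggering event.

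I would then observe that, by the FIFO mailbox semantics described in Section~\ref{sec:hw}, messages destined for $s_s$ are buffered in its mailbox and only consumed when $s_s$ explicitly executes a \receive-style step (the event that initiates a new instance of the handler on line~\lref{ln:90.1}). Thus between the end of one handler instance and the beginning of the next, $s_s$ must take a fresh triggering step, which by the one-step-at-a-time semantics cannot be interleaved with any step of a previous or subsequent instance. Combining these points yields that the execution intervals of any two instances of Algorithm~\ref{alg11} are disjoint in $\alpha$.

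The argument is short and not technically difficult; the only part requiring a little care is the appeal to the formal model to justify that event handlers at a single process are atomic in the sense of non-overlap. I expect the main subtlety to be making precise the implicit convention that a newly arriving message does not interrupt an ongoing handler execution, which follows because the handler contains no \receive\ of its own and a process takes one step per scheduling, not because of any explicit atomicity primitive.
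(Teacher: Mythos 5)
Your proposal is correct and follows essentially the same approach as the paper: the paper justifies the observation in a single sentence by noting that only $s_s$ executes Algorithm~\ref{alg11}, and you simply unpack why that single-executor fact, together with the event-driven, one-step-at-a-time model and the non-blocking handler body, forces the instances to be disjoint in time. Your elaboration is sound, but it is not a different route --- it is a careful spelling-out of the same immediate argument.
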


\ignore{
\begin{proof}
Inspection of the pseudocode easily shows that the claim holds 
for push operations, as the execution of the line after which 
the linearization point is placed, takes place after the invocation 
and before the response of the operation. 

Assume now that $op$ is a pop operation invoked by client $c$ and 
assume that $op$ removes an element with key $k$ from the directory. 
Let $op'$ be the push operation that inserts this element in the 
directory. Let $C$ be the configuration in the first \Do-\While\ 
loop iteration of lines~\lref{ln:100.1}~-~\lref{ln:101}, in which 
the execution of \DIRDELETE\ does not return $\bot$. Let $C'$ be 
the configuration resulting from the execution of \DIRINSERT\ on
line~\lref{ln:103} by $op'$, after which the element with key $k$ 
is inserted in the directory by $op'$. We consider two cases. 

First, assume that $C'$ precedes the execution of line~\lref{ln:102} 
for $op$ by $s_s$. In this case, the linearization point of $op$ is 
placed in the configuration resulting from the execution of 
line~\lref{ln:102} for $op$ by $s_s$. Inspection of the pseudocode 
shows that this line is executed by $s_s$ for $op$ after $s_s$ 
receives from $c$ the message that is sent by executing line~\lref{ln:106}, 
i.e. after {\tt ClientPop} is invoked. Further inspection shows that 
$c$ blocks (line~\lref{ln:104}) until it receives from $s_s$ the message 
sent on line~\lref{ln:102}. This means that {\tt ClientPop}, and therefore, 
$op$, does not respond before line~\lref{ln:102} is executed. The above 
implies that the linearization point of $op$ is included in its execution 
interval.  

Assume next that  $C'$ follows the execution of line~\lref{ln:102} 
for $op$ by $s_s$. Following the same argumentation as for the previous 
case, we have that the execution of that line occurs in the execution 
interval of $op$. From the definitions of $C$ and $C'$, we further have 
that $C'$ happens before $C$, since the element that $op'$ inserts 
in the directory by using \DIRINSERT, is the element that $op$ 
removes from the directory in $C$. Recall that by the way that the 
linearization points are assigned, the linearization point of $op'$ 
is placed in $C'$. Since $C$ is included in the execution interval 
of $op$ and $C'$ occurs after the execution of line~\lref{ln:102} 
and before $C$, and given that the linearization point of $op$ is 
in this case also placed in $C'$, it follows that the linearization 
point for $op$ is included in its execution interval.

Notice that the argument for the case where $op$ receives a response 
from $s_s$ because $s_s$ executes line~\lref{ln:98}, is analogous with 
the case where $C'$ precedes the execution of line~\lref{ln:102} for 
$op$ by $s_s$.

Thus, the claim holds for all cases.
\end{proof}
}
%
%
%




\ignore{
Further inspection of the pseudocode of Algorithm~\ref{alg11} indicates 
that the value of $top\_key$ is incremented before an element is inserted 
into the directory and decremented before one is removed from the directory. 
Furthermore, by Algorithm~\ref{alg13}, we see that if a client pop operation 
receives $\bot$ from the synchronizer in line~\lref{ln:104.2}, then the directory 
is not modified.
This implies the following observation.
\begin{observation}
\label{obs:empty-dstack}
The value of $top\_key$ is equal to $-1$ when as many elements have been 
inserted in the directory as have been removed. 
\end{observation}
}

By the way linearization points are assigned, further inspection of 
the pseudocode in conjunction with Observation~\ref{obs:seq-dstack} 
leads to the following. 
\begin{observation}
\label{obs:keys}
For each integer $i \geq 1$, the following hold at $C_i$:
Let $op_i$ and $op_{i+1}$ be two operations in $L$. 
\begin{itemize}
\item\label{st0} If $op_i$ is a push operation and $op_{i+1}$ is a push operation, then 
$top_{i+1} = top_i + 1$.
\item\label{st1} If $op_i$ is a push operation and $op_{i+1}$ is a pop operation, then 
$top_{i+1} = top_i$.
\item\label{st2} If $op_i$ is a pop operation and $op_{i+1}$ is a pop operation, then, 
if $top_i \neq -1$, $top_{i+1} = top_i - 1$.
\item\label{st3} If $op_i$ is a pop operation and $op_{i+1}$ is a push operation, then, 
if $top_i \neq -1$, $top_{i+1} = top_i$.
\end{itemize}
\end{observation}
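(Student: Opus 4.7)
The plan is a straightforward case analysis based on the code of Algorithm~\ref{alg11}, combined with Observation~\ref{obs:seq-dstack}.

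First, I would establish the following intermediate claim: between $C_i$ and $C_{i+1}$, the value of $top\_key$ can change only through two specific steps of the synchronizer $s_s$. The variable $top\_key$ belongs to $s_s$, so no other process writes to it, and inside Algorithm~\ref{alg11} the only writes are on lines~\lref{ln:92} and~\lref{ln:99}. By Observation~\ref{obs:seq-dstack}, $s_s$ processes one incoming message to completion before starting the next. Since line~\lref{ln:96} lies inside $s_s$'s handler, the steps taken by $s_s$ between $C_i$ and $C_{i+1}$ consist exactly of the suffix of the handler for $op_i$ that follows line~\lref{ln:96} (call it segment $S_1$), followed by the prefix of the handler for $op_{i+1}$ up to and including line~\lref{ln:96} (call it segment $S_2$). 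Within $S_1$, the only possible write to $top\_key$ is the decrement on line~\lref{ln:99}, which is reached iff $op_i$ is a pop \emph{and} the test on line~\lref{ln:97} passes, i.e.\ iff $op_i$ is pop and $top_i \neq -1$. Within $S_2$, the only possible write is the increment on line~\lref{ln:92}, which is reached iff $op_{i+1}$ is a push.

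From this intermediate claim, each of the four items of the observation follows immediately. If $op_i$ is push and $op_{i+1}$ is push, $S_1$ performs no write and $S_2$ increments once, yielding $top_{i+1} = top_i + 1$. If $op_i$ is push and $op_{i+1}$ is pop, neither segment writes, so $top_{i+1} = top_i$. If $op_i$ is pop with $top_i \neq -1$ and $op_{i+1}$ is pop, $S_1$ decrements once and $S_2$ does nothing, so $top_{i+1} = top_i - 1$. Finally, if $op_i$ is pop with $top_i \neq -1$ and $op_{i+1}$ is push, $S_1$ decrements and $S_2$ increments, so $top_{i+1} = top_i$.

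The argument is largely bookkeeping and I do not expect any substantive obstacle; the role of Observation~\ref{obs:seq-dstack} is exactly to forbid any concurrent modification of $top\_key$ that would muddle the case analysis. The only place where one must be careful is in the pop/pop and pop/push cases, where the hypothesis $top_i \neq -1$ is needed to guarantee that the decrement on line~\lref{ln:99} actually fires; without it, the test on line~\lref{ln:97} would fail, leaving $top\_key$ untouched in $S_1$ and producing a different relation. This edge case corresponds precisely to the situation where the stack is (conceptually) empty at the linearization point of $op_i$, and is the reason the pop/pop and pop/push items are stated conditionally.
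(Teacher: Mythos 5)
Your proof is correct and fills in the detail that the paper elides: the paper itself offers no written argument for this observation beyond the remark that it follows from the assignment of linearization points, Observation~\ref{obs:seq-dstack}, and inspection of the pseudocode, which is precisely what you carry out. Your decomposition of the steps of $s_s$ between $C_i$ and $C_{i+1}$ into the post-ln:\lref{ln:96} suffix $S_1$ of $op_i$'s handler and the pre-ln:\lref{ln:96} prefix $S_2$ of $op_{i+1}$'s handler is the right observation, and your note that the guard on line~\lref{ln:97} is evaluated against a value of $top\_key$ still equal to $top_i$ correctly justifies the conditional form of the pop/pop and pop/push cases.
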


Let $\sigma$ be the sequential execution that results by executing the 
operations in $L$ sequentially, in the order of their linearization points, 
starting from an initial configuration in which the queue is empty. 
Let $\sigma_i$ be the prefix of $\sigma$ that contains the operations in $L_i$.
Denote by $S_i$ the state of the sequential stack that results if the 
operations of $L_i$ are applied sequentially to an initially empty stack. 
Denote by $d_i$ the number of elements in $S_i$. We associate a sequence 
number with each stack element such that the elements from the bottommost 
to the topmost are assigned $0, \ldots, d_i-1$, respectively. Denote by 
$sl_{d_i}$ the $d_i$-th element of $S_i$.

Let $L'_i$ be the projection of $L_i$ that contains all operations in $L_i$ 
except those pop operations that return $\bot$. Denote those operations by 
$op'_i$. Denote by $C'_i$ the configuration in which the $i$-th operation of 
$L'_i$ is linearized. Let $S'_i$ be the state of the sequential stack after 
the operations of $L'_i$ have been applied to it, assuming an initially empty 
stack and denote by $d'_i$ the number of elements in $S'_i$. Again we associate 
a sequence number with each stack element such that the elements from the 
bottommost to the topmost are assigned $0, \ldots, d'_i-1$, respectively. 
Denote by $sl_{d'_i}$ the $d'_i$-th element of $S'_i$. We denote by 
$top'_i$ the value of the local variable $top\_key$ of $s_s$ at 
configuration $C'_i$. Let $k_i$ be the number of pop operations in $L'_i$.

By inspection of the pseudocode, it follows that:

\begin{observation}
	\label{obs:push-pop}
	If $op_i$ is a push operation, it inserts a pair $\langle key, data \rangle$ in the directory, 
	where $data$ is the argument of \CPush\ executed for $op_i$, and $key = top_i$.
	If $op_i$ is a pop operation then, if $top_i \neq -1$, it removes a pair $\langle key, data \rangle$ with $key = top_{i-1}$ from the directory and returns $data$; if $top_i = -1$, it does not remove any pair from the directory and returns $\bot$. 
\end{observation}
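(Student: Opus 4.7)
The proof reduces to direct inspection of Algorithms \ref{alg11}, \ref{alg12}, and \ref{alg13}, together with Observation \ref{obs:seq-dstack}, which guarantees that the synchronizer's handler for each message runs atomically with respect to other messages. The plan is to walk through \CPush\ and \CPop\ separately, pinning down the value of $top\_key$ observable at the linearization point $C_i$, which is placed immediately after the execution of line \lref{ln:96}.

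For the push case, the client sends $\langle\PUSH,cid\rangle$ at line \lref{ln:47} and blocks; on receipt, $s_s$ first executes line \lref{ln:92} (incrementing $top\_key$) and then line \lref{ln:96}, whose post-configuration is exactly $C_i$. Hence the value sent back equals the post-increment value, i.e.\ $top_i$. The client stores it as $key$ (line \lref{ln:103.1}) and calls \DIRINSERT($key,data$) at line \lref{ln:103} where $data$ is the original argument of \CPush, yielding the push claim.

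For the pop case, the client sends $\langle\POP,cid\rangle$ at line \lref{ln:106} and blocks; $s_s$ skips the \PUSH\ branch at line \lref{ln:91} and directly executes line \lref{ln:96}, and only afterwards, at line \lref{ln:99}, may it decrement $top\_key$. So the value received by the client is $top_i$. If $top_i=-1$, line \lref{ln:104.2} fires, $status$ is set to $\bot$, and the client returns without ever touching the directory. Otherwise the client enters the do-while loop at lines \lref{ln:100.1}--\lref{ln:101} and repeatedly invokes \DIRDELETE($key$). To identify $key$ with $top_{i-1}$, I would use Observation \ref{obs:keys}: between $C_{i-1}$ and the execution of line \lref{ln:96} by $op_i$, only $op_{i-1}$'s possible decrement at line \lref{ln:99} or $op_i$'s possible increment at line \lref{ln:92} can touch $top\_key$, and since $op_i$ is a pop the latter does not fire, so composing the intervening transitions reduces the received value to $top_{i-1}$ in the case covered by the statement.

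The hardest point is termination of the do-while loop and the identification of the returned $data$: both require that the pair with key $key$ does eventually reside in the directory. Since $top_i \neq -1$, the specific $top\_key$ value was produced by some earlier push $op_j$ with $j<i$ via line \lref{ln:92}, and Algorithm \ref{alg12} guarantees that the client of $op_j$ will eventually invoke \DIRINSERT\ with this key. Once that insertion takes effect, a subsequent iteration of the loop succeeds, and the $data$ returned is by construction the argument that $op_j$'s client passed to \CPush.
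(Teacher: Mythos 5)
Your push-case reasoning and the $top_i = -1$ branch of the pop case are correct, and you also correctly identify the value the synchronizer sends for a pop as $top_i$ (the value of $top\_key$ in the configuration immediately after line~\lref{ln:96}, which is $C_i$ by definition). The gap is in the last step of the $top_i \neq -1$ branch, where you assert that ``composing the intervening transitions reduces the received value to $top_{i-1}$.'' That conclusion does not follow from what precedes it: you have already noted that, between $C_{i-1}$ and line~\lref{ln:96} for $op_i$, the decrement at line~\lref{ln:99} belonging to $op_{i-1}$ may fire, and nothing in your argument rules it out. When $op_{i-1}$ is itself a pop with $top_{i-1} \neq -1$, that decrement does fire, Observation~\ref{obs:keys} gives $top_i = top_{i-1} - 1 \neq top_{i-1}$, and the client therefore calls \DIRDELETE\ with key $top_{i-1}-1$, not $top_{i-1}$. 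A concrete counterexample: for the sequence of two pushes followed by two pops, the second pop ($op_4$) receives and uses key $top_4 = 0$, while $top_3 = 1$.

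What your derivation actually establishes is that the key is $top_i$ (exactly as in the push case), and this is also what the lemma that follows the observation uses when it equates a pop's key with $top'_i$ rather than $top'_{i-1}$. Since the paper provides no proof of this observation beyond ``by inspection of the pseudocode,'' the appropriate move is not to force the conclusion to $top_{i-1}$ but to flag the $top_{i-1}$ in the statement as an error; with $top_i$ in its place, your own derivation closes the argument cleanly.
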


\begin{lemma}
For each integer $i > 0$, it holds that: 
\begin{itemize}
\item If $op'_i$ is a push operation, then it inserts element $d'_i$ 
into the stack and $top'_i = d'_i-1$.
\item If  $op'_i$ is a pop operation, then it removes element $d'_{i-1}$ 
from the stack, $top'_i = d'_{i-1} - 1$ and $top'_i$ is equal to the value of 
$top\_key$ of $s_s$ for the ${i - 2*k_i + 1}$-th push operation in $L'_i$ .
\end{itemize}
\end{lemma}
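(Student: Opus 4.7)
The plan is to prove the lemma by strong induction on $i$, using Observations~\ref{obs:seq-dstack}, \ref{obs:keys}, and~\ref{obs:push-pop} as the main tools. A key preliminary fact, read directly off Algorithm~\ref{alg11}, is that a pop operation returning $\bot$ leaves $top\_key$ and the state of the directory unchanged. Consequently, Observation~\ref{obs:keys} — originally stated along $L$ — transfers verbatim to the sequence $L'$: across any two consecutive operations $op'_{i-1}, op'_i$ of $L'$, intervening empty pops (if any) cannot affect $top\_key$, so the same four transition equations hold with $top$ replaced by $top'$. Together with the trivial identities $d'_i = d'_{i-1} + 1$ (when $op'_i$ is a push) and $d'_i = d'_{i-1} - 1$ (when $op'_i$ is a pop), this transferred observation will carry most of the argument.

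For the base case $i=1$, I would first argue that $op'_1$ must be a push: a pop occurring as the first operation of $L$ finds $top\_key = -1$, returns $\bot$ and is excluded from $L'$. Hence $op'_1$ is a push, for which $top'_1 = 0 = d'_1 - 1$, and Observation~\ref{obs:push-pop} identifies the inserted pair as the only (thus topmost) element of $S'_1$. For the inductive step, I would split on the type of $op'_i$ (push or pop) and, within each, on the type of $op'_{i-1}$, yielding four subcases. In every subcase, the transferred Observation~\ref{obs:keys} expresses $top'_i$ in terms of $top'_{i-1}$; substituting the inductive hypothesis for $top'_{i-1}$ and invoking the appropriate $d'_i = d'_{i-1} \pm 1$ identity yields $top'_i = d'_i - 1$ in the push case and $top'_i = d'_{i-1} - 1$ in the pop case. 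Observation~\ref{obs:push-pop} then pins down the pair inserted or removed as the topmost element of $S'_i$ or $S'_{i-1}$, respectively.

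The last assertion of the pop case — that $top'_i$ equals the value of $top\_key$ associated with the $(i-2k_i+1)$-th push of $L'_i$ — I would derive by first rewriting $d'_{i-1} = (i-k_i) - (k_i-1) = i - 2k_i + 1$, so the popped element sits at the topmost slot of $S'_{i-1}$, namely slot $d'_{i-1}$ counting from the bottom. To tie this slot to a specific push of $L'_i$, I would strengthen the inductive hypothesis with an auxiliary invariant identifying, for every $j$, the push in $L'_j$ responsible for the element at each depth of $S'_j$, together with its assigned key; by the LIFO discipline already captured in the first part of the claim, the push responsible for the topmost slot of $S'_{i-1}$ is exactly the one indexed $i-2k_i+1$ in the lemma's statement, and its assigned $top\_key$ value coincides with $top'_i$. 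The main obstacle will be formalizing this correspondence: since keys generated by the synchronizer may be reused after an intervening empty or non-empty pop decrements $top\_key$, several pushes can share the same assigned key, and one must carefully argue that the indexing $(i-2k_i+1)$ singles out the push currently owning the top slot of $S'_{i-1}$, rather than any other push with the same key.
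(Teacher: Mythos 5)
For the first two parts of each claim your plan coincides with the paper's (induction on $i$, Observation~\ref{obs:keys} plus the bookkeeping $d'_i = d'_{i-1} \pm 1$), and you are more careful than the paper in one respect: the paper applies Observation~\ref{obs:keys} directly to the sequence $L'$ without comment, whereas you justify the transfer by noting that $\bot$-returning pops leave $top\_key$ unchanged, which is the needed observation. You also sketch the case where $op'_i$ is a pop; the paper's written proof omits this case entirely, its induction step covering only the situation where $op'_{i+1}$ is a push.

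The gap is in your treatment of the last assertion of the pop case, which your plan takes to be provable although it is false as stated. Consider $L' = (\text{push}, \text{pop}, \text{push}, \text{push}, \text{pop})$: then $top'_1 = 0$, $top'_2 = 0$ (after which $top\_key$ decrements to $-1$), $top'_3 = 0$, $top'_4 = 1$, and $top'_5 = 1$. For $i = 5$ we have $k_5 = 2$, hence $i - 2k_i + 1 = 2$, and the second push of $L'_5$ is $op'_3$, whose assigned key is $top'_3 = 0 \neq 1 = top'_5$. Your worry about several pushes sharing an assigned key is precisely the right intuition, but no strengthened invariant about ``which push owns the top slot'' can rescue the formula: the $(i - 2k_i + 1)$-th push of $L'_i$ receives key $top'_i$ exactly when the first $d'_{i-1}$ operations of $L'_i$ are all pushes, and this fails as soon as an early pop intervenes. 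The assertion would have to be reformulated (e.g.\ in terms of the most recent push whose element is still in $S'_{i-1}$, or simply dropped, since the first two parts of the claim already determine the key of the popped element). As written it is unprovable, and the paper's own proof silently avoids the issue by never reaching the pop case.
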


\begin{proof}
We prove the claim by induction on $i$. 

{\bf Base case.} We prove the claim for $i = 1$. 
Recall that $S'_0 = \epsilon$ and $top'_0 = -1$.
By definition of $L'_i$, $op'_1$ is a push operation. By observation of 
the pseudocode (line~\lref{ln:92}) we have that $top'_1 = -1 + 1 = 0$ 
and $op'_1$ inserts an element with sequence number $d'_1-1 = 0$ into 
the stack. Thus, the claim holds. 

{\bf Hypothesis.} Fix any $i$, $i > 0$, and assume that the claim 
holds for $C'_i$. 

{\bf Induction step.} We prove that the claim also holds at $C'_{i+1}$.
First, consider the case where $op'_{i+1}$ is a push operation. We distinguish 
two cases. Case (i): $op'_i$ is a push operation, as well. For $op'_i$, the 
induction hypothesis holds. Therefore, $op'_i$ inserts the $d'_i$-th element 
into the stack and $top'_i = d'_i - 1$. By Observation~\ref{obs:keys}, 
$top'_{i+1} = top'_i + 1$. Since $S'_i$ has $d'_i$ elements, $op'_{i+1}$ 
inserts element $d'_{i+1} = d'_i + 1$, with sequence number $d'_{i+1} - 1  = 
(d'_i + 1) - 1 = d'_i = top'_i + 1 = top'_{i+1}$ and the claim holds.
Case (ii): $op'_i$ is a pop operation, which removes an element from the 
stack. Since the induction hypothesis holds, $top'_i = d'_{i-1} - 1$. 
Conversely, $op'_{i+1}$ inserts an element into the stack. By 
Observation~\ref{obs:keys}, we have that $top'_{i+1} = top'_i$ and the 
claim holds.

\end{proof}

\ignore{
Further inspection of the pseudocode of Algorithm~\ref{alg11} indicates 
that the value of $top\_key$ is incremented before an element is inserted 
into the directory and decremented before one is removed from the directory. 
Furthermore, by Algorithm~\ref{alg13}, we see that if a client pop operation 
receives $\bot$ from the synchronizer in line~\lref{ln:104.2}, then the directory 
is not modified. This implies the following observation. 
\begin{observation}
\label{obs:no-mod}
A pop operation that returns $-1$ does not modify the state of the stack. 
\end{observation}

\begin{lemma}
For each integer $i > 0$, it holds that if $op_i$ is a pop operation 
that returns $\bot$, then $S_{i-1} = \epsilon$.
\end{lemma}

\begin{proof}
Initially, notice that, by the pseudocode of Algorithm~\ref{alg13}, 
we have that a pop operation returns $\bot$ only if it receives 
a value of $top\_key = -1$ from the synchronizer. We now prove the 
claim by induction on $i$. 

{\bf Base case.} We prove the claim for $i = 1$. 
Since $S_0 = \epsilon$ and $top_0 = -1$, the claim holds trivially.

{\bf Hypothesis.} Fix any $i$, $i > 0$ and assume that the claim 
holds for all $C_j$, $j \leq i$. 

{\bf Induction step.} We prove that the claim also holds at $C_{i+1}$.
By Observation~\ref{obs:seq-dstack}, we have that the value of $top\_key$ 
is not modified between the executions of $op_i$ and $op_{i+1}$. 
Assume first that $op_i$ is a push operation. By observation of the 
pseudocode of Algorithm~\ref{alg11}, we have that when the synchronizer 
executes it for $op_i$, it increments $top\_key$ before sending $top_i$ 
to the client that invoked $op_i$. Since $top\_key$ is not modified 
before $C_{i+1}$, $op_{i+1}$ receives $top_{i+1} \neq -1$ from the 
synchronizer, a contradiction to the assumption that it returns $\bot$, 
since, for this to occur, it must receive $top_{i+1} = -1$. 
Thus the claim holds. 

Assume now that $op_i$ is a pop operation. By the induction hypothesis, 
we have that at the claim holds at $C_i$. We proceed by case analysis. 
Assume first that $op_i$ return $\bot$. In this case, by the hypothesis, 
$S_{i-1} = \epsilon$. By Observation~\ref{obs:no-mod}, we have that 
$op_i$ does not modify the state of the stack. By observation of the 
pseudocode of Algorithm~\ref{alg11}, we have that the value of $top\_key$ 
is not modified by the synchronizer for $op_i$. Therefore, $top_{i+1} = -1$ 
and $S_i = S_{i-1} = \epsilon$ and the claim holds. Assume now that 
$op_i$ returns a value other than $\bot$. Inspection of the pseudocode of 
the client pop operation in Algorithm~\ref{alg13} indicates that this occurs 
when the client that invoked $op_i$ receives a value for $top\_key$ other 
than $-1$ from the synchronizer. Inspection of the pseudocode of the 
synchronizer in Algorithm~\ref{alg11} (lines~\lref{ln:96.1}~-~\lref{ln:99}) 
indicates that after a value other than $-1$ is sent to a client for a 
pop operation, the value of $top\_key$ is decremented. 
\end{proof}

%
%
%

\begin{lemma}
For each integer $i > 0$, it holds that if $op_i$ is a pop operation, 
then it returns the value of the field $data$ of $sl_{d_{i-1}}$ if 
$S_{i-1} \neq \epsilon$, or $\bot$ if $S_{i-1} = \epsilon$.
\end{lemma}

\begin{proof}
We prove the claim by induction on $i$. 

{\bf Base case.} 
We prove the claim for $i$ = $1$. Recall that at $C_0$, since no 
operation has been linearized, the equivalent sequential stack is 
empty. Recall also that at $C_0$ it holds that $top\_key = -1$. 
If $op_1$ is a push operation, the claim holds trivially. 
Let then $op_1$ be a pop operation. 
Since $op_i$ is the first operation to be linearized in $L$, $L_1 = op_1$ 
and $S_1 = \epsilon$. Since $op_1$ is the first operation linearized in $L$, 
then by Observation~\ref{obs:seq-dstack} and the way linearization points 
are assigned, we have that $op_1$ is the first time in $\alpha$ that 
Algorithm~\ref{alg11} is executed. Therefore, and by inspection of 
Algorithm~\ref{alg11}, we have that $top_0 = -1$. Since $top\_key$ is 
not modified by Algorithm~\ref{alg11} for $op_1$, $top_1 = top_0 = -1$.
By inspection of the client pseudocode in Algorithm~\ref{alg13} 
(lines~\lref{ln:104}~-~\lref{ln:104.2}), we see that when a pop 
operation receives $-1$ from $s_s$, it returns $\bot$ to the 
client. Thus, the claim holds. 

{\bf Hypothesis.} Fix any $i$, $i > 0$ and assume that the claim 
holds for all $C_j$, $j \leq i$. 

{\bf Induction step.} We prove that the claim also holds at $C_{i+1}$.
If $op_{i+1}$ is a push operation, the claim holds trivially. Let then 
$op_{i+1}$ be a pop operation. We proceed by case analysis. 

First, assume that $op_{i+1}$ is linearized after the execution of 
line~\lref{ln:98} by $s_s$. This implies that in the configuration 
in which $s_s$ evaluates the \If\ condition of line~\lref{ln:97}, 
it evaluates to \true. By Observation~\ref{obs:empty-dstack}, this 
means that for each push operation that has been linearized up to 
that configuration, there has been a matching pop operation that 
has been linearized as well. It follows that $S_i$ is empty and 
that the claim holds. 

Next, assume that $op_{i+1}$ is linearized in the configuration right 
after the execution of line~\lref{ln:102} by $s_s$. By definition, this 
means that $op_{i+1}$ removes an element from the directory that has 
been inserted into the directory by a push operation $op_j$, $j \leq i$, 
which has been linearized before the execution of this line, due to 
the way linearization points are assigned. We distinguish two cases. 

First assume that $op_i$ is a push operation and assume that $k_i$ 
is the value of $top\_key$ that it has received by $s_s$, i.e., $op_i$ 
inserts into the directory an element with key $k_i$. Since $op_i$ is 
linearized before the execution of line~\lref{ln:102} by $s_s$ for 
$op_{i+1}$ and by Observation~\ref{obs:seq-dstack}, we have that at 
the end of the execution of the instance of Algorithm~\ref{alg11} 
by $s_s$ for $op_i$, it holds that $top\_key = k_i$. Inspection of 
Algorithm~\ref{alg11} shows that a pop operation that follows a 
push operation receives the same value of $top\_key$ as the one 
that was sent to the push operation. Therefore, if no further instance
of Algorithm~\ref{alg11} is executed for some other operation by $s_s$ 
after it executes it for $op_i$ and before it executes it for $op_{i+1}$, 
then the claim follows straight-forwardly. Assume now that between 
$C_i$ and $C_{i+1}$, more instances of Algorithm~\ref{alg11} are 
executed by $s_s$ for other operations. Let $op'$ be that out of those 
operations for which Algorithm~\ref{alg11} is executed last before $C_{i+1}$ 
and assume that it is a push. Let $k'$ be the value of $top\_key$ 
at the end of this instance of Algorithm~\ref{alg11}. Then, at $C_{i+1}$, 
$s_s$ sends $k'$ to the client that invoked $op_{i+1}$. Then this client 
attempts to remove from the directory an element with key $k'$. However, 
since there is no further operation linearized between $C_i$ and $C_{i+1}$, 
this element is not in the directory at $C_{i+1}$. Thus, the push operation 
that inserts in the directory the value which $op_{i+1}$ removes, is 
linearized after $C_{i+1}$ -- a contradiction to the definition of linearization 
points. If $op'$ is a pop operation and it receives $k'$ as the value of 
$top\_key$ from $s_s$, then $op_{i+1}$ receives $k'-1$ as value of $top\_key$.
Then, $op_{i+1}$ attempts to remove from the directory an element with key $k'-1$.
Let $op''$ be the push operation that inserts an element with this key. 
If $op''$ is linearized after $C_{i+1}$, once more we arrive at a contradiction. 
If $op''$ is linearized before $C_{i+1}$, then by the induction hypothesis, 
implies that each of the pop operations between $C_i$ and $C_{i+1}$ removes 
the top-most element of the sequential stack. Thus, at $C_{i+1}$, the element 
inserted by $op''$ is the top-most one and the claim holds. 

Finally, assume that $op_{i+1}$ is linearized right after the linearization 
point of that push operation $op'$ whose value it removes from the directory. 
In this case, since no further operation is linearized between $op_{i+1}$ 
and $op'$, this means that the value inserted by $op'$ is indeed the top-most 
of $S_i$ when it is removed by $op_{i+1}$ and the claim holds. 
\end{proof}
}

\ignore{
\begin{lemma}
For each integer $i > 0$, the following hold:
\begin{compactenum}
\item\label{c1} If $op_i$ is a push operation, it holds that $top_i$ = $top_{i-1} + 1$. 
\item\label{c2} If $op_i$ is a pop operation,  
$top_i$ = $top_{i-1} - 1$ if $top_{i-1} \geq 0$, or $top_i = -1$ otherwise.
\item \label{c3} If $S_i \neq \epsilon$, $sl_{d_i}.key = top_i$.
\item \label{c4} If $op_i$ is a pop operation, then it returns the value of the field $data$ of $sl_{d_{i-1}}$ or \NACK\ if $S_{i-1} = \epsilon$.
\end{compactenum}
\end{lemma}

\begin{proof}
We prove the claims by induction on $i$. 

{\bf Base case.} 
We prove the claim for $i$ = $1$. Let $op_1$ be a push operation. 
By the pseudocode, $top_0 = -1$ (line \lref{ln:90}). By definition, 
$op_1$ is the first operation to be linearized in $\alpha$, therefore 
it is linearized at $C_1$. Notice that $top\_key$ is incremented 
before the step that brings about $C_1$, i.e. it is incremented before 
line \lref{ln:96} is executed. Since $op_1$ is the first operation to 
be linearized in $\alpha$, and since the initial value of $top\_key$ 
is $-1$, this implies that the value $0$ is sent to the client 
$c$ that invoked $op_1$. Thus, by definition of the linearization points, 
at $C_1$, $top_1 = 0 = (-1)+1 = top_0 + 1$, i.e. Claim \ref{c1} holds. 
Furthermore, by definition of $S_1$, it only contains one element at $C_1$, 
namely the value of $top\_key$ sent to $c$. Thus, $sl_{d_1}.key = top_1$, 
i.e. Claim \ref{c3} holds. 

Now let $op_1$ be a push operation. Since $op_1$ is the first operation 
to be linearized in $\alpha$, $top\_key$ has its initial value when line 
\lref{ln:97} is executed by $s_s$ for $op_1$. By definition, $top_0 = -1$, 
thus Claim \ref{c2} holds trivially. By observation of the pseudocode lines 
\lref{ln:104}-\lref{ln:105}, $op_1$ returns the value sent to $c$ from the 
execution of line \lref{ln:98}. Since $S_0 = \epsilon$, and $op_1$ is not 
successful, $S_i = \epsilon$, and Claims \ref{c3} and \ref{c4} hold.

{\bf Hypothesis.} Fix any $i$, $i > 0$ and assume that the lemma holds at $C_i$. 

{\bf Induction step.} We prove that the claims also holds at $C_{i+1}$. 

Let $op_{i+1}$ be a push operation. Variable $top\_key$ is incremented before 
it is sent to $c$ that requested the push operation (lines \lref{ln:92}-\lref{ln:96}).
Thus, the value that $c$ receives (line \lref{ln:103.1}) and stores in local variable 
$key$, is $top_i + 1$. Then it holds that $top_{i+1}$ = $top_i+1$, thus Claim \ref{c1} 
holds. 

In case that $op_{i+1}$ is a pop operation and $top_i \geq 0$, then the value of $top_i$ 
is reduced by one (line \lref{ln:99}). Thus, we have that $top_{i+1}$ = $top_i - 1$ and 
Claim \ref{c2} holds. In case that $top_i = -1$, then it holds that $top_{i-1} = top_{i} = -1$, 
as in this case, the value of $top\_key$ is not modified (line \lref{ln:97}-\lref{ln:98}) 
and thus Claim~\ref{c2} holds.

We now prove Claim~\ref{c3}. In case that $op_{i+1}$ is a push operation, the 
pseudocode(line~\lref{ln:96}) and Claim~\ref{c1} imply that $s_s$ sends a value 
equal to $top_{i+1} = top_i + 1$ to $c$. The pseudocode (line~\lref{ln:103})
implies that the client $c$ inserts a pair of $\langle top_{i+1}, data \rangle$ 
in the directory. Thus, Claim~\ref{c3} holds. In case that $op_{i+1}$ is a pop 
operation, the induction hypothesis implies that Claim~\ref{c3} holds.

We now prove, Claim~\ref{c4}. By the semantics of \DIRDELETE, if at the point 
that the instance of \DIRINSERT\ is executed in the \Do\ - \While\ loop of lines 
\lref{ln:99}-\lref{ln:101} for $op_{i+1}$, the instance of \DIRINSERT\ of $op_{i+1}$ 
has not yet returned, then \DIRDELETE\ returns $\langle \bot, -\rangle$. Notice 
that the parameter of \DIRDELETE\ is $top_i$. By induction hypothesis (Claim~\ref{c3}), 
$top_i$ is the $key$ of the last pair $sl_{d_i}$ in $S_i$. Therefore, when \DIRDELETE\ 
returns a $status \neq \bot$, it holds that it returns the $data$ field of the last 
element in $S_i$ and that it is sent to $c$ and is used as the return value of $op_{i+1}$.
Thus Claim \ref{c4} holds. 
\end{proof}
}

From the above lemmas we have the following.

\begin{theorem}
The directory-based distributed stack implementation is linearizable. 
\end{theorem}

\ignore{
\begin{lemma}
For each integer $i > 0$, the following hold:
\begin{enumerate}
\item\label{c1} If $op_i$ is a push operation, it holds that $top_i$ = $top_{i-1} + 1$. 
\item\label{c2} If $op_i$ is a pop operation,  
$top_i$ = $top_{i-1} - 1$ if $top_{i-1} > 0$, or $top_i = 0$ otherwise.
\item \label{c3} $sl_i.key = top_i$.
\item \label{c4} If $op_i$ is a pop operation, then it returns the value of the field $data$ of $sl_{i}$.
\end{enumerate}
\end{lemma}

\begin{proof}[Proof Sketch.]
We prove the claims by induction on $i$. 

{\bf Base case.} 
For $i = 0$ the claim holds trivially.

{\bf Hypothesis.} Fix any $i-1$, $i > 0$ and assume that the lemma holds 
for any $j$, $1 \geq j \geq i-1$.

{\bf Induction step.} We prove that the claims also holds for $i$. 

In case that $op_{i}$ is a push operation. Variable $top\_key$ is incremented before 
it is sent to $c$ that requested the push operation (lines \lref{ln:92}-\lref{ln:96}).
Thus, the value that $c$ receives (line \lref{ln:103.1}) and stores in local variable $key$, 
is $top_i$. It holds that $top_{i}$ = $top_i$, thus Claim \ref{c1} holds. 

In case that $op_{i+1}$ is a pop operation and $top_i > 0$, then the value of $top_i$ 
is is reduced by one. Thus, we have that $top_{i}$ = $top_i - 1$ and Claim \ref{c1} holds. 
In case that $top_i = 0$, then it holds that $top_{i-1} = top_{i} = 0$ and thus Claim~\ref{c2} 
holds.

We now prove Claim~\ref{c3}. In case that $op_{i}$ is a push operation, the pseudocode
(line~\lref{ln:96}) and Claim~\ref{c1} implies that the synchronizer server sends 
a value equal to $top_i = top_{i-1} + 1$ to $c$. The pseudocode (line~\lref{ln:103})
implies that the client $c$ inserts a pair of $<top_i , data>$ in the directory. 
Thus, Claim~\ref{c3} holds.
In case that $op_i$ is a pop operation, induction hypothesis implies that Claim~\ref{c3}
holds.

We now prove, Claim~\ref{c4}. By the semantics of \DIRDELETE, if at the point that the 
instance of \DIRINSERT\ is executed in the \Do\ - \While\ loop of lines \lref{ln:99}-\lref{ln:101} 
for $op_{i}$, the instance of \DIRINSERT\ of $op_{i}$ has not yet returned, then \DIRDELETE\ returns 
$\langle \bot, -\rangle$. Notice that the parameter of \DIRDELETE\ is $top_i$. 
By induction hypothesis (Claim~\ref{alg11}(3c)), $top_i$ is the $key$ of the last pair 
$sl_i$ in $S_i$. Therefore, when \DIRDELETE\ returns a $status \neq \bot$, 
it holds that it returns the $data$ field of the last element 
in $S_i$ and this is sent to $c$ and is used as the return value of $op_{i+1}$.
Thus Claim \ref{c3} holds. 
\end{proof}
}

\ignore{
\vspace*{.2cm}
\noindent
{\bf Proof of Correctness.}
For simplicity, we present in this section a proof where the communication 
with the directory in the case of pop operations is performed by the synchronizer.
Notice that in that case, lines \lref{ln:100.1}-\lref{ln:101} are executed 
by $s_s$.
Let $\alpha$ be an execution of the directory-based stack implementation.
We assign linearization points to push and pop operations in $\alpha$ as follows: 
The linearization point of a push operation $op$ is placed in the configuration 
resulting from the execution of line~\lref{ln:96} for $op$ by $s_s$.
The linearization point of a pop operation $op$ is placed in the configuration resulting 
from the execution of either line~\lref{ln:98} or line~\lref{ln:102} for $op$ (whichever 
is executed) by $s_s$. Recall that in this case, when $s_s$ executes line~\lref{ln:102}, 
it sends the client the element that was removed from the directory.
Denote by $L$ the sequence of operations (which has been assigned linearization points)
in the order determined by their linearization points.

\begin{lemma}
The linearization point of a push (pop) operation $op$ is placed within its execution interval.
\end{lemma}

\begin{proof}
Assume that $op$ is a push operation and let $c$ be the client that invokes it.
After the invocation of $op$, $c$ sends a message to $s_s$ (line \lref{ln:47}) 
and awaits a response from it. 
Recall that routine {\tt receive()} (line~\lref{ln:103.1}) blocks until 
a message is received.  
The linearization point of $op$ is placed at the configuration resulting 
from the execution of line \lref{ln:96} for $op$ by $s_s$. This line is executed after 
the request by $c$ is received, i.e. after $c$ invokes {\tt ClientPush}. Furthermore, it 
is executed before $c$ receives the response by the server and thus, before {\tt ClientPush} 
returns. Therefore, the linearization point is included in the execution interval of push.

The argumentation regarding pop operations is similar.
\end{proof}



Let $C_i$ be the configuration in which the $i$-th operation $op_i$ of $L$ is linearized. 
Denote by $\alpha_i$, the prefix of $\alpha$ which ends with $C_i$
and let $L_i$ be the prefix of $L$ up until the operation that is linearized at $C_i$.
We denote by $top_i$ the value of the local variable $top\_key$ of $s_s$
at configuration $C_i$; let $top_0 = 0$.
Denote by $S_i$ the sequential stack that results if the operations of $L_i$ are
applied sequentially to an initially empty stack. Denote by $d_i$ the number of 
elements in $S_i$. We consider a sequence number with each element of the stack 
such that the elements from the bottommost to the topmost are assigned $1, \ldots, d_i$, 
respectively. Denote by $sl_{d_i}$ the $d_i$-th element of $S_i$.

\begin{lemma}
For each integer $i > 0$, the following hold:
\begin{compactenum}
\item\label{c1} If $op_i$ is a push operation, it holds that $top_i$ = $top_{i-1} + 1$. 
\item\label{c2} If $op_i$ is a pop operation,  
$top_i$ = $top_{i-1} - 1$ if $top_{i-1} \geq 0$, or $top_i = -1$ otherwise.
\item \label{c3} If $S_i \neq \epsilon$, $sl_{d_i}.key = top_i$.
\item \label{c4} If $op_i$ is a pop operation, then it returns the value of the field $data$ of $sl_{d_{i-1}}$ or \NACK\ if $S_{i-1} = \epsilon$.
\end{compactenum}
\end{lemma}

\begin{proof}
We prove the claims by induction on $i$. 

{\bf Base case.} 
We prove the claim for $i$ = $1$. Let $op_1$ be a push operation. 
By the pseudocode, $top_0 = -1$ (line \lref{ln:90}). By definition, 
$op_1$ is the first operation to be linearized in $\alpha$, therefore 
it is linearized at $C_1$. Notice that $top\_key$ is incremented 
before the step that brings about $C_1$, i.e. it is incremented before 
line \lref{ln:96} is executed. Since $op_1$ is the first operation to 
be linearized in $\alpha$, and since the initial value of $top\_key$ 
is $-1$, this implies that the value $0$ is sent to the client 
$c$ that invoked $op_1$. Thus, by definition of the linearization points, 
at $C_1$, $top_1 = 0 = (-1)+1 = top_0 + 1$, i.e. Claim \ref{c1} holds. 
Furthermore, by definition of $S_1$, it only contains one element at $C_1$, 
namely the value of $top\_key$ sent to $c$. Thus, $sl_{d_1}.key = top_1$, 
i.e. Claim \ref{c3} holds. 

Now let $op_1$ be a push operation. Since $op_1$ is the first operation 
to be linearized in $\alpha$, $top\_key$ has its initial value when line 
\lref{ln:97} is executed by $s_s$ for $op_1$. By definition, $top_0 = -1$, 
thus Claim \ref{c2} holds trivially. By observation of the pseudocode lines 
\lref{ln:104}-\lref{ln:105}, $op_1$ returns the value sent to $c$ from the 
execution of line \lref{ln:98}. Since $S_0 = \epsilon$, and $op_1$ is not 
successful, $S_i = \epsilon$, and Claims \ref{c3} and \ref{c4} hold.

{\bf Hypothesis.} Fix any $i$, $i > 0$ and assume that the lemma holds at $C_i$. 

{\bf Induction step.} We prove that the claims also holds at $C_{i+1}$. 

Let $op_{i+1}$ be a push operation. Variable $top\_key$ is incremented before 
it is sent to $c$ that requested the push operation (lines \lref{ln:92}-\lref{ln:96}).
Thus, the value that $c$ receives (line \lref{ln:103.1}) and stores in local variable 
$key$, is $top_i + 1$. Then it holds that $top_{i+1}$ = $top_i+1$, thus Claim \ref{c1} 
holds. 

In case that $op_{i+1}$ is a pop operation and $top_i \geq 0$, then the value of $top_i$ 
is reduced by one (line \lref{ln:99}). Thus, we have that $top_{i+1}$ = $top_i - 1$ and 
Claim \ref{c2} holds. In case that $top_i = -1$, then it holds that $top_{i-1} = top_{i} = -1$, 
as in this case, the value of $top\_key$ is not modified (line \lref{ln:97}-\lref{ln:98}) 
and thus Claim~\ref{c2} holds.

We now prove Claim~\ref{c3}. In case that $op_{i+1}$ is a push operation, the 
pseudocode(line~\lref{ln:96}) and Claim~\ref{c1} imply that $s_s$ sends a value 
equal to $top_{i+1} = top_i + 1$ to $c$. The pseudocode (line~\lref{ln:103})
implies that the client $c$ inserts a pair of $\langle top_{i+1}, data \rangle$ 
in the directory. Thus, Claim~\ref{c3} holds. In case that $op_{i+1}$ is a pop 
operation, the induction hypothesis implies that Claim~\ref{c3} holds.

We now prove, Claim~\ref{c4}. By the semantics of \DIRDELETE, if at the point 
that the instance of \DIRINSERT\ is executed in the \Do\ - \While\ loop of lines 
\lref{ln:99}-\lref{ln:101} for $op_{i+1}$, the instance of \DIRINSERT\ of $op_{i+1}$ 
has not yet returned, then \DIRDELETE\ returns $\langle \bot, -\rangle$. Notice 
that the parameter of \DIRDELETE\ is $top_i$. By induction hypothesis (Claim~\ref{c3}), 
$top_i$ is the $key$ of the last pair $sl_{d_i}$ in $S_i$. Therefore, when \DIRDELETE\ 
returns a $status \neq \bot$, it holds that it returns the $data$ field of the last 
element in $S_i$ and that it is sent to $c$ and is used as the return value of $op_{i+1}$.
Thus Claim \ref{c4} holds. 
\end{proof}

From the above lemmas we have the following.

\begin{theorem}
The directory-based distributed stack implementation is linearizable. 
\end{theorem}

\ignore{
\begin{lemma}
For each integer $i > 0$, the following hold:
\begin{enumerate}
\item\label{c1} If $op_i$ is a push operation, it holds that $top_i$ = $top_{i-1} + 1$. 
\item\label{c2} If $op_i$ is a pop operation,  
$top_i$ = $top_{i-1} - 1$ if $top_{i-1} > 0$, or $top_i = 0$ otherwise.
\item \label{c3} $sl_i.key = top_i$.
\item \label{c4} If $op_i$ is a pop operation, then it returns the value of the field $data$ of $sl_{i}$.
\end{enumerate}
\end{lemma}

\begin{proof}[Proof Sketch.]
We prove the claims by induction on $i$. 

{\bf Base case.} 
For $i = 0$ the claim holds trivially.

{\bf Hypothesis.} Fix any $i-1$, $i > 0$ and assume that the lemma holds 
for any $j$, $1 \geq j \geq i-1$.

{\bf Induction step.} We prove that the claims also holds for $i$. 

In case that $op_{i}$ is a push operation. Variable $top\_key$ is incremented before 
it is sent to $c$ that requested the push operation (lines \lref{ln:92}-\lref{ln:96}).
Thus, the value that $c$ receives (line \lref{ln:103.1}) and stores in local variable $key$, 
is $top_i$. It holds that $top_{i}$ = $top_i$, thus Claim \ref{c1} holds. 

In case that $op_{i+1}$ is a pop operation and $top_i > 0$, then the value of $top_i$ 
is is reduced by one. Thus, we have that $top_{i}$ = $top_i - 1$ and Claim \ref{c1} holds. 
In case that $top_i = 0$, then it holds that $top_{i-1} = top_{i} = 0$ and thus Claim~\ref{c2} 
holds.

We now prove Claim~\ref{c3}. In case that $op_{i}$ is a push operation, the pseudocode
(line~\lref{ln:96}) and Claim~\ref{c1} implies that the synchronizer server sends 
a value equal to $top_i = top_{i-1} + 1$ to $c$. The pseudocode (line~\lref{ln:103})
implies that the client $c$ inserts a pair of $<top_i , data>$ in the directory. 
Thus, Claim~\ref{c3} holds.
In case that $op_i$ is a pop operation, induction hypothesis implies that Claim~\ref{c3}
holds.

We now prove, Claim~\ref{c4}. By the semantics of \DIRDELETE, if at the point that the 
instance of \DIRINSERT\ is executed in the \Do\ - \While\ loop of lines \lref{ln:99}-\lref{ln:101} 
for $op_{i}$, the instance of \DIRINSERT\ of $op_{i}$ has not yet returned, then \DIRDELETE\ returns 
$\langle \bot, -\rangle$. Notice that the parameter of \DIRDELETE\ is $top_i$. 
By induction hypothesis (Claim~\ref{alg11}(3c)), $top_i$ is the $key$ of the last pair 
$sl_i$ in $S_i$. Therefore, when \DIRDELETE\ returns a $status \neq \bot$, 
it holds that it returns the $data$ field of the last element 
in $S_i$ and this is sent to $c$ and is used as the return value of $op_{i+1}$.
Thus Claim \ref{c3} holds. 
\end{proof}
}
}
\subsection{Directory-Based Queue}
\label{app:queue-dir}

	\begin{minipage}{.45\textwidth}
		\begin{algorithm}[H]
			\small
			\caption{Events triggered in the synchronizer of the directory-based queue.}
			\label{alg17}
			\begin{code}
				\lreset
				\firstline
				\integer\ $head\_key=0, tail\_key=0$;                          \ul 
				\nl
				a message $\langle op, cid \rangle$ is received:               \nl
				\n   \If\ ($op$ == \ENQ) $\lbrace$                 \label{ln:116} \nl
				\n     $tail\_key$++;                              \label{ln:117} \nl
						send($cid, tail\_key$);                     \label{ln:118} \nl
				\p   $\rbrace$ \Elseif\ ($op$ == \DEQ) $\lbrace$                  \nl 
				\n		\If\ ($head\_key < tail\_key$) $\lbrace$   \label{ln:119} \nl
				\n      $head\_key$++;                            \label{ln:121} \nl
						send($cid, head\_key$);                   \label{ln:122} \nl
				\p     $\rbrace$    \Else\          \label{ln:120.1}\nl

				\n       send($cid, \NACK$);                       \label{ln:120} \ul
				\p     $\rbrace$                \label{ln:120.2}  \ul

				\p   $\rbrace$
				\p
			\end{code}
		\end{algorithm}
	\end{minipage}

The directory-based distributed queue is implemented in a way similar to the 
directory-based stack implementation. 
To implement a queue, the synchronizer $s_s$ maintains two counters, $head\_key$ and $tail\_key$, 
which store the key associated with the first and the last, respectively, element 
in the queue. 

In order to perform an enqueue or dequeue operation, a client calls 
\texttt{ClientEnqueue()} or \texttt{ClientDequeue()}, respectively 
(Algorithms~\ref{alg18}~and~\ref{alg19}, respectively).
To apply an operation $op$, a client sends a request to $s_s$, in order to 
receive the key of the element to be inserted or deleted. The client then calls 
\DIRINSERT\ to insert the new element in the directory 
(line~\lref{ln:118}).  
Each time $s_s$ receives an enqueue request, 
it increments 
$tail\_key$ and then sends the value stored in $tail\_key$ to the client.

Each time $s_s$ receives a dequeue request, if $head\_key \neq tail\_key$, then 
$s_s$ sends the value of $head\_key$ to the client and increments $head\_key$ 
to store the key of the next element in the queue. The client then uses this 
value as the key of the element to remove from the directory (line~\lref{ln:127}). 
Otherwise, in case $head\_key = tail\_key$, $s_s$ sends 
\NACK\ to $c$ without changing 
$head\_key$. 

A dequeue operation for which $s_s$ sends a key value to the client by executing line~\lref{ln:122}, is referred 
to as a successful dequeue operation. On the other hand, a dequeue operation for which 
$s_s$ sends \NACK\ to the client by executing line~\lref{ln:120}, and for which 
in turn, the client returns $\bot$, is referred to as an unsuccessul dequeue operation.



\begin{figure}
	\begin{minipage}{.45\textwidth}
		\begin{algorithm}[H]
\small
\caption{Enqueue operation for a client of the directory-based queue.}
\label{alg18}
\begin{code}
 \firstline
    \void\ \CEnq(\integer\ $cid$, Data $data$) $\lbrace$ \nl
 \n   $sid$ = get the server id;                                 \nl
      send($sid, \langle\ENQ, cid \rangle$);                     \nl
      $tail\_key$ = receive($sid$);               \label{ln:123} \nl
      \DIRINSERT($tail\_key, data$);              \label{ln:124} \ul
 \p $\rbrace$
\end{code}
\end{algorithm}
	\end{minipage}
\hfill
	\begin{minipage}{.45\textwidth}
		\begin{algorithm}[H]
\small
\caption{Dequeue operation for a client of the directory-based queue.}
\label{alg19}
\begin{code}
  \firstline
    Data \CDeq(\integer\ $cid$) $\lbrace$                     \nl
 \n   $sid$ = get the server id;                                      \nl
      send($sid, \langle \DEQ, cid\rangle$);                          \nl
      $head\_key$ = receive($sid$);                  \label{ln:125}   \nl
      \If ($head\_key == \NACK$)                     \label{ln:126}   \nl
 \n     \return\ $\bot$;                         \label{ln:126.1} \nl
 \p   
  $data$ = \BDIRDELETE($head\_key$);          \label{ln:127}   \nl
      \return\ $data$;                              \label{ln:127.1} \ul
 \p $\rbrace$ 
\end{code}
\end{algorithm}
	\end{minipage}
\end{figure}

\subsubsection{Proof of Correctness}
Let $\alpha$ be an execution of the directory-based queue implementation.
We assign linearization points to enqueue and dequeue operations in $\alpha$ as follows: 
The linearization point of an enqueue operation $op$ is placed in the configuration resulting 
from the execution of line~\lref{ln:118} for $op$ by $s_s$. 
The linearization point of a dequeue operation $op$ is placed in the configuration resulting 
from the execution of either line~\lref{ln:122} or line~\lref{ln:120} for $op$ (whichever is 
executed) by $s_s$. 

\begin{lemma}
The linearization point of an enqueue (dequeue) operation $op$ is placed within its execution interval.
\end{lemma}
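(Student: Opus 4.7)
My plan is to mirror the structure of the analogous lemma proved earlier for the directory-based stack, exploiting the fact that in both cases the linearization point is placed at a line executed by the synchronizer $s_s$, while the client invocation and response are separated by a \send/\receive\ pair that brackets this line.

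First, I would handle the enqueue case. Let $op$ be an enqueue operation invoked by a client $c$. By inspection of \texttt{ClientEnqueue} (Algorithm~\ref{alg18}), immediately after $op$ is invoked, $c$ sends the $\langle \ENQ, cid\rangle$ message to $s_s$ and then blocks at the \texttt{receive} on line~\lref{ln:123}. On the synchronizer side, line~\lref{ln:118} can only be reached after the message from $c$ has been received, so its execution necessarily occurs after the invocation event of $op$. Moreover, line~\lref{ln:118} is the step that sends back the value of $tail\_key$ to $c$; since $c$ is blocked on \texttt{receive} until this message arrives, and since $op$ cannot respond before control returns from the subsequent \DIRINSERT, the configuration resulting from line~\lref{ln:118} must precede the response event of $op$. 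Hence the linearization point of $op$ lies strictly between its invocation and its response.

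Next I would treat the dequeue case by splitting on which of lines~\lref{ln:122} and~\lref{ln:120} the synchronizer executes for $op$. In both subcases the argument is symmetric to the enqueue case: by Algorithm~\ref{alg19}, \texttt{ClientDequeue} sends $\langle \DEQ, cid\rangle$ right after invocation and then blocks on \texttt{receive} at line~\lref{ln:125}, so whichever of the two lines $s_s$ executes must occur after the invocation. On the response side, if $s_s$ executes line~\lref{ln:122}, it sends $head\_key$ to $c$, which then proceeds to \BDIRDELETE\ on line~\lref{ln:127} and only afterwards returns at line~\lref{ln:127.1}; if $s_s$ executes line~\lref{ln:120}, it sends \NACK\ to $c$, which returns $\bot$ at line~\lref{ln:126.1}. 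In either case the client cannot respond before receiving the message sent at the linearization step, so the linearization point is placed before the response.

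The only mildly delicate point, which I would state explicitly, is that \texttt{receive} blocks until the requested message arrives and that messages are delivered in FIFO order, so the message sent by $s_s$ in response to $c$'s request is the very message that unblocks the corresponding \texttt{receive} in $c$. Combining this with the fact that a single $s_s$ executes all of Algorithm~\ref{alg17} sequentially (an analogue of Observation~\ref{obs:seq-dstack} for the queue), we conclude that the distinguished synchronizer step is unambiguous for each operation and lies within its execution interval. I do not anticipate any real obstacle beyond making these send/receive bracketing arguments precise; there is no interaction with \DIRINSERT\ or \BDIRDELETE\ needed here, since those calls occur entirely after (for enqueue) or after the linearization point (for successful dequeue), and are never reached at all for unsuccessful dequeue.
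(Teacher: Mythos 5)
Your proposal is correct and follows essentially the same bracketing argument as the paper's proof: line~\lref{ln:118} (respectively, line~\lref{ln:122} or~\lref{ln:120}) is executed by $s_s$ only after it receives $c$'s request, and $c$'s blocking \texttt{receive} guarantees that the response of $op$ occurs only after that line has been executed. The additional remarks you make about FIFO delivery and the single $s_s$ executing Algorithm~\ref{alg17} sequentially are fine and only tighten the same argument.
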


\begin{proof}
Assume that $op$ is an enqueue operation and let $c$ be the client that invokes it.
The linearization point of $op$ is placed at the configuration resulting 
from the execution of line \lref{ln:118} for $op$ by $s_s$. This line is executed after 
the request by $c$ is received, i.e. after $c$ invokes {\tt ClientEnqueue}. Furthermore, it 
is executed before $c$ receives the response by the server and thus, before {\tt ClientEnqueue} 
returns. Therefore, the linearization point is included in the execution interval of enqueue.

The argument regarding dequeue operations is similar.
\end{proof}

Denote by $L$ the sequence of operations which have been assigned linearization 
points in $\alpha$ in the order determined by their linearization points. Let $C_i$ 
be the configuration in which the $i$-th operation $op_i$ of $L$ is linearized; 
denote by $C_0$ the initial configuration.  
Denote by $\alpha_i$, the prefix of $\alpha$ which ends with $C_i$ and let $L_i$ 
be the prefix of $L$ up until the operation that is linearized at $C_i$. 
We denote by $head_i$ the value of the local variable $head\_key$ of $s_s$
at configuration $C_i$, and by $tail_i$ the value of the local variable $tail\_key$ 
of $s_s$ at $C_i$. By the pseudocode, we have that the initial values of $tail\_key$ 
and $head\_key$ are $0$; therefore, we consider that $head_0 = tail_0 = 0$. 

Notice that since only $s_s$ executes Algorithm~\ref{alg17}, we have the following.

\begin{observation}
\label{obs:seq}
Instances of Algorithm \ref{alg17} are executed 
sequentially by $s_s$, i.e. their execution does not overlap.
\end{observation}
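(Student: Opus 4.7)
The plan is to argue directly from the model of the synchronizer as a single-process state machine whose only activity, as specified by Algorithm~\ref{alg17}, is to react to incoming messages. First I would recall from Section~\ref{sec:model} that each process is modelled as a state machine whose steps follow its algorithm, and that a process's mailbox is a FIFO queue from which messages are consumed one at a time via \receive. Thus, at any configuration, only one step of $s_s$ is enabled, and that step belongs to at most one currently-executing instance of the event-handler in Algorithm~\ref{alg17}.

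Next I would formalize ``an instance of Algorithm~\ref{alg17}'' as the finite sequence of steps taken by $s_s$ starting at the step in which a message $\langle op, cid \rangle$ is delivered to $s_s$ (line~\lref{ln:116} matches the handler on \ENQ, line~\lref{ln:120.1} for the \DEQ\ branch) and ending at the step corresponding to line~\lref{ln:118}, \lref{ln:122}, or \lref{ln:120} at which the response is sent and the handler returns. Inspection of the pseudocode shows that within such an instance, $s_s$ performs no \receive\ and no blocking operation: each branch consists only of local increments, a comparison, and a single \send.

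Then I would show non-overlap by contradiction. Suppose for contradiction that two instances $I_1$ and $I_2$ of Algorithm~\ref{alg17} at $s_s$ overlap, i.e. some step of $I_2$ occurs between the first and last steps of $I_1$. Since only $s_s$ executes Algorithm~\ref{alg17} and $s_s$ is sequential, any step of $s_s$ belongs to at most one such instance, and furthermore the step that begins $I_2$ is the delivery/handling of a new incoming message. But by the preceding paragraph, between the start and end of $I_1$ the handler contains no point at which $s_s$ awaits or accepts a new message; hence $s_s$ cannot have begun $I_2$ before completing $I_1$, a contradiction.

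The main obstacle, such as it is, is being precise about what counts as ``an instance'' of the event-driven code and tying the informal sentence ``a server processes all incoming messages sequentially'' (used earlier for the DHT servers in Section~\ref{app:dht}) to the formal execution model of Section~\ref{sec:model}; once the handler is identified as a finite, non-blocking block of steps of the single state machine $s_s$, sequentiality is immediate. No further case analysis on \ENQ\ versus \DEQ\ is needed, since the argument depends only on the control-flow structure of the handler, not on its branches.
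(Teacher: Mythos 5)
Your proof is correct and follows essentially the same route as the paper, which states the observation with only the one-line justification that $s_s$ alone executes Algorithm~\ref{alg17}. You helpfully make explicit the additional fact the paper leaves implicit---that the handler contains no \receive\ or other blocking step, so a single sequential process necessarily completes one instance before beginning the next.
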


By inspection of Algorithm~\ref{alg17}, we have that for some instance of it, 
either lines~\lref{ln:116}~-~\lref{ln:117}, or lines~\lref{ln:119}~-~\lref{ln:122}, 
or lines~\lref{ln:120.1}~-~\lref{ln:120} are executed, where either $tail\_key$ or $head\_key$ is incremented. Then, by the 
way linearization points are assigned, and by Observation~\ref{obs:seq}, we 
have the following.

\begin{observation}
\label{obs:mod}
Given two configurations $C_i$, $C_{i+1}$, $i \geq 0$, in $\alpha$, there is at most one 
step in the execution interval between $C_i$ and $C_{i+1}$ that modifies either $head\_key$ 
or $tail\_key$. 
\end{observation}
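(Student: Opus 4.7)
The plan is to derive the observation directly from Observation~\ref{obs:seq} together with a straightforward inspection of Algorithm~\ref{alg17}. First, I would enumerate the lines of Algorithm~\ref{alg17} that modify $head\_key$ or $tail\_key$: these are only line~\lref{ln:117} (in the enqueue branch) and line~\lref{ln:121} (in the successful dequeue branch); the unsuccessful dequeue branch (lines~\lref{ln:120.1}~--~\lref{ln:120}) contains no such modification. Since the \If/\Elseif\ structure ensures that at most one of these two lines is executed per invocation, each instance of Algorithm~\ref{alg17} modifies $head\_key$ or $tail\_key$ at most once. Since only $s_s$ runs Algorithm~\ref{alg17}, no other process in the system modifies these variables.

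Next, I would argue that exactly one full instance of Algorithm~\ref{alg17} runs in the interval between $C_i$ and $C_{i+1}$. By the way linearization points are assigned for the directory-based queue, $C_i$ results from the execution of one of the send lines (line~\lref{ln:118}, \lref{ln:122}, or \lref{ln:120}) for $op_i$, and in all three cases this send is the final step of $op_i$'s instance of Algorithm~\ref{alg17}. So the instance processing $op_i$ has terminated by $C_i$. Symmetrically, $C_{i+1}$ is the terminating step of $op_{i+1}$'s instance. By Observation~\ref{obs:seq}, no two instances of Algorithm~\ref{alg17} overlap, so the only instance of Algorithm~\ref{alg17} executed in the interval strictly after $C_i$ and ending at $C_{i+1}$ is the one for $op_{i+1}$.

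The only (minor) subtlety I anticipate is ruling out that a partial, unfinished instance could contribute an additional modification without being reflected in a linearization point. This is handled by noting that the modifying lines~\lref{ln:117} and~\lref{ln:121} are textually immediately followed by the corresponding send lines~\lref{ln:118} and~\lref{ln:122}. By sequentiality of Algorithm~\ref{alg17} (Observation~\ref{obs:seq}), once such a modification is executed, the next relevant step of $s_s$ in this instance must be that send, which would itself constitute a new linearization point before any further instance can begin. Combining these pieces yields the claim: there is at most one step in the execution interval between $C_i$ and $C_{i+1}$ that modifies $head\_key$ or $tail\_key$.
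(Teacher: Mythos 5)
Your proof is correct and takes essentially the same approach as the paper: enumerate the lines that modify $head\_key$ or $tail\_key$, note that the branching in Algorithm~\ref{alg17} allows at most one such line per instance, and combine sequentiality (Observation~\ref{obs:seq}) with the linearization-point placement. You simply spell out the argument at a finer level of detail, and in doing so you tacitly correct a small imprecision in the paper's phrasing, which speaks of an increment occurring in all three branches even though the \NACK\ branch (lines~\lref{ln:120.1}--\lref{ln:120}) modifies neither variable.
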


More specifically, regarding the values of $head_i$ and $tail_i$, we obtain 
the following lemma.

\begin{lemma}
\label{lemma:ht}
For each integer $i \geq 1$, the following hold at $C_i$:
\begin{compactenum}
\item\label{cq1} If $op_i$ is an enqueue operation, 
then $tail_i = tail_{i-1} + 1$ and $head_i = head_{i-1}$.
\item\label{cq2} If $op_i$ is a dequeue operation and $head_{i-1} \neq  tail _{i-1}$, 
then $head_i = head_{i-1} + 1$ and $tail_i = tail _{i-1}$; 
otherwise $head_i = head_{i-1}$ and  $tail_i = tail_{i-1}$. 
\item\label{cq2.2} $head_i \leq  tail_i$.
\end{compactenum}
\end{lemma}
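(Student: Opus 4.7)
The plan is to prove all three claims of Lemma~\ref{lemma:ht} simultaneously by induction on $i$, exploiting Observations~\ref{obs:seq} and~\ref{obs:mod} together with the assignment of linearization points.

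For the base case $i=1$, I would use the fact that $head_0 = tail_0 = 0$ by initialization. If $op_1$ is an enqueue, then by inspection of lines~\lref{ln:116}--\lref{ln:118}, the synchronizer increments $tail\_key$ and sends its new value to the client, and the linearization point sits in the resulting configuration; thus $tail_1 = tail_0 + 1$ and $head_1 = head_0$, while $head_1 \leq tail_1$ trivially. If $op_1$ is a dequeue, then the guard $head\_key < tail\_key$ at line~\lref{ln:119} is false (since $head_0 = tail_0$), so $s_s$ executes lines~\lref{ln:120.1}--\lref{ln:120}, sending \NACK{} and leaving both counters unchanged, yielding Claim~\ref{cq2} and Claim~\ref{cq2.2}.

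For the inductive step, I would assume the three claims hold at $C_i$ and establish them at $C_{i+1}$. By Observation~\ref{obs:seq}, the instance of Algorithm~\ref{alg17} corresponding to $op_{i+1}$ runs in isolation, and by Observation~\ref{obs:mod} at most one of $head\_key$ and $tail\_key$ is modified between $C_i$ and $C_{i+1}$. A case split on the type of $op_{i+1}$ then follows directly from the pseudocode: if $op_{i+1}$ is an enqueue, lines~\lref{ln:116}--\lref{ln:118} raise $tail\_key$ by one and leave $head\_key$ alone, giving Claim~\ref{cq1}; if $op_{i+1}$ is a dequeue, the guard at line~\lref{ln:119} distinguishes the two subcases of Claim~\ref{cq2}, incrementing $head\_key$ (and leaving $tail\_key$ alone) when $head_i < tail_i$, and otherwise leaving both counters untouched.

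The only subtle point is Claim~\ref{cq2}, which is phrased in terms of the inequality $head_{i-1} \neq tail_{i-1}$ while the code actually tests $head\_key < tail\_key$. This is exactly where Claim~\ref{cq2.2} is needed in the induction: by the inductive hypothesis $head_i \leq tail_i$, so $head_i \neq tail_i$ is equivalent to $head_i < tail_i$, and the code's branch matches the claim's case distinction. Propagating Claim~\ref{cq2.2} to $C_{i+1}$ is then routine: an enqueue only increases $tail$, a successful dequeue increases $head$ by one starting from $head_i < tail_i$ so $head_{i+1} \leq tail_i = tail_{i+1}$, and an unsuccessful dequeue leaves both counters equal. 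Thus carrying Claim~\ref{cq2.2} along in the induction is the main (minor) obstacle; with that in hand, the rest reduces to straightforward inspection of Algorithm~\ref{alg17}.
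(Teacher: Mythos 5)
Your proposal is correct and follows essentially the same route as the paper's own proof: induction on $i$, the base case split by operation type, and the inductive step driven by Observations~\ref{obs:seq} and~\ref{obs:mod} together with a case analysis on the pseudocode lines executed, with Claim~\ref{cq2.2} carried along precisely to reconcile the code's test $head\_key < tail\_key$ with the lemma's phrasing $head_{i-1} \neq tail_{i-1}$.
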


\begin{proof}
We prove the claims by induction. 

{\bf Base case.} We prove the claims for $i = 1$. 
Assume first that $op_1$ is an enqueue operation. 
Then, the linearization point of 
$op_1$ is placed in the configuration resulting from the execution of 
line~\lref{ln:118}. By inspection of the pseudocode, we have that $tail\_key$ 
is incremented by $s_s$ between $C_0$ and $C_1$, before the linearization 
point of $op_1$.
Notice also that because of Observation~\ref{obs:seq} no process other than 
$s_s$ modifies neither $tail\_key$ nor $head\_key$ between $C_0$ and $C_1$. 
Thus, $tail_1 = tail_0 + 1$. 
The value of $head\_key$ is not modified by enqueue operations (lines~\lref{ln:116}~-~\lref{ln:117}), 
therefore $head_1 = head_0$. Thus, claim~\ref{cq1} holds.

Next, assume that $op_1$ is a dequeue operation. Then, the linearization point 
of $op_1$ may be placed at the configuration resulting from the execution of 
line~\lref{ln:120} or line~\lref{ln:122}, whichever is executed by $s_s$ for it. 
By inspection of the pseudocode (line~\lref{ln:119}), line~\lref{ln:120} 
is executed only in case $head_0 < tail_0$. Since $head_0 = tail_0 = 0$, 
line~\lref{ln:120} is not executed and $op_1$ is linearized at the configuration resulting 
from the execution of 
line~\lref{ln:122}. By the pseudocode, line~\lref{ln:121} and by Observation~\ref{obs:seq}, 
$head\_key$ is incremented by $1$ in the execution step preceding the linearization point of $op_1$, i.e. between 
configurations $C_0$ and $C_1$. Thus, $head_1 = head_0 + 1$. The value of $tail\_key$ 
is not modified by dequeue operations (lines~\lref{ln:119}~-~\lref{ln:121}), 
therefore $tail_1 = tail_0$. By the above, claim~\ref{cq2} also holds.

From the previous reasoning, we have that in case $op_1$ is an enqueue operation, then 
$tail_1 = 1$ and $head_1 = 0$, while if $op_1$ is a dequeue operation, 
$tail_1 = 0$ and $head_1 = 0$. In either case, $head_1 \leq tail_1$, thus claim~\ref{cq2.2} also holds.

{\bf Hypothesis.} Fix any $i$, $i \geq 1$, and assume that the claims hold for $i-1$. 

{\bf Induction Step.} We prove that the claims also hold for $i$.
First, assume that $op_i$ is an enqueue operation. Then, the linearization point of 
$op_i$ is placed in the configuration resulting from the execution of 
line~\lref{ln:118}. By inspection of the pseudocode, we have that $tail\_key$ 
is incremented by $s_s$ between $C_{i-1}$ and $C_i$, before the linearization 
point of $op_i$.
Notice also that because of Observation~\ref{obs:seq} no process other than 
$s_s$ modifies neither $tail\_key$ nor $head\_key$ between $C_{i-1}$ and $C_i$. 
Thus, $tail_i = tail_{i-1} + 1$. 
The value of $head\_key$ is not modified by enqueue operations (lines~\lref{ln:116}~-~\lref{ln:117}), 
therefore $head_i = head_{i-1}$. Thus, claim~\ref{cq1} holds.

Next, assume that $op_i$ is a dequeue operation. Then, the linearization point 
of $op_i$ may be placed at the configuration resulting from the execution of 
line~\lref{ln:120} or line~\lref{ln:122}, whichever is executed by $s_s$ for it. 
Let $op_i$ be linearized at the execution of 
line~\lref{ln:120}. 
By the induction hypothesis, $head_{i-1} \leq tail _{i-1}$.
By inspection of the pseudocode (line~\lref{ln:119}), 
line~\lref{ln:120} is executed only in the case that $head_{i-1} = tail_{i-1}$. 
By inspection of the 
pseudocode (lines~\lref{ln:119}~-~\lref{ln:120}) and by Observation~\ref{obs:seq}, 
it follows that in this case $head\_key$ is not modified in the execution interval 
between $C_{i-1}$ and $C_i$. 
Therefore, $head_i = head_{i-1}$. Since a dequeue operation does not modify 
$tail\_key$, it also holds that $tail_i = tail_{i-1}$. 
Finally, let $op_i$ be  linearized at the execution of 
line~\lref{ln:122}. In this case, by the induction hypothesis and reasoning as previously, 
 $head_{i-1} \neq tail_{i-1}$. By the pseudocode, line~\lref{ln:121} and by Observation~\ref{obs:seq}, 
$head\_key$ is incremented by $1$ in the computation step preceding the linearization point of $op_i$, i.e. between 
configurations $C_{i-1}$ and $C_i$. Thus, $head_i = head_{i-1} + 1$. The value of $tail\_key$ 
is not modified by dequeue operations (lines~\lref{ln:119}~-~\lref{ln:121}), 
therefore $tail_i = tail_{i-1}$. By the above, claim~\ref{cq2} also holds.

By the induction hypothesis, we have that $head_{i-1} \leq tail_{i-1}$.
From the previous reasoning, we have that in case $op_i$ is an enqueue operation, 
then $tail_i = tail_{i-1} + 1$ and $head_i = head_{i-1}$.  It follows that 
$head_i \leq tail_i$. On the other hand, if $op_i$ is a dequeue operation, 
$tail_i = tail_{i-1}$ and $head_i = head_{i-1}$ in case $head_{i-1} = tail_{i-1}$. 
Otherwise, in case $head_{i-1} < tail_{i-1}$, then $tail_i = tail_{i-1}$ and $head_i = head_{i-1} + 1$. 
Since $head_{i-1} < tail_{i-1}$, and since both $head\_key$ and $tail\_key$ are both 
incremented in steps of $1$, it follows that $head_i \leq tail_i$.
In either of the previous cases, claim~\ref{cq2.2} also holds.
\end{proof}

Let $\sigma$ be the sequential execution that results by executing the 
operations in $L$ sequentially, in the order of their linearization points, 
starting from an initial configuration in which the queue is empty. 
Let $\sigma_i$ be the prefix of $\sigma$ that contains the operations in $L_i$.
Let $Q_i$ be the state  
of the queue after the operations of $L_i$ have been applied to an empty queue
sequentially. Let the size of $Q_i$ (i.e. the number of elements contained in $Q_i$) 
be $d_i$. Denote by $sl_i^j$ the $j$-th element of $Q_i$, $1 \leq j \leq d_i$. 
Consider  a sequence of elements $S$. If $e$ is the first element of $S$, 
we denote by $S \setminus e$ the suffix of $S$ that results by removing 
only element $e$ from the first position of $S$. 
We further denote by $S' = S \cdot e$ the sequence that 
results by appending some element $e$ to the end of $S$.

%

%


By inspection of the pseudocode, it follows that:

\begin{observation}
	\label{obs:enq-deq}
	If $op_i$ is an enqueue operation, it inserts a pair $\langle data, key \rangle$ in the directory, 
	where $data$ is the argument of \CEnq\ executed for $op_i$, and $key = tail_i$.
	If $op_i$ is a dequeue operation then, if $head_i \neq tail_i$, it removes a pair $\langle data, key \rangle$ with $key = head_i$ from the directory and returns $data$; if $head_i = tail_i$, it does not remove any pair from the directory and returns $\bot$. 
\end{observation}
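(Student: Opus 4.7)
The plan is to prove the observation by a direct case analysis on Algorithm~\ref{alg17} (the synchronizer) combined with Algorithms~\ref{alg18} and \ref{alg19} (the client), invoking Lemma~\ref{lemma:ht} to identify the numerical value the synchronizer transmits in each case. Because the statement has two disjoint branches (enqueue vs. dequeue) and the dequeue branch further splits according to the test at line~\lref{ln:119}, the natural structure is a three-way case split mirroring the control flow of $s_s$.

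First I would dispatch the enqueue case. Upon receiving an \ENQ\ message from a client $c$, $s_s$ executes lines~\lref{ln:116}--\lref{ln:118}: it increments $tail\_key$ and sends the resulting value to $c$. By Lemma~\ref{lemma:ht}(\ref{cq1}) together with the placement of the linearization point at line~\lref{ln:118}, this transmitted value is exactly $tail_i$. On the client side, \CEnq\ receives it at line~\lref{ln:123} and invokes \DIRINSERT($tail\_key, data$) at line~\lref{ln:124}, where $data$ is the parameter passed by the caller of \CEnq. This yields the first half of the observation.

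Next I would handle the dequeue case by sub-casing on the outcome of the \If\ at line~\lref{ln:119}. If $s_s$ reaches line~\lref{ln:120}, then $head_{i-1} = tail_{i-1}$, and by Lemma~\ref{lemma:ht}(\ref{cq2}) we have $head_i = tail_i$; \NACK\ is sent to the client, which returns $\bot$ at line~\lref{ln:126.1} without ever touching the directory. If instead $s_s$ reaches line~\lref{ln:122}, then by Lemma~\ref{lemma:ht}(\ref{cq2.2}) $head_{i-1} < tail_{i-1}$; $head\_key$ is incremented, and by Lemma~\ref{lemma:ht}(\ref{cq2}) the transmitted value equals $head_i$. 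The client receives it at line~\lref{ln:125} and calls \BDIRDELETE($head\_key$) at line~\lref{ln:127}, returning the data field of the removed pair at line~\lref{ln:127.1}.

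The main subtlety, and the reason the implementation uses \BDIRDELETE\ rather than \DIRDELETE\ in the dequeue, is to justify that the call at line~\lref{ln:127} eventually finds a pair with key $head_i$ in the directory. Since $head_i \leq tail_i$ by Lemma~\ref{lemma:ht}(\ref{cq2.2}) and every value of $tail\_key$ previously handed out by the synchronizer has been (or will eventually be) inserted into the directory by the corresponding enqueue client at line~\lref{ln:124}, the key $head_i$ is the argument of some \DIRINSERT\ by an earlier enqueue $op_j$, $j<i$; fairness of $op_j$'s process guarantees that \BDIRDELETE\ eventually unblocks and returns the associated data. Modulo this fairness argument, every remaining part of the observation is immediate from inspection of the pseudocode.
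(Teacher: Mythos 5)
Your proof is correct and follows the same pseudocode-inspection route that the paper implicitly uses (the paper states this observation with no written proof at all, so your fully spelled-out case analysis is strictly more detail than the paper supplies). One small bookkeeping fix: the citation of Lemma~\ref{lemma:ht}(\ref{cq2.2}) belongs in the \NACK\ branch, where the failed test at line~\lref{ln:119} only yields $head_{i-1} \geq tail_{i-1}$ and cq2.2 is needed to force equality, whereas in the branch that reaches line~\lref{ln:122} the test itself already gives $head_{i-1} < tail_{i-1}$ with no appeal to cq2.2. Your observation that the claim hides a liveness obligation --- that \BDIRDELETE\ must eventually find the key, justified because $head_i \leq tail_i$ guarantees some earlier enqueue was handed that key and fairness guarantees its \DIRINSERT\ completes --- is a welcome point the paper leaves implicit.
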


\begin{lemma}
For each $i$, let $m_i$ be the number of enqueue operations in $L_i$ and $k_i$ be the number of 
successful dequeue operations in $L_i$. Then:
\begin{compactenum}
	 \item \label{cq3} $Q_i$ contains the elements inserted by the $m_i - k_i$ last enqueue operations in $L_i$, in order,
	 \item \label{cq4} $tail_i = m_i$,
	 \item \label{cq5} $head_i = k_i$,
	\item \label{cq6} if $op_i$ is a dequeue operation, then it returns the same response in $\alpha$ and $\sigma_i$. 
					If $op_i$ is a successful dequeue operation that removes the pair $\langle data, - \rangle$, from the 
					directory, then its response is the data inserted to the queue by the 
					$k_i$-th enqueue operation in $L_i$, whereas if $op_i$ is unsuccessful, then it returns $\bot$.
\end{compactenum}
\end{lemma}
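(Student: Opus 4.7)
The plan is to prove all four claims simultaneously by induction on $i$, leveraging Lemma \ref{lemma:ht} (which already tracks how $head\_key$ and $tail\_key$ evolve) and Observation \ref{obs:enq-deq} (which characterises what each operation does to the directory as a function of $head_i$ and $tail_i$). Note that by Lemma \ref{lemma:ht}, a dequeue operation is successful if and only if $head_{i-1} \neq tail_{i-1}$, i.e., exactly when it increments $head\_key$; this gives us a direct link between the counter $k_i$ (count of successful dequeues) and the value of $head_i$.

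For the base case $i = 1$, I would argue as follows. Initially $Q_0 = \emptyset$, $m_0 = k_0 = 0$, $head_0 = tail_0 = 0$. If $op_1$ is an enqueue, then by Lemma \ref{lemma:ht} we have $tail_1 = 1$, $head_1 = 0$, so $m_1 = 1$, $k_1 = 0$; Observation \ref{obs:enq-deq} tells us the inserted pair has key $tail_1 = 1$, and $Q_1$ consists of this single element. If $op_1$ is a dequeue, then since $head_0 = tail_0$ it must be unsuccessful, so $m_1 = k_1 = 0$, the counters are unchanged, no pair is removed from the directory, and the response is $\bot$ in both $\alpha$ and $\sigma_1$.

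For the inductive step, assume the claims for $i-1$ and case-split on $op_i$. If $op_i$ is an enqueue, Lemma \ref{lemma:ht} gives $tail_i = tail_{i-1}+1 = m_{i-1}+1 = m_i$ and $head_i = head_{i-1} = k_{i-1} = k_i$; by Observation \ref{obs:enq-deq} the new pair, carrying the data argument, is inserted into the directory with key $tail_i = m_i$, and I append it to $Q_{i-1}$ to obtain $Q_i$, so claim \ref{cq3} follows directly from the induction hypothesis applied to the tail $m_{i-1}-k_{i-1}$ enqueues of $L_{i-1}$, with the new enqueue appended. Claim \ref{cq6} is vacuous for enqueues. If $op_i$ is an unsuccessful dequeue, then by Lemma \ref{lemma:ht} (and the definition of successful) $head_{i-1} = tail_{i-1}$, which by the induction hypothesis means $k_{i-1} = m_{i-1}$, hence $Q_{i-1}$ is empty; so $op_i$ returns $\bot$ in $\alpha$ by Observation \ref{obs:enq-deq}, and also in $\sigma_i$ since dequeueing from an empty queue yields $\bot$; the counters and $Q$ remain unchanged. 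If $op_i$ is a successful dequeue, then $head_{i-1} < tail_{i-1}$, and by Lemma \ref{lemma:ht}, $head_i = head_{i-1}+1 = k_{i-1}+1 = k_i$ and $tail_i = tail_{i-1} = m_i$; by Observation \ref{obs:enq-deq}, $op_i$ removes from the directory the pair with key $head_i = k_i$, which by the induction hypothesis applied to \ref{cq3} is precisely the first element of $Q_{i-1}$, i.e., the data inserted by the $k_i$-th enqueue operation in $L_i$ (since the first element of $Q_{i-1}$ is the oldest surviving insertion, at index $k_{i-1}+1 = k_i$ in the enqueue ordering).

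The main obstacle is ensuring the cleanest statement and invocation of the correspondence between the linearized order of enqueues and the directory keys: specifically, that the $j$-th enqueue in $L_i$ inserts a pair with key exactly $j$. This is immediate from claim \ref{cq4} applied inductively at the configuration just after that enqueue linearizes, together with Observation \ref{obs:enq-deq}, but I would call it out explicitly as a helper fact before the case analysis on successful dequeues so that identifying ``the $k_i$-th enqueue in $L_i$'' with ``the enqueue that used key $k_i$'' is transparent. Once that is set up, matching the $\alpha$-response with the $\sigma_i$-response reduces to observing that a sequential queue on the same enqueue/dequeue sequence would also return the data of the $k_i$-th enqueue, since its FIFO removal order coincides with the order of enqueues in $L_i$.
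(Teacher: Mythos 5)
Your proposal is correct and follows the same inductive structure and case analysis as the paper's own proof, invoking Lemma~\ref{lemma:ht} and Observation~\ref{obs:enq-deq} at the same points and handling the same three inductive subcases (enqueue, successful dequeue, unsuccessful dequeue). The only, welcome, addition is that you state explicitly as a helper fact the key-to-enqueue-index correspondence (the $j$-th enqueue in $L_i$ inserts the pair with key $j$, via claim~\ref{cq4} applied inductively together with Observation~\ref{obs:enq-deq}), which the paper leaves implicit when it identifies the pair removed by a successful dequeue with the element inserted by the $k_i$-th enqueue.
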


\begin{proof}
We prove the claims by induction. 

{\bf Base case.} 
We prove the claim for $i$ = $1$.
First, assume that $op_1$ is an enqueue operation. Then, $m_1 = 1$ and $k_1 = 0$, and $Q_1$ contains a single element, 
namely the element inserted by the $(m_1 - k_1)$-th enqueue operation in $L_1$, which is $op_1$, thus claim~\ref{cq3} holds. 
By Lemma~\ref{lemma:ht}, we have that at $C_1$, 
$tail_1 = tail_0 + 1 = 1 = m_1$, and $head_1 = head_0 = 0 = k_1$. Thus, claims~\ref{cq4}~and~\ref{cq5} also hold. 
Claim~\ref{cq6} holds trivially. 

Next, assume that $op_1$ is a dequeue operation. Thus, $m_1 = 0$. 
By inspection of the pseudocode, $head_0 = tail_0 = 0$. 
By the way linearization points are assigned and by Observation~\ref{obs:mod}, 
$op_1$ is the operation for which the code of $s_s$ is executed for the first 
time. Thus, line~\lref{ln:120} is executed and $op_1$ is unsuccessful. Therefore, $k_1 = 0$. 
By Lemma~\ref{lemma:ht}, $head_1 = 0$ and $tail_1 = 0$. It follows that 
$head_1 = m_1$ and $tail_1 = k_1$. Thus, claims~\ref{cq4}~and~\ref{cq5} hold. 

Since $op_1$ is the first operation in $\sigma$ and since it is a dequeue operation, 
it is an unsuccessful dequeue operation in $\sigma$. Since $Q_0$ is empty, it follows 
that $Q_1 = \emptyset$, so claim~\ref{cq3} follows.  By inspection of the pseudocode, 
$op_1$ returns $\bot$ in $\alpha$. Since $op_1$ is an unsuccessful dequeue in $\sigma$, 
it also returns $\bot$ in $\sigma$. Thus, claim~\ref{cq6} also holds. 

{\bf Hypothesis.} Fix any $i$, $i > 0$ and assume that the claims hold 
for $i-1$.

{\bf Induction step.} We prove that the claims also hold 
for $i$.
First, assume that $op_i$ is an enqueue operation and let the $data$ argument of \CEnq\ for $op_i$ be $e$.
By the induction hypothesis, $Q_{i-1}$ contains the elements inserted by the  $m_{i-1} - k_{i-1}$ last enqueue 
operations in $L_{i-1}$, in order. Since $op_i$ is an enqueue operation, it follows that 
$Q_i = Q_{i-1} \cdot e$, $m_i = m_{i-1} + 1$, and 
claim~\ref{cq3} holds. Moreover, $k_i = k_{i-1}$. By Lemma~\ref{lemma:ht}, we have that 
$tail_i = tail_{i-1} + 1$ and that $head_i = head_{i-1}$.
By the induction hypothesis, $tail_{i-1} = m_{i-1}$ and $head_i = k_{i-1}$. 
It follows that $tail_i = m_{i-1} + 1 = m_i$ 
and that $head_i = head_{i-1} = k_{i-1} = k_i$. Thus, claims~\ref{cq4}~and~\ref{cq5} also hold. 
Claim~\ref{cq6} holds trivially. 

Now let $op_i$ be a dequeue operation. First, assume that $op_i$ is a successful 
dequeue operation, i.e. $s_s$ executes line~\lref{ln:122} for $op_i$. 
By inspection of the pseudocode, by Observation~\ref{obs:mod} and the way 
linearization points are assigned, it follows that when line~\lref{ln:119} 
is executed, $head\_key$ and $tail\_key$ have the values $head_{i-1}$ and 
$tail_{i-1}$ respectively. Since line~\lref{ln:122} is executed, it follows 
that $head_{i-1} < tail_{i-1}$. By the induction hypothesis, $head_{i-1} = k_{i-1}$ 
and $tail_{i-1} = m_{i-1}$. Since $op_i$ is a dequeue operation, $m_i = m_{i-1}$ 
and $k_i = k_{i-1} + 1$.  By Lemma~\ref{lemma:ht}, we have that 
$tail_i = tail_{i-1}$ and that $head_i = head_{i-1} + 1$. 
So, $tail_i = m_i$ and $head_i = k_i$. Thus, claims~\ref{cq4}~and~\ref{cq5} hold. 

By the induction hypothesis, $Q_{i-1}$ contains the elements inserted by the
$m_{i-1} - k_{i-1}$ last enqueue operations in $L_{i-1}$. 
Since $m_{i-1} - k_{i-1} = m_{i-1} - (k_i - 1)$,  $Q_{i-1}$ contains 
$m_{i-1} - k_i + 1$ elements. Since $op_i$ is a dequeue, it removes the first element of $Q_{i-1}$, 
so $Q_i$ contains $m_i - k_i$ elements, and claim~\ref{cq3} follows. 

Recall that $head_{i-1} < tail_{i-1}$. By the induction hypothesis 
(claims~\ref{cq4}~and~\ref{cq5}), it follows that $k_{i-1} < m_{i-1}$. 
Thus, $Q_{i-1}$ is not empty and $op_i$ is a successful dequeue operation 
also in $\sigma$. 

By the induction hypothesis, $Q_{i-1}$ contains the elements inserted by 
the $m_{i-1} - k_{i-1} = m_i - (k_i - 1) = m_i - k_i + 1$ last enqueue 
operations in $L_{i-1}$, in ascending order from head to tail. Since $m_i$ is 
the total number of successful enqueue operations in $L_i$ and $k_i$ the total 
number of successful dequeue operations, this means that in $\sigma$, $op_i$ 
removes the element inserted to the queue by the $k_i$-th enqueue operation. 
By inspection of the pseudocode, the client removes the key with value $head_i$ 
from the directory and returns the data that are associated with this key. 
Since $head_i = k_i$, $op_i$ returns the data from a $\langle data, key \rangle$
pair it removes from the directory. By inspection of the pseudocode, the 
key used to remove the data has the value $head_i$. Therefore, also in 
$\alpha$, $op_i$ returns the data associated with the $k_i$-th enqueue operation in $L_i$, 
and claim~\ref{cq6} holds.

Assume now that $op_i$ is an unsuccessful dequeue operation, i.e. $s_s$ executes line~\lref{ln:120} for $op_i$. 
By inspection of the pseudocode, by Observation~\ref{obs:mod} and the way 
linearization points are assigned, it follows that when line~\lref{ln:119} 
is executed, $head\_key$ and $tail\_key$ have the values $head_{i-1}$ and 
$tail_{i-1}$ respectively. Since line~\lref{ln:120} is executed, it follows 
that $head_{i-1} = tail_{i-1}$. By the induction hypothesis, $head_{i-1} = k_{i-1}$ 
and $tail_{i-1} = m_{i-1}$. Since $op_i$ is an unsuccessful dequeue operation, $m_i = m_{i-1}$ 
and $k_i = k_{i-1}$.  By Lemma~\ref{lemma:ht}, we have that in this case, 
$tail_i = tail_{i-1}$ and that $head_i = head_{i-1}$. 
So, $tail_i = m_i$ and $head_i = k_i$. Thus, claims~\ref{cq4}~and~\ref{cq5} hold. 
Furthermore, since $head_{i-1} = tail_{i-1}$, it follows that $m_{i-1} = k_{i-1}$, 
which means that $Q_{i-1}$ is empty, and thus, $op_i$ is unsuccessful in $\sigma$. 
As $m_i = m_{i-1}$ and $k_i = k_{i-1}$ and $op_i$ is unsuccessful, claim~\ref{cq3} holds.
Since $op_i$ is unsuccessful in $\alpha$, it removes no $\langle data, key \rangle$
pair from the directory and returns $\bot$. Thus, claim~\ref{cq6} also holds. 
\ignore{
Now, assume that $op_i$ is a dequeue operation. By the induction hypothesis, 
$Q_{i-1}$ contains the elements inserted by the  $m_{i-1} - k_{i-1}$ last enqueue operations 
in $L_{i-1}$. First, assume that $op_i$ is a successful dequeue operation, 
i.e. $s_s$ executed line~\lref{ln:122} for $op_i$.
%
By inspection of the pseudocode, $head_{i-1} = tail_{i-1}$.
By the induction hypothesis, $tail_{i-1} = m_{i-1}$ and $head_{i-1} < k_{i-1}$. 
Thus, $k_{i-1} < m_{i-1}$. So, $op_i$ is a successful dequeue also in $\alpha_i$.
In $\sigma_i$, $e$ is the first element of $Q_{i-1}$ and $Q_i = Q_{i-1} \setminus e$. 
So, the queue now contains the elements inserted by the $m_{i-1} - k_{i-1} - 1$ 
last enqueue operations in $L_{i-1}$, in order. Since $op_i$ is a successful dequeue operation, 
it follows that $m_i = m_{i-1}$, $k_i = k_{i-1} + 1$ and the number of elements in $Q_i$ are 
 $m_{i-1} - k_{i-1} - 1 = m_i - k_{i-1} - 1 = m_i - (k_{i-1} + 1)$, in order. 
 Thus, 
 claim~\ref{cq3} holds.
 By the induction hypothesis, Observation~\ref{obs:mod}, and Lemma~\ref{lemma:ht}, we have that 
 $tail_i = tail_{i-1} = m_{i-1} = m_i$ and that $head_i = head_{i-1} + 1 = k_{i-1} + 1 = k_i$. 
 Thus, claims~\ref{cq4}~and~\ref{cq5} also hold. 
 By the induction hypothesis, the elements of $Q_{i-1}$ are, from head to tail, 
 those inserted to the queue by enqueue operations 
 $k_{i-1}+1, k_{i-1}+2, \ldots, m_{i-1}-1, m_{i-1}$. Since, in $\sigma_i$, $op_i$ removes the first element of 
 $Q_{i-1}$, the element at the head  of $Q_i$ is that inserted by the $k_{i-1}+1$-th enqueue operation in $L_i$. 
 However, we have that  $k_{i-1}+1 = k_i$. In $\alpha_i$, $op_i$ removes from the directory 
 an element with key $head_i$. By Lemma~\ref{lemma:ht}, we have that $head_i = head_{i-1} + 1$. 
 By the induction hypothesis, we have that $head_{i-1} = k_{i-1}$. Therefore, 
 in $\alpha_i$, $op_i$ removes an element with key $head_i = k_{i-1} + 1 = k_i$ 
 and claim~\ref{cq6} holds.

 Next, assume that $op_i$ is an unsuccessful dequeue operation in $\alpha$. 
 Since $op_i$ is unsuccessful, by inspection of the pseudocode, the condition of 
 line~\ref{ln:119} evaluates to $\false$. By the induction hypothesis, $head_{i-1} = k_{i-1}$ 
 and $tail_{i-1} = m_{i-1}$, which means that in $\sigma_i$ there have been executed as many 
 successful dequeues as enqueues. Thus, $Q_{i-1}$ is empty and $op_i$ is an unsuccessful dequeue 
 also in $\sigma_i$ and claim~\ref{cq3} holds. 
 Furthermore, $m_i = m_{i-1}$ and $k_i = k_{i-1}$, so claims~\ref{cq4}~and~\ref{cq5} hold. 
 By Lemma~\ref{lemma:ht}, $tail_i = tail_{i-1}$ and $head_i = head_{i-1}$. 
 Thus, $m_i = m_{i-1}$ and $k_i = k_{i-1}$ and claims~\ref{cq3}~to~\ref{cq5} hold. 
 Thus, at $C_i$ there have been an equal number of successful dequeue and enqueue operations,
 which in turn implies that the queue is empty and claim~\ref{cq6} holds. 
}
\end{proof}

\ignore{
\begin{lemma}
\label{lemma:seq}
At $C_i$, $i \geq 1$, the following hold:
\begin{compactenum}
\item \label{cq3} If $op_i$ is an enqueue operation, then $tail_i = sl_i^{d_i}.key$.  
\item \label{cq4} If $op_i$ is a dequeue operation, then if $Q_{i-1} \neq \epsilon$, 
$head_i = sl_{i-1}^1.key$. If $Q_{i-1} = \epsilon$, then $head_i = tail_i$.
\end{compactenum}
\end{lemma}

\begin{proof}
We prove the claims by induction. 

{\bf Base case.} 
We prove the claim for $i$ = $1$.
Consider the case where $op_1$ is an enqueue operation. 
Then, $d_1 = 1$ and $Q_1$ contains only the pair $\langle 0, data \rangle$.
By Observation \ref{obs:seq}, 
it is the first operation in $\alpha$ for which an instance of Algorithm~\ref{alg17} 
is executed by $s_s$. Therefore, by Lemma~\ref{lemma:ht}, $tail_1 = tail_0 = 0$. 
Thus, $tail_1 = sl_1^{d_1}.key$

Now consider the case where $op_1$ is a dequeue operation. By Observation~\ref{obs:seq}, 
$op_1$ is the first operation in $\alpha$ for which an instance of Algorithm~\ref{alg17} 
is executed by $s_s$. Notice that then, $Q_1 = \epsilon$. Therefore, by Lemma~\ref{lemma:ht}, 
$head_1 = head_0 = 0$. By the same reasoning, $tail_1 = tail_0 = 0$. Thus, $head_1 = tail_1$, 
so claim~\ref{cq4} holds. 

{\bf Hypothesis.} Fix any $i$, $i > 0$ and assume that the lemma holds at $C_i$. 

{\bf Induction step.} We prove that the claims also hold at $C_{i+1}$. 
Assume that $op_{i+1}$ is an enqueue operation. We examine two cases. First, 
consider that $op_i$ is an enqueue operation as well. By the induction hypothesis, 
$sl_i^{d_i}.key = tail_i$. By Lemma~\ref{lemma:ht}, we have that $tail_{i+1} = tail_i + 1$. 
By Observation~\ref{obs:enq-deq}, we have that the client $c$ that initiated $op_{i+1}$ 
inserts a pair with $key = tail_{i+1} = tail_i + 1$ into the directory. By definition, 
$sl_{i+1}^{d_{i+1}}.key =  sl_i^{d_i}.key + 1$. Thus, $sl_{i+1}^{d_{i+1}}.key = tail_i + 1$, 
and claim~\ref{cq3} holds. 

Next, consider that $op_i$ is a dequeue operation. By Lemma~\ref{lemma:ht}, 
dequeue operations do not modify $tail\_key$. This lemma further implies that 
$tail_{i+1} = tail_j + 1$, where $op_j$ is the last enqueue operation preceding 
$op_{i+1}$ in $L_{i+1}$. By definition and by Observation~\ref{obs:enq-deq}, 
$op_j$ enqueues a pair with $key = tail_j$ to $Q_j$. Furthermore, by definition 
of $op_j$, all other operations in $L_{i+1}$ that have a linearization point 
between that of $op_j$ and $op_{i+1}$, are dequeue operations. Therefore, no 
further element is appended to $Q$ between $C_j$ and $C_{i+1}$, i.e. $sl_j^{d_j} = sl_i^{d_i}$. 
Notice that $sl_j^{d_j}. key = tail_j$. By Observation~\ref{obs:enq-deq}, $c$ inserts a 
pair with $key = tail_{i+1}$ into the directory and by definition, 
$sl_{i+1}^{d_{i+1}}.key =  sl_i^{d_i}.key + 1$. Thus, since $tail_{i+1} = tail_j + 1$, 
it follows that $sl_{i+1}^{d_{i+1}}.key = tail_j + 1 = sl_j^{d_j}.key + 1 = sl_i^{d_i}.key + 1$, 
and claim~\ref{cq3} holds. 

Now let $op_{i+1}$ be a dequeue operation. Again we examine two cases. 
First, consider that $op_i$ is a dequeue operation as well.  By the induction 
hypothesis, $op_i$ dequeues an element with key $sl_{1_{i-1}}.key = head_i$. 
Since claim~\ref{cq4} holds at $C_i$, $sl_{1_i}.key = head_i+1$. By Lemma~\ref{lemma:ht}, 
$head_{i+1} = head_i + 1$. Thus, $op_{i+1}$ removes from 
the sequential queue the element with key equal to $head_i + 1$. Since 
claims~\ref{cq3}~and~\ref{cq4} hold at $C_i$ by the induction hypothesis, 
we have that this element is $sl_{1_i}$, i.e. claim \ref{cq4} also holds 
at $C_{i+1}$. 

Next consider that $op_i$ is an enqueue operation. By Lemma~\ref{lemma:ht}, 
enqueue operations do not modify $head\_key$. This lemma further implies that 
$head_{i+1} = head_j + 1$, where $op_j$ is the last dequeue operation preceding 
$op_{i+1}$ in $L_{i+1}$. By definition and by Observation~\ref{obs:enq-deq}, 
$op_j$ dequeues a pair with $key = tail_j$ from $Q_{j-1}$. Furthermore, by definition 
of $op_j$, all other operations in $L_{i+1}$ that have a linearization point 
between that of $op_j$ and $op_{i+1}$, are enqueue operations. Therefore, no 
further element is removed from $Q$ between $C_j$ and $C_{i+1}$, i.e. $sl_j^1 = sl_i^1$. 
Notice that $sl_j^1. key = head_j$. By Observation~\ref{obs:enq-deq}, $c$ removes a 
pair with $key = head_{i+1}$ from the directory and by definition, 
$sl_{i+1}^1.key =  sl_i^1.key + 1$. Thus, since $head_{i+1} = head_j + 1$, 
it follows that $sl_{i+1}^1.key = head_j + 1 = sl_j^1.key + 1 = sl_i^1.key + 1$, 
and claim~\ref{cq4} holds. 
\end{proof}


By Lemma~\ref{lemma:ht} and by inspection of the pseudocode, we have that 
at $C_i$, $i > 0$, the value of $tail\_key$ indicates 
the number of enqueue operations on $Q_i$ that have been linearized in 
$\alpha_i$, and the  value of $head\_key$ indicates the number of successful 
dequeue operations (i.e. dequeue operations that do not return $\bot$) on $Q_i$ 
that have been linearized in $\alpha_i$. Thus, the following corollary holds.

\begin{corollary}
\label{cor:empty}
$Q_i = \epsilon$ if and only if $head_i = tail_i$. 
\end{corollary}

\begin{lemma}
If $op_i$ is a dequeue operation, then it returns the value of the field $data$ of 
$sl_{i-1}^1$ or $\bot$ if $Q_{i-1} = \epsilon$.
\end{lemma}

\begin{proof}[Proof Sketch]

Consider the case where $Q_{i-1} \neq \epsilon$. By definition of $Q_i$, we have that 
$Q_i = Q_{i-1} \setminus \{sl_{i-1}^1\}$. Let $op_j$ be the enqueue operation that is 
linearized before $op_i$ and inserts an element with key $head_i$ to the queue. Notice 
by the pseudocode, line \lref{ln:127}, that the parameter of \DIRDELETE\ is $head_i$. 
By the semantics of \DIRDELETE, if at the point that the instance of \DIRDELETE\ is 
executed in the \Do\ - \While\ loop of  lines~\lref{ln:127}~-~\lref{ln:127.1} for $op_i$, 
the instance of \DIRINSERT\ of $op_j$ has not yet returned, then \DIRDELETE\ returns 
$\langle \bot, -\rangle$. 

By Lemma~\ref{lemma:seq}, and since $head\_key$ is not modified by the execution of 
line~\lref{ln:122} by the server, $head_i$ is the $key$ of the first pair $sl_{i-1}^1$ 
in $Q_{i-1}$. Therefore, when \DIRDELETE\ returns a $status \neq \bot$, it holds 
that it returns the $data$ field of $sl_{i-1}^1$, the first element in $Q_{i-1}$, 
as the return value of $op_i$, i.e. the claim holds. 

Now consider the case where $Q_{i-1} = \epsilon$.
Since, by Corollary~\ref{cor:empty}, when this is the case, $head_i = tail_i$, 
\NACK\ is sent to the client that invoked $op_i$ and, by inspection of the 
pseudocode, $op_i$ returns $\bot$, i.e. the claim holds. 
\end{proof}

}

From the above lemmas we have the following:

\begin{theorem}
The directory-based queue implementation is linearizable.  
\end{theorem}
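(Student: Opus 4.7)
The plan is to assemble the previous lemmas into the definition of linearizability. Concretely, let $\alpha$ be any execution produced by the directory-based queue implementation, let $L$ be the sequence of all operations in $\alpha$ that have been assigned linearization points, ordered by those points, and let $\sigma$ be the sequential execution obtained by applying the operations of $L$ in order, starting from the empty queue. I will argue that $\sigma$ witnesses the linearizability of $\alpha$ by checking the three conditions: (i) $\sigma$ contains every completed operation in $\alpha$ (and possibly some pending ones), (ii) each operation in $\sigma$ returns the same response as in $\alpha$, and (iii) $\sigma$ respects the real-time partial order $\prec_\alpha$.

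For (i), I would note that linearization points are assigned to every operation for which $s_s$ executes the corresponding line (line~\lref{ln:118} for enqueues, or line~\lref{ln:120}/line~\lref{ln:122} for dequeues). By inspection of the client code in Algorithms~\ref{alg18}~and~\ref{alg19}, together with the fact that \DIRINSERT\ and \BDIRDELETE\ eventually return (the latter because, as argued in the stack proof and Observation~\ref{obs:enq-deq}, whenever $s_s$ hands out a key $k$ to a dequeue it must have previously handed out $k$ to an enqueue, so the corresponding \DIRINSERT\ will eventually complete), a completed \CEnq\ or \CDeq\ call must have caused $s_s$ to execute its corresponding line; hence every completed operation appears in $L$ and thus in $\sigma$. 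Pending operations for which $s_s$ has already executed the relevant line may also be included — this is allowed by the definition of linearizability.

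For (ii), the enqueue case is immediate since \CEnq\ returns no value. For dequeues, Claim~\ref{cq6} of the main lemma states precisely that $op_i$ returns the same response in $\alpha$ and in $\sigma_i$ (and thus in $\sigma$, since subsequent operations do not alter already-returned responses). For (iii), I would appeal to the first lemma of this subsection, which places the linearization point of every operation within its execution interval: if $op_1 \prec_\alpha op_2$, then the response of $op_1$ precedes the invocation of $op_2$, so the linearization point of $op_1$ (being before the response of $op_1$) precedes the linearization point of $op_2$ (being after the invocation of $op_2$), hence $op_1$ precedes $op_2$ in $L$ and in $\sigma$.

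The main obstacle, as reflected in the bulk of the preceding proof work, is really (ii) for dequeues, which required the inductive bookkeeping showing that the key $head_i$ handed out by $s_s$ identifies precisely the element at the head of the sequential queue $Q_{i-1}$, together with Observation~\ref{obs:enq-deq} that $op_i$ removes from the directory exactly the pair with that key. Once Claim~\ref{cq6} is in hand and combined with the execution-interval placement of linearization points, the theorem follows directly, with no further case analysis needed.
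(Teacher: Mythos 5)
Your proposal is correct and essentially spells out the assembly the paper leaves implicit: the paper simply states ``From the above lemmas we have the following'' and relies on the reader to combine the execution-interval lemma (for real-time order), Claim~\ref{cq6} of the main lemma (for response equivalence), and the observation that every completed client operation triggers the corresponding synchronizer line (for completeness). Your three-part checklist (completeness, response agreement, real-time respect) matches that intent exactly, and your remark that the termination of \BDIRDELETE\ is guaranteed because $s_s$ only hands out a dequeue key after having handed it out to some enqueue is a correct and useful detail that the paper does not explicitly state.
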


\subsection{Queues with Special Functionality}
{\bf Synchronous queue.}
A synchronous queue $Q_S$ is an implementation of the queue data type. 
Instead of storing elements, a synchronous queue matches instances of $Dequeue()$ with instances of 
$Enqueue()$ 
operations. Thus, if $op_e$ is an instance of an $Enqueue(x, Q_S)$ operation 
and $op_d$ an instance of a $Dequeue(Q_S)$ such that 
$op_d$ returns the element $x$ enqueued by $op_e$, then 
a synchronous queue ensures that 
the execution intervals of $op_e$ and $op_d$ are overlapping.

In order to derive a distributed synchronous queue from the directory-based queue proposed here, 
$s_s$ must respond to a dequeue request with the value of $head$ and increment $head$, even if 
$head = tail$. Moreover, $s_s$ must use a local queue to store active enqueue requests together 
with the keys it has assigned to them (notice that there can be no more such requests that the 
number of clients); $s_s$ must send the key $k$ for each such enqueue request to the client that 
initiated it, at the time that $head$ becomes equal to $k$. In this way, the execution interval 
of an enqueue operation for element $e$ overlaps that of the dequeue operation which gets $e$ as 
a response, as specified by the semantics of a synchronous queue.

\vspace*{.1cm}
\noindent
{\bf Delay queue.}
A delay queue $Q_D$ implements the queue abstract data type. Each element $e$ of a delay queue 
is associated with a delay value $t_e$ that represents the time that $e$ must remain in the queue before it can be removed from it. 
Thus, an $Enqueue(e, t_e, Q_D)$ inserts an element $e$ with time-out value $t_e$ to $Q_S$. $Dequeue(Q_D)$ returns the element $e$ 
residing at the head of $Q_D$ if $t_e$ has expired and blocks (or performs spinning) if this is not the case. Notice that this implementation 
can easily be provided by associating each element inserted in the directory with a time-out value. We also have to change the way
that the directory works so that it takes into consideration the delay of each element before removing it.
\subsection{Directory-Based Double-Ended Queue (Deque)}
\label{app:deque-dir}

The implementation of the directory-based deque follows similar principles 
as the stack and queue implementations. In order to implement a deque, $s_s$ 
also maintains two counters, $head$ and $tail$, which 
store the key associated with the first and the last, respectively, element in the deque.
However, in this case,
counters $head$ and $tail$ may store negative integers and are incremented or decremented based on the operation to be performed.

\subsubsection{Algorithm Description}
Event-driven pseudocode for the synchronizer $s_s$ is presented in 
Algorithm \ref{alg-dirdeq}; $s_s$ now performs a combination of 
actions presented for the synchronizers of the stack and the queue 
implementations (Algorithms ~\ref{alg11}
and ~\ref{alg17}). 

The synchronizer $s_s$ has two counters, $head\_key$ and $tail\_key$ 
(line \lref{ln:alg-dirdeq-1}), that store the key associated with the first and the 
last, respectively, element in the deque. The $head\_key$ is modified when 
operations targeting the front are received by $s_s$ and the $tail\_key$ is modified 
when operations targeting the back are received by $s_s$. Because each endpoint of a deque 
behaves as a stack, the actions for enqueuing and dequeuing are similar as in Algorithm~\ref{alg11}. 

	\begin{minipage}{.45\textwidth}
\begin{algorithm}[H]
\small
\caption{Events triggered in the synchronizer of the directory-based deque.}
\label{alg-dirdeq}
\begin{code}
  \lreset
  \firstline
     \integer\ $head\_key=0$, $tail\_key=0$;             \label{ln:alg-dirdeq-1} \ul 
                                                                                 \nl
     a message $\langle op,\ cid\rangle$ is received:    \label{ln:alg-dirdeq-2} \nl
  \n   \Switch\ ($op$) $\lbrace$                         \label{ln:alg-dirdeq-3} \nl
  \n     \Case\ \ENQT:                                   \label{ln:alg-dirdeq-4} \nl
  \n       $tail\_key++$;                                \label{ln:alg-dirdeq-5} \nl
           send($cid$, $tail\_key$);                     \label{ln:alg-dirdeq-6} \nl
           \Break;                                                               \nl
  \p     \Case\ \DEQT:                                   \label{ln:alg-dirdeq-7} \nl
  \n       \If\ ($tail\_key == head\_key$) $\lbrace$     \label{ln:alg-dirdeq-8} \nl
  \n         send($cid$, \NACK);                         \label{ln:alg-dirdeq-9} \nl
  \p       $\rbrace$ \Else\ $\lbrace$                                            \nl
  \n         \Do\ $\lbrace$                             \label{ln:alg-dirdeq-11} \nl
  \n           $status$ = \DIRDELETE($tail\_key$);      \label{ln:alg-dirdeq-12} \nl
  \p         $\rbrace$ \While\ ($status == \bot$);                               \nl
             $tail\_key--$;                             \label{ln:alg-dirdeq-10} \nl
             send($cid$, $status$);                     \label{ln:alg-dirdeq-13} \ul 
  \p       $\rbrace$                                                             \nl
           \Break;                                                               \nl
  \p     \Case\ \ENQH:                                  \label{ln:alg-dirdeq-14} \nl 
  \n       send($cid$, $head\_key$);                    \label{ln:alg-dirdeq-16} \nl
           $head\_key--$;                               \label{ln:alg-dirdeq-15} \nl
           \Break;                                                               \nl
  \p     \Case\ \DEQH:                                  \label{ln:alg-dirdeq-17} \nl
  \n       \If\ ($tail\_key == head\_key$) $\lbrace$    \label{ln:alg-dirdeq-18} \nl
  \n         send($cid$, \NACK);                        \label{ln:alg-dirdeq-19} \nl
  \p       $\rbrace$ \Else\ $\lbrace$                                            \nl
  \n         $head\_key++$;                             \label{ln:alg-dirdeq-20} \nl
             \Do\ $\lbrace$                             \label{ln:alg-dirdeq-21} \nl
  \n           $status$ = \DIRDELETE($head\_key$);      \label{ln:alg-dirdeq-22} \nl
  \p         $\rbrace$ \While\ ($status == \bot$);                               \nl
             send($cid$, $status$);                     \label{ln:alg-dirdeq-23} \nl 
  \p       $\rbrace$                                                             \nl
           \Break;                                                               \ul
  \p\p $\rbrace$
  \p
\end{code}
\end{algorithm}
\end{minipage}

Upon a message receipt, if $s_s$ receives a request \ENQT\ (line \lref{ln:alg-dirdeq-4}) 
it increments $tail\_key$ by one (line \lref{ln:alg-dirdeq-5}), 
and then sends the current value of $tail\_key$ to the client (line \lref{ln:alg-dirdeq-6}). 
The client uses the value  that $s_s$ sends to it, as the key for the data to insert in the directory. 
Likewise, if $s_s$ receives a request \ENQH\ (line \lref{ln:alg-dirdeq-14}), 
it sends the current value of $head\_key$ to the client (line  \lref{ln:alg-dirdeq-16}), 
and then decrements $head\_key$ by one (line \lref{ln:alg-dirdeq-15}). 

When a message of type \DEQT\ arrives (line \lref{ln:alg-dirdeq-7}), $s_s$ 
first checks whether the deque is empty (line \lref{ln:alg-dirdeq-8}).
If this is so, $s_s$ sends a \NACK\ to the client (line \lref{ln:alg-dirdeq-9}). 
Otherwise, the synchronizer repeatedly calls \texttt{\DIRDELETE($tail\_key$)} 	
to remove the element corresponding to a key equal to the value of $tail\_key$ from the directory
(line \lref{ln:alg-dirdeq-12}), and then decrements $tail\_key$ (line 
\lref{ln:alg-dirdeq-10}). Finally, $s_s$ 
sends the data to the client (line \lref{ln:alg-dirdeq-13}). 
The synchronizer performs similar actions for a \DEQH\ message, but instead of 
decrementing the $tail\_key$, it increments the $head\_key$
(line \lref{ln:alg-dirdeq-20}).

The code for the clients operations for enqueue, is presented in Algorithm 
\ref{alg-cenq}.
For enqueuing to the back of the deque, the client sends an \ENQT\ message 
to $s_s$ and blocks waiting for its response. When it receives the unique
key from  $s_s$ , the client is free to insert the element lazily. For enqueuing 
to the front of the deque, the client sends an \ENQH\ message and performs the 
same actions as for enqueuing to the back.

\begin{figure}
	\begin{minipage}{.45\textwidth}
		\begin{algorithm}[H]
\small
\caption{Enqueue operations for a client of the directory-based deque.}
\label{alg-cenq}
\begin{code}
  \firstline
     \void\ EnqueueTail(\integer\ $cid$, Data $data$) $\lbrace$			 \nl
  \n   $sid$ = get the synchronizer id;                \label{ln:ent01}          \nl
       send($sid, \langle\ENQT, cid\rangle$);          \label{ln:ent02}          \nl
       $key$ = receive(sid);                           \label{ln:ent03}          \nl
       \DIRINSERT($key, data$);                        \label{ln:ent04}          \ul
  \p $\rbrace$                                         \label{ln:ent05}          \nl
                                                       \label{ln:ent06}          \nl
     \void\ EnqueueHead(\integer\ $cid$, Data $data$) $\lbrace$ \nl
  \n   $sid$ = get the synchronizer id;                \label{ln:enh01}          \nl
       send($sid, \langle\ENQH, cid\rangle$);          \label{ln:enh02}          \nl
       $key$ = receive(sid);                           \label{ln:enh03}          \nl
       \DIRINSERT($key, data$);                        \label{ln:enh04}          \nl
  \p $\rbrace$
\end{code}
\end{algorithm}
	\end{minipage}
\hfill
\begin{minipage}{.45\textwidth}
\begin{algorithm}[H]
	\small
	\caption{Dequeue operation for a client of the directory-based deque.}
	\label{alg-cdeq}
	\begin{code}
		\firstline
		Data DequeueTail(\integer\ $cid$) $\lbrace$   \nl
		\n   $sid$ = get the synchronizer id;            \label{ln:dqt01}          \nl
		send($sid, \langle\DEQT, cid\rangle$);      \label{ln:dqt02}          \nl
		$status$ = receive($sid$);                  \label{ln:dqt03}          \nl
		\return\ $status$;                          \label{ln:dqt04}          \ul
		\p $\rbrace$                                     \ul
		\nl
		Data DequeueHead(\integer\ $cid$) $\lbrace$  \nl
		\n   $sid$ = get the synchronizer id;            \label{ln:dqh01}          \nl
		send($sid, \langle\DEQH, cid\rangle$);      \label{ln:dqh02}          \nl
		$status$ = receive($sid$);                  \label{ln:dqh03}          \nl
		\return\ $status$;                          \label{ln:dqh04}          \ul
		\p $\rbrace$
	\end{code}
\end{algorithm}
\end{minipage}
\end{figure}

The client code for dequeue to the front and dequeue to the back,
is presented in Algorithm \ref{alg-cdeq}. For dequeuing to 
the back of the deque, the client sends an \DEQT\ message to $s_s$ and blocks 
waiting for its response. The synchronizer performs the dequeue itself and 
sends back the response. For dequeuing to the front of the deque, 
the client sends an \DEQH\ message and performs the same actions as for 
enqueue to the back.

%

\subsubsection{Proof of Correctness}
Let $\alpha$ be an execution of the directory-based deque implementation.
We assign linearization points to enqueue and dequeue operations in $\alpha$ as follows: 

The linearization point of an enqueue back operation $op$ is placed in the configuration 
resulting from the execution of line~\lref{ln:alg-dirdeq-6} for $op$ by $s_s$. 
The linearization point of a dequeue back operation $op$ is placed in the configuration 
resulting from the execution of either line~\lref{ln:alg-dirdeq-9} or 
line~\lref{ln:alg-dirdeq-13} for $op$ (whichever is executed) by $s_s$. 
The linearization point of an enqueue front operation $op$ is placed in the configuration 
resulting from the execution of line~\lref{ln:alg-dirdeq-16} for $op$ by $s_s$. 
The linearization point of a dequeue front operation $op$ is placed in the configuration 
resulting from the execution of either line~\lref{ln:alg-dirdeq-19} or 
line~\lref{ln:alg-dirdeq-23} for $op$ (whichever is executed) by $s_s$.

\begin{lemma}
The linearization point of an enqueue (dequeue) operation $op$ executed by client $c$ is placed within its execution interval.
\end{lemma}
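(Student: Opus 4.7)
The plan is to do a straightforward case analysis on the four operation types supported by the deque: \ENQT, \DEQT, \ENQH, \DEQH. In each case, I would show that the synchronizer's step at which the linearization point is placed occurs strictly between the configuration preceding the client's invocation of the operation and the configuration following its response. The proof mirrors the corresponding lemmas already established for the directory-based stack and queue, exploiting the fact that \texttt{receive()} blocks and that the synchronizer can process a request only after it has been sent.

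For \ENQT, inspection of Algorithm~\ref{alg-cenq} shows that \CEnq\ sends a message to $s_s$ on line~\lref{ln:ent02} and then blocks on line~\lref{ln:ent03} waiting for the response. On the synchronizer side, line~\lref{ln:alg-dirdeq-6}, at which the linearization point is placed, is executed only after the \ENQT\ message is received (line~\lref{ln:alg-dirdeq-2}) and precisely produces the response the client is awaiting. Since the client's message is sent after the invocation of $op$, and the matching reply must be delivered before the client unblocks from line~\lref{ln:ent03} and thus before $op$ responds, the linearization step of $s_s$ falls between the invocation and the response of $op$. An identical argument covers \ENQH, using line~\lref{ln:alg-dirdeq-16} and Algorithm~\ref{alg-cenq}.

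For \DEQT, Algorithm~\ref{alg-cdeq} shows that the client sends a \DEQT\ message on line~\lref{ln:dqt02} and blocks on line~\lref{ln:dqt03} until $s_s$ replies. The synchronizer either executes line~\lref{ln:alg-dirdeq-9} (when the deque appears empty) or line~\lref{ln:alg-dirdeq-13} (after the successful \DIRDELETE\ loop); in both branches the chosen line sends the response, so the linearization step again lies between the receipt of the request and the transmission of the reply. A symmetric argument handles \DEQH\ via lines~\lref{ln:alg-dirdeq-19} and~\lref{ln:alg-dirdeq-23}.

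The only subtle point, and the one I would flag explicitly, concerns the \Do-\While\ loop around \DIRDELETE\ inside the dequeue branches (lines~\lref{ln:alg-dirdeq-11}--\lref{ln:alg-dirdeq-12} and~\lref{ln:alg-dirdeq-21}--\lref{ln:alg-dirdeq-22}): because the linearization point of \DEQT/\DEQH\ sits after this loop, one must argue that the loop cannot push the linearization step outside $op$'s execution interval. This follows from the observation, analogous to the stack case, that the client remains blocked on \texttt{receive()} throughout the entire time $s_s$ is retrying the \DIRDELETE, so the execution interval of $op$ stretches up to the moment the reply is sent; hence the linearization point, which precedes that send event, is still contained in the interval. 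With this observation in hand, the case analysis closes and the lemma follows.
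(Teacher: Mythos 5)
Your proposal is correct and follows essentially the same argument as the paper: the synchronizer executes the linearization-point line only after receiving the client's request (hence after invocation), and that line is at or before the step that sends the response that unblocks the client's \texttt{receive()} (hence before the response event). The paper treats enqueue front (back) explicitly and dispatches dequeue with ``similar,'' while you spell out all four branches and the \Do-\While\ loop, but this is only added detail, not a different route.
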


\begin{proof}
Assume that $op$ is an enqueue front (back) operation and let $c$ be the client that invokes it.
After the invocation of $op$, $c$ sends a message to $s_s$ (line \lref{ln:enh02}) and awaits a 
response from it. Recall that routine {\tt receive()} (line~\lref{ln:enh03}) blocks until a message 
is received.  The linearization point of $op$ is placed at the configuration resulting from the 
execution of line \lref{ln:alg-dirdeq-16} for $op$ by $s_s$. This line is executed after the request 
by $c$ is received, i.e. after $c$ invokes {\tt  EnqueueHead} ({\tt  EnqueueTail}). Furthermore, it 
is executed before $c$ receives the response by the server and thus, before {\tt  EnqueueHead} 
({\tt  EnqueueTail}) returns. Therefore, the linearization point is included in the execution 
interval of enqueue front (back).

The argumentation regarding dequeue front (back) operations is similar.
\end{proof}

Denote by $L$ the sequence of operations which have been assigned linearization 
points in $\alpha$ in the order determined by their linearization points. Let $C_i$ 
be the configuration in which the $i$-th operation $op_i$ of $L$ is linearized; 
denote by $C_0$ the initial configuration.  Denote by $\alpha_i$, the prefix of 
$\alpha$ which ends with $C_i$ and let $L_i$ be the prefix of $L$ up until the 
operation that is linearized at $C_i$. We denote by $head_i$ the value of the 
local variable $head\_key$ of $s_s$ at configuration $C_i$, and by $tail_i$ the 
value of the local variable $tail\_key$ of $s_s$ at $C_i$. By the pseudocode, 
we have that the initial values of $tail\_key$ and $head\_key$ are $0$; therefore, 
we consider that $head_0 = tail_0 = 0$. 

By analogous reasoning as the one followed in the case of the directory-based queue, 
inspection of the pseudocode leads to the following observations.

\begin{observation}
\label{obs:deq:seq}
Instances of Algorithm \ref{alg-dirdeq} are executed 
sequentially, i.e. their execution does not overlap.
\end{observation}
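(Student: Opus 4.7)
The plan is to argue, as in the analogous observations for the stack (Observation~\ref{obs:seq-dstack}) and the queue (Observation~\ref{obs:seq}), that Algorithm~\ref{alg-dirdeq} is triggered only on the synchronizer $s_s$, and that $s_s$ is modeled as a single state machine that processes its incoming messages one at a time. The execution of a single instance of the algorithm corresponds to a bounded sequence of steps by $s_s$ between the receipt of a message (line~\lref{ln:alg-dirdeq-2}) and the matching \Break\ that exits the \Switch; no other process executes this code.

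First, I would point out that, by inspection of the pseudocode of Algorithms~\ref{alg-cenq} and~\ref{alg-cdeq}, every instance of Algorithm~\ref{alg-dirdeq} is initiated by the receipt, at $s_s$, of a message of type \ENQT, \ENQH, \DEQT, or \DEQH\ sent by a client. From the model of Section~\ref{sec:model}, $s_s$ proceeds in discrete steps, and in each step it either performs one local action, sends a message, or receives the next message from its mailbox; messages are delivered in FIFO order. Thus, between consecutive \receive\ events at $s_s$ there can be no interleaving from another process.

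Next, I would observe that each case of the \Switch\ in Algorithm~\ref{alg-dirdeq} consists only of a bounded sequence of local assignments, \send\ invocations, and (in the \DEQT\ and \DEQH\ cases) repeated invocations of \DIRDELETE. The \DIRDELETE\ calls themselves involve $s_s$ sending a request to the appropriate directory server and blocking on a \receive\ for the reply with the corresponding $cid$, but since replies are matched by client identifier and $s_s$ does not issue a fresh \receive\ for a new client request until the current \Switch\ case has reached its \Break, no new instance of Algorithm~\ref{alg-dirdeq} is started before the current one completes. Therefore, the execution intervals of any two instances are disjoint.

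The only delicate point, and thus the main thing to get right, is to make explicit that the blocking \receive\ calls performed inside a \DIRDELETE\ loop are waiting for replies from directory servers (not for client requests), and that no client-initiated message triggering a new instance of Algorithm~\ref{alg-dirdeq} is dispatched by $s_s$ during this wait. This follows directly from the event-driven structure of the pseudocode, in which a new event is handled only after the current event-handler returns, so the observation holds.
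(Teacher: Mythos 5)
Your proof is correct and follows the paper's approach, which relies solely on the fact that only the single sequential process $s_s$ executes Algorithm~\ref{alg-dirdeq}, so distinct event-handler invocations cannot interleave. You additionally make explicit a subtlety that the paper's terse ``by analogous reasoning'' glosses over: unlike the stack and queue synchronizers, the deque synchronizer itself blocks inside \DIRDELETE\ during the \DEQT\ and \DEQH\ cases, and you correctly argue that these blocking \receive\ calls await only directory-server replies and so cannot dispatch a new instance of the algorithm before the current one reaches its \Break.
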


\begin{observation}
\label{obs:deq:mod}
Given two configurations $C_i$, $C_{i+1}$, $i \geq 0$, in $\alpha$, there is at most one 
step in the execution interval between $C_i$ and $C_{i+1}$ that modifies $tail\_key$. 
\end{observation}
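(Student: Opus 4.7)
The plan is to establish this by a direct inspection of Algorithm~\ref{alg-dirdeq} combined with Observation~\ref{obs:deq:seq}, in the same spirit as the analogous Observation~\ref{obs:mod} for the directory-based queue. First I would remark that the variable $tail\_key$ is local to the synchronizer $s_s$, and by scanning the pseudocode of Algorithm~\ref{alg-dirdeq} the only two lines that modify it are line~\lref{ln:alg-dirdeq-5} (the increment inside the \ENQT\ branch) and line~\lref{ln:alg-dirdeq-10} (the decrement inside the successful \DEQT\ branch). By Observation~\ref{obs:deq:seq}, instances of Algorithm~\ref{alg-dirdeq} are executed sequentially by $s_s$, so each such instance contributes at most one modification to $tail\_key$.

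Next I would perform a case analysis on $op_i$ to bound the number of modifications of $tail\_key$ that can occur in the suffix of the instance executed by $s_s$ for $op_i$ after its linearization point. In each of the five cases (\ENQT, \DEQT\ successful, \DEQT\ unsuccessful, \ENQH, \DEQH), the linearization point of $op_i$ is placed so that the remaining code of the instance consists only of a \send\ and a \Break, or, in the case of \ENQH, of a decrement of $head\_key$; in no case is $tail\_key$ modified after the linearization point of $op_i$.

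Finally I would examine the prefix of the instance executed by $s_s$ for $op_{i+1}$ that precedes the linearization point at $C_{i+1}$. Case analysis on $op_{i+1}$ shows that if $op_{i+1}$ is an \ENQT\ then line~\lref{ln:alg-dirdeq-5} is executed exactly once before line~\lref{ln:alg-dirdeq-6}; if $op_{i+1}$ is a successful \DEQT\ then line~\lref{ln:alg-dirdeq-10} is executed exactly once before line~\lref{ln:alg-dirdeq-13}; and in all remaining cases no modification of $tail\_key$ takes place. Adding the contributions of the suffix from $op_i$ and the prefix of $op_{i+1}$, we obtain at most one modification of $tail\_key$ in the execution interval between $C_i$ and $C_{i+1}$.

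The main thing that requires attention is the placement of the linearization points: for \ENQT\ and \DEQT\ the linearization point is placed \emph{after} the modification of $tail\_key$, so the corresponding modification falls into the prefix of $op_{i+1}$ rather than the suffix of $op_i$, which is exactly what lets the two contributions combine into at most one. For \ENQH\ the linearization point is deliberately placed \emph{before} the decrement of $head\_key$, but since this modification targets $head\_key$ and not $tail\_key$, it is irrelevant to the present observation; an analogous statement about $head\_key$ would need a symmetric argument that carefully accounts for this asymmetry.
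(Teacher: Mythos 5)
Your proof is correct and takes essentially the same approach as the paper, which dispatches this observation by appealing to inspection of the pseudocode of Algorithm~\ref{alg-dirdeq} together with Observation~\ref{obs:deq:seq}, by analogy with the argument sketched for Observation~\ref{obs:mod} in the queue case. You have simply written out in full the case analysis that the paper leaves implicit; your closing remark that the relevant point is the \emph{consistent} placement of the two $tail\_key$ modifications (lines~\lref{ln:alg-dirdeq-5} and~\lref{ln:alg-dirdeq-10}, both before their respective linearization points at lines~\lref{ln:alg-dirdeq-6} and~\lref{ln:alg-dirdeq-13}) is exactly the right detail to surface, since a mixed placement would allow the suffix of $op_i$ and the prefix of $op_{i+1}$ to each contribute one modification.
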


Denote by $D_i$ the sequential deque that results if the operations of $L_i$ are 
applied sequentially to an initially empty queue. Let the size of $D_i$ (i.e. the 
number of elements contained in $D_i$) at $C_i$ be $d_i$. Denote by $sl_i^j$ the 
$j$-th element of $D_i$, $1 \leq j \leq d_i$. 
Each element of $D_i$ is a pair of type $\langle key, data \rangle$ where
the elements from the bottommost to the topmost are assigned integer keys 
as follows: Let $f_i$ be the key of element $sl_i^1$ and $l_i$ be the key 
of element $sl_i^{d_i}$ in some configuration $C_i$. 
We denote the $key$ field of the $\langle data, key \rangle$ pair that 
comprises some element $sl_i^j$, $1 \leq j \leq d_i$, of $D_i$ by $sl_i^j.key$.
Then, if $d_i > 1$, $sl_i^{j+1}.key = sl_i^j.key + 1$, $1 \leq j \leq d_i$.
We consider that if $op_1$ is an enqueue front operation, then $f_1 = l_1 = 0$, 
while if it is an enqueue back operation, then $f_1 = l_1 = 1$.
Notice that $l_i - f_i + 1 = d_i$.

Consider  a sequence of elements $S$. If $e$ is the first element of $S$, 
we denote by $S \setminus_f e$ the suffix of $S$ that results by removing 
only element $e$ from the first position of $S$. If $e$ is the last element 
of $S$, we denote by $S \setminus_b e$ the prefix of $S$ that results by 
removing only element $e$ from the last position of $S$. If $e$ is an element 
not included in $S$, we denote by $S' = S \cdot e$ the sequence that results 
by appending element $e$ to the end of $S$, and by $S'' = e \cdot S$ the 
sequence that results by prefixing $S$ with element $e$.

\begin{lemma}
\label{lem:tail}
For each integer $i \geq 1$, the following hold at $C_i$:
\begin{compactenum}
\item\label{cdt1} If $op_i$ is an enqueue back operation, then $tail_i$ = $tail_{i-1} + 1$. 
\item\label{cdt2} If $op_i$ is a dequeue back operation, then it holds that $tail_i$ = $tail_{i-1} - 1$ if $tail_{i-1} \neq head_{i-1}$; otherwise, $tail_i = tail_{i-1}$.
\end{compactenum}
\end{lemma}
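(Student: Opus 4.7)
The plan is to mirror the argument used for the queue's Lemma~\ref{lemma:ht}, namely an induction on $i$ that reads off the relevant modifications to $tail\_key$ directly from the switch cases of Algorithm~\ref{alg-dirdeq}. The base case is handled exactly like the inductive step since $tail_0 = head_0 = 0$ and the key facts we need (that only $s_s$ writes $tail\_key$, that its handlers run serially, and that no other operation's handler touches $tail\_key$ in the interval between two consecutive linearization points) are already packaged in Observations~\ref{obs:deq:seq} and \ref{obs:deq:mod}.

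For Claim~\ref{cdt1}, I would fix an enqueue back operation $op_i$ whose linearization point is the configuration right after $s_s$ executes line~\lref{ln:alg-dirdeq-6}. The \ENQT\ branch writes $tail\_key$ exactly once, at line~\lref{ln:alg-dirdeq-5}, where it is incremented by one, and this write occurs strictly before the linearization point. By Observation~\ref{obs:deq:mod}, this is the only step between $C_{i-1}$ and $C_i$ that touches $tail\_key$, so $tail_i = tail_{i-1}+1$.

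For Claim~\ref{cdt2}, I would split on the branch taken in the \DEQT\ handler. Observation~\ref{obs:deq:seq} guarantees that when $s_s$ evaluates the test at line~\lref{ln:alg-dirdeq-8}, the values it reads from $tail\_key$ and $head\_key$ are exactly $tail_{i-1}$ and $head_{i-1}$. If $tail_{i-1} = head_{i-1}$, the handler takes the \NACK\ branch (line~\lref{ln:alg-dirdeq-9}), the linearization point is placed there, and no step in between writes $tail\_key$, giving $tail_i = tail_{i-1}$. Otherwise the \Else\ branch runs, in which only line~\lref{ln:alg-dirdeq-10} writes $tail\_key$, decrementing it by one before the linearization point at line~\lref{ln:alg-dirdeq-13}; together with Observation~\ref{obs:deq:mod} this yields $tail_i = tail_{i-1}-1$.

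The only subtlety I expect, and the place where some care is needed, is the \Do--\While\ loop at lines~\lref{ln:alg-dirdeq-11}--\lref{ln:alg-dirdeq-12} in the non-empty \DEQT\ case: its repeated invocations of \DIRDELETE\ happen in the execution interval between $C_{i-1}$ and $C_i$ and are not atomic with respect to the client processes, so one must explicitly check that these calls do not modify $tail\_key$ (they don't; $tail\_key$ is local to $s_s$ and is only written by Algorithm~\ref{alg-dirdeq}) and that no other operation's handler for $s_s$ can be interleaved during the loop (ruled out by Observation~\ref{obs:deq:seq}). Once this is noted, Observation~\ref{obs:deq:mod} closes the argument and both claims follow with no further calculation.
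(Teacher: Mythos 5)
Your argument follows the paper's own proof closely: both are direct case analyses that split on whether $op_i$ is an enqueue back or a dequeue back and, in the dequeue case, on which branch of the \If\ at line~\lref{ln:alg-dirdeq-8} is taken, invoking Observations~\ref{obs:deq:seq} and~\ref{obs:deq:mod} to control modifications of $tail\_key$ between $C_{i-1}$ and $C_i$. (The paper does not actually phrase it as an induction---no inductive hypothesis is ever used, only facts local to the interval between consecutive linearization points---so your ``induction on $i$'' framing is cosmetic.)

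One step you state explicitly is not supported by the observation you cite and is in fact not always true. You claim that ``Observation~\ref{obs:deq:seq} guarantees that when $s_s$ evaluates the test at line~\lref{ln:alg-dirdeq-8}, the values it reads from $tail\_key$ and $head\_key$ are exactly $tail_{i-1}$ and $head_{i-1}$.'' Observation~\ref{obs:deq:seq} only says handler instances run serially, and Observation~\ref{obs:deq:mod} only protects $tail\_key$, not $head\_key$. In the \ENQH\ handler the decrement of $head\_key$ at line~\lref{ln:alg-dirdeq-15} comes \emph{after} the send at line~\lref{ln:alg-dirdeq-16}, which is the linearization point of an enqueue-front operation. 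So if $op_{i-1}$ is an enqueue-front, by the time $s_s$ evaluates line~\lref{ln:alg-dirdeq-8} for $op_i$, $head\_key$ holds $head_{i-1}-1$, and the test $tail\_key = head\_key$ is not equivalent to $tail_{i-1} = head_{i-1}$; your deduction of which branch fires (and hence whether $tail\_key$ gets decremented) does not go through in that case. To be fair, the paper's own proof makes the identical unjustified equivalence (``the condition of the \If\ clause of line~\lref{ln:alg-dirdeq-8} evaluates to \false, i.e.\ it holds that $head_{i-1} \neq tail_{i-1}$''), so you have reproduced the intended argument faithfully; but note that the paper is aware of this asymmetry between \ENQT\ (increment before send) and \ENQH\ (send before decrement)---Lemma~\ref{lem:head} explicitly conditions on whether $op_{i-1}$ is an enqueue-front for exactly this reason---and a rigorous version of the present step would need the same extra case split, which neither you nor the paper supplies.
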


\begin{proof}
Fix any $i \geq 1$. 
If $op_i$ is an enqueue back operation, the linearization point of $op_i$ is placed at the 
configuration resulting from the execution of line \lref{ln:alg-dirdeq-6}. By inspection of 
the pseudocode (lines \lref{ln:alg-dirdeq-5}-\lref{ln:alg-dirdeq-6}), we have that in the 
instance of Algorithm \ref{alg-dirdeq} executed for $op_i$, $tail\_key$ is incremented before 
it is sent to the client $c$ that invoked $op_i$. By Observations \ref{obs:deq:seq} and 
\ref{obs:deq:mod}, this is the only increment that occurs on $tail\_key$ between $C_{i-1}$ 
and $C_i$. Thus, Claim \ref{cdt1} holds.

If $op_i$ is a dequeue back operation, the linearization point of $op_i$ is placed either 
at the configuration resulting from the execution of line \lref{ln:alg-dirdeq-9} or at the 
configuration resulting from the execution of line \lref{ln:alg-dirdeq-13}. 
%
Let $op_{i-1}$ be a dequeue back operation that is linearized 
at the execution of line \lref{ln:alg-dirdeq-9}. By inspection of the pseudocode (line \lref{ln:alg-dirdeq-8}), 
this occurs only in case $head_{i-1} = tail_{i-1}$. Since the execution of this line does not modify $tail\_key$, 
$tail_i = tail_{i-1}$ and Claim \ref{cdt2} holds. 

Now let $op_i$ be a dequeue back operation that is linearized at the configuration resulting from the execution of 
line \lref{ln:alg-dirdeq-13}. By the pseudocode (lines  \lref{ln:alg-dirdeq-10}-\lref{ln:alg-dirdeq-13}) and by 
Observation \ref{obs:seq}, it follows that the execution of line \lref{ln:alg-dirdeq-10} is the only step in which 
$tail\_key$ is modified in the execution interval between $C_{i-1}$ and $C_i$. 
Since line \lref{ln:alg-dirdeq-13} is executed, it holds that the condition of the \If\ 
clause of line \lref{ln:alg-dirdeq-8} evaluates to \false, i.e. it holds that $head_{i-1} \neq tail_{i-1}$. 
Furthermore, because of the execution of line \lref{ln:alg-dirdeq-10}, $tail_i = tail_{i-1} - 1$. 
Thus, Claim \ref{cdt2} holds.
\end{proof}


\begin{lemma}
\label{lem:head}
For each integer $i \geq 1$, the following hold at $C_i$:
\begin{compactenum}
\item\label{cdh1} In case $op_i$ is an enqueue front operation, then, if $i > 1$ and 
$op_{i-1}$ is an enqueue front operation, it holds that $head_i$ = $head_{i-1} - 1$; 
otherwise $head_i = head_{i-1}$. 
\item\label{cdh2} In case $op_i$ is a dequeue front operation, then, 
if $head_{i-1} \neq tail_{i-1}$, $i>1$, and $op_{i-1}$ is not an enqueue front operation, 
it holds that $head_i$ = $head_{i-1} + 1$ ; otherwise $head_i = head_{i-1}$.
\end{compactenum}
\end{lemma}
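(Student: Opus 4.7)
The plan is to proceed by induction on $i$, in parallel with Lemma \ref{lem:tail}. The key observation is an asymmetry in Algorithm \ref{alg-dirdeq}: for ENQH, line \lref{ln:alg-dirdeq-16} sends the current value of $head\_key$ \emph{before} the decrement on line \lref{ln:alg-dirdeq-15}, so the linearization point records the pre-decrement value; whereas for DEQH, line \lref{ln:alg-dirdeq-20} increments $head\_key$ \emph{before} the send-and-linearization on line \lref{ln:alg-dirdeq-23}. This asymmetry is what forces the case split in both claims.

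To exploit this, I would introduce auxiliary quantities $h^{\text{in}}_i$ and $t^{\text{in}}_i$, defined as the values of $head\_key$ and $tail\_key$ observed by $s_s$ at the first line of Algorithm \ref{alg-dirdeq} executed for $op_i$. Using Observations \ref{obs:deq:seq} and \ref{obs:deq:mod} together with a case analysis on $op_{i-1}$, I would establish that $t^{\text{in}}_i = tail_{i-1}$ for every type of $op_{i-1}$ (no operation modifies $tail\_key$ strictly after its own linearization point), and that $h^{\text{in}}_i = head_{i-1} - 1$ exactly when $op_{i-1}$ is an enqueue front (because the post-linearization decrement on line \lref{ln:alg-dirdeq-15} then takes effect before $op_i$ begins), while $h^{\text{in}}_i = head_{i-1}$ in every other case, including the base case $i = 1$ where $h^{\text{in}}_1 = head_0 = 0$.

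Claim \ref{cdh1} then falls out immediately: ENQH sends $h^{\text{in}}_i$ on line \lref{ln:alg-dirdeq-16} and the linearization is placed right after, so $head_i = h^{\text{in}}_i$, which splits into the two cases of the claim via the expression for $h^{\text{in}}_i$. For Claim \ref{cdh2}, I would case-split on the test of line \lref{ln:alg-dirdeq-18}: if it evaluates to \true\ (i.e. $t^{\text{in}}_i = h^{\text{in}}_i$), line \lref{ln:alg-dirdeq-19} fires without touching $head\_key$, giving $head_i = h^{\text{in}}_i$; otherwise the increment on line \lref{ln:alg-dirdeq-20} gives $head_i = h^{\text{in}}_i + 1$. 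When $op_{i-1}$ is not an enqueue front, substituting $h^{\text{in}}_i = head_{i-1}$ and $t^{\text{in}}_i = tail_{i-1}$ makes line \lref{ln:alg-dirdeq-18} coincide with the hypothesis $head_{i-1} \neq tail_{i-1}$ of the lemma, and both branches align with the statement.

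The hard part is the subcase where $op_{i-1}$ is an enqueue front and $op_i$ is a dequeue front: here $h^{\text{in}}_i = head_{i-1} - 1$, so a priori the test on line \lref{ln:alg-dirdeq-18} could evaluate to \true\ (precisely when $tail_{i-1} = head_{i-1} - 1$), which would give $head_i = head_{i-1} - 1$ and violate the ``otherwise'' clause. To rule this out I would carry as an additional inductive invariant that the server-state values $H$ of $head\_key$ and $T$ of $tail\_key$ satisfy $H \leq T$ at every configuration: enqueue-front decrements of $H$ and enqueue-back increments of $T$ only strengthen it, while the success branches of DEQH and DEQT fire only when $H \neq T$, so together with $H \leq T$ they fire only when $H < T$, which keeps the invariant. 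In the problematic subcase, $op_{i-1}$ strictly decreased $H$ while leaving $T$ unchanged, so at the first line of $op_i$ we have $H < T$, meaning line \lref{ln:alg-dirdeq-18} evaluates to \false, the increment branch is taken, and $head_i = h^{\text{in}}_i + 1 = head_{i-1}$, matching the ``otherwise'' clause.
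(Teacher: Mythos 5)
Your proof is correct and, in fact, more careful than the paper's. The paper also proceeds by a direct case split on $op_{i-1}$ and on whether line~\lref{ln:alg-dirdeq-19} or line~\lref{ln:alg-dirdeq-23} is executed for a dequeue-front $op_i$, and like you it uses the timing observation that the decrement on line~\lref{ln:alg-dirdeq-15} lands after the enqueue-front's linearization point. However, in the line-\lref{ln:alg-dirdeq-19} subcase the paper simply asserts ``$head_{i-1} = tail_{i-1}$,'' reading this off the test on line~\lref{ln:alg-dirdeq-18}; that reading is valid only when $op_{i-1}$ is not an enqueue front, since otherwise the server's $head\_key$ is already $head_{i-1}-1$ when the test runs, and if line~\lref{ln:alg-dirdeq-19} were reached we would get $head_i = head_{i-1}-1$, violating the lemma. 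The paper handles the combination ($op_{i-1}$ enqueue front, $op_i$ linearized at line~\lref{ln:alg-dirdeq-23}) but never excludes ($op_{i-1}$ enqueue front, $op_i$ linearized at line~\lref{ln:alg-dirdeq-19}). That exclusion is exactly what your invariant $head\_key \leq tail\_key$ supplies, and it is genuinely missing from the deque section of the paper --- the analogous invariant for the queue is stated and proved as part of Lemma~\ref{lemma:ht}, but no such statement appears for the deque. Your $h^{\text{in}}_i$, $t^{\text{in}}_i$ bookkeeping is a cleaner way of phrasing what the paper does informally in prose; the substantive addition is the invariant, and it closes a real gap. The cost is one extra (easy) induction, which is well worth it.
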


\begin{proof}
Fix any $i \geq 1$. 
Let $op_i$ be an enqueue front operation. If $i=1$, then by inspection of the pseudocode, 
we have that $head\_key$ is not modified before the execution of line \lref{ln:alg-dirdeq-16}.
Since $head_0 = 0$ and the execution of line \lref{ln:alg-dirdeq-16} does not modify $head\_key$, 
it follows that $head_1 = 0 = head_0$ and Claim \ref{cdh1} holds. Now let $i>1$. By inspection 
of the pseudocode and by Observation \ref{obs:deq:seq} we have that $head\_key$ is not modified 
by enqueue back and dequeue back operations. By the pseudocode, Observation \ref{obs:deq:seq}
and the way linearization points are assigned, we have that although $head\_key$ is modified by 
dequeue front operations only before the configuration in which the operation is linearized, it 
is modified by enqueue front operations in the step (line \lref{ln:alg-dirdeq-15}) right after 
the configuration in which an enqueue front operation is linearized. Therefore, if $op_{i-1}$ 
is an enqueue front operation, then $head\_key$ is decremented once (line \lref{ln:alg-dirdeq-15}) 
in the execution interval between $C_{i-1}$ and $C_i$. Thus, if $op_{i-1}$ is an 
enqueue front operation, then $head_i = head_{i-1}+1$, while if $op_{i-1}$ is any other 
type of operation, $head_i = head_{i-1}$. Thus, Claim \ref{cdh1} holds. 

Now let $op_i$ be a dequeue operation. If $i=1$, then by inspection of the pseudocode, 
we have that $head\_key$ is not modified before the execution of line \lref{ln:alg-dirdeq-10}.
By the pseudocode and by Observation \ref{obs:deq:seq}, $tail\_key$ is not modified as well 
before the execution of line \lref{ln:alg-dirdeq-10}. Thus, $head_0 = 0 = tail_0$ and the \If\ 
condition of line \lref{ln:alg-dirdeq-8} evaluates to \true. Then, $op_1$ is linearized in the 
configuration resulting from the execution of line \lref{ln:alg-dirdeq-19}. Notice that the 
execution of this line does not modify $head\_key$. It follows that $head_1 = 0 = head_0$ and 
that Claim \ref{cdh2} holds. 

Now let $i>1$. The linearization point of $op_i$ may be placed at the configuration resulting 
from the execution of line \lref{ln:alg-dirdeq-19} or line \lref{ln:alg-dirdeq-23}, whichever 
is executed by $s_s$ for it. Let the linearization point be placed in the configuration resulting 
from the execution of line \lref{ln:alg-dirdeq-19}. In that case, $head_{i-1} = tail_{i-1}$. 
Notice that the execution of that line does not modify $head\_key$. Therefore, $head_i = head_{i-1}$, 
and Claim \ref{cdh2} holds. Now let the linearization point be placed in the configuration resulting 
from the execution of line \lref{ln:alg-dirdeq-23}. In case $op_{i-1}$ is an enqueue back or dequeue 
tail operation, $head\_key$ is not modified by it. Therefore, since line \lref{ln:alg-dirdeq-20} is 
executed before line \lref{ln:alg-dirdeq-23}, $head_i = head_{i+1} + 1$ and Claim \ref{cdh2} holds. 
The same also holds if $op_{i-1}$ is a dequeue front operation. If $op_{i-1}$ is an enqueue front 
operation, then by inspection of the pseudocode (line \lref{ln:alg-dirdeq-15}), we have that 
$head\_key$ is decremented in the step following the configuration in which $op_{i-1}$ is 
linearized. Therefore, in this case and by Observation \ref{obs:deq:seq}, $head\_key$ is 
decremented and then incremented 
once in the execution interval between $C_{i-1}$ and $C_i$. This in turn implies that 
$head_i = head_{i-1} -1 +1 = head_{i-1}$ and Claim \ref{cdh2} holds.
\end{proof}

Recall that $sl_i^j.key = f_i+j-1$ 
or $sl_i^j.key = l_i-j+1$ .
By inspection of the pseudocode (lines \lref{ln:alg-dirdeq-16}/\lref{ln:alg-dirdeq-6}), 
we see that, when $op_i$ is an enqueue front/back operation, $head_i$/$tail_i$ is sent 
by $s_s$ to the client $c$ that invoked $op_i$. By further inspection of the pseudocode 
(lines \lref{ln:enh03}-\lref{ln:enh04}/\lref{ln:ent04}-\lref{ln:ent05}), we see that $c$ 
uses $head_i$/$tail_i$ as the $key$ field of the element it enqueues, i.e. uses it as argument 
for auxiliary function {\tt DirInsert()/DirDelete()}. When $op_i$ is a dequeue front/back 
operation, by inspection of the pseudocode (lines \lref{ln:alg-dirdeq-19}/\lref{ln:alg-dirdeq-9}), 
we have that when $head\_key = tail\_key$, $s_s$ sends \NACK\ to $c$, and that when $c$ receives 
\NACK, it does not enqueue any element and instead, returns $\bot$ (lines \lref{ln:dqh03}-\lref{ln:dqh04}/\lref{ln:dqt03}-\lref{ln:dqt04}). 
When $head\_key \neq tail\_key$, $s_s$ uses $head_i$/($tail_i+1)$ as the $key$ field in order 
to determine which element to dequeue (lines \lref{ln:alg-dirdeq-22}/\lref{ln:alg-dirdeq-12}). 
Then, the following observation holds.

\begin{observation}
\label{obs:deq:enq-deq}
If $op_i$ is an enqueue back operation, it inserts a pair with $key = tail_i$ into the directory.
If $op_i$ is a dequeue back operation, then, if $head_i \neq tail_i$, it removes a pair with $key = tail_i+1$ from the directory; 
if $head_i = tail_i$, it does not remove any pair from the directory.
If $op_i$ is an enqueue front operation, it inserts a pair with $key = head_i$ into the directory.
If $op_i$ is a dequeue front operation then, if $head_i \neq tail_i$, it removes a pair with $key = head_i$ from the directory; 
if $head_i = tail_i$, it does not remove any pair from the directory.
\end{observation}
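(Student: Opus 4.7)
The plan is to proceed by case analysis on the type of $op_i$, working directly from the pseudocode of Algorithm~\ref{alg-dirdeq} together with the client-side pseudocode in Algorithms~\ref{alg-cenq} and~\ref{alg-cdeq}, and using Lemmas~\ref{lem:tail} and~\ref{lem:head} to relate the value of $head\_key$/$tail\_key$ at the linearization point $C_i$ to the value actually used as the $key$ field.

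First, I would handle the two enqueue cases, which are the most direct. For an enqueue back operation, $op_i$ is linearized at the configuration resulting from the execution of line~\lref{ln:alg-dirdeq-6}, at which point $s_s$ has already incremented $tail\_key$ on line~\lref{ln:alg-dirdeq-5} and sends the new value to the client. By Observation~\ref{obs:deq:seq} and the fact that the client's only action between receiving this value and calling \DIRINSERT\ is to store it in $key$ (lines~\lref{ln:ent03}--\lref{ln:ent04}), the key used equals $tail_i$. For an enqueue front operation, $op_i$ is linearized at line~\lref{ln:alg-dirdeq-16}, where $s_s$ sends the current value of $head\_key$ \emph{before} decrementing it on line~\lref{ln:alg-dirdeq-15}; thus the value sent, and used as key by the client, equals $head_i$.

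Next, I would handle the two dequeue cases, where the subtlety is that the modification of the counter and the \DIRDELETE\ call are both performed by $s_s$, but in different orders for the two endpoints. For a dequeue back operation, if $head_{i-1}=tail_{i-1}$ the server executes line~\lref{ln:alg-dirdeq-9} and by Lemmas~\ref{lem:tail}--\ref{lem:head} neither counter is modified, so $head_i=tail_i$ and no \DIRDELETE\ occurs. Otherwise $s_s$ first invokes \DIRDELETE($tail\_key$) on line~\lref{ln:alg-dirdeq-12} using the \emph{pre-decrement} value of $tail\_key$, which equals $tail_{i-1}$, and only then decrements (line~\lref{ln:alg-dirdeq-10}) to reach $tail_i=tail_{i-1}-1$ by Lemma~\ref{lem:tail}(\ref{cdt2}); hence the key used is $tail_{i-1}=tail_i+1$. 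For a dequeue front operation, the symmetric situation applies but with the opposite ordering: if $head_{i-1}=tail_{i-1}$ the \NACK\ branch (line~\lref{ln:alg-dirdeq-19}) executes and no pair is removed. Otherwise $s_s$ first increments $head\_key$ on line~\lref{ln:alg-dirdeq-20} and then invokes \DIRDELETE($head\_key$) on line~\lref{ln:alg-dirdeq-22} using the \emph{post-increment} value, which by Lemma~\ref{lem:head}(\ref{cdh2}) equals $head_i$.

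The main (modest) obstacle is simply to be careful about the ordering of the counter update relative to the \DIRDELETE\ call at each endpoint: at the tail, the server uses the old $tail\_key$ and then decrements, so the key removed is $tail_i+1$, whereas at the head the server increments first and then deletes, so the key removed is $head_i$. Once this asymmetry is tracked, the claim follows directly from inspection of the pseudocode combined with Observation~\ref{obs:deq:seq} (which rules out interference on the counters between the relevant lines) and the two counter-tracking lemmas already established.
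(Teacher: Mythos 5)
Your proposal is correct and follows essentially the same route as the paper: inspection of Algorithms~\ref{alg-dirdeq}, \ref{alg-cenq}, and \ref{alg-cdeq}, with careful attention to the ordering asymmetry between the counter update and the \DIRINSERT/\DIRDELETE\ call at the two endpoints, combined with Observation~\ref{obs:deq:seq} and Lemmas~\ref{lem:tail}--\ref{lem:head} to identify the actual key argument with $tail_i$, $tail_i+1$, or $head_i$ as appropriate. If anything, explicitly invoking the two counter-tracking lemmas makes your argument slightly more precise than the paper's inline justification, but the substance is the same.
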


\begin{lemma}
\label{lemma:deq:seq1}
At $C_i$, $i \geq 1$, the following hold:
\begin{compactenum}
\item \label{cd13} If $op_i$ is an enqueue back operation, then $tail_i = sl_i^{d_i}.key$.  
\item \label{cd14} If $op_i$ is a dequeue back operation, then if $D_{i-1} \neq \epsilon$, 
$tail_i = sl_{i-1}^{d_{i-1}}.key$. If $D_{i-1} = \epsilon$, then $head_i = tail_i$.
\item \label{cd23} If $op_i$ is an enqueue front operation, then $head_i = sl_i^1.key$.  
\item \label{cd24} If $op_i$ is a dequeue front operation, then if $D_{i-1} \neq \epsilon$, 
$head_i = sl_{i-1}^1.key$. If $D_{i-1} = \epsilon$, then $head_i = tail_i$.
\end{compactenum}
\end{lemma}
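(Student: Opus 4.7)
The plan is to prove the four claims by induction on $i$, strengthened as follows: at every $C_i$ with $D_i \neq \epsilon$, it holds that $f_i = head_i + 1$ if the most recent enqueue front operation (if any) has been ``consumed'' by a subsequent dequeue front, and in general $sl_i^1.key = head_i + 1$ if the last front-modifying operation was a dequeue front or if no front-modifying operation has occurred, while $sl_i^1.key = head_i$ immediately after an enqueue front; symmetrically $sl_i^{d_i}.key = tail_i$ immediately after an enqueue back and $sl_i^{d_i}.key = tail_i + 1$ immediately after a dequeue back. In short, the strengthened hypothesis is that $head\_key$ always stores (one less than) the key of the front element and $tail\_key$ always stores the key of the back element, with the off-by-one shift determined by which endpoint was most recently touched, as dictated by the asymmetry between Algorithm~\ref{alg-dirdeq} lines \lref{ln:alg-dirdeq-5}-\lref{ln:alg-dirdeq-6} (increment before send) and lines \lref{ln:alg-dirdeq-15}-\lref{ln:alg-dirdeq-16} (send before decrement).

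For the base case $i = 1$, I would inspect each of the four possible types of $op_1$. If $op_1$ is an enqueue back, then by Lemma~\ref{lem:tail}(\ref{cdt1}), $tail_1 = 1$, and by Observation~\ref{obs:deq:enq-deq}, the single element inserted has $key = tail_1 = 1$; by the convention $l_1 = 1$, this element is $sl_1^{d_1}$, yielding claim~\ref{cd13}. The enqueue front case is symmetric using Lemma~\ref{lem:head}(\ref{cdh1}) and the convention $f_1 = 0$. For dequeue front and dequeue back, since $head_0 = tail_0 = 0$, the synchronizer executes lines \lref{ln:alg-dirdeq-19} or \lref{ln:alg-dirdeq-9}, so $D_0 = D_1 = \epsilon$ and $head_1 = tail_1 = 0$, establishing the empty-deque cases of claims~\ref{cd14} and \ref{cd24}.

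For the inductive step I would combine Lemmas~\ref{lem:tail} and \ref{lem:head} (which pin down exactly how $head_i, tail_i$ change from $head_{i-1}, tail_{i-1}$) with Observation~\ref{obs:deq:enq-deq} (which identifies the key that $op_i$ actually inserts into or removes from the directory), and then match these against the definitions of $sl_i^1$ and $sl_i^{d_i}$. The four cases are: (a) enqueue back appends a pair with $key = tail_i = tail_{i-1}+1$, which by the strengthened hypothesis equals the previous back key plus one, hence equals $sl_i^{d_i}.key$; (b) dequeue back with $D_{i-1} \neq \epsilon$ removes the pair with $key = tail_{i-1}$ (which by hypothesis equals $sl_{i-1}^{d_{i-1}}.key$), leaving $tail_i = tail_{i-1} - 1$; the equality with $sl_{i-1}^{d_{i-1}}.key$ in claim~\ref{cd14} then follows from $tail_i + 1 = tail_{i-1}$; the case $D_{i-1} = \epsilon$ uses Lemma~\ref{lem:tail}(\ref{cdt2}) plus the fact that the synchronizer takes line \lref{ln:alg-dirdeq-9} and does not touch $head\_key$. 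Cases (c) and (d) for front operations are symmetric.

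The main obstacle is precisely the asymmetry noted above: because an enqueue front decrements $head\_key$ \emph{after} sending it while a dequeue front increments $head\_key$ \emph{before} removing (lines \lref{ln:alg-dirdeq-15} vs.\ \lref{ln:alg-dirdeq-20}), the statement ``$head_i$ equals the key of the front element'' holds with different off-by-one corrections depending on the type of $op_{i-1}$. This is exactly the flavor of Lemma~\ref{lem:head}(\ref{cdh2}). Consequently, carrying out the induction cleanly requires the strengthened invariant described in the first paragraph, so that when $op_i$ is, say, a dequeue front but $op_{i-1}$ was an enqueue front, one correctly recovers that the element being removed from the directory under key $head_i$ is indeed $sl_{i-1}^1$ and not its predecessor. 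Once the bookkeeping is fixed by this stronger hypothesis, the verification of the four claims reduces to direct case analysis using the two preceding lemmas and Observation~\ref{obs:deq:enq-deq}.
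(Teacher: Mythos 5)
Your overall route mirrors the paper's: both proceed by induction on $i$, using Lemma~\ref{lem:tail}, Lemma~\ref{lem:head}, and Observation~\ref{obs:deq:enq-deq} to pin down how $head\_key$ and $tail\_key$ evolve and which key each operation inserts into or removes from the directory. But your case (b) derivation is self-contradictory. You correctly compute that the removed pair has key $tail_{i-1}$ (which, by your strengthened hypothesis, equals $sl_{i-1}^{d_{i-1}}.key$) and that $tail_i = tail_{i-1} - 1$; from these two facts it follows that $tail_i = sl_{i-1}^{d_{i-1}}.key - 1$, \emph{not} $tail_i = sl_{i-1}^{d_{i-1}}.key$ as claim~\ref{cd14} asserts. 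You then assert that claim~\ref{cd14} ``follows from $tail_i + 1 = tail_{i-1}$,'' but that relation combined with your own hypothesis gives $tail_i + 1 = sl_{i-1}^{d_{i-1}}.key$, which is the negation of what you need, not a proof of it.

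Two errors compound here. First, the ``immediately after a dequeue back'' clause of your strengthened invariant is wrong on the $tail$ side: since \DEQT\ calls \DIRDELETE($tail\_key$) at line~\lref{ln:alg-dirdeq-12} and only then decrements $tail\_key$ at line~\lref{ln:alg-dirdeq-10}, the element removed has key $tail_{i-1}$ and, when $D_i \neq \epsilon$, the new back element has key $tail_{i-1} - 1 = tail_i$; so $sl_i^{d_i}.key = tail_i$, not $tail_i + 1$. Second, the asymmetry you noticed is real but it cuts in the direction opposite to what you conclude: because \DEQH\ increments $head\_key$ at line~\lref{ln:alg-dirdeq-20} \emph{before} calling \DIRDELETE\ at line~\lref{ln:alg-dirdeq-22}, $head_i$ is exactly the key of the removed front element and claim~\ref{cd24} does hold as stated; \DEQT\ deletes first and decrements second, so $tail_i$ is one \emph{less} than the removed back element's key, and claim~\ref{cd14} can only hold in the form $tail_i + 1 = sl_{i-1}^{d_{i-1}}.key$ --- which, notably, is exactly the form invoked later in the proof of Lemma~\ref{lem:lin-b}. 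A sound write-up must flag and correct this off-by-one rather than assert that the stated equality follows from a computation that actually establishes its negation.
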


\begin{proof}
We prove the claims by induction. 

{\bf Base case.} 
We prove the claim for $i$ = $1$.

Consider the case where $op_1$ is an enqueue back operation. 
Then, $d_1 = 1$ and by definition, $D_1$ contains only the pair $\langle 1, data \rangle$.
By Observation \ref{obs:deq:seq}, it is the first operation in $\alpha$ for 
which an instance of Algorithm \ref{alg-dirdeq} is executed by $s_s$. Therefore, 
by Lemma \ref{lem:tail}, $tail_1 = tail_0 + 1 = 1$. Thus, $tail_1 = sl_1^{d_1}.key$ 
and Claim \ref{cd13} holds.

Next, consider the case where $op_1$ is a dequeue back operation. By Observation \ref{obs:deq:seq}, 
$op_1$ is the first operation in $\alpha$ for which an instance of Algorithm \ref{alg-dirdeq} 
is executed by $s_s$. Notice that then, $Q_1 = \epsilon$. Therefore, by Lemma \ref{lem:tail}, 
$tail_1 = tail_0 = 0$. Since $head\_key$ is not modified by dequeue back operations, 
$head_1 = head_0 = 0$. Thus, $head_1 = tail_1$, so Claim \ref{cd14} holds. 

Next, consider the case where $op_1$ is an enqueue front operation. 
Again, by definition, $d_1 = 1$ and $D_1$ contains only the pair  $\langle 0, data \rangle$. 
By Observation \ref{obs:deq:seq}, it is the first operation in $\alpha$ for 
which an instance of Algorithm \ref{alg-dirdeq} is executed by $s_s$. Therefore, 
by Lemma \ref{lem:head}, $head_1 = head_0 = 0$. Thus, $head_1 = sl_1^1.key$ 
and Claim \ref{cd23} holds.

Finally, consider the case where $op_1$ is a dequeue front operation. 
By Observation \ref{obs:deq:seq}, $op_1$ is the first operation in $\alpha$ for which 
an instance of Algorithm \ref{alg-dirdeq} is executed by $s_s$. 
Notice that then, $Q_1 = \epsilon$. Therefore, by Lemma \ref{lem:head}, $head_1 = head_0 = 0$. 
Since $tail\_key$ is not modified by dequeue front operations, $tail_1 = tail_0 = 0$. 
Thus, $head_1 = tail_1$ and Claim \ref{cd24} holds.

{\bf Hypothesis.} Fix any $i$, $i > 0$ and assume that the lemma holds at $C_i$. 

{\bf Induction step.} We prove that the claims also hold at $C_{i+1}$. 
Assume that $op_{i+1}$ is an enqueue back operation. By the induction hypothesis, 
if $op_i$ is an enqueue back operation, then $sl_i^{d_i}.key = tail_i = l_i$. 
Similarly, if $op_i$ is a dequeue back operation, then by the induction hypothesis,
$sl_{i-1}^{d_{i-1}}.key = tail_i$. Since the dequeue back operation removes the 
last element in $D_{i-1}$, it follows that the last element $sl_i^{d_i}$ of $D_i$ 
is $sl_{i-1}^{d_{i-1}}$. Thus, here also, $tail_i = sl_i^{d_i}.key = l_i$.
Notice that enqueue front and dequeue front operations do not modify $tail\_key$. 
Since these types of operation do not affect the back of the sequential dequeue, 
it still holds that $tail_i = sl_i^{d_i}.key = l_i$.
Since $op_{i+1}$ is an enqueue back operation, by Lemma \ref{lem:tail}, we have that 
$tail_{i+1} = tail_i + 1$. By Observation \ref{obs:deq:enq-deq}, we have that the client 
$c$ that initiated $op_{i+1}$ inserts a pair with $key = tail_{i+1} = tail_i + 1$ into 
the directory. By definition, $sl_{i+1}^{d_{i+1}}.key = sl_i^{d_i}.key + 1$. 
Thus, $sl_{i+1}^{d_{i+1}}.key = tail_i + 1$, and Claim \ref{cd13} holds. 

Now assume that $op_{i+1}$ is a dequeue back operation.
We examine two cases. First, let $D_i \neq \epsilon$. 
By Lemma \ref{lem:tail}, it then holds that $tail_{i+1} = tail_i - 1$.
By Observation \ref{obs:deq:enq-deq}, we have that a pair with $key = tail_{i+1} = tail_i - 1$ 
is removed from the directory. By definition, we have that $D_{i+1} = D_i \setminus_b sl_i^{d_i}$. 
Also by definition, we have that $sl_i^{d_i}.key = sl_i^{d_i-1}.key + 1$. Because of $op_{i+1}$, 
$sl_i^{d_i-1} = sl_{i+1}^{d_{i+1}}$. Since $tail_{i+1} = tail_i - 1$, Claim \ref{cd14} holds.
Now let $D_i = \epsilon$. In this case, $op_{i+1}$ cannot have any effect on the state of the 
deque. By inspection of the pseudocode, this corresponds to the operation being linearized in the 
configuration resulting from the execution of line \lref{ln:alg-dirdeq-9}. Notice that in order
for this to be the case, the \If\ condition of line \lref{ln:alg-dirdeq-8} must evaluate to \true.
This occurs if $head_i = tail_i$, thus Claim \ref{cd14} holds.

Next assume that $op_{i+1}$ is an enqueue front operation.
By the induction hypothesis, if $op_i$ is an enqueue front operation, 
then $sl_i^1.key = head_i = f_i$. By Lemma \ref{lem:head}, it holds 
then that $head_{i+1} = head_i-1$. Since $op_{i+1}$ is an enqueue front 
operation, it prepends an element to $D_i$ and therefore, $sl_i^1 = sl_{i+1}^2$. 
By definition of $D_{i+1}$, $sl_{i+1}^1.key = sl_{i+1}^2.key-1$.
Since $sl_{i+1}^2.key = head_i$, $sl_{i+1}^1.key = head_{i+1}$ and 
Claim \ref{cd23} holds. 

On the other hand, if $op_i$ is a dequeue front operation, then by the 
induction hypothesis, $sl_{i-1}^1.key = head_i$. By Lemma \ref{lem:head}, 
it also follows that in this case, $head_{i+1} = head_i$. Notice that 
by definition, $op_i$ removes element $sl_{i-1}^1$ from $D_{i-1}$. 
Then, for element $sl_i^1$ of $D_i$, by definition, $sl_i^1.key = sl_{i-1}^1.key+1$. This means that $sl_{i-1}^1.key = head_i = sl_i^1.key-1$. 
Since $head_{i+1} = head_i$, Claim \ref{cd23} holds.

Notice that enqueue back and dequeue back operations do not modify 
$head\_key$.

Finally, assume that $op_{i+1}$ is a dequeue front operation. We examine two cases. First, let $D_i \neq \epsilon$. 
By Lemma \ref{lem:head}, it then holds that $head_{i+1} = head_i$.
By Observation \ref{obs:deq:enq-deq}, we have that a pair with $key = head_{i+1} = head_i$ 
is removed from the directory. By definition, we have that $D_{i+1} = D_i \setminus_b sl_i^1$. 
Also by definition, we have that $sl_i^1.key = sl_i^2.key-1$. Because of $op_{i+1}$, 
$sl_i^2 = sl_{i+1}^1$. Since $head_{i+1} = head_i$, Claim \ref{cd24} holds.
Now let $D_i = \epsilon$. In this case, $op_{i+1}$ cannot have any effect on the state of the 
deque. By inspection of the pseudocode, this corresponds to the operation being linearized in the 
configuration resulting from the execution of line \lref{ln:alg-dirdeq-19}. Notice that in order
for this to be the case, the \If\ condition of line \lref{ln:alg-dirdeq-18} must evaluate to \true.
This occurs if $head_i = tail_i$, thus Claim \ref{cd24} holds.
\end{proof}

%
%

From the above lemma, we have the following corollary.

\begin{corollary}
\label{cor:deq:empty}
$D_i = \epsilon$ if and only if $head_i = tail_i$. 
\end{corollary}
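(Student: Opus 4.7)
My plan is to prove the biconditional by induction on $i$, using Lemmas~\ref{lem:tail}, \ref{lem:head}, and~\ref{lemma:deq:seq1} as the main tools. The base case $i=0$ is immediate: the initialization gives $head_0 = tail_0 = 0$ and $D_0 = \epsilon$, so both directions of the corollary hold. For the inductive step, assuming the claim at $C_{i-1}$, I would perform case analysis on the type of $op_i$ and track how $head_i$, $tail_i$, and the size $d_i$ of $D_i$ change from $C_{i-1}$ to $C_i$.

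The unsuccessful dequeues form the easiest cases: by inspection of the pseudocode (lines~\lref{ln:alg-dirdeq-8} and~\lref{ln:alg-dirdeq-18}), such an $op_i$ is linearized only when the conditional \texttt{head\_key == tail\_key} evaluates to \true, so combined with Observations~\ref{obs:deq:seq} and~\ref{obs:deq:mod} we obtain $head_{i-1} = tail_{i-1}$; since neither the counters nor the deque are modified, the induction hypothesis yields $D_{i-1} = D_i = \epsilon$ and both sides of the biconditional transfer to $C_i$. For an enqueue back or a successful dequeue back, Lemma~\ref{lem:tail} shows that $tail_i$ shifts by one while the pseudocode leaves $head_i$ unchanged (no \ENQT\ or \DEQT\ step touches $head\_key$); so the gap $tail_i - head_i$ moves in the same direction as $d_i$, preserving the biconditional by the induction hypothesis.

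The hard part will be the enqueue front and dequeue front cases, because of the asymmetric ordering of the \texttt{send} and the decrement of $head\_key$ in the enqueue front pseudocode documented in Lemma~\ref{lem:head}: the decrement is executed \emph{after} the linearization point $C_i$, in the interval between $C_i$ and $C_{i+1}$. To handle this I would strengthen the induction invariant to also carry Lemma~\ref{lemma:deq:seq1} Claims~\ref{cd23} and~\ref{cd24}, which tie $head_i$ to $sl_i^1.key$, together with the consecutive-key structure $sl_i^{j+1}.key = sl_i^j.key + 1$ of $D_i$. Combining these with the corresponding facts for $tail_i$ and the last element of $D_i$ (Claims~\ref{cd13} and~\ref{cd14}), one can conclude that $head_i \neq tail_i$ whenever $D_i$ contains two or more distinct key positions, while the equality $head_i = tail_i$ forces the front and back of the deque to coincide on the single slot left by the previous operations, which is vacated precisely when $D_i = \epsilon$.
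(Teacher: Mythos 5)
Your inductive strategy is a genuine improvement on the paper's treatment, which simply asserts that the corollary ``follows from the above lemma'' with no argument. But carrying your plan through to completion exposes a problem with the statement itself, and the sentence closing your final paragraph is exactly where the argument breaks.

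Take $i=1$ with $op_1$ an enqueue-front operation (\ENQH). In Algorithm~\ref{alg-dirdeq}, the \texttt{send} on line~\lref{ln:alg-dirdeq-16} is executed \emph{before} the decrement on line~\lref{ln:alg-dirdeq-15}, and the linearization point is placed at the \texttt{send}. So at $C_1$ neither counter has moved: $head_1 = head_0 = 0 = tail_0 = tail_1$. Yet $D_1$ contains the element just inserted at key~$0$, so $D_1 \neq \epsilon$ and the biconditional already fails. The same situation recurs whenever an enqueue-front operation is linearized on an empty deque, so the induction step for this case cannot close. Your proposed strengthening does not repair this; it actually sharpens the counterexample. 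Claim~\ref{cd23} of Lemma~\ref{lemma:deq:seq1} says that in the enqueue-front case $head_i = sl_i^1.key = f_i$, not $f_i - 1$; together with $tail_i = l_i$ and $d_i = l_i - f_i + 1$ this gives $tail_i - head_i = d_i - 1$, so $head_i = tail_i$ coincides with $d_i = 1$, not $d_i = 0$. Thus the claim that ``the equality $head_i = tail_i$ forces the front and back of the deque to coincide on the single slot \ldots which is vacated precisely when $D_i = \epsilon$'' is false: after an enqueue-front onto an empty deque, that slot is \emph{occupied}. (The paper's later invocation in Lemma~\ref{lem:lin-b} uses the corollary only when $op_i$ is a dequeue, which avoids the problematic case; but as an unconditional biconditional ranging over all $i$, the corollary does not hold, and no inductive argument --- yours included --- can establish it without first adding a hypothesis excluding enqueue-front linearization points.)
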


\begin{lemma}
\label{lem:lin-b}
If $op_i$ is a dequeue back operation, then it returns the value of the field $data$ 
of $sl_{i-1}^{d_{i-1}}$ or $\bot$ if $D_{i-1} = \epsilon$.
\end{lemma}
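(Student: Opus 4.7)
The plan is to prove Lemma \ref{lem:lin-b} by a case analysis on whether $D_{i-1} = \epsilon$, using Corollary \ref{cor:deq:empty} to translate this condition into the equivalent $head_{i-1} = tail_{i-1}$ at the synchronizer, and then tracing $op_i$ through the pseudocode of the synchronizer and client while invoking Lemma \ref{lemma:deq:seq1} and Observation \ref{obs:deq:enq-deq} to identify the pair actually removed from the directory.

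First I will dispose of the case $D_{i-1} = \epsilon$. Corollary \ref{cor:deq:empty} gives $head_{i-1} = tail_{i-1}$. Combining Observations \ref{obs:deq:seq} and \ref{obs:deq:mod} with the placement of the linearization point of $op_i$, the values of $head\_key$ and $tail\_key$ that $s_s$ reads when it evaluates the \If\ guard on line~\lref{ln:alg-dirdeq-8} are exactly $head_{i-1}$ and $tail_{i-1}$. Their equality forces $s_s$ to execute line~\lref{ln:alg-dirdeq-9} and send \NACK; by the client pseudocode on lines \lref{ln:dqt03}--\lref{ln:dqt04}, $op_i$ then returns \NACK, which is interpreted as $\bot$, as required.

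Next I will handle the case $D_{i-1} \neq \epsilon$. By Corollary \ref{cor:deq:empty}, $head_{i-1} \neq tail_{i-1}$, so $s_s$ enters the \Else\ branch of line~\lref{ln:alg-dirdeq-8} and invokes \DIRDELETE\ on line~\lref{ln:alg-dirdeq-12} with argument $tail_{i-1}$ (since the decrement on line~\lref{ln:alg-dirdeq-10} occurs only after the loop exits). By Observation \ref{obs:deq:enq-deq} the pair removed has key $tail_i+1$, which by Lemma \ref{lem:tail} equals $tail_{i-1}$. By Lemma \ref{lemma:deq:seq1}, this is precisely the key of $sl_{i-1}^{d_{i-1}}$, so the successful call to \DIRDELETE\ returns the $data$ field of $sl_{i-1}^{d_{i-1}}$ in $status$. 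The synchronizer then sends $status$ to the client on line~\lref{ln:alg-dirdeq-13}, and by lines \lref{ln:dqt03}--\lref{ln:dqt04} the client returns it as the response of $op_i$.

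The main obstacle will be justifying termination of the \Do--\While\ loop on lines \lref{ln:alg-dirdeq-11}--\lref{ln:alg-dirdeq-12}. Let $op_j$ be the enqueue (front or back) operation that was assigned the key $tail_{i-1}$ by $s_s$ and is responsible for inserting $sl_{i-1}^{d_{i-1}}$ into the directory; by the way linearization points are assigned, $op_j$ is linearized in $L_{i-1}$, but the \DIRINSERT\ issued by its client may still be in flight when $s_s$ reaches the loop for $op_i$, causing the early iterations to observe $\bot$. I will appeal to fairness of the execution and the FIFO, lossless delivery of messages to argue that the client of $op_j$ must eventually complete its \DIRINSERT, so a later iteration of the loop necessarily succeeds and produces the required $data$ field of $sl_{i-1}^{d_{i-1}}$.
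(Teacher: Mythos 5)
Your proposal is correct and takes essentially the same approach as the paper's proof: both split into the cases $D_{i-1} = \epsilon$ (resolved via Corollary~\ref{cor:deq:empty} and the \NACK\ path) and $D_{i-1} \neq \epsilon$ (tracing the argument of \DIRDELETE\ to the key of $sl_{i-1}^{d_{i-1}}$ and then back to the client as the response of $op_i$). The one thing you add that the paper passes over in silence is an explicit argument that the \Do--\While\ loop on lines~\lref{ln:alg-dirdeq-11}--\lref{ln:alg-dirdeq-12} must terminate, appealing to fairness and FIFO delivery so that the matching \DIRINSERT\ eventually lands in the directory.
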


\begin{proof}
Consider the case where $D_{i-1} \neq \epsilon$. By definition of $D_i$, we have that 
$D_i = D_{i-1} \setminus_b  sl_{i-1}^{d_{i-1}}$. Let $op_j$ be the enqueue operation that 
is linearized before $op_i$ and inserts an element with key $tail_i+1$ to the queue. Notice 
by the pseudocode (lines \lref{ln:alg-dirdeq-11}-\lref{ln:alg-dirdeq-13}), that the parameter 
of \DIRDELETE\ is $tail_i+1$. 
By the semantics of \DIRDELETE, if at the point that the instance of \DIRDELETE\ is 
executed in the \Do\ - \While\ loop of  lines \lref{ln:alg-dirdeq-11}-\lref{ln:alg-dirdeq-12} 
for $op_i$, the instance of \DIRINSERT\ of $op_j$ has not yet returned, then \DIRDELETE\ returns 
$\langle \bot, -\rangle$.

By Lemma \ref{lem:tail}, $tail_i+1$ is the $key$ of the last pair $sl_{i-1}^{d_{i-1}}$ 
in $D_{i-1}$. Therefore, when \DIRDELETE\ returns a $status \neq \bot$, it holds 
that it returns the $data$ field of $sl_{i-1}^{d_{i-1}}$, the last element in $D_{i-1}$.
Notice that this value is sent to the client $c$ that invoked $op_i$ (line \lref{ln:alg-dirdeq-13}) 
and that $c$ uses this value as the return value of $op_i$ (lines \lref{ln:dqt03}-\lref{ln:dqt04}).
Thus, the claim holds. 

Now consider the case where $D_{i-1} = \epsilon$.
Since, by Corollary \ref{cor:deq:empty}, when this is the case, $head_i = tail_i$, 
\NACK\ is sent $c$ and, by inspection of the pseudocode, $op_i$ returns $\bot$, 
i.e. the claim holds. 
\end{proof}

In a similar fashion, we can prove the following.

\begin{lemma}
\label{lem:lin-f}
If $op_i$ is a dequeue front operation, then it returns the value of the field 
$data$ of $sl_{i-1}^1$ or $\bot$ if $D_{i-1} = \epsilon$.
\end{lemma}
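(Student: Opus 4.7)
The plan is to mirror the structure of the proof of Lemma~\ref{lem:lin-b}, only adapting the key references and line numbers to the dequeue front case. The proof splits into two cases based on whether $D_{i-1}$ is empty or not, and each case will be justified by tracing the execution of Algorithm~\ref{alg-dirdeq} together with the client code of Algorithm~\ref{alg-cdeq}.

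First, I would handle the case $D_{i-1} \neq \epsilon$. By the definition of the sequential deque, we have $D_i = D_{i-1} \setminus_f sl_{i-1}^1$. The synchronizer executes the do-while loop of lines~\lref{ln:alg-dirdeq-21}~-~\lref{ln:alg-dirdeq-22}, calling \DIRDELETE\ with argument $head_i$ (after the pre-increment on line~\lref{ln:alg-dirdeq-20}). Let $op_j$ be the enqueue operation (front or back, depending on how $sl_{i-1}^1$ entered the deque) that inserted the pair with key $head_i$ into the directory. By the semantics of \DIRDELETE, the loop terminates only once this insertion has taken effect, so the returned $data$ is exactly that pair's data. Invoking Lemma~\ref{lemma:deq:seq1} (Claim~\ref{cd24}), we know $head_i = sl_{i-1}^1.key$, hence the data returned is the $data$ field of $sl_{i-1}^1$. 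By inspection of lines~\lref{ln:alg-dirdeq-23} and~\lref{ln:dqh03}~-~\lref{ln:dqh04}, this value is forwarded to the invoking client and is used as the return value of $op_i$.

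Second, I would handle the case $D_{i-1} = \epsilon$. By Corollary~\ref{cor:deq:empty}, this is equivalent to $head_{i-1} = tail_{i-1}$. Since, by inspection of the pseudocode and by Observation~\ref{obs:deq:seq}, $s_s$ does not modify either counter in the interval between $C_{i-1}$ and the evaluation of the \If\ condition on line~\lref{ln:alg-dirdeq-18} for $op_i$, this condition evaluates to \true, so $s_s$ executes line~\lref{ln:alg-dirdeq-19} and sends \NACK\ to the client. By lines~\lref{ln:dqh03}~-~\lref{ln:dqh04} of the client code, the client then returns $\bot$, which matches the claim.

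The main obstacle, as in Lemma~\ref{lem:lin-b}, is the first case: we need to be sure that the key argument passed to \DIRDELETE\ inside the loop truly identifies the current first element of the sequential deque, i.e., that $head_i = sl_{i-1}^1.key$ at the configuration in which the loop is executed. This relies entirely on the bookkeeping already established by Lemma~\ref{lem:head} and Lemma~\ref{lemma:deq:seq1}, together with the fact that the increment of $head\_key$ on line~\lref{ln:alg-dirdeq-20} precedes the \DIRDELETE\ call, matching the convention used to compute $head_i$ at the linearization point on line~\lref{ln:alg-dirdeq-23}. Once that alignment of keys is invoked, the rest of the argument is a direct translation of the back-dequeue case.
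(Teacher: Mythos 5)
Your proposal is correct and follows the route the paper itself intends: the paper gives no explicit proof of Lemma~\ref{lem:lin-f}, only the remark that it goes ``in a similar fashion'' to Lemma~\ref{lem:lin-b}, and your argument is exactly that mirrored construction, with the right case split on $D_{i-1}$, the right line references for the \DEQH\ branch, and the right observation that the increment on line~\lref{ln:alg-dirdeq-20} precedes the \DIRDELETE\ call so that the key passed is $head_i$. One small remark in your favor: for the key alignment ($head_i = sl_{i-1}^1.key$) you invoke Lemma~\ref{lemma:deq:seq1}, Claim~\ref{cd24}, which is the precise place where the link between $head\_key$ and the sequential-deque keys is established; the paper's own proof of the back-side lemma cites Lemma~\ref{lem:tail} for the analogous step, which is the weaker statement about how $tail\_key$ changes and does not by itself give the connection to $sl_{i-1}^{d_{i-1}}.key$. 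Your citation is therefore the more accurate one, even though both proofs are otherwise the same argument.
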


From the above lemmas we have the following:

\begin{theorem}
The directory-based deque implementation is linearizable.  
\end{theorem}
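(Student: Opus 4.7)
The plan is to derive the theorem by assembling the three kinds of guarantees already proved: (i) the placement-inside-execution-interval lemma, which yields a total order of operations consistent with the real-time partial order $\prec_\alpha$; (ii) the structural lemmas on $head_i$, $tail_i$, and the sequential state $D_i$; and (iii) the two response lemmas \ref{lem:lin-b} and \ref{lem:lin-f}.

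First, I would let $L$ and $L_i$ be as already defined. Because the lemma preceding Lemma~\ref{lem:tail} shows that the linearization point of every operation lies inside its execution interval, the order induced by $L$ refines $\prec_\alpha$. Hence, to establish linearizability, it suffices to exhibit a sequential deque execution, respecting $L$, in which every operation returns the same response as it does in $\alpha$. Take the sequential execution $\delta$ obtained by applying the operations of $L$ in order, starting from an empty deque; by construction its state after the first $i{-}1$ operations is exactly $D_{i-1}$.

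Next, I would check legality on an operation-by-operation basis. Enqueue-front and enqueue-back operations carry no nontrivial return value, so nothing needs to be verified for them. For a dequeue-back operation $op_i$, Lemma~\ref{lem:lin-b} states that $op_i$ returns the $data$ field of $sl_{i-1}^{d_{i-1}}$, which is exactly the tail of $D_{i-1}$, and returns $\bot$ if $D_{i-1}=\epsilon$; Corollary~\ref{cor:deq:empty} together with the assignment of linearization points to line~\lref{ln:alg-dirdeq-9} confirms that the emptiness test performed by $s_s$ is consistent with $D_{i-1}=\epsilon$. Thus the response of $op_i$ matches the response it would give in $\delta$. The case of a dequeue-front operation is symmetric, using Lemma~\ref{lem:lin-f} in place of Lemma~\ref{lem:lin-b}.

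Since every operation in $L$ has the same response in $\alpha$ and in $\delta$, and $L$ respects $\prec_\alpha$, the execution $\delta$ witnesses that $\alpha$ is linearizable. The only subtlety I anticipate is handling operations still pending at the end of a finite $\alpha$: for those whose linearization point has already been executed by $s_s$ (i.e., $s_s$ has executed the corresponding line listed in the assignment of linearization points) but whose client response has not yet arrived, we include them in $L$; the remaining pending operations are simply omitted, as permitted by the definition of linearizability. Everything else is a direct consequence of the already-established lemmas, so no additional calculation is required.
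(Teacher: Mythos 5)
Your proposal is correct and follows essentially the same route the paper intends: the paper simply writes ``From the above lemmas we have the following'' before stating the theorem, and you have correctly filled in the standard glue argument — using the linearization-point placement lemma to show $L$ refines $\prec_\alpha$, defining $\delta$ from $L$, and discharging legality of the dequeue responses via Lemmas~\ref{lem:lin-b} and~\ref{lem:lin-f} together with Corollary~\ref{cor:deq:empty}, with enqueues trivially correct since they return no value. Your treatment of pending operations (include those whose linearization line $s_s$ has already executed, drop the rest) is also the standard and correct handling that the paper leaves implicit.
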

\subsection{Hierarchical approach, Elimination, and Combining.}
In this section, we outline how the hierarchical approach, 
described in Section~\ref{sec:impl-paradigms}, is applied to the 
directory-based designs.

Each island master $m_i$ performs the necessary communication between the clients of its island
and $s_s$. 
In the stack implementation,
each island master applies elimination 
before communicating with $s_s$. 
To further reduce communication with $s_s$, 
$m_i$ applies a technique known as combining~\cite{ultracomputer1982}.
In the case of stack, once elimination has been applied, there is only one type of requests
that must be sent to the synchronizer; for all these requests, $m_i$ sends just one message
containing their number $f$  and 
their type to the synchronizer. 
In case of push operations, this method allows the synchronizer to directly 
increment $top$ by $f$ and respond to $m_i$ with the value $g$ that $top$ had 
before the increment. Once $m_i$ receives $g$, it informs the clients (which initiated these requests)
that the keys for their requests are $g, g+1, \ldots, g+f-1$. 
In the case of queue, 
each message of $m_i$ to $s_s$ contains two counters counting 
the number of active enqueue and dequeue requests from clients of island $i$. 
When $s_s$ receives such a message it responds with a message containing the current
values of $tail$ and $head$. It then increments $tail$ and $head$ 
by the value of the first and second counter, respectively. Server $m_i$ 
assigns unique keys to active enqueue and dequeue operations, based on the value of 
$tail$ and $head$ it receives, in a way similar as in stacks.
Combining can be used for deques (in addition to elimination) in ways similar 
to those described above.

\section{Token-based Stacks, Queues, and Deques}
\label{app:token}
\ignore{
The servers are arranged on a logical ring, based on their id's. 
The servers have been placed on the ring so that the server $s_i$ with id $i$, $0 < i < \maxser-1$, 
is followed by the server $s_{i+1}$ with id $i+1$ and is preceded by the server $s_{i-1}$ with id $i-1$. 
The server $s_0$ with id $0$ is followed by the server $s_1$ with id $1$ and is preceded by 
the server $s_{\maxser -1}$ with id $\maxser-1$; similarly,
server $s_{\maxser-1}$ is followed by server $s_0$ and is preceded by server $s_{\maxser -2}$.
This order is known to the servers, and remains unchanged during the algorithm's execution. 
Each server $s$ has allocated a memory chunk in its local memory for storing 
elements of the implemented data structure. However, only if a server holds a token
is allowed to simulate operations. Such a server is called {\em token server}. 
}

We start with an informal description of the token-based technique that 
we present in this section. We assume that the servers are numbered from 
$0$ to $\maxser\ - 1$ and form a logical ring. 
Each server has allocated a chunk of memory (e.g. one or a few pages) of a predetermined size, where it stores elements 
of the implemented DS. 
A DS implementation 
employs (at least) one token which identifies the server $s_t$, called the {\em token server}, 
at the memory chunk of which newly inserted elements are stored. 
(A second token is needed in cases of queues and deques.)
When the chunk of memory allocated by the token server becomes full, 
the token server gives up its role and appoints another (e.g. the next) server 
as the new token server. 
A client remembers the server that served its last request 
and submits the next request it initiates
to that server;
so, each response to a client contains the id of the server that served the client's request. 
Servers that do not have the token for handling a request,
forward the request to subsequent servers; this is done until the request reaches the appropriate token server.
A server allocates a new (additional) chunk of memory 
every time the token reaches it (after having completed one more round of the ring)
and gives up the token when this chunk becomes full.

Section~\ref{app:stack-token} presents the details of the token-based distributed stack. 
The token-based queue implementation appears in Section~\ref{app:token-queue}.
Section~\ref{app:token-deque} provides the token-based deque.
We start by presenting {\em static versions} of the implementations, 
i.e. versions in which the total memory allocated for the data structure is predetermined during an execution 
and once it is exhausted the data structure becomes full and no more insertions of elements
can occur. We then describe in Section~\ref{app:directory-dyn}, how to take dynamic
versions of the data structures from their static analogs.

\subsection{Token-Based Stack}
\label{app:stack-token}

To implement a distributed stack, each server uses its allocated memory chunk to maintain a local stack, $lstack$. 
Initially, $s_t$ is the server with id $0$. 
To perform a push (or pop), a client $c$ 
sends a push (or pop) request to the server that has served $c$'s last request (or, initially, to server $0$)
and awaits for a response. 
If this server is not the current token server at the time that it receives the request, 
it forwards the request to its next or previous server,
depending on whether its local stack is full or empty, respectively. This is repeated until
the request reaches the server $s_t$ that has the token
which pushes the new element onto its local stack and sends an \ACK\ to $c$. 
If $s_t$'s local stack does not have free space to accommodate the new element, 
it sends the push request of $c$, together with an indication that 
it gives up its token, to the next server. 
\POP\ is treated by $s_t$ in a similar way.

\subsubsection{Algorithm Description}

	\begin{minipage}{.45\textwidth}
		\begin{algorithm}[H]
\caption{Events triggered in a server of the token-based stack.}
\label{alg8}
\begin{code}
  \lreset
  \firstline
     LocalStack $lstack$ = $\varnothing$;                                           \nl
     \integer\ $my\_sid$;                          \cm{each server has a unique id} \nl
     \integer\ $token = 0$;                                                         \ul
                                                                                    \nl
     a message $\langle op, data, id, tk\rangle$ is received:                       \nl
  \n   \Switch\ ($op$) $\lbrace$                                                    \nl
  \n     \Case\ \PUSH:                                                \label{ln:63} \nl
  \n       \If\ ($tk$ == \TOKEN) $token = my\_sid$;                   \label{ln:60} \nl
           \If\ ($token \neq my\_sid$) $\lbrace$                      \label{ln:64} \nl
  \n         send($token, \langle op, data, id, tk\rangle$);          \label{ln:65} \nl
             \Break;                                                                \ul
  \p       $\rbrace$                                                                \nl
           \If\ (!IsFull($lstack$)) $\lbrace$                         \label{ln:66} \nl
  \n         push($lstack, data$);                                   \label{ln:lp3} \nl
             send($id, \langle\ACK, my\_sid\rangle$);                \label{ln:lp1} \nl
  \p       $\rbrace$ \Elseif\ ($my\_sid \neq$ \maxser-1) $\lbrace$    \label{ln:67} \nl 
  \n         $token$ = find\_next\_server($my\_sid$);                 \label{ln:68} \nl
             send($token, \langle op, data, id, \TOKEN\rangle$);      \label{ln:70} \nl
  \p       $\rbrace$ \Else\ 
                     \cm{It's the last server in the order, thus the stack is full} \nl
  \n          send($id, \langle\NACK, my\_sid\rangle$);               \label{ln:71} \nl
  \p       \Break;                                                                  \nl
  \p     \Case\ \POP:                                                 \label{ln:72} \nl
  \n       \If\ ($tk$ == \TOKEN) $token = my\_sid$;                   \label{ln:61} \nl
           \If\ ($token \neq my\_sid$) $\lbrace$                      \label{ln:73} \nl
  \n         send($token, \langle op, data, id, tk\rangle$);          \label{ln:74} \nl
             \Break;                                                                \ul
  \p       $\rbrace$                                                                \nl
           \If\ (!IsEmpty($lstack$)) $\lbrace$                        \label{ln:75} \nl
  \n         $data$ = pop($lstack$);                                \label{ln:75.1} \nl
             send($id, \langle data, my\_sid\rangle$);               \label{ln:lp2} \nl
  \p       $\rbrace$ \Elseif\ ($my\_sid \neq 0$) $\lbrace$            \label{ln:76} \nl
  \n         $token$ = find\_previous\_server($my\_sid$);             \label{ln:77} \nl
             send($token, \langle op, data, id, \TOKEN\rangle$);      \label{ln:79} \nl
  \p       $\rbrace$ \Else\ 
                   \cm{It's the first server in the order, thus the stack is empty} \nl
  \n         send($id, \langle\NACK, my\_sid\rangle$);                \label{ln:80} \nl
  \p       \Break;                                                                  \ul
  \p\p $\rbrace$                                                                     
  \p
\end{code}
\end{algorithm}
\end{minipage}

Initially the elements are stored 
in the memory space allocated by server $s_0$, the first server in the 
ring. At this point, $s_0$ is the token server;
the token server manages the top of the stack. 
Once the memory chunk of the token server becomes full, 
the token server notifies the next server ($s_1$) 
in the ring to become the new token server.

The pseudocode for the server is presented in Algorithm \ref{alg8}. Each server $s$, apart 
from a local stack ($lstack$), maintains also a local variable $token$ which identifies 
whether $s$ is the token server. The messages that are transmitted during the execution 
are of type \PUSH\ and \POP, which are sent from clients that want to 
perform the mentioned operation to the servers, or are forwarded from any 
server towards the token server. Each message has four fields: (1) $op$ 
with the operation to be performed, (2) $data$, containing data in 
case of \ENQ\ and $\bot$ otherwise, (3) $id$ that contains the id 
of the sender and (4) a one-bit flag $tk$ which is set to \TOKEN\ only when a 
forwarded message denotes also a token transition. 

If the message is of type \PUSH\ (line \lref{ln:63}), $s$ first checks whether 
the message contains a token transition. If $tk$ is marked with \TOKEN, 
$s$ changes the $token$ variable to contain its id (line \lref{ln:60}). 
If $s$ is not a token server, it just forwards the message to the next
server (line \lref{ln:65}). Otherwise, it 
checks if there is free space in $lstack$ to store the new request 
(line \lref{ln:66}). If there is such space, the server pushes the data to the 
stack, and sends back an \ACK\ to the client. In this implementation, the 
\texttt{push()} function (line \lref{ln:lp3}) does not need to return any value, 
since the check for memory space has already been performed by the server on 
line \lref{ln:66}, hence \texttt{push()} is always successful.

If $s$ does not have any free space, it must notify the next server to become 
the new token server. More specifically, if $s$ is not the server with id 
$\maxser-1$ (line \lref{ln:67}), it forwards to the next server the \PUSH\ 
message it received from the client, after setting the $tk$ field to 
\TOKEN\ (line \lref{ln:70}). On the other hand, if $s$ is the server with id 
$\maxser -1$,  all previous servers have no memory space available to store a 
stack element. In this case, $s$ sends back to the client a message \NACK 
(line \lref{ln:71}).

If the message is of type \POP\ (line \lref{ln:72}) similar actions take 
place: $s$ checks whether the message contains a token transition and if 
its true, it changes its local variable $token$ appropriately. Then $s$ 
checks if it is the token server (line \lref{ln:73}). If not, it just 
forwards the message towards the server it considers as the token server 
(line \lref{ln:74}). If $s$ is the token server, it checks if its local 
stack is empty (line \lref{ln:75}). If it is not empty, the pop operation 
can be executed normally. At the end of the operation, $s$ sends to the 
client the data of the previous top element (line~\lref{ln:lp2}). In case 
of an empty local stack, if $s$ is  not $s_0$ (line \lref{ln:76}), 
it forwards to the previous server the client's \POP\ message, after setting 
the $tk$ field to \TOKEN\ (line \lref{ln:79}). On the other hand, 
if the server that received the \POP\ request is $s_0$ ($id==0$), 
then all the servers have empty stacks and the server sends back to the client 
a \NACK\ message (line \lref{ln:80}).

\begin{figure}
	\begin{minipage}{.45\textwidth}
		\begin{algorithm}[H]
\caption{Push operation for a client of the token-based stack.}
\label{alg9}
\begin{code}
  \firstline
     $sid=0$;                                                 \label{ln:81} \nl
     Data ClientPush(\integer\ $cid$, Data $data$) $\lbrace$                \nl
  \n   send($sid, \langle\PUSH, data, cid, \bot\rangle$);     \label{ln:82} \nl
       $\langle status, sid\rangle$ = receive();              \label{ln:83} \nl
       \return\ $status$;                                     \label{ln:84} \ul
  \p $\rbrace$
\end{code}
\end{algorithm}	
\end{minipage}
\hfill
\begin{minipage}{.45\textwidth}
\begin{algorithm}[H]
	\caption{Pop operation for a client of the token-based stack.}
	\label{alg10}
	\begin{code}
		\firstline
		$sid=0$;                                           \label{ln:85}    \nl
		Data ClientPop(\integer\ $cid$) $\lbrace$                        \nl
		\n   send($sid, \langle\POP, \bot, cid, \bot\rangle$);\label{ln:86} \nl
		$\langle status, sid\rangle$ = receive();        \label{ln:87} \nl
		\return\ $status$;                                             \ul
		\p $\rbrace$
	\end{code}
\end{algorithm}
\end{minipage}
\end{figure}

The clients execute the operations push and pop, by calling the functions 
\texttt{ClientPush()} and \texttt{ClientPop()}, respectively. Each of these 
functions sends a message to the server. Initially, the clients forward their 
requests to $s_0$. Because the server that maintains the top element might 
change though, the clients update the $sid$ variable through a lazy 
mechanism. When a client $c$ wants to perform an operation, it sends a request 
to the server with id equal to the value of $sid$ (lines \lref{ln:81} 
and \lref{ln:85}). If the message was sent to an incorrect server, it is forwarded 
by the servers till it reaches the server that holds the token. That server is 
going to respond with the status value of the operation and with the its id. 
This way, $c$ updates the variable $sid$. 

During the execution of the \texttt{ClientPush()} function, described in Algorithm 
\ref{alg9}, the client sends a \PUSH\ message to the server with id $sid$ 
(line \lref{ln:82}). It then, waits for its response (line \lref{ln:83}). When the 
client receives the response, it updates the $sid$ variable (line \lref{ln:83}) 
and returns the $status$. The $status$ is either \ACK\ for a successful 
push, or \NACK\ for a full stack.

The \texttt{ClientPop()} function operates in a similar fashion. 
The client sends a \POP\ message to the server with id $sid$ 
(line \lref{ln:86}). It then, waits for its response (line \lref{ln:87}). 
The server responds with a \NACK\ (for empty queue), or with the value 
of the top element (otherwise). The server also forwards its id, 
which is stored in client's variable $sid$. The client finally, 
returns the $status$ value and terminates.

\subsubsection{Proof of Correctness}
Let $\alpha$ be an execution of the token-based stack algorithm presented 
in Algorithms \ref{alg8}, \ref{alg9}, and \ref{alg10}. Let $op$ be any 
operation in $\alpha$. We assign a linearization point to $op$ 
by considering the following cases: 
\begin{compactitem}
\item {\em $op$ is a push operation.} 
Let $s_t$ be the token server that responds to the client that initiated $op$ 
(i.e. the {\tt receive} of line \lref{ln:83} in the execution of $op$ receives 
a message from $s_t$). If $op$ returns \ACK, the linearization point is placed at the 
configuration resulting from the execution of line \lref{ln:lp1} by $s_t$ for $op$.
Otherwise, the linearization point of $op$ is placed at the configuration resulting 
from the execution of line \lref{ln:71} by $s_t$ for $op$.
\item {\em $op$ is a pop operation.} Let $s_t$ be the token server that responds to 
the client that initiated $op$ (line \lref{ln:87}).
If the operation returns \NACK, the linearization point of $op$ is placed at the 
configuration resulting from the execution of line \lref{ln:80} by $s_t$ for $op$.
Otherwise, the linearization point of $op$ is placed at the configuration resulting 
from the execution of line \lref{ln:lp2} by $s_t$ for $op$.
\end{compactitem}
Denote by $L$ the sequence of operations (which have been assigned linearization points)
in the order determined by their linearization points.

\begin{lemma}
The linearization point of a push (pop) operation $op$ is placed in its execution interval.
\end{lemma}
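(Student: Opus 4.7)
The plan is to verify, in each of the four cases of the linearization point definition, that the chosen event lies strictly between the invocation and the response of $op$. A key observation that unifies the cases is that in all four, the linearizing event is itself a {\tt send} executed by the token server $s_t$ back to the initiating client $c$ — namely, the sends on lines \lref{ln:lp1}, \lref{ln:71}, \lref{ln:lp2}, and \lref{ln:80}, respectively. So the argument reduces to two generic sub-claims: the linearizing send happens after $op$'s invocation, and it happens before $op$'s response.

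For the first sub-claim, I would trace backwards the message whose processing causes $s_t$ to execute the linearizing send. By inspection of Algorithm~\ref{alg8}, $s_t$ reaches any of these lines only while handling a \PUSH\ or \POP\ message it has received; that message was either sent directly by $c$ on line \lref{ln:82} or \lref{ln:86}, or forwarded by a previous server on lines \lref{ln:65}/\lref{ln:70} or \lref{ln:74}/\lref{ln:79}. A short induction on the length of this forwarding chain, using the fact that a server may forward a message only after receiving it, shows the linearizing send at $s_t$ happens strictly after $c$'s initial send from {\tt ClientPush}/{\tt ClientPop} — i.e., after the invocation event of $op$.

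For the second sub-claim, note that, immediately after the invocation, $c$ blocks on the {\tt receive} of line \lref{ln:83} (resp. \lref{ln:87}). By the send/receive semantics of Section~\ref{sec:hw}, this {\tt receive} returns only after a message has been delivered, and hence only after $s_t$ performed the corresponding {\tt send}. Since the response event of $op$ is the return of {\tt ClientPush}/{\tt ClientPop}, which occurs only after the blocking {\tt receive} returns, the linearizing event precedes the response event of $op$.

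The only subtle point I expect is arguing that the message $c$ obtains from its blocking {\tt receive} is indeed the response $s_t$ sends on the linearizing line for $op$, rather than a stray or prior message. This will follow from two features of the model: each client has at most one outstanding request at a time, and the send/receive channels are FIFO (Section~\ref{sec:hw}); together these imply that the first message delivered to $c$ after its invocation send is precisely the response corresponding to $op$.
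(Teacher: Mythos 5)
Your proof is correct and follows essentially the same approach as the paper's proof sketch: identify the linearizing event as the corresponding {\tt send} by the token server, argue it must occur after the client's request message is emitted (and hence after the invocation), and argue it must occur before the blocking {\tt receive} at the client returns (and hence before the response). You are more explicit than the paper in two places — spelling out the induction on the forwarding chain to show the token server only reaches the linearizing line after the client's send, and raising and dispatching the concern that the message the client eventually receives might not be the one for $op$ — both of which the paper's sketch leaves implicit. These additions tighten the argument without changing its structure.
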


\begin{proof}[Proof Sketch]
Assume that $op$ is a push operation and let $c$ be the client that invokes it. 
After the invocation of $op$, $c$ sends a message to some server $s$ and awaits 
a response. Recall that routine {\tt receive()} (line \lref{ln:83}) blocks until 
a message is received. The linearization point of $op$ is placed either in the 
configuration resulting from the execution of line \lref{ln:lp1} by $s_t$ for $op$, 
where $s_t$ is the token server in this configuration, or in the configuration 
resulting from the execution of line \lref{ln:71} by $s_t$ for $op$.

Either of these lines is executed after the request by $c$ is received, i.e. after 
$c$ invokes {\tt ClientPush}. Furthermore, they are executed before $c$ receives 
the response by $s_t$ and thus, before {\tt ClientPush} returns. Therefore, the 
linearization point is inside the execution interval of push.

The argumentation regarding pop operations is analogous.
\end{proof}

\ignore{
Let $\alpha$ be any execution of the token-based stack algorithm. 
Each server participating in the algorithm maintains a local variable 
$lstack$. By inspection of the pseudocode (lines \lref{ln:lp3}, \lref{ln:75.1}), 
the semantics of routines {\tt push()} and {\tt pop()}, and the fact that 
messages that arrive at a server are served in a FIFO manner, we have the 
following.

\begin{observation}
\label{obs:lstack}
The local variable $lstack$ of a server $s$ that takes steps in $\alpha$ 
implements a sequential LIFO stack.
\end{observation}
}

Each server maintains a local variable $token$ with initial value $0$
(initially, the server with id equal to $0$ is the token server). 
Whenever some server $s_{i}$ receives a \TOKEN\ message, i.e. a message 
with its $tk$ field equal to \TOKEN\ (line \lref{ln:60}), 
the value of $token$ is set to $i$. By inspection of the pseudocode, 
it follows that the value of $token$ is set to 
the id of the next server if the local stack
of $s_{i}$ is full (line \lref{ln:68}); then, a \TOKEN\ message is sent 
to the next server (line \lref{ln:70}). Moreover, the value of $token$ 
is set to the id of the previous server if the local 
stack $lstack$  of $s_{i}$ is empty (line \lref{ln:76}); then, a \TOKEN\ 
message is sent to the previous server (lines \lref{ln:77}-\lref{ln:79}). 
(Unless the server is $s_0$ in which case a \NACK\ is sent to the client 
(line \lref{ln:80} but no \TOKEN\ message to any server.) 
Thus, the following observation holds.

\begin{observation}
\label{obs:unique}
At each configuration in $\alpha$, there is at most one server $s_{i}$ for 
which the local variable $token$ has the value $i$.
\end{observation}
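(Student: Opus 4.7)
The plan is to prove the observation by induction on the sequence of events in $\alpha$, using an invariant stronger than the claim itself. The stronger invariant will track the token in all of its possible ``locations'': held locally by a server, in transit as a \TOKEN\ message, delivered but not yet processed, or owned by a server which is between its local update of $token$ and its emission of the \TOKEN\ message. The bound on $N(C) := |\{i : token_i = i \text{ at } C\}|$ from Observation~\ref{obs:unique} will then follow immediately, since $N(C)$ will be at most the total count maintained by the invariant.

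First I would handle the base case: in the initial configuration $C_0$, every server has $token = 0$ by initialization, and no messages are in transit, so only $s_0$ satisfies $token_i = i$ and hence $N(C_0) = 1$. For the inductive step I would enumerate the event types that can touch either a $token$ variable or the set of \TOKEN\ messages in flight, namely: (a) execution of line~\lref{ln:68} or~\lref{ln:77}, where a server $s_k$ reassigns $token_k$ to a neighbor's id; (b) execution of the subsequent send at line~\lref{ln:70} or~\lref{ln:79}, producing an in-flight \TOKEN\ message; (c) delivery of a \TOKEN\ message into a server's mailbox; and (d) execution of line~\lref{ln:60} or~\lref{ln:61}, where the recipient assigns $token_i = my\_sid$. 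Any other event (servers executing non-token lines, \ACK/\NACK\ transmissions, client requests, message deliveries of non-\TOKEN\ messages, forwards at lines~\lref{ln:65}/\lref{ln:74}) neither modifies any $token$ variable nor creates or consumes a \TOKEN\ message.

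The invariant I would prove is
\[
N(C) + T(C) + D(C) + P(C) \;=\; 1,
\]
where $T(C)$ counts in-transit \TOKEN\ messages, $D(C)$ counts \TOKEN\ messages already delivered but whose handler has not yet reached line~\lref{ln:60} or~\lref{ln:61}, and $P(C)$ counts servers which have executed line~\lref{ln:68} or~\lref{ln:77} but not yet the companion send. Each of the four event types above shifts one unit between two adjacent categories: (a) decrements $N$ and increments $P$; (b) decrements $P$ and increments $T$; (c) decrements $T$ and increments $D$; (d) decrements $D$ and increments $N$. The guarded structure of the code guarantees that the prerequisites for (a), (b), (d) are met exactly when the corresponding category is nonempty, and by the inductive hypothesis events cannot combine to violate the sum. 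Since the three extra terms are nonnegative, $N(C) \leq 1$ for every reachable $C$.

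The main obstacle is justifying category (b) as well-defined, since $P(C)$ refers to the control-flow position of a server between two separate events, not to a single local variable. For this, I would observe that after executing line~\lref{ln:68} (resp.~\lref{ln:77}) within a handler, the very next step taken by that server is the send at line~\lref{ln:70} (resp.~\lref{ln:79}); no other server can interfere with this local sequence because, by the inductive invariant at the moment (a) fires, we have $T = D = 0$ and hence no other server can enter its message-handling code via a \TOKEN\ delivery in the meantime. This closes the argument and yields $N(C) \leq 1$, which is precisely the statement of Observation~\ref{obs:unique}.
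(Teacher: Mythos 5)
Your proof is correct and formalizes, via the token-conservation invariant $N(C)+T(C)+D(C)+P(C)=1$, the same intuition that the paper sketches informally in the paragraph preceding the observation (the token is created at $s_0$, and lines~\lref{ln:60}/\lref{ln:61} and \lref{ln:68}/\lref{ln:70}, \lref{ln:77}/\lref{ln:79} only ever move it one hop at a time, so it resides in exactly one place). The paper gives no explicit proof for this observation, so your induction is simply the rigorous version of the same underlying argument.
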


At each configuration $C$, the server $s_i$ whose $token$ variable 
is equal to $i$ is referred to as the {\em token server} at $C$.

\begin{observation}
\label{obs:direction}
A \TOKEN\ message is sent from a server with id $i$, $0 \leq i < \maxser -1$, to a server 
with id $i+1$ only if the local stack of server $i$ is full. A \TOKEN\ message is sent from 
a server with id $i$, $0 < i \leq \maxser -1$, to a server with id $i-1$ only when the local 
stack of server $i$ is empty. 
\end{observation}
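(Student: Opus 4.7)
The plan is to establish the observation by a direct case analysis over the lines of Algorithm~\ref{alg8} at which a message carrying $tk = \TOKEN$ is actually transmitted. By inspection of the pseudocode, there are only two such transmission sites: line~\lref{ln:70} inside the \PUSH\ branch, and line~\lref{ln:79} inside the \POP\ branch. Every other \send\ in the algorithm either propagates a message whose $tk$ field equals its received value (the forwarding lines~\lref{ln:65}~and~\lref{ln:74}) or sends an \ACK/\NACK/data response to the client, which carries no \TOKEN\ flag. So it suffices to characterize the conditions under which each of the two \TOKEN-emitting lines is executed.

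First I would consider line~\lref{ln:70}. A server $s_i$ reaches this line only after the guard on line~\lref{ln:64} evaluates to \false\ (so $s_i$ believes itself to be the token server) and subsequently the guard on line~\lref{ln:66} evaluates to \false\ (so $\mathtt{IsFull}(lstack)$ holds), and further the condition on line~\lref{ln:67} requires $my\_sid \neq \maxser - 1$. In this case, line~\lref{ln:68} sets the destination to $\mathtt{find\_next\_server}(my\_sid)$, which by the ring ordering is the server with id $i+1$. Thus, whenever a \TOKEN\ message travels from $s_i$ to $s_{i+1}$ with $0 \leq i < \maxser - 1$, the local stack of $s_i$ is necessarily full at the configuration immediately preceding the \send. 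For the forwarding lines~\lref{ln:65}~and~\lref{ln:74}, the propagated token flag was already set earlier by some other server, so the ``full'' condition also held at the original source; a straightforward induction on the forwarding chain carries the property along the ring without changing the source of the \TOKEN\ flag.

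The symmetric argument handles line~\lref{ln:79} in the \POP\ branch: this line executes only when the guard on line~\lref{ln:73} evaluates to \false\ and the guard on line~\lref{ln:75} evaluates to \false\ (so $\mathtt{IsEmpty}(lstack)$ holds), while line~\lref{ln:76} additionally requires $my\_sid \neq 0$. Line~\lref{ln:77} then sets the destination to $\mathtt{find\_previous\_server}(my\_sid)$, which by the ring ordering is the server with id $i-1$. Hence whenever a \TOKEN\ message goes from $s_i$ to $s_{i-1}$ with $0 < i \leq \maxser - 1$, the local stack of $s_i$ is empty, as claimed.

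The main (very mild) subtlety will be dealing carefully with the forwarded \TOKEN\ messages at lines~\lref{ln:65}~and~\lref{ln:74}: one must verify that forwarding preserves both the identity of the original source server and its full/empty status at the moment the \TOKEN\ flag was first attached, since the direction of forwarding (next vs.\ previous) is determined by the receiver's local $token$ variable rather than by re-checking fullness. Using Observation~\ref{obs:unique} together with the fact that the \TOKEN\ flag is only ever produced at lines~\lref{ln:70}~and~\lref{ln:79}, one shows that each \TOKEN\ message in transit corresponds to exactly one original emission event and travels monotonically in one direction along the ring until absorbed at lines~\lref{ln:60} or~\lref{ln:61}, which concludes the argument.
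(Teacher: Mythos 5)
Your core argument is sound and matches the paper's reasoning: the only two places the pseudocode ever constructs and sends a message with $tk = \TOKEN$ are line~\lref{ln:70} (guarded by ${\tt IsFull}(lstack)$ being true and $my\_sid \neq \maxser-1$, destination $\mathtt{find\_next\_server}$) and line~\lref{ln:79} (guarded by ${\tt IsEmpty}(lstack)$ being true and $my\_sid \neq 0$, destination $\mathtt{find\_previous\_server}$), which immediately gives both directions of the observation.

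Where you go astray is the perceived ``subtlety'' about forwarding. You write as if lines~\lref{ln:65} and~\lref{ln:74} might carry a message whose $tk$ field is already set to \TOKEN, requiring an induction over a multi-hop forwarding chain. But this situation never arises: line~\lref{ln:60} (resp.\ \lref{ln:61}) executes \emph{before} the guard on line~\lref{ln:64} (resp.\ \lref{ln:73}), so a server that receives a \TOKEN\ message first sets $token := my\_sid$, which then forces the forwarding guard $token \neq my\_sid$ to evaluate to \false. Consequently line~\lref{ln:65} (resp.\ \lref{ln:74}) is reachable only on messages with $tk \neq \TOKEN$; a \TOKEN\ message travels exactly one hop, always originating fresh at line~\lref{ln:70} or~\lref{ln:79}. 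Moreover, even setting aside this misreading, the way you propose to dispose of the supposed subtlety does not actually work: the observation constrains the state of the \emph{sending} server $i$, whereas your ``induction on the forwarding chain'' only transports the fullness/emptiness property of the \emph{original emitter}. If \TOKEN\ messages could be forwarded by a server whose local stack is neither full nor empty, the observation would simply be false, and no bookkeeping about the original source would save it. Also note that line~\lref{ln:65} forwards to the address stored in $token$, which in general is not the immediate neighbor, so the geometric picture of a chain marching one hop at a time along the ring is itself inaccurate. The fix is simple: rather than hand-waving a forwarding induction, observe directly that the absorption on lines~\lref{ln:60}/\lref{ln:61} precludes lines~\lref{ln:65}/\lref{ln:74} from ever transmitting a \TOKEN\ message, which collapses the whole case and leaves only the two generation sites you already analyzed correctly.
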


By the pseudocode, namely the \If\ clause of line \lref{ln:64} and 
the \If\ clause of line \lref{ln:73}, the following observation holds.

\begin{observation}
\label{obs:tokenserver}
Whenever a server $s_{i}$ performs push and pop operations on its local stack (lines \lref{ln:lp3} 
and \lref{ln:75.1}), it holds that its local variable $token$ is equal to $i$.
\end{observation}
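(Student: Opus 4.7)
The plan is to verify this observation by direct inspection of the event-driven pseudocode of Algorithm~\ref{alg8}. The key idea is that lines~\lref{ln:lp3} and \lref{ln:75.1} lie inside code blocks whose entry is guarded by the explicit check $token = my\_sid$, and no step interleaved between the guard and the local-stack operation modifies $token$.

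First, I would note that the local variable $token$ of server $s_i$ is modified only by $s_i$ itself, and only at lines~\lref{ln:60}, \lref{ln:68}, \lref{ln:61}, and \lref{ln:77} during the handling of some incoming message. Since each server processes incoming messages sequentially (its handler is triggered upon receipt of a message and runs to completion before the next message is processed), within a single handler invocation the value of $token$ can change only through assignments performed by that very handler. So it suffices to analyze the control flow inside one handler.

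Next, for the push case I would trace the \PUSH\ branch: line~\lref{ln:lp3} is executed only after the \If\ check on line~\lref{ln:64} evaluates to \false, which by definition means $token = my\_sid = i$. After this check, the only step executed before line~\lref{ln:lp3} is the guard on line~\lref{ln:66}, which reads $lstack$ but does not assign to $token$. The only other assignment to $token$ in this branch is the potential update on line~\lref{ln:60}, which precedes the check on line~\lref{ln:64}. Hence, when line~\lref{ln:lp3} is reached, $token$ still holds the value it had at line~\lref{ln:64}, namely $i$. An entirely analogous argument handles the pop case: line~\lref{ln:75.1} is reached only after the check on line~\lref{ln:73} evaluates to \false (giving $token = i$), with only the non-modifying check on line~\lref{ln:75} executed in between, while the possible earlier update on line~\lref{ln:61} precedes line~\lref{ln:73}.

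There is no real obstacle here; the observation follows from routine code inspection together with the sequential nature of message handling at each server. The only subtlety worth making explicit is that, because $token$ is local to $s_i$, no step by any other process or DMA engine can alter its value, so establishing the property locally inside one handler suffices.
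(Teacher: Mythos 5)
Your proof is correct and takes essentially the same approach as the paper: the paper's justification is a one-line appeal to the \If\ guards on lines~\lref{ln:64} and \lref{ln:73}, which is exactly the key point you identify. Your write-up just makes the routine control-flow trace explicit (noting that line~\lref{ln:66}/\lref{ln:75} doesn't touch $token$, that the update on line~\lref{ln:60}/\lref{ln:61} precedes the guard, and that $token$ is local so no concurrent step can change it), which is a fine elaboration of the same argument.
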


Let $C_i$ be the configuration at which the $i$-th operation $op_i$ of $L$ is linearized. 
Denote by $\alpha_i$, the prefix of $\alpha$ which ends with $C_i$ and let $L_i$ be the 
prefix of $L$ up until the operation that is linearized at $C_i$. Denote by $S_i$ the 
sequence of values that a sequential stack contains after applying the sequence of operations 
in $L_i$, in order, starting from an empty stack; let $S_0 = \epsilon$, i.e. $S_0$ is the empty 
sequence.

\begin{lemma}
For each  $i$, $i \geq 0$, if $s_{k_i}$ is the token server at $C_i$ and $ls_i^j$ are the 
contents of the local stack of server $j$, $0 \leq j \leq k_i$, at $C_i$, then it holds that 
$S_i = ls_i^0 \cdot ls_i^1 \cdot \ldots \cdot ls_i^{k_i}$ at $C_i$.
\end{lemma}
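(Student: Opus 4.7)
The plan is to prove the claim by induction on $i$, but I would first strengthen the inductive hypothesis with two auxiliary invariants that capture the cascade structure of the distributed stack: (a) for every $j < k_i$, the local stack $ls_i^j$ is filled to capacity, and (b) for every $j > k_i$, the local stack $ls_i^j$ is empty. Intuitively, elements are stored contiguously from $s_0$ upward, with the token server $s_{k_i}$ holding the (possibly partially filled) chunk at the top. Without this strengthening, the inductive step would not be able to rule out a cascade of multiple token transitions during the processing of a single operation.

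The base case at $C_0$ is immediate: $k_0 = 0$, every local stack is initially empty, and $S_0 = \epsilon = ls_0^0$, so invariants (a) and (b) hold vacuously.

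For the inductive step, assuming the strengthened hypothesis at $C_{i-1}$, I would distinguish the four possible types of $op_i$ according to its assigned linearization point: a successful push (linearized at line~\lref{ln:lp1}), an unsuccessful push (line~\lref{ln:71}), a successful pop (line~\lref{ln:lp2}), and an unsuccessful pop (line~\lref{ln:80}). By inspection of the pseudocode, the only steps that modify any local stack are the executions of {\tt push()} at line~\lref{ln:lp3} and {\tt pop()} at line~\lref{ln:75.1}, and by Observations~\ref{obs:unique} and~\ref{obs:tokenserver} these steps are taken exclusively by the unique token server. For a successful push, either $k_i = k_{i-1}$ (the push fit directly into the top chunk) or $k_i = k_{i-1}+1$; in the latter case, Observation~\ref{obs:direction} implies that $ls_{i-1}^{k_{i-1}}$ was full, and invariant (b) implies that $ls_{i-1}^{k_{i-1}+1}$ was empty, so the push succeeds at the new token server without a further cascade. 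A symmetric argument handles a successful pop using invariant (a) and Observation~\ref{obs:direction}. For an unsuccessful push, invariant (b) forces $k_{i-1}=\maxser-1$, and the push merely reaches the already-full last server and returns \NACK\ without altering any local stack; the unsuccessful pop case is handled analogously, using invariant (a) to force $k_{i-1}=0$ with $ls_{i-1}^0 = \epsilon$, so $S_{i-1}=\epsilon$. In every case, invariants (a) and (b) propagate forward, and the concatenation identity $S_i = ls_i^0 \cdot ls_i^1 \cdot \ldots \cdot ls_i^{k_i}$ follows from the inductive hypothesis by prepending or appending a single element to the local stack of $s_{k_i}$ (or leaving every local stack unchanged in the \NACK\ cases).

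The main obstacle is ruling out multi-step cascades of token transitions while $op_i$ is being processed, since a push (respectively, pop) that is forwarded with the \TOKEN\ flag could, in principle, find the receiving server's local stack full (respectively, empty) and trigger another transition, so that the token server at $C_i$ would differ from $s_{k_{i-1}}$ by more than one position. The strengthened invariants (a) and (b) are designed precisely to block this: they guarantee that the neighbor receiving the token is immediately ready to accept the operation, so only a single token transition can accompany each linearized operation. Establishing and maintaining these invariants in lockstep with the concatenation identity is the technical core of the argument.
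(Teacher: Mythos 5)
Your proof follows the same overall structure as the paper's — induction on $i$ with a case split on whether $op_{i+1}$ is a push or pop and whether the token server changes — but you strengthen the inductive hypothesis with the two auxiliary invariants (all stacks below the token server are full, all above it are empty). This is a genuine and worthwhile improvement in rigor. The paper's own inductive step contains the terse claim that ``Observation~\ref{obs:tokenserver} implies that the local stack of $s_{k_i+1}$ is empty,'' which does not follow directly from that observation alone (a server that previously held the token could in principle have a non-empty residual stack); what actually justifies it is precisely your invariant (b), which the paper leaves implicit. Likewise, the paper never explicitly addresses the possibility of multi-hop token cascades during a single operation, whereas your invariants (a) and (b) cleanly rule them out: a forward token transition lands on an empty stack, a backward one on a full stack, so a single hop always suffices. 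In exchange for a slightly heavier inductive statement, your version closes these gaps and would make the boundary cases ($s_{\maxser-1}$ returning \NACK\ on push, $s_0$ on pop) fall out mechanically from (b) and (a) respectively. One small point worth making explicit when you write this out: the invariants are preserved not only across linearization points but across every intermediate token transition, since by Observation~\ref{obs:tokenserver} no local stack changes between linearization points, and each token hop moves into a stack whose state (full or empty) is exactly what the invariant on the new side requires.
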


\begin{proof}
We prove the claim by induction on $i$. 
The claim holds trivially for $i=0$.
Fix any $i \geq 0$ and assume that at $C_i$, it holds that 
$S_i$ = $ls_i^0 \cdot ls_i^1 \cdot \ldots \cdot ls_i^{k_i}$. 
We show that the claim holds for $i+1$.

We first assume that $op_{i+1}$ is a push operation initiated by some client $c$.
Assume first that $s_{k_i} = s_{k_{i+1}}$. Then, by induction hypothesis, 
$S_i$ = $ls_i^0 \cdot \ldots \cdot ls_i^{k_i}$. 
In case the local stack of $s_{k_i}$ is not full,  $s_{k_i}$ pushes the value $v_{i+1}$ 
of field $data$ of the request onto its local stack and responds to $c$. 
Since no other change occurs to the local stacks of $s_0, \ldots, s_{k_i}$ from $C_i$ to $C_{i+1}$, 
at $C_{i+1}$, it holds that 
$S_{i+1} = ls_i^0 \cdot \ldots \cdot ls_i^k \cdot \{v_{i+1}\} = ls_i^0 \cdot \ldots \cdot ls_{i+1}^{k_i}$. 
\ignore{In case that the local stack of $s_{k_i}$ is full, since $s_{k_i} = s_{k_{i+1}}$, by inspection 
of the pseudocode, it follows that $s_{k_i} = s_{\maxser - }$. In this case, $s_{k_i}$ responds 
with a \NACK\ to $c$ and the local stack remains unchanged. Thus, it holds that 
$S_{i+1} = ls_i^0 \cdot \ldots \cdot ls_i^k = S_i$. }
In case that the local stack of $s_{k_i}$ is full, since $s_{k_i} = s_{k_{i+1}}$ and it is 
the token server, it follows that $s_{k_i} = s_{\maxser - 1}$. In this case, $s_{k_i}$ responds 
with a \NACK\ to $c$ and the local stack remains unchanged. 
Thus, it holds that $S_{i+1} = ls_i^0 \cdot \ldots \cdot ls_i^k = S_i$.


Assume now that $s_{k_i} \neq s_{k_{i+1}}$. 
This implies that the local stack of $s_{k_i}$ is full just after $C_i$.
Observation \ref{obs:direction} implies that $s_{k_i}$ forwarded the token to $s_{k_i+1}$ 
in some configuration between $C_i$ and $C_{i+1}$. Notice that then, $s_{k_i+1} = s_{k_{i+1}}$.
Observation \ref{obs:tokenserver} implies that the local stack of $s_{k_i+1}$ is empty. 
Thus, the \If\ condition of line \lref{ln:66} evaluates to \true\ for server $s_{k_i+1}$ 
and therefore, it pushes the value $v_{i+1}$ of $op_{i+1}$ onto its local stack. 
Thus, at $C_{i+1}$, $ls_{i+1}^{k_i+1}$ = $\{v_{i+1}\}$. 
By definition, $S_{i+1}$ = $S_i \cdot \{v_{i+1}\}$. 
Therefore, $S_{i+1}$ = $ls_i^0 \cdot \ldots \cdot ls_{i+1}^{k_i+1}$. 
And since by Observations \ref{obs:unique} and \ref{obs:tokenserver}, 
the contents of the local stacks of servers other than $k_i+1$ do not change, 
it holds that $S_{i+1} = ls_{i+1}^0 \cdot \ldots \cdot ls_{i+1}^{k_i+1} = ls_{i+1}^0 \cdot \ldots \cdot ls_{i+1}^{k_{i+1}}$. 

\ignore{
Assume now that $s_k = $ be the last server. In this case, the \Else\ case of line \lref{ln:71} is 
executed, i.e. server $k$ sends \NACK\ to $c$ and does not modify its local stack nor is $op_{i+1}$ 
successful, thus $S_{i+1} = S_i$. Βy Observations \ref{obs:unique} and \ref{obs:tokenserver}, 
the contents of the local stacks of servers other than $k$ do not change between $C_i$ and $C_{i+1}$. 
Therefore, we can say that $S_{i+1} = ls_{i+1}^0 \cdot \ldots \cdot ls_{i+1}^k$ and the claim holds.
}

The reasoning for the case where $op_{i+1}$ is an instance of a pop operation is symmetrical.
\end{proof}

From the above lemmas and observations, we have the following.

\begin{theorem}
The token-based distributed stack implementation is linearizable. 
The time complexity and the communication complexity of each operation $op$ is $O(\maxser)$.
\end{theorem}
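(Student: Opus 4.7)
The plan is to complete the linearizability argument using the structural lemma above, and then bound the time and communication complexities by tracking how a request propagates around the ring.

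For linearizability, it remains to show that every operation returns in $\alpha$ the same value it would return in the sequential execution defined by $L$. I would proceed by a case analysis on the outcome and type of $op_{i+1}$. For a successful pop linearized at line~\lref{ln:lp2}, Observation~\ref{obs:tokenserver} together with the \If\ test on line~\lref{ln:75} imply that $ls_i^{k_i}$ is non-empty at $C_i$; by the structural lemma, its top coincides with the top of $S_i$, which is what a sequential pop would return. For an unsuccessful pop linearized at line~\lref{ln:80}, the token server must be $s_0$ with an empty local stack. Here the plan is to establish an auxiliary invariant, by induction on the sequence of \TOKEN\ messages and using Observation~\ref{obs:direction}, that whenever $s_0$ is the token server with empty local stack, every $ls_i^j$ is empty; thus $S_i = \epsilon$ and the sequential pop also returns \NACK. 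The two push cases are symmetric: a successful push linearized at line~\lref{ln:lp1} appends the pushed value to $S_i$, matching the sequential semantics, while an unsuccessful push linearized at line~\lref{ln:71} requires $s_{\maxser-1}$ to be the token server with its local stack full, which by an analogous invariant forces every local stack to be full, so the (bounded) sequential stack would refuse the push too.

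For the complexity bounds, I would trace the path of the request message. A client first sends its request to the server that served its previous request (or to $s_0$ initially). Each server that is not currently the token server forwards the request one hop along the ring (lines~\lref{ln:65} or~\lref{ln:74}), in the direction fixed by the operation type. Since the ring has $\maxser$ servers and forwarding for a fixed operation is unidirectional, the request reaches the token server after at most $\maxser - 1$ hops. At most one additional \TOKEN\ hop occurs when the reached token server is exactly full (for a push) or exactly empty (for a pop), followed by a single response message back to the client. Summing these gives $O(\maxser)$ messages per operation, and since each hop and each local step executes a constant number of instructions, the time complexity is also $O(\maxser)$.

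The main obstacle I expect is the ``all-full'' and ``all-empty'' invariants that underlie the two \NACK\ cases: they are not stated as separate lemmas in the text and must be extracted from the combination of Observation~\ref{obs:direction}, Observation~\ref{obs:unique}, and the fact that a server can become the token server only by receiving a \TOKEN\ message or by being the initial token holder $s_0$. Proving these invariants carefully by induction on the sequence of token-transfer events will be the technical crux, after which the response-matching claims and the hop count both follow in a routine manner.
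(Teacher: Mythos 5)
Your linearizability plan follows the structure the paper implicitly needs (it offers no proof beyond ``from the above lemmas and observations''), and the case analysis on the outcome of $op_{i+1}$ is the natural one. Two observations on the details. First, the ``all-empty'' invariant you propose for the unsuccessful pop is not needed as a separate lemma: the structural lemma already states $S_i = ls_i^0 \cdot \ldots \cdot ls_i^{k_i}$, so when the token is at $s_0$ and $ls_i^0 = \epsilon$ we get $S_i = \epsilon$ directly. (The invariant that servers beyond the token server have empty local stacks is in fact already used inside the paper's own proof of the structural lemma, where it asserts without comment that the local stack of $s_{k_i+1}$ is empty when the token arrives; if you want to make that step honest, the right fix is to strengthen the inductive hypothesis of the structural lemma, not to establish a separate invariant after the fact.) Second, you are right that the ``all-full'' invariant for the unsuccessful push is genuinely needed and not stated anywhere in the paper; the structural lemma by itself puts no lower bound on $|S_i|$ when $k_i = \maxser-1$.

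The complexity argument, however, rests on a misreading of the pseudocode. Lines~\lref{ln:65} and~\lref{ln:74} execute $\send(token, \ldots)$: the request is forwarded to the server id stored in the forwarder's local variable $token$, which is that server's last-known location of the token (or $0$ if it never held the token), not to its ring neighbor. This is a ``chase the current belief'' mechanism, not a monotone ring walk, and it need not be unidirectional: a server that last passed the token backwards because of a pop has $token$ pointing to a lower id, while a server that never held the token has $token = 0$. So the claim ``the request reaches the token server after at most $\maxser - 1$ hops'' does not follow from the code as written. To salvage the $O(\maxser)$ bound you would need an argument about how the chain $s \to s.token \to s.token.token \to \ldots$ makes monotone progress toward the actual token holder, and your sketch does not supply one. (The informal prose description of the stack does say ``forwards the request to its next or previous server'', which is the behavior your argument assumes, but the pseudocode that actually defines the algorithm does something different; your proof must follow the pseudocode.)
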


\subsection{Token-Based Queue}
\label{app:token-queue}

To implement a queue, two tokens are employed: 
at each point in time, there is a head token server $s_h$ and 
a tail token server $s_t$. Initially, server $0$ plays the role of 
both $s_h$ and $s_t$. 
Each server $s_r$, other than $s_t$ ($s_h$), that receives a request (directly) from a client $c$, 
it forwards the request to the next server to ensure that it will either reach the appropriate token server
or return back to $s_r$ (after traversing all servers). 
Servers $s_t$ and $s_h$ work in a way similar as server $s_t$ in stacks.

To prevent a request from being forwarded forever 
due to the completion of concurrent requests which may cause the token(s) to keep advancing, 
each server keeps track of the request that each client $c$ (directly) sends to it, 
in a {\em client table} (there can be only one such request per client). 
Server $s_t$ (and/or $s_h$) now reports the response to $s_r$ which forwards it
to $c$.  If $s_r$ receives a response for a request recorded in its client table, it deletes the request
from the client table. 
If $s_r$ receives the token (stack, tail, or head), it serves each request (push and pop, enqueue, or dequeue, respectively) 
in its client array and records its response. If a request, from those included in $s_r$'s client array,
reaches $s_r$ again, $s_r$ sends the response it has calculated for it to the client
and removes it from its client array. 
Since the communication channels are FIFO, the implementations ensures that all requests, 
their responses, and the appropriate tokens, move from one server to the next,
based on the servers' ring order, until they reach their destination. 
This is necessary to argue that the technique ensures termination for each request.

\subsubsection{Algorithm Description}
The queue implementation follows similar ideas to those of the 
token-based distributed stack, presented in Section \ref{app:stack-token}. However,
the queue implementation employs two tokens, one for the queue's tail and one for 
the queue's head, called {\em head token} and {\em tail token}, respectively. 
The tokens for the global head and tail are initially held by $s_0$.
However, they can be reassigned to other servers during the execution. If the local 
queue of the server that has the tail token becomes full, 
the token is forwarded to the next server. Similarly, if the local 
queue of the server that has the head token becomes empty, the head token is 
forwarded to the next server. If the appropriate token server receives the request
and serves it, it sends an \ACK\ message back to the server that initiated the forwarding. 
Then, the initial server responds to the client with an \ACK\ message, 
which also includes the id of the server that currently holds the token. 

The clients in their initial state store the id of $s_0$, which is the first server
to hold the head and tail tokens. The clients keep track of the servers that hold either token
in a lazy way. Specifically, a client updates its local variable (either $enq\_sid$ or $deq\_sid$
depending on whether its current active operation is an enqueue or a dequeue, respectively) 
with the id of the token server when it receives a server response. 

In this scheme, a client request may be transmitted indefinitely from a server to the next
without ever reaching the appropriate token server.
This occurs if both the head and the tail tokens are forwarded indefinitely
along the ring. Then, a continuous, never-ending race between a forwarded message and the 
appropriate token server may occur.
To avoid this scenario, we do the following actions. When a server $s$
receives a client's request $r$, if it does not have the appropriate token to serve it, 
it stores information about $r$ in a local array before it forwards it.
Next time that the server receives the tail (head) token, 
it will serve all enqueue (dequeue) requests. Notice that since channels preserve the FIFO order
and servers process messages in the order they arrive, the appropriate token will reach
$s$ earlier than the $r$. When $s$ receives $r$, it has already processed it; however,
it is then that $s$ sends the response for $r$ to the client.

\begin{algorithm}[!ht]
\small
\caption{Token-based queue server's local variables.}
\label{alg30-vars}
\begin{code}
  \lreset
  \firstline
     \integer\ $my\_sid$;                                                        \nl
     LocalQueue $lqueue = \varnothing$;                                          \nl
     LocalArray $clients = \varnothing$; 
                          \cm{Array of three values: $<$op, data, isServed$>$ }  \nl
     \bool\ $fullQueue$ = \false;  \ocm{True when tail and head are in the same} \ul
                                   \ccm{server and tail is before head }         \nl
     \bool\ $hasHead$;    \cm{Initially hasHead and hasTail are \true\ in server 0, 
     and false in the rest}                                                      \nl
     \bool\ $hasTail$; 
\end{code}
\end{algorithm}

\begin{algorithm}[!b]
\small
\caption{Events triggered in a server of the token-based queue.}
\label{alg30}
\begin{code}
  \firstline
     a message $\langle op, data, cid, sid, tk\rangle$ is received:           \nl
  \n   \If\ (!clients[$cid$] AND clients[$cid$].$isServed$) $\lbrace$
                                        \cm{If message has been served earlier.} \nl
  \n     send($cid, \langle\ACK,$ clients[$cid$].$data,\ my\_sid\rangle$);
                                                                \label{ln:a30-1} \nl
         clients[$cid$] = $\bot$;                               \label{ln:a30-2} \nl
  \p   $\rbrace$ \Else\ $\lbrace$                                                \nl
  \n     \Switch\ ($op$) $\lbrace$                              \label{ln:a30-3} \nl
  \n 	   \Case\ \ENQ:                                         \label{ln:a30-4} \nl
  \n         \If\ ($tk$ == \TAIL) $\lbrace$                    \label{ln:a30-36} \nl
  \n           $hasTail$ = \true;                              \label{ln:a30-37} \nl
               \If\ ($hasHead$)  $fullQueue$ = \true;          \label{ln:a30-38} \nl
               ServeOldEnqueues();                             \label{ln:a30-39} \ul
  \p         $\rbrace$                                                           \nl
             \If\ (!$hasTail$) $\lbrace$                        \label{ln:a30-9} 
                                                 \cm{Server does not have token} \nl
  \n           $nsid$ = find\_next\_server(($my\_sid$);        \label{ln:a30-41} \nl
               \If\ ($sid == -1$) $\lbrace$  \label{ln:a30-11} \cm{From client.} \nl
  \n             clients[$cid$] = $\langle\ENQ, data, \false\rangle$;
                                                               \label{ln:a30-12} \nl
                 send($nsid,\langle\ENQ, data, cid, my\_sid, \bot\rangle$);
                                                               \label{ln:a30-13} \nl
  \p           $\rbrace$ \Else\ $\lbrace$                      \cm{From server.} \nl
  \n             send($nsid,\langle\ENQ, data, cid, sid, \bot\rangle$); \label{ln:a30-14} \ul
  \p           $\rbrace$                                                         \nl
  \p         $\rbrace$ \Elseif\ (!IsFull($lqueue$)) $\lbrace$   \label{ln:a30-6}
                                                        \cm{Server can enqueue.} \nl
  \n          enqueue($lqueue, data$);                  \label{ln:a30-61}        \nl
              \If\ ($sid == -1$)                               \cm{From client.} \nl
  \n            send($cid, \langle\ACK, \bot, my\_sid\rangle$);\label{ln:a30-62} \nl
  \p          \Else                                            \cm{From server.} \nl
  \n            send($sid, \langle\ACK, \bot, cid, my\_sid, \bot\rangle$); \label{ln:a30-5.1}\nl
  \p\p       $\rbrace$ \Elseif\ ($fullQueue$) $\lbrace$         \label{ln:a30-5}
                                                          \cm{Global Queue full} \nl
  \n           \If\ ($sid$ == -1)                              \cm{From client.} \nl
  \n             send($cid, \langle\NACK, \bot, my\_sid\rangle$);\label{ln:a30-51}\nl
  \p           \Else\                                           \cm{From server} \nl
  \n             send($sid, \langle\NACK, \bot, cid, my\_sid, \bot\rangle$);\label{ln:a30-7.1}\nl
  \p\p       $\rbrace$ \Else\ $\lbrace$                         \label{ln:a30-7} 
                             \cm{Server moves the tail token to the next server} \nl                   
  \n           $nsid$ = find\_next\_server($my\_sid$);          \label{ln:a30-8} \nl 
                $fullQueue$ = \false;                                             \nl
                 $hasTail$ = \false;                            \label{ln:a30-10.1}  \nl
               send($nsid, \langle op, data, cid, my\_sid, \TAIL\rangle$);
                                                               \label{ln:a30-10} \ul
  \p        $\rbrace$                                                            \nl
            \Break;
\end{code}
\end{algorithm}

\begin{algorithm}[!t]
\small
\begin{code}
  \firstline
  \p~~~~~~~~~~\Case\ \DEQ:                                       \label{ln:a30-15} \nl
  \n        \If\ ($tk$ == \HEAD) $\lbrace$                       \label{ln:a30-33} \nl
  \n          $hasHead$ = \true;                                 \label{ln:a30-34} \nl
              ServeOldDequeues();                                \label{ln:a30-35} \ul
  \p        $\rbrace$                                                              \nl
            \If\ (!$hasHead$) $\lbrace$                          \label{ln:a30-23} \nl
  \n          $nsid$ = find\_next\_server($my\_sid$);            \label{ln:a30-19} \nl
              \If\ ($sid == -1$) $\lbrace$                        \cm{From client} \nl
  \n            clients[$cid$] = $\langle\DEQ, \bot, \false\rangle$;\label{ln:a30-24}\nl
                send($nsid, \langle\DEQ, \bot, cid, my\_sid\rangle$);\label{ln:a30-25}\nl
  \p          $\rbrace$ \Else\ $\lbrace$                          \cm{From server} \nl
  \n            send($nsid, \langle\DEQ, \bot, cid, sid, \bot\rangle$);\label{ln:a30-26} \ul
  \p          $\rbrace$                                                            \nl
  \p        $\rbrace$ \Elseif\ (!IsEmpty($lqueue$)) $\lbrace$
                                        \label{ln:a30-17} \cm{Server can dequeue.} \nl
  \n          $data$ = dequeue($lqueue$);                         \label{ln:a30-26.1} \nl
              \If\ ($sid == -1$)                                  \cm{From client} \nl
  \n            send($cid, \langle\ACK, data, my\_sid\rangle$); \label{ln:a30-16.1}  \nl
  \p          \Else\                                              \cm{From server} \nl
  \n            send($sid, \langle\ACK, data, cid, my\_sid, \bot\rangle$);         \nl
  \p\p      $\rbrace$ \Elseif\ ($hasTail$ AND !$fullQueue$) $\lbrace$ 
                                             \label{ln:a30-16} \cm{Queue is empty} \nl
  \n          \If\ ($sid == -1$)                                  \cm{From client} \nl
  \n            send($cid, \langle\NACK, \bot, my\_sid\rangle$);                   \nl
  \p          \Else \                                             \cm{From server} \nl
  \n            send($sid, \langle\NACK, \bot, cid, my\_sid, \bot\rangle$ );       \nl
  \p\p      $\rbrace$ \Else\ $\lbrace$                           \label{ln:a30-18}   
                               \cm{Server moves the head token to the next server} \nl
  \n          $nsid$ = find\_next\_server($my\_sid$);            \label{ln:a30-20} \nl
              $hasHead$ = \false;                                \label{ln:a30-22} \nl
              send($nsid, \langle op, \bot, cid, my\_sid, \HEAD\rangle$);\label{ln:a30-21}\ul
  \p        $\rbrace$                                                              \nl
            \Break;                                                                \nl
  \p      \Case\ \ACK:                                           \label{ln:a30-27} \nl
  \n        clients[$cid$] = $\bot$;                             \label{ln:a30-28} \nl
            send($cid, \langle\ACK, data, sid\rangle$);          \label{ln:a30-29} \nl
            \Break;                                                                \nl
  \p      \Case\ \NACK:                                          \label{ln:a30-30} \nl
  \n        clients[$cid$] = $\bot$;                             \label{ln:a30-31} \nl
            send($cid, \langle\NACK, \bot, sid\rangle$);         \label{ln:a30-32} \nl
            \Break;                                                                \ul
  \p\p  $\rbrace$                                                                  \ul
  \p  $\rbrace$
  \p 
\end{code}
\end{algorithm}

In Algorithm \ref{alg30-vars}, we present the local variables of a server. 
Each server $s$ holds its unique id $my\_sid$ and a local queue $lqueue$ that stores its 
part of the queue. Also keeps two \bool\ flag variables, ($hasHead$ and $hasTail$), 
indicating whether $s$ has the head token or the tail token, and one more bit flag 
($fullQueue$) indicating whether the queue is full. Finally, $s$ has a local array of 
size $n$, where $n$ is the maximum number of clients, used for storing all direct requests 
from clients (called $s$'s $clients$ array). In their initial state, all servers have 
$fullQueue$ set to \false\ and their $clients$ array and local queue empty. Also, all 
servers apart from server 0, have both their flags $hasHead$ and $hasTail$ set to \false, 
whereas in server 0, they are set to \true, as described above. 

The messages sent to a server $s_i$ are of type \ENQ\ or \DEQ, describing requests 
for enqueue or dequeue operations, respectively, sent by either a server or a client, and \ACK\ 
or \NACK\ sent by another server $s_j$ which executed a forwarded request, whose forwarding
was initiated by $s_i$. The token transition is encapsulated in a message of type 
\ENQ\ or \DEQ. The messages have five fields: (1) $op$, which describes the 
type of the request (\ENQ, \DEQ, \ACK\ or \NACK) , (2) $data$, which 
stores an element in case of \ENQ, and $\bot$ otherwise, (3) $cid$, which 
stores the id of the client that issued the request, (4) $sid$ which contains 
the id of the server if the message was sent by a server, and -1 otherwise, and (5)
$tk$, which contains \TAIL\ or \HEAD\ in forwarded messages of type \ENQ\ 
or \DEQ, respectively, to indicate if an additional tail (or head, respectively) 
token transition occurs, and it is equal to $\bot$ otherwise. 

Event-driven pseudocode for the server is presented in Algorithm \ref{alg30}. When 
a server $s$ receives a message of type \ENQ\ (line \lref{ln:a30-4}), it first checks 
if it contains a token transfer from another server (line \lref{ln:a30-36}). If it does, 
the server sets its token $hasTail$ to \true\ (line \lref{ln:a30-37}) and if 
it also had the head token from a previous round, it changes 
$fullQueue$ flag to \true\ 
as well (line \lref{ln:a30-38}). Then, $s$ serves all pending \ENQ\ messages stored in 
its $clients$ array (line \lref{ln:a30-39}). 

Then, $s$ continues to execute the \ENQ\ request. 
It checks first whether it has the tail token. If it does not (line 
\lref{ln:a30-9}), it finds the next server $s_{next}$ (line 
\lref{ln:a30-41}), to whom $s$ is going to forward the request. Afterwards, $s$ sends 
the received request to $s_{next}$ (lines \lref{ln:a30-13} and \lref{ln:a30-14}) and if 
that request came directly from a client (line \lref{ln:a30-11}), $s$ updates its 
$clients$  array storing in it information about this message 
(line \lref{ln:a30-12}).

If $s$ has the token, then it attempts to serve the request. If $s$ has remaining space 
in its local queue ($lqueue$), it enqueues the given data and informs the 
appropriate server with an \ACK\ message (lines \lref{ln:a30-6}-\lref{ln:a30-5.1}). 
If the implemented queue is full, $s$ sends a \NACK\ message to the client (line 
\lref{ln:a30-5}-\lref{ln:a30-7.1}). In any remaining case, the server $s$ must give the 
tail token to the next server (line \lref{ln:a30-7}). So, $s$ forwards the \ENQ\ message 
to $s_{next}$, after encapsulating in the message the tail token (line \lref{ln:a30-10}). 
After releasing the tail token, $s$ changes the values of its local variables 
($hasTail$ and $fullQueue$) to \false.

In case a \DEQ\ message is received (line \lref{ln:a30-15}), the actions performed 
by $s$ are similar to those for \ENQ. Server $s$ checks whether the
request message contains a token transition (line \lref{ln:a30-33}). 
If it does, the server sets its token $hasHead$ to \true\ (line \lref{ln:a30-34}). 
Then, $s$ serves all pending \DEQ\ messages stored in its $clients$ array (line 
\lref{ln:a30-35}),
and then attempts to serve the request. If $s$ does not 
hold the head token (line \lref{ln:a30-23}), it finds the next server $s_{next}$ in 
the ring (line \lref{ln:a30-19}), to whom $s$ is going to forward the request. 
Afterwards, $s$ sends the received request to $s_{next}$ (lines \lref{ln:a30-25}, 
\lref{ln:a30-26}) and if that request came directly from a client, 
$s$ updates its $clients$ array storing in it information about this message (line 
\lref{ln:a30-24}). 

If $s$ has the head token, it does the following actions. If its local queue ($lqueue$) 
is not empty (line \lref{ln:a30-17}), 
$s$ performs a dequeue on its local queue and sends an \ACK\ along with the dequeued 
data to the appropriate server. If $s$ holds both head and tail tokens, but no other server has 
a queue element stored, and $s$'s $lqueue$ is empty, it means that the global 
queue is empty (line \lref{ln:a30-16}). Thus, $s$ sends a \NACK\ message to the appropriate server. 
In the remaining cases, $s$ must forward its head token (line \lref{ln:a30-18}). 
Server $s$ finds the next server $s_{next}$ (line \lref{ln:a30-20}), which is going 
to receive the forwarded message and the head token transition. Server $s$ sets the message 
field $tk$ to \HEAD\ and sends the message (line \lref{ln:a30-21}). After 
releasing the head token, $s$ sets the value of its local variable $hasHead$ to 
\false\ (line \lref{ln:a30-22}). 

	\begin{minipage}{.5\textwidth}
		\begin{algorithm}[H]
\small
\caption{Auxiliary functions for a server of the token-based queue.}
\label{alg30-help}
\begin{code}
\firstline
     \void\ ServeOldEnqueues(\void) $\lbrace$                  \label{ln:a30h-1} \nl
  \n   \If\ (!$fullQueue$) $\lbrace$                           \label{ln:a30h-2} \nl
  \n     \Foreach\ $cid$ {\sf such that} clients[$cid$].$op == \ENQ\ \lbrace$    \nl
  \n       \If\ (!IsFull($lqueue$)) $\lbrace$                  \label{ln:a30h-3} \nl
  \n         enqueue($lqueue$, clients[$cid$].$data$);         \label{ln:a30h-31}\nl
             clients[$cid$].$isServed$ = \true;                \label{ln:a30h-32}\nl
  \p\p\p\p $\rbrace$ $\rbrace$ $\rbrace$ $\rbrace$                               \ul
                                                                                 \nl
     \void\ ServeOldDequeues(\void) $\lbrace$                  \label{ln:a30h-4} \nl
  \n   \Foreach\ $cid$ {\sf such that} clients[$cid$].$op == \DEQ\ \lbrace$      \nl
  \n     \If\ (!IsEmpty($lqueue$)) $\lbrace$                   \label{ln:a30h-5} \nl
  \n       clients[$cid$].$data$ = dequeue($lqueue$);          \label{ln:a30h-51}\nl
           clients[$cid$].$isServed$ = \true;                                    \ul
  \p\p\p $\rbrace$ $\rbrace$ $\rbrace$
\end{code}
\end{algorithm}
\end{minipage}

If $s$ received a message of type \ACK\ (line \lref{ln:a30-27}) or \NACK\ (line 
\lref{ln:a30-30}), then $s$ sets the entry $cid$ of its $clients$ array 
to $\bot$ (lines \lref{ln:a30-28}, \lref{ln:a30-31}) and sends an \ACK\ (line 
\lref{ln:a30-29}) or a \NACK\ (line \lref{ln:a30-32}) to that client. The \ACK\ and 
\NACK\ messages a server $s$ receives, are only sent by other servers and signify 
the result of the execution of a forwarded message sent by $s$.

On lines \lref{ln:a30-12} and \lref{ln:a30-24}, $s$ stores the client 
request in its $clients$ array when it does not hold the appropriate token. 
A request recorded in
the $clients$ array is removed from the array either when an \ACK\ or \NACK\ message 
is received for it (lines \lref{ln:a30-28} and \lref{ln:a30-31}) or when the server 
receives again the request (after a round-trip on the ring) (lines~\lref{ln:a30-1}-\lref{ln:a30-2}). 
Thus, the server, upon any message receipt, first checks 
whether the message exists in its $clients$ array and has already been served. 
In case of \ENQ\, $s$ answers with an \ACK\ message, whereas in case of \DEQ\ $s$ answers
with \ACK\ and the dequeued data. Then, server $s$ proceeds with the deletion of the entry 
in its clients array (lines \lref{ln:a30-1}, \lref{ln:a30-2}).

Functions \texttt{ServeOldEnqueues()} and \texttt{ServeOldDequeues()} are described 
in more detail in Algorithm \ref{alg30-help}. \texttt{ServeOldEnqueues()} (line 
\lref{ln:a30h-1}) processes all \ENQ\ requests stored in the $clients$ array, 
if the local queue has space (line \lref{ln:a30h-3}). Similarly, 
\texttt{ServeOldDequeues()} (line \lref{ln:a30h-4}) processes all \DEQ\ requests 
stored in the $clients$ array, if the local queue is not empty (line 
\lref{ln:a30h-5}).

The clients call the functions \texttt{ClientEnqueue()} and \texttt{ClientDequeue()}, 
presented in Algorithm \ref{alg31}, in order to perform one of these operations. In more 
detail, during enqueue, the client sends an \ENQ\ message to the $enq\_sid$ server, 
and waits for a response. When the client receives the response, it returns it. Likewise, 
in \texttt{ClientDequeue()} the client sends a \DEQ\ message to server $deq\_sid$ 
and blocks waiting for a response. When it receives the response, it returns it.

	\begin{minipage}{.45\textwidth}
		\begin{algorithm}[H]
\small
\caption{Enqueue and Dequeue operations for a client of the token-based queue.}
\label{alg31}
\begin{code}
	\firstline
	\integer\ $enq\_sid = 0$;                                            \nl
	\integer\ $deq\_sid = 0$;                                            \ul
	                                                                     \nl
	   Data ClientEnqueue(\integer\ $cid$, Data $data$) $\lbrace$        \nl
	\n   send($enq\_sid, \langle \ENQ, data, cid, -1\rangle $);          \label{ln:cl-enq:01}\nl
         $\langle status, \bot, enq\_sid\rangle$ = receive($enq\_sid$);  \label{ln:cl-enq:02}\nl
	     \return\ $status$;                                              \label{ln:cl-enq:03}\ul
	\p $\rbrace$                                                         \ul
	                                                                     \nl 
	   Data ClientDequeue(\integer\ $cid$) $\lbrace$                     \nl
	\n   send($deq\_sid, \langle \DEQ, \bot, cid\rangle$);               \label{ln:cl-deq:01}\nl
	     $\langle status, data, deq\_sid\rangle$ = receive($deq\_sid$);  \label{ln:cl-deq:02}\nl
	     \return\ $data$;                                                \label{ln:cl-deq:03}\ul
	\p $\rbrace$
\end{code}
\end{algorithm}
\end{minipage}

%
\subsubsection{Proof of Correctness}
Let $\alpha$ be an execution of the token-based queue algorithm presented 
in Algorithms \ref{alg30}, \ref{alg30-help}, and \ref{alg31}. 
Each server maintains local boolean variables $hasHead$ and $hasTail$, 
with initial values \false. Whenever some server $s_{i}$ receives a \TAIL\ 
message, i.e. a message with its $tk$ field equal to \TAIL\ (line \lref{ln:a30-36}), 
the value of $hasTail$ is set to \true\ (line \lref{ln:a30-37}). By inspection 
of the pseudocode, it follows that the value of $hasTail$ is set to \false\ 
if the local queue of $s_{i}$ is full (line \lref{ln:a30-6}, \lref{ln:a30-7}-
\lref{ln:a30-10.1}); then, a \TAIL\ message is sent to the next server (line 
\lref{ln:a30-10}). The same holds for $hasHead$ and \HEAD\ messages, i.e. 
messages with their $tk$ field equal to \HEAD. Thus, the following observations 
holds.

\begin{observation}
\label{obs:tq:unique}
At each configuration in $\alpha$, there is at most one server 
for which the local variable $hasHead$ ($hasTail$) has the value \true.
\end{observation}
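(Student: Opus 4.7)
The plan is to establish each of the two uniqueness claims (for $hasHead$ and for $hasTail$) by induction on the prefix length of $\alpha$, treating the tail case in full detail and then noting that the head case is symmetric. The central invariant I would maintain is: at every configuration $C$ of $\alpha$, the number of servers with $hasTail = \true$, plus the number of \TAIL\ messages in transit in $\alpha$ up to $C$ (i.e. sent but not yet received), equals exactly one; and the analogous statement for $hasHead$ and \HEAD. This strengthening is needed because a server may have temporarily cleared its flag (line~\lref{ln:a30-10.1}) before the next server has read the \TAIL\ bit, so uniqueness of the flag alone is not inductive.

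For the base case, inspection of the initial state declared before Algorithm~\ref{alg30} shows that only server $s_0$ has $hasHead = hasTail = \true$, no other server has these flags set, and no messages are in transit, so the invariant holds with total count exactly one for each flag. For the induction step, I would fix a configuration $C$ satisfying the invariant and consider each possible event that can change either the set of flag holders or the set of in-flight \TAIL/\HEAD\ messages. By inspection of Algorithm~\ref{alg30}, the only code that writes $hasTail$ is line~\lref{ln:a30-37} (where $hasTail$ is set to \true\ upon receiving a message with $tk = \TAIL$) and line~\lref{ln:a30-10.1} (where $hasTail$ is set to \false\ immediately before sending a message with $tk = \TAIL$ on line~\lref{ln:a30-10}); similarly, the only code that writes $hasHead$ is lines~\lref{ln:a30-34} and~\lref{ln:a30-22} in conjunction with the \HEAD-tagged send on line~\lref{ln:a30-21}. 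Every other code path leaves the flags untouched and sends only untagged messages (the $tk$ field is $\bot$). Thus in each step the total count (flag holders plus in-flight tagged messages) is preserved: receipt of a \TAIL\ message consumes one message and creates one flag; release of the flag on lines~\lref{ln:a30-7}--\lref{ln:a30-10} clears one flag and emits exactly one \TAIL\ message; and no other step changes the count.

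Once the strengthened invariant is established, the observation follows immediately by dropping the in-flight term: at every configuration the number of servers with $hasTail = \true$ is at most one, and similarly for $hasHead$. The main obstacle, and the only real content of the argument, is recognising that the naive inductive statement about flags alone is not preserved across the two-step pattern ``clear flag, then send \TAIL'' versus ``receive \TAIL, then set flag,'' which is exactly what the conserved-token invariant above is designed to handle; verifying that no other line in Algorithm~\ref{alg30} or in the helper routines of Algorithm~\ref{alg30-help} touches $hasTail$, $hasHead$, or the $tk$ field is then a routine case analysis over the \Switch\ cases \ENQ, \DEQ, \ACK, \NACK.
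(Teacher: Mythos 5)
Your proof takes essentially the same route as the paper's: the paper establishes the observation by a brief inspection argument noting that a server sets $hasTail$ to \true\ only on receipt of a \TAIL-tagged message (line~\lref{ln:a30-37}) and clears it before emitting the only such message (lines~\lref{ln:a30-10.1}--\lref{ln:a30-10}), and symmetrically for $hasHead$. Where you go further is in making the induction explicit and, more usefully, in identifying that the naive invariant (``at most one flag holder'') is not preserved step-by-step across the two-event sequence ``clear flag, send token'' / ``receive token, set flag,'' which forces the strengthening to a conserved count of flag holders plus in-flight \TAIL\ (resp.\ \HEAD) messages. That strengthening is the right formalization of what the paper leaves implicit, and your base case correctly uses the actual initial condition ($s_0$ holds both flags, no messages in transit) rather than the slightly misleading ``initial values \false'' wording in the paper's prose. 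This is a sound and slightly more careful rendering of the same underlying token-conservation argument.
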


\begin{observation}
\label{obs:tq:direction}
In some configuration $C$ of $\alpha$,  \TAIL\ message is sent from a server $s_j$, $0 \leq j < \maxser -1$, to a server 
$s_k$, where $k = (j+1) \mod{\maxser}$ only if the local queue of $s_j$ is full in $C$. Similary, a 
\HEAD\ message is sent from $s_j$ to $s_k$ only if the local queue of $s_j$ is empty in $C$. 
\end{observation}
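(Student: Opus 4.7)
The plan is to prove the observation by direct case analysis on the pseudocode of Algorithm~\ref{alg30}, identifying every point where a message with field $tk$ equal to \TAIL\ (respectively \HEAD) can be transmitted, and then checking the guards that must hold on the preceding control flow.

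First, I would scan Algorithm~\ref{alg30} for all \send\ statements whose $tk$ field is set to \TAIL. Inspection shows exactly one such statement, namely line~\lref{ln:a30-10}, which belongs to the \Case~\ENQ\ branch (line~\lref{ln:a30-4}). To reach line~\lref{ln:a30-10}, the control flow must fail the \If\ at line~\lref{ln:a30-9} (so $s_j$ currently has $hasTail = \true$), must fail the \Elseif\ at line~\lref{ln:a30-6} (so $lqueue$ of $s_j$ is full, i.e. \texttt{IsFull($lqueue$)} holds), and must fail the \Elseif\ at line~\lref{ln:a30-5} (so $fullQueue$ is \false). The first failure established that $s_j$ holds the tail token, and the second failure is exactly the statement that $s_j$'s local queue is full in $C$. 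Thus a \TAIL\ message is sent from $s_j$ to $s_k = (j+1) \bmod \maxser$ only in a configuration in which the local queue of $s_j$ is full, which is what is claimed.

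Next, for the \HEAD\ case, I would identify the unique \send\ statement whose $tk$ field is set to \HEAD, namely line~\lref{ln:a30-21} in the \Case~\DEQ\ branch (line~\lref{ln:a30-15}). Reaching line~\lref{ln:a30-21} requires failing the \If\ at line~\lref{ln:a30-23} (so $s_j$ has $hasHead = \true$), failing the \Elseif\ at line~\lref{ln:a30-17} (so $lqueue$ of $s_j$ is empty, i.e. \texttt{IsEmpty($lqueue$)} holds), and failing the \Elseif\ at line~\lref{ln:a30-16}. The second failure is exactly the statement that the local queue of $s_j$ is empty in $C$. This yields the second half of the observation.

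I would also have to briefly justify that no other code path can transmit a \TAIL\ or \HEAD\ token: the auxiliary routines \texttt{ServeOldEnqueues()} and \texttt{ServeOldDequeues()} of Algorithm~\ref{alg30-help} only update the local queue and the $clients$ array and issue no messages at all, and the \Case~\ACK\ and \Case~\NACK\ branches only send client-directed messages whose $tk$ field is never \TAIL\ or \HEAD. Since the argument is entirely syntactic — a finite enumeration of \send\ sites and a trace of their guarding conditionals — there is no real obstacle; the only care needed is to make sure no \send\ is missed, so I would cross-check by searching the pseudocode for every occurrence of the literals \TAIL\ and \HEAD\ in a message field.
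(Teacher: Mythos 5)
Your proposal is correct and takes essentially the same approach as the paper: the observation is stated without a formal proof, preceded only by prose that observes (by inspection of the pseudocode) that a \TAIL\ (resp.\ \HEAD) message is only sent from line~\lref{ln:a30-10} (resp.\ line~\lref{ln:a30-21}), reached only when the local queue is full (resp.\ empty). Your proof is a somewhat more explicit version of the same syntactic argument, exhaustively enumerating \send\ sites and tracing the guards, which is a fine way to make the paper's informal inspection rigorous.
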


By inspection of the pseudocode, we see that a server performs an enqueue (dequeue) 
operation on its local queue $lqueue$ either when executing line \lref{ln:a30-61} 
(line \lref{ln:a30-35}) or when executing {\tt ServeOldEnqueues} ({\tt ServeOldDequeues}). 
Further inspection of the pseudocode (lines \lref{ln:a30-36}-\lref{ln:a30-39}, lines 
\lref{ln:a30-6}-\lref{ln:a30-5}, as well as lines \lref{ln:a30-23}-\lref{ln:a30-26}, 
lines \lref{ln:a30-17}-\lref{ln:a30-16}), shows that these lines are executed when 
$hasTail = \true$. Then, the following observation holds.

\begin{observation}
\label{obs:tq:tokenserver}
Whenever a server $s_j$ performs an enqueue (dequeue) operation on its local 
queue, it holds that its local variable $hasTail$ ($hasHead$) is equal to \true. 
\end{observation}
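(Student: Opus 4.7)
The plan is to verify the observation by a direct inspection of the pseudocode, identifying every line at which a server touches its local queue and checking that the appropriate token flag must be \true\ at that moment. Concretely, an enqueue on $lqueue$ can occur only at line~\lref{ln:a30-61} (inside the main handler for \ENQ\ messages) or at line~\lref{ln:a30h-31} (inside \texttt{ServeOldEnqueues}). Symmetrically, a dequeue on $lqueue$ can occur only at line~\lref{ln:a30-26.1} (in the \DEQ\ handler) or line~\lref{ln:a30h-51} (inside \texttt{ServeOldDequeues}). So the proof reduces to four small cases.

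For line~\lref{ln:a30-61}, I would observe that this line lies in the \texttt{\Elseif} branch at line~\lref{ln:a30-6}, which is entered only when the \texttt{\If} guard \texttt{(!$hasTail$)} at line~\lref{ln:a30-9} evaluates to \false. Hence $hasTail=\true$ at the moment line~\lref{ln:a30-61} is executed. The argument for line~\lref{ln:a30-26.1} is completely symmetric, using the \texttt{\If} at line~\lref{ln:a30-23} to conclude $hasHead=\true$.

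For line~\lref{ln:a30h-31}, I would argue that \texttt{ServeOldEnqueues} is invoked only at line~\lref{ln:a30-39}. The preceding lines~\lref{ln:a30-36}--\lref{ln:a30-38} set $hasTail$ to \true\ (line~\lref{ln:a30-37}) and do nothing else that touches $hasTail$. Inspection of \texttt{ServeOldEnqueues} itself (Algorithm~\ref{alg30-help}) shows that it never modifies $hasTail$; therefore $hasTail=\true$ throughout its execution, in particular when line~\lref{ln:a30h-31} is reached. The same reasoning, applied to lines~\lref{ln:a30-33}--\lref{ln:a30-35} and to \texttt{ServeOldDequeues}, handles line~\lref{ln:a30h-51}.

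The only subtle point, and the step I would be careful about, is confirming that there is no other call site for \texttt{ServeOldEnqueues} or \texttt{ServeOldDequeues}, and that no piece of code executed between the \TAIL\ (resp.~\HEAD) flag being set and the enqueue (resp.~dequeue) on $lqueue$ resets the flag. A grep of the pseudocode (Algorithms~\ref{alg30} and~\ref{alg30-help}) confirms this: $hasTail$ is set to \false\ only at line~\lref{ln:a30-10.1}, which is reached only in the \Else\ branch at line~\lref{ln:a30-7}, strictly after the local-queue enqueue branch has been rejected; similarly $hasHead$ is set to \false\ only at line~\lref{ln:a30-22}. Since message handling at a server is sequential (one message is fully processed before the next is dequeued from the mailbox), no interleaving inside the handler can invalidate the flag. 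This completes the verification.
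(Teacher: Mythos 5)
Your proof is correct and takes essentially the same route as the paper: a direct pseudocode inspection that (i) isolates the four lines at which $lqueue$ is touched and (ii) traces the guarding conditions (the \texttt{!$hasTail$}/\texttt{!$hasHead$} checks and the position of the flag-setting lines before the \texttt{ServeOld\ldots} calls) to conclude that the appropriate token flag is always \true. You are in fact slightly more careful than the paper's own sketch — you correctly single out line~\lref{ln:a30-26.1} as the dequeue on $lqueue$ and explicitly verify both that the auxiliary routines have a unique call site and that the flags are only cleared after the local-queue branches have been rejected.
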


By a straight-forward induction, the following lemma can be shown. 

\begin{lemma}
\label{tq:client-msgs}
The mailbox of a client in any configuration of $\alpha$ contains at most one 
incoming message.
\end{lemma}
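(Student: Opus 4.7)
The plan is to argue by induction on the length of a prefix of $\alpha$, maintaining as an invariant the stronger statement: \emph{for every client $c$, the number of messages in $c$'s incoming mailbox plus the number of responses ``in flight'' to $c$ (i.e., response messages for a request of $c$ that have been sent but not yet delivered) equals $0$ if $c$ has no outstanding request, and $1$ otherwise.} Here, an outstanding request is a request that $c$ has sent by executing line~\lref{ln:cl-enq:01} or line~\lref{ln:cl-deq:01} and for which the matching {\tt receive} of line~\lref{ln:cl-enq:02} or line~\lref{ln:cl-deq:02} has not yet delivered a message. This invariant immediately implies the statement of the lemma, since a client can have at most one outstanding request at a time (the {\tt receive} is blocking and no other instructions send messages).

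The base case is trivial: in the initial configuration, no client has issued a request and all mailboxes are empty. For the inductive step I would classify the possible events: (i) a step by a client, (ii) a step by a server or DMA engine, or (iii) the delivery of a message. For (i), the only relevant transitions are the invocation of {\tt ClientEnqueue} or {\tt ClientDequeue}, which transforms a state with no outstanding request (so all counts are $0$) into one with an outstanding request, and adds exactly one in-flight message (from $c$ to the first server), leaving the total at $1$; the corresponding {\tt receive}, on the other hand, removes one message from $c$'s mailbox and terminates the outstanding request, bringing the count back to $0$. For (iii), delivery of a response to $c$ simply moves the message from the ``in flight'' bucket into the mailbox, preserving the sum.

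The main case, and the main obstacle, is (ii): I must show that a server step preserves the invariant, which amounts to showing that the server protocol produces \emph{exactly one} response per outstanding client request. I would establish this by a secondary invariant on the states of the servers: for every outstanding request of $c$, exactly one of the following holds at every reachable configuration: (a) a message of type \ENQ\ or \DEQ\ carrying $cid=c$ is traveling along the ring (either sitting in some server mailbox, or being processed by a server that has not yet decided what to do with it); (b) some server $s_r$ has stored an entry $\texttt{clients}[c] \neq \bot$ with $\texttt{isServed}=\false$ and an \ACK/\NACK\ message for $c$ is in transit back toward $s_r$; (c) some server $s_r$ has stored an entry with $\texttt{isServed}=\true$ and the forwarded request is traveling along the ring back toward $s_r$; (d) an \ACK/\NACK\ for $c$ is in transit from some server directly toward $c$; (e) such a response is in $c$'s mailbox. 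Inspecting each branch of Algorithm~\ref{alg30} case by case (the \ENQ/\DEQ\ handlers, {\tt ServeOldEnqueues}/{\tt ServeOldDequeues}, and the \ACK/\NACK\ handlers), together with the FIFO delivery of messages between any pair of processes and Observation~\ref{obs:tq:unique}, lets me verify that every server step either keeps the request in the same state or advances it to the next state in the list (a) $\to$ (b)/(c) $\to$ (d) $\to$ (e), while never producing a second response.

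The delicate step will be (c): I must rule out the scenario where a client's request is both served via the {\tt clients} array of some $s_r$ and independently served by a downstream server, which would generate two responses. The key observation is that the \emph{only} way $s_r$ writes an entry for $c$ in its $clients$ array (lines \lref{ln:a30-12} and \lref{ln:a30-24}) is on a \emph{direct} message from the client (the branch $sid = -1$); in every subsequent forwarded copy the $sid$ field is set to $s_r$'s id, so whichever server finally serves the request responds with an \ACK/\NACK\ addressed to $s_r$ (lines \lref{ln:a30-5.1}, \lref{ln:a30-7.1}, etc.), which then produces the single response to $c$ on lines \lref{ln:a30-29} or \lref{ln:a30-32}, simultaneously clearing $\texttt{clients}[c]$. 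Conversely, if the forwarded message returns to $s_r$ before any response is produced and $\texttt{clients}[c].\texttt{isServed}=\true$, then lines \lref{ln:a30-1}-\lref{ln:a30-2} produce the single response and clear $\texttt{clients}[c]$; in this case no downstream \ACK/\NACK\ can be in transit, since by FIFO and Observation~\ref{obs:tq:unique} the token that served the request had to overtake the forwarded message before it could have been processed elsewhere. Combining these cases closes the induction and yields the lemma.
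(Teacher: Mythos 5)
The paper does not actually give a proof of this lemma; it is dismissed with the remark ``by a straight-forward induction,'' so there is nothing specific to compare your argument against. Your general approach (a strengthened invariant plus a request-tracking state machine, closed by a case split on server steps) is the right one, and you correctly identify the main danger: the two possible ``paths'' a response can take to the client (via the $clients$ table of the relay server $s_r$, or via a direct \ACK/\NACK\ from a downstream token server) must be shown to produce exactly one message.

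However, the state machine (a)--(e) as you have written it is not closed. Consider state (b): $s_r$ holds an entry $\texttt{clients}[c]$ with $\texttt{isServed}=\false$ and a downstream token server $s_k$ has already served the request and sent \ACK/\NACK\ to $s_r$. Nothing prevents the token from subsequently being forwarded $s_k \to s_{k+1} \to \cdots \to s_r$ and arriving at $s_r$ \emph{before} the \ACK/\NACK\ does: the \ACK/\NACK\ travels $s_k \to s_r$ as a single direct message while the token travels a multi-hop ring path, and the model guarantees FIFO only per sender/receiver pair, not across distinct channels. In that configuration the server step that processes the token at $s_r$ runs {\tt ServeOldEnqueues()}/{\tt ServeOldDequeues()} and flips $\texttt{clients}[c].\texttt{isServed}$ to $\true$, putting the system in a state (entry with $\texttt{isServed}=\true$ \emph{and} \ACK/\NACK\ still in transit toward $s_r$, with the forwarded request already consumed) that is neither (b) nor (c). So your claim that ``every server step either keeps the request in the same state or advances it to the next state in the list'' is not established by the case inspection as described. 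The lemma nevertheless survives this scenario, since {\tt ServeOldEnqueues()}/{\tt ServeOldDequeues()} send no message to $c$ and the subsequent \ACK/\NACK\ receipt at $s_r$ clears $\texttt{clients}[c]$ while emitting a single response; but to close the induction you would need to add the missing state (or drop the $\texttt{isServed}=\false$ restriction from (b)), and your FIFO-based justification for why (c) excludes an in-transit downstream \ACK\ is really just the observation that a served request is consumed, not forwarded, so it cannot still be traveling --- that part does not need FIFO at all. Finally, as a smaller point, your primary invariant as literally stated counts only ``responses in flight'' yet your step (i) argues by adding the outgoing \emph{request} to the tally; the quantity you actually want preserved is the number of messages addressed to $c$ that are in flight or sitting in $c$'s mailbox, which the secondary invariant is what controls.
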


If $hasTail = \true$ ($hasHead = \true$) for some server $s$ in some configuration $C$, 
then we say that $s$ has the tail (head) token. The server that has the tail token is 
referred to as {\em tail token server}. The server that has the head token is referred 
to as {\em head token server}.

Let $op$ be any operation in $\alpha$. We assign a linearization point to $op$ 
by considering the following cases: 
\begin{compactitem}
\item If  $op$ is an enqueue operation for which a tail token server executes an instance  
of Algorithm \ref{alg30}, then
it is linearized in the configuration resulting from the execution of either 
line \lref{ln:a30-61}, or line \lref{ln:a30h-31}, or line \lref{ln:a30-51}, whichever 
is executed for $op$ in that instance of Algorithm \ref{alg30} by the tail token server. 
\item If $op$ is a dequeue operation for which a head token server executes an instance 
of Algorithm \ref{alg30}, then
it is linearized in the configuration resulting from the execution of either 
line \lref{ln:a30-26.1}, or line \lref{ln:a30h-51}, or line \lref{ln:a30-16.1}, whichever 
is executed for $op$ in that instance of Algorithm \ref{alg30} by the head token server.
\end{compactitem}

\begin{lemma}
The linearization point of an enqueue (dequeue) operation $op$ is placed in its execution interval.
\end{lemma}

\begin{proof}
Assume that $op$ is an enqueue operation and let $c$ be the client that invokes it. 
After the invocation of $op$, $c$ sends a message to some server $s$ (line \lref{ln:cl-enq:01}) 
and awaits a response. Recall that routine {\tt receive()} (line \lref{ln:cl-enq:02}) 
blocks until a message is received. The linearization point of $op$ is placed either in the 
configuration resulting from the execution of line \lref{ln:a30-61} by  $s_t$ for $op$, 
in the configuration resulting from the execution of line \lref{ln:a30-51} by $s_t$ 
for $op$, or in the configuration resulting from the execution of line \lref{ln:a30h-31} by 
$s_t$ for $op$. Notice that either of these lines is executed after the request by $c$ is 
received, i.e. after $c$ invokes {\tt ClientEnqueue}, and thus, after the execution interval 
of $op$ starts. 

By definition, the execution interval of $op$ terminates in the configuration resulting 
from the execution of line \lref{ln:cl-enq:03}. By inspection of the pseudocode, this line 
is executed after line \lref{ln:cl-enq:02}, i.e. after $c$ receives a response by some server. 
In the following, we show that the linearization point of $op$ occurs before this response 
is sent to $c$. 


Let $s_j$ be the server that $c$ initially sends the request for $op$ to. 
By observation of the pseudocode, we see that $c$ may either receive a response 
from $s_j$ if $s_j$ executes lines \lref{ln:a30-62} or \lref{ln:a30-51}, 
or if $s_j$ executes lines \lref{ln:a30-28}-\lref{ln:a30-29} or lines 
\lref{ln:a30-31}-\lref{ln:a30-32}, or if $s_j$ executes line \lref{ln:a30-1}.
To arrive at a contradiction, assume that either of these lines is executed in $\alpha$ 
before the configuration in which the linearization point of $op$ is placed. Thus, a tail 
token server $s_t$ executes lines \lref{ln:a30-61}, \lref{ln:a30h-31}, or \lref{ln:a30-51} 
in a configuration following the execution of lines \lref{ln:a30-62}, or \lref{ln:a30-51}, 
or \lref{ln:a30-28}-\lref{ln:a30-29} or \lref{ln:a30-31}-\lref{ln:a30-32}, or line \lref{ln:a30-1}
by $s_j$. Since the algorithm is event-driven, inspection of the pseudocode shows that in 
order for a tail token server to execute these lines, it must receive a message containing 
he request for $op$ either from a client or from another server. 

Assume first that a tail token server executes the algorithm after receiving a message 
containing a request for $op$ from a client. This is a contradiction, since, on one hand, 
$c$ blocks until receiving a response, and thus, does not sent further messages requesting 
$op$ or any other operation, and since $op$ terminates after $c$ receives the response by 
$s_j$, and on the other hand, any other request from any other client concerns 
a different operation $op'$.

Assume next that a tail token server executes the algorithm after receiving a message 
containing the request for $op$ from some other server. This is also a contradiction since 
inspection of the pseudocode shows that after $s_j$ executes either of the lines that sends 
a response to $c$, it sends no further message to some other server and instead, 
terminates the execution of that instance of the algorithm.

The argumentation regarding dequeue operations is analogous.
\end{proof}

Denote by $L$ the sequence of operations which have been assigned linearization 
points in $\alpha$ in the order determined by their linearization points.
Let $C_i$ be the configuration at which the $i$-th operation $op_i$ of $L$ is 
linearized. Denote by $\alpha_i$, the prefix of $\alpha$ which ends with $C_i$ 
and let $L_i$ be the prefix of $L$ up until the operation that is linearized 
at $C_i$. Denote by $Q_i$ the sequence of values that a sequential queue contains 
after applying the sequence of operations in $L_i$, in order, starting from an 
empty queue; let $Q_0 = \epsilon$, i.e. $Q_0$ is the empty sequence.
In the following, we denote by $s_{t_i}$ the tail token server at $C_i$ 
and by $s_{h_i}$ the head token server at $C_i$.

\begin{lemma}
For each  $i$, $i \geq 0$, if 
$lq_i^j$ are the contents of the local queue of server $s_j$ at 
$C_i$, $h_i \leq j \leq t_i$, at $C_i$, then it holds that 
$Q_i = lq_i^{h_i} \cdot lq_i^{{h_i}+1} \cdot \ldots \cdot lq_i^{t_i}$ at $C_i$.
\end{lemma}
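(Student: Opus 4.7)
The plan is to prove this lemma by induction on $i$, mirroring the corresponding argument used for the token-based stack. The base case ($i=0$) is immediate: initially every $lqueue$ is empty, $h_0=t_0=0$, no operation has been linearized, so $Q_0=\epsilon$ and the right-hand side is the concatenation of a single empty local queue.

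For the inductive step I would first establish two auxiliary facts for the interval between configurations $C_i$ and $C_{i+1}$. (a) Exactly one step in this interval modifies some server's local queue, namely the step that executes line~\lref{ln:a30-61}, \lref{ln:a30h-31}, \lref{ln:a30-26.1}, or \lref{ln:a30h-51} as the linearization point of $op_{i+1}$; every other line that touches a $lqueue$ is itself a linearization point and would therefore correspond to a different operation, contradicting the definition of $C_{i+1}$. (b) By Observation~\ref{obs:tq:direction}, every tail-token transfer that occurs between $C_i$ and $C_{i+1}$ departs from a server with a \emph{full} local queue and reaches a server whose local queue is empty, while every head-token transfer departs from a server with an \emph{empty} local queue. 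Together these imply that the local queues dropped from the head of the concatenation are empty, and those appended at the tail are empty as well (possibly except for the single one written by $op_{i+1}$).

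I would then case split on the type of $op_{i+1}$. In the enqueue case linearized at line~\lref{ln:a30-51} (a \NACK\ when $fullQueue$), no $lqueue$ is modified and $Q_{i+1}=Q_i$, so by the auxiliary facts the concatenation is unchanged, yielding the claim. In the enqueue case linearized at line~\lref{ln:a30-61} or \lref{ln:a30h-31}, the value $v$ of $op_{i+1}$ is appended to $lq_{i+1}^{t_{i+1}}$: comparing the concatenations at $C_i$ and $C_{i+1}$, the empty head segments drop, the empty tail segments (other than $v$ itself) are appended, and everything else is unchanged, so the new concatenation equals $Q_i\cdot\{v\}=Q_{i+1}$. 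The dequeue case is symmetric, using part (b) of the auxiliary fact and the observation that a successful dequeue removes the current first nonempty element of the concatenation, which by the induction hypothesis is the head of $Q_i$; a \NACK\ dequeue occurs exactly when the entire concatenation is empty, matching $Q_{i+1}=Q_i=\epsilon$.

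The main obstacle I anticipate is the careful interaction with \texttt{ServeOldEnqueues} and \texttt{ServeOldDequeues}: a single receipt of a token may trigger a batch of local enqueues/dequeues, each one being a distinct linearization point. I need to argue that these batch steps interleave consistently with the token transitions so that (i) between any two consecutive $C_i,C_{i+1}$ only one such local modification takes place, and (ii) a server that has just received the tail token indeed starts with an empty local queue before its first enqueue, which requires an auxiliary invariant that a token returning to a server after a full ring traversal finds that server's local queue in the correct boundary state (empty for the tail, full for the head). Once these invariants are in place, the concatenation identity propagates cleanly through the induction.
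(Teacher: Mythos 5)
Your plan follows essentially the same argument as the paper's proof: induction on $i$, with a case split on whether $op_{i+1}$ is an enqueue or dequeue and whether the tail (or head) token server changed between $C_i$ and $C_{i+1}$, relying on the observations that only token servers mutate local queues and that a token transfer leaves behind a full (tail) or empty (head) local queue. The subtleties you flag — \texttt{ServeOldEnqueues}/\texttt{ServeOldDequeues} producing multiple linearization points, and the new tail-token server starting from an empty local queue — are handled the same way (implicitly, by treating lines~\lref{ln:a30-61} and~\lref{ln:a30h-31} uniformly and appealing to Observation~\ref{obs:tq:direction}) in the paper's proof, so your concerns do not represent a departure from its approach.
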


\begin{proof}
We prove the claim by induction on $i$.
The claim holds trivially at $i = 0$.

Fix any $i \geq 0$ and assume that at $C_i$, it holds that 
$Q_i$ = $lq_i^{h_i} \cdot lq_i^{{h_i}+1} \cdot \ldots \cdot lq_i^{t_i}$. 
We show that the claim holds for $i+1$.

First, assume that $op_{i+1}$ is an enqueue operation by client $c$. 
Furthermore, distinguish the following two cases: 
\begin{itemize}
\item Assume that $t_i = t_{i+1}$. Then, by the induction hypothesis, 
$Q_i = lq_i^{h_i} \cdot lq_i^{{h_i}+1} \cdot \ldots \cdot lq_i^{t_i}$. In case the local 
queue of $s_{t_i}$ is not full, $s_{t_i}$ enqueues the value $v_{i+1}$ of the 
$data$ field of the request for $op_{i+1}$ in the local queue (line \lref{ln:a30-61} 
or line \lref{ln:a30h-31}). Notice that, by Observation \ref{obs:tq:tokenserver} changes on the local queues of servers occur only 
on token servers. Notice also that those changes occur only in a step that immediately 
precedes a configuration in which an operation is linearized. Thus, no further 
change occurs on the local queues of $s_{h_i}, s_{{h_i}+1}, \ldots, s_{t_i}$ 
between $C_i$ and $C_{i+1}$, other than the enqueue on $lq_i^t$. Then, it holds 
that $Q_{i+1} = Q_i \cdot v_{i+1} = lq_i^{h_i} \cdot lq_i^{{h_i}+1} \cdot \ldots \cdot lq_i^{t_i} \cdot v_{i+1} = 
lq_i^{h_i} \cdot lq_i^{{h_i}+1} \cdot \ldots \cdot lq_{i+1}^{t_i} = 
lq_{i+1}^{h_i} \cdot lq_{i+1}^{{h_i}+1} \cdot \ldots \cdot lq_{i+1}^{t_i}$, and
if the head token server does not change between $C_i$ and $C_{i+1}$, then $h_{i+1} = h_i$ and 
$Q_{i+1} = lq_{i+1}^{h_{i+1}} \cdot lq_{i+1}^{{h_{i+1}}+1} \cdot \ldots \cdot lq_{i+1}^{t_{i+1}}$ 
and the claim holds. If the head token server changes, i.e., if $h_{i+1} \neq h_i$, then by 
Observation \ref{obs:tq:direction}, $lq_{i+1}^{h_i} = \emptyset$ and the claim holds again.

In case the local queue of $s_{t_i}$ is full and since by assumption, $s_{t_i} = s_{t_{i+1}}$, 
it follows by inspection of the pseudocode (line \lref{ln:a30-5}) and the definition of linearization 
points, that $s_{t_{i+1}} = s_{h_{i+1}}$. In this case, $s_{t_{i+1}}$ responds with a \NACK\ 
to $c$ and the local queue remains unchanged. Since no token server changes between $C_i$ and $C_{i+1}$, 
$Q_{i+1} = Q_i = lq_i^{h_i} \cdot lq_i^{{h_i}+1} \cdot \ldots \cdot lq_i^{t_i} = 
lq_{i+1}^{h_{i+1}} \cdot lq_{i+1}^{{h_{i+1}}+1} \cdot \ldots \cdot lq_{i+1}^{t_{i+1}}$ 
and the claim holds.

\item Next, assume that $t_i \neq t_{i+1}$. This implies that the local queue of $s_{t_i}$ 
is full just after $C_i$. Observation \ref{obs:tq:direction} implies that $s_{t_i}$ 
forwarded the token to $s_{t_i+1}$ in some configuration between $C_i$ and $C_{i+1}$. 
Notice that then, $s_{t_i+1} = s_{t_{i+1}}$. If the local queue of $s_{t_{i+1}}$ is 
not full, then the condition of line \lref{ln:a30-6} evaluates to \true\ and therefore, line 
\lref{ln:a30-61} is executed, enqueueing value $v_{i+1}$ to it. Then at $C_{i+1}$, $lq_{i+1}^{t_{i+1}} = v_{i+1}$. 
By definition, $Q_{i+1} = Q_i \cdot v_{i+1}$, and therefore, 
$Q_{i+1} = lq_i^{h_i} \cdot lq_i^{{h_i}+1} \cdot \ldots \cdot lq_i^{t_i} \cdot v_{i+1} = 
lq_{i+1}^{h_{i+1}} \cdot lq_{i+1}^{{h_{i+1}}+1} \cdot \ldots \cdot lq_{i+1}^{t_i} \cdot v_{i+1} = 
lq_{i+1}^{h_{i+1}} \cdot lq_{i+1}^{{h_{i+1}}+1} \cdot \ldots \cdot lq_{i+1}^{t_i} \cdot lq_{i+1}^{t_{i+1}}$ 
and the claim holds.  If the local queue of $s_{t_{i+1}}$ is full, then the condition of 
line \lref{ln:a30-6} evaluates to \false\ and therefore, line \lref{ln:a30-7.1} is executed. The operation 
is linearized in the resulting configuration and \NACK\ is sent to $c$. Notice that in that 
case, the local queue of the server is not updated. Then, 
$Q_{i+1} = Q_i = lq_i^{h_i} \cdot lq_i^{{h_i}+1} \cdot \ldots \cdot lq_i^{t_i} \cdot lq_{i+1}^{t_{i+1}} = 
lq_{i+1}^{h_{i+1}} \cdot lq_{i+1}^{{h_{i+1}}+1} \cdot \ldots \cdot lq_{i+1}^{t_i} \cdot lq_{i+1}^{t_{i+1}}$, 
and the claim holds. 
\end{itemize}
The reasoning for the case where $op_{i+1}$ is an instance of a dequeue operation is symmetrical.
\end{proof}


%

From the above lemmas and observations we have the following theorem.

\begin{theorem}
The token-based distributed queue implementation is linearizable. 
The time complexity and the communication complexity of each operation $op$ is $O(\maxser)$.
\end{theorem}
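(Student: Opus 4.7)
The plan is to derive the theorem in three steps: (i) conclude linearizability from the preceding lemmas, (ii) bound the communication complexity by a ring-traversal argument, and (iii) bound the time complexity using the FIFO property of channels together with the $clients$ array bookkeeping.

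For linearizability, I would proceed as follows. The lemma preceding the theorem establishes that at every linearization configuration $C_i$, the sequential queue $Q_i = lq_i^{h_i} \cdot lq_i^{h_i+1} \cdot \ldots \cdot lq_i^{t_i}$. Combined with Observation~\ref{obs:tq:tokenserver}, which asserts that enqueues are executed only by the tail token server and dequeues only by the head token server, it follows that each linearized enqueue appends its $data$ argument at position $lq_i^{t_i}$ and each linearized dequeue removes the first element from $lq_i^{h_i}$. I would then argue that a successful dequeue's response coincides with the head of $Q_{i-1}$: by the lemma, this head is the front element of $lq_{i-1}^{h_{i-1}}$, and by inspection of lines~\lref{ln:a30-26.1} and~\lref{ln:a30h-51}, this is exactly the value returned by the \textsf{dequeue} call on the local queue and propagated back to the client through an \ACK\ message (lines~\lref{ln:a30-16.1}, \lref{ln:a30-29}). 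A dequeue returning \NACK\ is linearized at line~\lref{ln:a30-16} only when the head token server coincides with the tail token server and its local queue is empty, which by the lemma means $Q_{i-1} = \epsilon$, so \NACK\ is the correct response. Enqueues responding \ACK\ or \NACK\ are handled symmetrically. Since we have already shown that each linearization point lies inside the corresponding execution interval, linearizability follows.

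For the communication and time bounds, the key observation is that a client request travels along the ring at most once before being served. Suppose a client $c$ sends a request $r$ to server $s_j$. If $s_j$ lacks the appropriate token it stores $r$ in $clients[cid]$ and forwards it to the next server (lines~\lref{ln:a30-11}--\lref{ln:a30-14} for enqueue, lines~\lref{ln:a30-24}--\lref{ln:a30-26} for dequeue). Since channels are FIFO and a server processes incoming messages sequentially, the request and any token eventually travelling on the ring cannot pass each other between the same pair of servers. Therefore one of two outcomes must occur within at most $\maxser$ hops: either $r$ arrives at a server that currently holds the appropriate token and is served immediately, producing an \ACK/\NACK\ that is then forwarded back to $s_j$ (adding at most one extra message per hop), or $r$ returns to $s_j$, in which case the appropriate token must have crossed $s_j$ in the meantime, so \texttt{ServeOldEnqueues}/\texttt{ServeOldDequeues} will have already served $r$ and marked $clients[cid].isServed = \textsf{true}$, and the response is sent in lines~\lref{ln:a30-1}--\lref{ln:a30-2}. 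Either way, the total number of messages attributable to $r$ is $O(\maxser)$, and the time between invocation and response is bounded by $O(\maxser)$ since each hop costs constant time by the delay axioms of the timing model.

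The main obstacle will be the second outcome above: ruling out infinite forwarding of a request. The argument must carefully use both the FIFO property of channels and the fact that, when the tail (head) token leaves a server, \texttt{ServeOldEnqueues} (\texttt{ServeOldDequeues}) has first processed every pending entry in $clients$. I expect to have to show, via a short inductive argument on the position of a request on the ring relative to each token, that any request $r$ inserted into $clients$ at server $s_j$ is guaranteed to be served, with its \emph{isServed} flag set, before $r$ itself returns to $s_j$. Once this is established, the $O(\maxser)$ bounds on both time and communication are immediate, completing the proof of the theorem.
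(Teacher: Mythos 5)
Your approach matches the paper's (the paper leaves the theorem's proof essentially implicit, stating only ``from the above lemmas and observations''), and your linearizability argument fills in exactly the step the paper skips: combining the preceding lemma, which establishes that at each linearization configuration $Q_i$ equals the concatenation $lq_i^{h_i}\cdot\ldots\cdot lq_i^{t_i}$, with Observation~\ref{obs:tq:tokenserver} to read off that every linearized enqueue appends to the tail end of that concatenation and every linearized dequeue removes from the head end; matching the values sent back to clients against the sequential queue's behavior is then a direct pseudocode inspection. This part of the proposal is correct and is the natural way to derive linearizability from the preceding lemmas.

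The complexity bound, however, is where a genuine gap remains, and your own hedging is well placed. The difficulty is specifically in the second branch of your case split. You argue that if a request $r$ returns to $s_j$ without having been served en route, the appropriate token must have crossed $s_j$ in the meantime, and therefore \texttt{ServeOldEnqueues}/\texttt{ServeOldDequeues} will ``have already served $r$ and marked $clients[cid].isServed = \true$.'' But \texttt{ServeOldEnqueues} only serves $r$ when $s_j$'s local queue has room (and $fullQueue$ is false); if $s_j$'s local queue is full at the moment it receives the tail token, the token is immediately passed on, $r$ stays unserved in the client table with $isServed = \false$, and $s_j$ will re-forward $r$ on its next arrival. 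Ruling out a second (or $k$th) circuit thus requires an additional argument: for instance, that once $r$ first reaches the tail token server, $r$ thereafter travels together with the token (line~\lref{ln:a30-10}), and that a bounded token circulation (bounded in the static version because the tail token meeting the head token sets $fullQueue$ and triggers \NACK) forces $r$ to be answered within the next $O(\maxser)$ hops. Your claim that ``the request and any token \ldots cannot pass each other between the same pair of servers'' by FIFO is true per channel, but it does not by itself prevent the token from staying one or more hops ahead of $r$ for an entire circuit while $s_j$'s local queue is full, so the inductive argument you defer is not a formality but the heart of the bound. To be fair, the paper asserts the $O(\maxser)$ bound without proof, so you are attempting more than the paper gives; but as written the proposal does not yet close this gap.
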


\subsection{Token-Based Double Ended Queue (Deque) }
\label{app:token-deque}

The dequeue implementation is a natural generalization of the stack 
and queue implementations described previously. The deque implementation 
is analogous to the queue implementation described in 
section~\ref{app:token-queue}. To provide
a deque, we add actions to the queue's design to support the additional operations
supported by a deque. We 
retain the static ordering of the servers and the head and tail tokens. Again, each server uses 
a local data structure, this time a deque, on which the 
server is allowed to execute enqueues or dequeues to the appropriate end, only if it has either 
the {\em tail token} or the {\em head token}. The head and tail tokens are initially held 
by $s_0$, but can be reassigned to other servers during the execution.

\subsubsection{Algorithm Description}
Algorithm \ref{alg32} presents the events triggered in a server $s$
and $s$'s actions for each event. Each server, in addition to its id ($my\_sid$), 
maintains a local deque ($ldeque$) to store elements of the implemented deque. For the token 
management, each server has two \bool\ flags ($hasHead$ and $hasTail$), which are 
initialized \true\ for the server $s_0$, and \false\ for the rest. Furthermore, the servers maintain 
a $fullDeque$ flag, similar to the $fullQueue$ flag in Algorithm \ref{alg30-vars} 
of Section \ref{app:token-queue}.
Finally, the servers maintain a local array (called $s$'s $clients$ array) 
for storing the requests they receive directly from clients, 
which is used in a similar way as in Section \ref{app:token-queue}. 

\begin{algorithm}[!t]
\small
\caption{Events triggered in a server}
\label{alg32}
\begin{code}
  \lreset
\firstline
   \integer\ $my\_sid$;                                                              \nl
   LocalDeque $ldeque = \varnothing$;                                                \nl
   LocalArray $clients = \varnothing$;  
                                     \cm{Array of three values <op, data, isServed>} \nl
   \bool\ $fullDeque$ = \false;  \ul
          \cm{True when tail and head are in the same server and tail is before head}\nl
   \bool\ $hasHead$;  \cm{Initially hasHead and hasTail are \true\ in server 0, and 
   false in the rest}                                                      \nl
   \bool\ $hasTail$;                                                        \ul
                                                                                 \nl
   a message $\langle op, data, cid, sid, tk\rangle$ is received:\label{ln:a32-18}\nl
   \n \If\ (clients[$cid$] $\neq \bot$ AND clients[$cid$].$isServed$) $\lbrace$
                               \label{ln:a32-19} \cm{If request was served earlier.} \nl
   \n   send($cid$, $\langle\ACK,$ clients[$cid$].$data, my\_sid\rangle$);
                                                                    \label{ln:a32-1} \nl
        clients[$cid$] = $\bot$;                                    \label{ln:a32-2} \nl
   \p $\rbrace$ \Else\ $\lbrace$                                                     \nl
   \n   \Switch\ ($op$) $\lbrace$                                   \label{ln:a32-9} \nl
   \n     \Case\ \ENQT:                                            \label{ln:a32-10} \nl
   \n       ServerEnqueueTail($op, data, cid, sid, tk$);           \label{ln:a32-11} \nl
            \Break;                                                                  \nl
   \p     \Case\ \DEQT:                                            \label{ln:a32-12} \nl
   \n       ServerDequeueTail($op, cid, sid, tk$);                 \label{ln:a32-13} \nl
            \Break;                                                                  \nl
   \p     \Case\ \ENQH:                                            \label{ln:a32-14} \nl
   \n       ServerEnqueueHead($op, data, cid, sid, tk$);           \label{ln:a32-15} \nl
            \Break;                                                                  \nl
   \p     \Case\ \DEQH:                                            \label{ln:a32-16} \nl
   \n       ServerDequeueHead($op, cid, sid, tk$);                 \label{ln:a32-17} \nl
            \Break;                                                                  \nl
   \p     \Case\ \ACK:                                              \label{ln:a32-3} \nl
   \n       clients[$cid$] = $\bot$;                                \label{ln:a32-4} \nl
            send($cid, \langle \ACK, data, sid \rangle$);           \label{ln:a32-5} \nl
            \Break;                                                                  \nl
   \p     \Case\ \NACK:                                             \label{ln:a32-6} \nl
   \n       clients[$cid$] = $\bot$;                                \label{ln:a32-7} \nl
            send($cid, \langle \NACK, \bot, sid\rangle$);           \label{ln:a32-8} \nl
            \Break;                                                                  \ul
   \p\p $\rbrace$                                                                    \ul
   \p $\rbrace$
   \p
\end{code}
\end{algorithm}

The types of messages a server $s$ can receive are \ENQT, \DEQT\ for enqueuing at and dequeuing 
from the tail, \ENQH, \DEQH\ for enqueuing at and dequeuing from the head, and \ACK\ or \NACK\, 
sent by other servers as responses to $s$'s forwarded messages. Every message $m$ a 
server receives, contains five fields: (1) the $op$ field that represents the type of the 
request, (2) a $data$ field, that contains either the data to be enqueued or 
$\bot$, (3) a $cid$ field that contains the client id that requested the operation 
$op$, (4) a $sid$ field that contains the id of the server which started
forwarding the request, and -1 otherwise, and (5) a $tk$ field, a flag 
used to pass tokens from one server to another. The values that $tk$ can take are 
either \HEAD\ for the head token transition, \TAIL\ for the tail token transition, or $\bot$ 
for no token transition.

When server $s$ receives the 
head token, that means that $s$ can serve all operations in its client array regarding
the head endpoint (\texttt{EnqueueHead()}, \texttt{DequeueHead()}). In analogous way, when server 
$s$ receives the token for the tail, that means that $s$ can serve all operations in its client 
array for the tail (\texttt{EnqueueTail()}, \texttt{DequeueTail()}). For this purpose, we use 
two functions, \texttt{ServeOldHeadOps()} and \texttt{ServeOldTailOps()}. The clients whose 
requests are served, are not informed until server $s$ receives their requests completes
a round-trip on the ring and returns back to $s$. 

When a message is received (line \lref{ln:a32-18}), server $s$ checks if the message is 
stored in its $client$ array (line \lref{ln:a32-19}). If it is, $s$ sends  
an \ACK\ message to client with $cid$ (line \lref{ln:a32-1}) and removes the entry 
in the $clients$ array (line \lref{ln:a32-2}). If it is not, $s$ checks the message's 
operation code and acts accordingly (line \lref{ln:a32-9}). Messages with operation code \ENQT, 
\DEQT, \ENQH, \DEQH\ (lines \lref{ln:a32-10}, \lref{ln:a32-12}, \lref{ln:a32-14} and 
\lref{ln:a32-16}, respectively) are handled by functions \texttt{ServerEnqueueTail()} (line 
\lref{ln:a32-11}),  \texttt{ServerDequeueTail()} (line \lref{ln:a32-13}), 
\texttt{ServerEnqueueHead()} (line \lref{ln:a32-15}) and \texttt{ServerDequeueHead()} (line 
\lref{ln:a32-17}), respectively. These functions are presented in Algorithms 
\ref{alg33}-\ref{alg36}. If the message is of type \ACK\ (line \lref{ln:a32-3}) or 
\NACK\ (line \lref{ln:a32-6}), $s$ sets the $cid$ entry of the
clients array to $\bot$ (lines \lref{ln:a32-4}, \lref{ln:a32-7}), and sends an \ACK\ (line 
\lref{ln:a32-5}) or \NACK\ (line \lref{ln:a32-8}) to the client.

Algorithm \ref{alg33} presents pseudocode for function \texttt{ServerEnqueueTail()}, which is 
called by a server $s$ when an \ENQT\ message is received. First, $s$ checks whether the 
$tk$ field of the message contains \TAIL\ (line \lref{ln:a33-1}). In such a case, 
the message received by $s$ denotes a tail token transition. So, $s$ sets its $hasTail$ 
flag to \true\ (line \lref{ln:a33-2}) and if it also had the head token from a previous round, 
it sets its $fullDeque$ flag to \true\ (line \lref{ln:a33-3}). At this point, $s$ has 
just received the global deque's tail, so it must serve all old operations that clients have 
requested directly from this server to be performed in the deque's tail. For that purpose, the 
server calls the \texttt{ServeOldTailOps()} function (line \lref{ln:a33-4}). 
Then, the server 
checks whether the global deque is full and also the $ldeque$ is full and if it is full, 
means that there is no free space left for the operation, thus $s$ responds to the request with
with a \NACK\ 
(lines \lref{ln:a33-5}, \lref{ln:a33-6}). 
Otherwise, if the server can serve the request, it enqueues 
the received data to the tail of $ldeque$ (line \lref{ln:a33-7}) and responds 
with an \ACK\ (lines \lref{ln:a33-8}, \lref{ln:a33-9}). In any other case, $s$ cannot serve the 
request received, but some other server might be capable to do so, so the message must be 
forwarded. Server $s$ finds the next server (line \lref{ln:a33-10}) and if $s$ handles the 
global deque's tail, turns its flags ($hasTail$ and $fullDeque$) to \false\ 
(line \lref{ln:a33-11}) and marks the token as \TAIL\ for the token transition (line 
\lref{ln:a33-12}). If the message was send by a client, $s$ stores the request to the 
$clients$ array (line \lref{ln:a33-13}). Finally, $s$ forwards the message to the next 
server (lines \lref{ln:a33-14}, \lref{ln:a33-15}).

\begin{algorithm}[!t]
\small
\caption{Server helping function for handling an enqueue request to the global deque's tail}
\label{alg33}
\begin{code}
 \firstline
    \void\ ServerEnqueueTail(\integer\ $op$, Data $data$, \integer\ $cid$, 
                                 \integer\ $sid$, enum  $tk$) $\lbrace$    \nl
 \n   \If\ ($tk == \TAIL$) $\lbrace$                   \label{ln:a33-1}    \nl
 \n     $hasTail$ = \true;                                \label{ln:a33-2} \nl
        \If\ ($hasHead$)   $fullDeque$ = \true;           \label{ln:a33-3} \nl
        ServeOldTailOps($clients, fullDeque$);            \label{ln:a33-4} \ul
 \p   $\rbrace$                                                            \nl
      \If\ ($fullDeque$ AND IsFull($ldeque$)) $\lbrace$
   	                                     \cm{Deque is full, can't enqueue} \nl
 \n     \If\ ($sid == -1$)                                \cm{From client} \nl
 \n       send($id, \langle \NACK, \bot, my\_sid\rangle$);\label{ln:a33-5} \nl
 \p     \Else\                                            \cm{From server} \nl
 \n       send($sid, \langle \NACK, \bot, cid, my_sid, \bot\rangle$);
                                                          \label{ln:a33-6} \nl
 \p\p $\rbrace$ \Elseif\ ($hasTail$ AND !IsFull($ldeque$)) $\lbrace$ 
                                                   \cm{Server can enqueue}  \label{ln:a33-6.1} \nl
 \n     enqueue\_tail($deque, data$);                     \label{ln:a33-7} \nl
        \If\ ($sid == -1$)    \cm{From client}                             \nl
 \n       send($cid, \langle \ACK, \bot, my\_sid\rangle$);\label{ln:a33-8} \nl
 \p     \Else\                \cm{From server}                             \nl
 \n       send($sid, \langle \ACK, \bot, cid, my\_sid, \bot\rangle$);
                                                          \label{ln:a33-9} \nl
 \p\p $\rbrace$ \Else\ $\lbrace$  
                    \cm{Server can't enqueue and global deque is not full} \nl
 \n     $nsid$ = find\_next\_server($my\_sid$);          \label{ln:a33-10} \nl
        \If\ ($hasTail$) $\lbrace$                                         \nl
 \n       $hasTail$ = \false;                            \label{ln:a33-11} \nl
          $fullDeque$ = \false;                                            \nl
          $tk = \TAIL$;                               	 \label{ln:a33-12} \nl
 \p     $\rbrace$ \Else\ $\lbrace$                                         \nl
 \n       $tk = \bot$;                                                     \ul
 \p     $\rbrace$                                                          \nl
        \If\ ($sid == -1$) $\lbrace$                      \cm{From client} \nl
 \n       clients[$cid$] = $\langle\ENQT, data, false\rangle$;  
                                                         \label{ln:a33-13} \nl
          send($nsid, \langle\ENQT, data, cid, my\_sid, tk\rangle$);
                                                         \label{ln:a33-14} \nl
 \p     $\rbrace$ \Else\ $\lbrace$                        \cm{From server} \nl
 \n       send($nsid, \langle\ENQT, data, cid, sid, tk\rangle$);
                                                         \label{ln:a33-15} \ul
 \p     $\rbrace$                                                          \ul
 \p   $\rbrace$                                                            \ul
 \p $\rbrace$
\end{code}
\end{algorithm}

Algorithm \ref{alg34} presents pseudocode for function \texttt{ServerDequeueTail()}, 
which is called by a server $s$ when a \DEQT\ message is received.
First, $s$ checks whether the $tk$ field of the message contains \TAIL\ 
(line \lref{ln:a34-1}). In such a case, the message received by $s$, denotes a tail 
token transition. So, $s$ sets its $hasTail$ flag to \true\ (line \lref{ln:a34-2}). 
then, $s$ serves all old operations that clients have requested directly from this 
server to be performed in the deque's tail. For that purpose, the server calls 
\texttt{ServeOldTailOps()} (line \lref{ln:a34-3}). Afterwards, $s$ checks if the 
``global'' deque is empty and if it is empty, $s$ responds with a \NACK\ (lines 
\lref{ln:a34-4}, \lref{ln:a34-5}). Otherwise, if $s$ can serve the request, it 
dequeues the data from the tail of its local deque (line \lref{ln:a34-6}), and 
sends them to the appropriate server or client with an \ACK\ (lines \lref{ln:a34-7}, 
\lref{ln:a34-8}). In any other case, $s$ cannot serve the request received, but some 
other server might be capable to do so, thus the message must be forwarded. Server 
$s$ finds the previous server $s_{prev}$ (line \lref{ln:a34-9}) and if $s$ handles 
the global deque's tail, turns its flag for the tail to \false\ (line \lref{ln:a34-10}) 
and marks $tk$ as \TAIL\ for the token transition (line \lref{ln:a34-11}). If the 
message received was send by a client, $s$ stores the request to its $clients$ 
array (line \lref{ln:a34-12}). Finally, $s$ forwards the message to the previous 
server (lines \lref{ln:a33-13}, \lref{ln:a33-14}).

\begin{algorithm}[!t]
\small
\caption{Server helping function for handling an dequeue request to the global deque's tail}
\label{alg34}
\begin{code}
\firstline
   \void\ ServerDequeueTail(\integer\ $op$, \integer\ $cid$, \integer\ $sid$, enum $tk$) $\lbrace$ \nl
   \n   \If\ ($tk == \TAIL$) $\lbrace$                       	\label{ln:a34-1} \nl
   \n     $hasTail$ = \true;                                    \label{ln:a34-2} \nl
          ServeOldTailOps($clients, fullDeque$);                \label{ln:a34-3} \ul
   \p   $\rbrace$                                                                \nl
        \If\ ($hasHead$ AND $hasTail$ AND $!fullDeque$ AND IsEmpty($ldeque$)) $\lbrace$ \ul
         \cm{ Deque is empty, can't dequeue} \nl
   \n     \If\ ($sid == -1$)                                    \cm{From client} \nl
   \n       send($cid, \langle\NACK, \bot, my\_sid\rangle$);    \label{ln:a34-4} \nl
   \p     \Else\                                                \cm{from server} \nl
   \n       send($sid, \langle\NACK, \bot, cid, my\_sid, \bot\rangle$); 
                                                                \label{ln:a34-5} \nl
   \p\p $\rbrace$ \Elseif\ ($hasTail$ AND !IsFull($ldeque$)) $\lbrace$
                                        \label{ln:a34-6} \cm{Server can dequeue} \nl
   \n       $data$ = dequeue\_tail($ldeque$);              \label{ln:a34-6.1}    \nl
            \If\ ($sid == -1$)                                  \cm{From client} \nl
   \n         send($cid, \langle \ACK, data, my\_sid\rangle$);  \label{ln:a34-7} \nl
   \p       \Else\                                              \cm{from server} \nl
   \n         send($sid, \langle \ACK, data, cid, my\_sid, \bot\rangle$);
                                                                \label{ln:a34-8} \nl
   \p\p   $\rbrace$ \Else\ $\lbrace$ 
                        \cm{Server can't dequeue and global deque is not empty.} \nl
   \n       $psid$ = find\_previous\_server($my\_sid$);         \label{ln:a34-9} \nl
            \If\ ($hasTail$) $\lbrace$                                           \nl
   \n          $hasTail$ = \false;                             \label{ln:a34-10} \nl
               $tk$ = $\TAIL$;                                 \label{ln:a34-11} \nl
   \p       $\rbrace$ \Else\ $\lbrace$                                           \nl
   \n         $tk = \bot$;                                                    	 \ul
   \p       $\rbrace$                                                            \nl
            \If\ ($sid == -1$) $\lbrace$                        \cm{From client} \nl
   \n         clients[$cid$] = $\langle \DEQT, \bot, \false\rangle$; 
                                                               \label{ln:a34-12} \nl
              send($psid, \langle \DEQT, \bot, cid, my\_sid, tk\rangle$);
                                                               \label{ln:a34-13} \nl
   \p       $\rbrace$ \Else\ $\lbrace$                         \cm{from server}  \nl
   \n         send($psid, \langle \DEQT, \bot, cid, sid, tk\rangle$); 
                                                               \label{ln:a34-14} \ul
   \p       $\rbrace$                                                            \ul
   \p     $\rbrace$                                                              \ul
   \p   $\rbrace$
\end{code}
\end{algorithm}

Algorithm \ref{alg35} presents pseudocode for function \texttt{ServerEnqueueHead()}, which is 
called by a server $s$ when an \ENQH\ message is received. First, $s$ checks if 
it has the token for global deque's head (line \lref{ln:a35-1}) and if this is the case, the 
server sets its $hasHead$ flag to \true\ (line \lref{ln:a35-2}). If $s$ already has the 
token for the global deque's tail, it sets its $fullDeque$ flag to \true\ (line 
\lref{ln:a35-3}). At this point, $s$ has just received the global deque's head, so it must serve 
all old operations that clients have requested directly from this server to be performed in the 
deque's head. For that purpose, server $s$ calls \texttt{ServeOldHeadOps()} (line 
\lref{ln:a35-4}), which iterates $s$'s $clients$ array and serves all operations for the 
deque's head. Then, the server 
checks whether the global deque is full and also the $ldeque$ is full and if it is full, 
means that there is no free space left for the operation, thus $s$ responds to the request with
with a \NACK\ (lines \lref{ln:a35-5}, \lref{ln:a35-6}). Otherwise, if the server can serve the 
request, it enqueues the received data to the deque's head (line \lref{ln:a35-7}) and responds 
with an \ACK\ (lines \lref{ln:a35-8}, \lref{ln:a35-9}). In any other case, $s$ cannot 
serve the request received, but some other server might be capable to do so, so the message must 
be forwarded. Server $s$ finds the previous server (line \lref{ln:a35-10}) and if $s$ handles 
the global deque's head, turns its flags ($hasHead$ and $fullDeque$) to \false\ 
(line \lref{ln:a35-11}) and marks the $tk$ as \HEAD\ for the token transition (line 
\lref{ln:a35-12}). If the message received was send by a client, $s$ stores the request to its 
$clients$ array (line \lref{ln:a35-13}). Finally, $s$ forwards the message to the previous 
server (lines \lref{ln:a35-14}, \lref{ln:a35-15}).

\begin{algorithm}[!t]
\small
\caption{Server helping function for handling an enqueue request to the global deque's head}
\label{alg35}
\begin{code}
	\firstline
	  \void\ ServerEnqueueHead(\integer\ $op$, Data $data$, \integer\ $cid$, \integer\ $sid$, enum $tk$)
	   $\lbrace$ \nl
	\n   \If\ ($tk == \HEAD$) $\lbrace$                           \label{ln:a35-1} \nl
	\n     $hasHead$ = \true;                                     \label{ln:a35-2} \nl
	       \If\ ($hasTail$) $fullDeque$ = \true;                  \label{ln:a35-3} \nl
	       ServeOldHeadOps(clients);                              \label{ln:a35-4} \ul
	\p   $\rbrace$                                                                 \nl
	     \If\ ($fullDeque$ AND IsFull($ldeque$))$\lbrace$                        
	                                             \cm{Deque is full, can't enqueue} \nl
	\n     \If\ ($sid == -1$)                                     \cm{From client} \nl
	\n       send($cid, \langle\NACK, \bot, my\_sid\rangle$);     \label{ln:a35-5} \nl
	\p     \Else\                                                 \cm{from server} \nl
	\n       send($sid, \langle\NACK, \bot, cid, my\_sid, \bot\rangle$);
	                                                              \label{ln:a35-6} \nl
	\p\p $\rbrace$ \Elseif\ ($hasHead$ AND !IsFull($ldeque$)) $\lbrace$
                                               \cm{Server can Server can enqueue.} \nl
	\n       enqueue\_head($ldeque, data$);                       \label{ln:a35-7} \nl
	         \If\ ($sid == -1$)                                   \cm{From client} \nl
	\n         send($cid, \langle\ACK, \bot, my\_sid\rangle$);    \label{ln:a35-8} \nl
	\p       \Else\                                               \cm{from server} \nl
	\n         send($sid, \langle\ACK, \bot, cid, my\_sid, \bot\rangle$);
	                                                              \label{ln:a35-9} \nl
	\p\p   $\rbrace$ \Else\ $\lbrace$ 
	                       \cm{Server can't dequeue and global deque is not full.} \nl
	\n       $psid$ = find\_previous\_server($my\_sid$);         \label{ln:a35-10} \nl
	         \If\ ($hasHead$) $\lbrace$                                            \nl
	\n         $hasHead$ = \false;                               \label{ln:a35-11} \nl
	           $fullDeque$ = \false;                                               \nl
	           $tk$ = \HEAD;                                     \label{ln:a35-12} \nl
	\p       $\rbrace$ \Else\ $\lbrace$                                            \nl
	\n         $tk = \bot$;                                                        \ul
	\p       $\rbrace$                                                             \nl
	         \If\ ($sid == -1$) $\lbrace$                         \cm{From client} \nl
	\n         clients[$cid$] = $\langle \ENQH, data, \false\rangle$; 
	                                                             \label{ln:a35-13} \nl
	           send($psid, \langle \ENQH, data, cid, my\_sid, tk\rangle$);
	                                                             \label{ln:a35-14} \nl
	\p       $\rbrace$ \Else\ $\lbrace$                          \cm{from server}  \nl
	\n         send($psid, \langle \ENQH, data, cid, sid, tk\rangle$);
	                                                             \label{ln:a35-15} \ul
	\p       $\rbrace$                                                             \ul
	\p     $\rbrace$                                                               \ul
	\p   $\rbrace$
\end{code}
\end{algorithm}

Algorithm \ref{alg36} presents pseudocode for function \texttt{ServerDequeueHead()}, which is 
called by a server when a \DEQH\ message is received by some server $s$. First, $s$ checks if it 
has the token for global deque's head (line \lref{ln:a36-1}) and if this is the case, the server 
sets its $hasHead$ flag to \true\ (line \lref{ln:a36-2}). At this point, server $s$ has 
just received the global deque's head, so it must serve all old operations that clients have 
requested directly from this server to be performed in the deque's head. For that purpose, the 
server calls the \texttt{ServeOldHeadOps()} routine (line \lref{ln:a36-4}), which iterates $s$'s 
$clients$ array and serves all operations for the deque's head. Then, the server checks 
if the global deque is empty and if it is, $s$ responds to the request with a \NACK\ (lines 
\lref{ln:a36-5}, \lref{ln:a36-6}). Otherwise, if the server can serve the request, it dequeues 
the data from the head (line \lref{ln:a36-7}) and sends them to the sender or client together with 
an \ACK\ (lines \lref{ln:a36-8}, \lref{ln:a36-9}). In any other case, $s$ cannot serve the request 
received, but some other server might be capable to do so, so the message must be forwarded. 
Server $s$ finds the next server (line \lref{ln:a36-10}) and if $s$ has the token for the global 
head, turns its flag for the head to \false\ (line \lref{ln:a36-11}) and and marks the $tk$ as 
\HEAD\ for the token transition (line \lref{ln:a36-12}). If the message received was send by a 
client, $s$ stores the request to its $clients$ array (line \lref{ln:a36-13}). Finally, $s$ 
forwards the message to the next server (lines \lref{ln:a36-14}, \lref{ln:a36-15}).

\begin{algorithm}[!t]
\small
\caption{Server helping function for handling an dequeue request to the global deque's head}
\label{alg36}
\begin{code}
	\firstline
	\void\ ServerDequeueHead(\integer\ $op$, \integer\ $cid$, \integer\ $sid$, enum $tk$) $\lbrace$ \nl
	\n   \If\ ($tk == \HEAD$) $\lbrace$                        \label{ln:a36-1} \nl
	\n     $hasHead$ = \true;                                  \label{ln:a36-2} \nl
	       ServeOldHeadOps(clients);                           \label{ln:a36-4} \ul
	\p   $\rbrace$                                                              \nl
	     \If\ ($hasHead$ AND $hasTail$ AND $!fullDeque$ AND IsEmpty($ldeque$))$\lbrace$ \ul
	     \cm{ Deque is empty, can't dequeue} \nl
	\n     \If\ ($sid == -1$)                                  \cm{From client} \nl
	\n       send($cid, \langle \NACK, \bot, my\_sid\rangle$); \label{ln:a36-5} \nl
	\p     \Else\                                              \cm{from server} \nl
	\n       send($sid, \langle \NACK, \bot, cid, my\_sid, \bot\rangle$);
	                                                           \label{ln:a36-6} \nl
	\p\p $\rbrace$ \Elseif\ ($hasHead$ AND !IsFull($ldeque$)) $\lbrace$
	                                        \cm{Server can Server can enqueue.} \nl
	\n       $data$ = dequeue\_head($ldeque$);                 \label{ln:a36-7} \nl
           \If\ ($sid == -1$)                                  \cm{From client} \nl
	\n         send($cid, \langle\ACK, data, my\_sid\rangle$); \label{ln:a36-8} \nl
	\p       \Else\                                            \cm{from server} \nl
	\n         send($sid, \langle\ACK, data, cid, my\_sid, \bot\rangle$);
                                                               \label{ln:a36-9} \nl
	\p\p   $\rbrace$ \Else\ $\lbrace$ 
	                   \cm{Server can't dequeue and global deque is not empty.} \nl
	\n       $nsid$ = find\_next\_server($my\_sid$);          \label{ln:a36-10} \nl
	\n          $hasHead$ = \false;                           \label{ln:a36-11} \nl
	            $tk$ = $\HEAD$;                               \label{ln:a36-12} \nl
	\p       $\rbrace$ \Else\ $\lbrace$                                         \nl
	\n         $tk = \bot$;                                                     \ul
	\p       $\rbrace$                                                          \nl
             \If\ ($sid == -1$) $\lbrace$                      \cm{From client} \nl
	\n         clients[$cid$] = $\langle \DEQH, \bot, \false\rangle$; 
	                                                          \label{ln:a36-13} \nl
	           send($nsid, \langle \DEQH, \bot, cid, my\_sid, tk\rangle$);
	                                                          \label{ln:a36-14} \nl
	\p       $\rbrace$ \Else\ $\lbrace$                        \cm{from server} \nl
	\n         send($nsid, \langle \DEQH, \bot, cid, sid, tk\rangle$);
	                                                          \label{ln:a36-15} \ul
	\p       $\rbrace$                                                          \ul
	\p     $\rbrace$                                                            \ul
	\p   $\rbrace$                                                               
\end{code}
\end{algorithm}

Algorithm \ref{alg37} presents pseudocode for the client functions. These are  
\texttt{EnqueueTail(), DequeueTail(), EnqueueHead(), DequeueHead()}. All clients 
have two local variables, $head\_sid$ and $tail\_sid$, to store 
the last known server to have the head token and the tail token, respectively. These 
variables initially store the id of the server zero, but may change values during runtime.
The messages received by clients contain two fields: $status$ which contains either
the data from a dequeue operation, or $\bot$ in case of an enqueue, and $tail\_sid$ 
or $head\_sid$, depending on the function, which contains the id of the server that 
currently holds the token. 

%

	\begin{minipage}{.45\textwidth}
		\begin{algorithm}[H]
\small
\caption{Enqueue and dequeue operations for a client of the token-based deque. }
\label{alg37}
\begin{code}

    \firstline
    \integer\ $tail\_sid$ = 0,  $head\_sid$ = 0;                                      \ul 
                                                                                      \nl
    Data EnqueueTail(\integer\ $cid$, Data $data$) $\lbrace$         \label{ln:a37-1} \nl
 \n   send($tail\_sid, \langle \ENQT, data, cid, -1, \bot\rangle $); \label{ln:a37-2} \nl
      $\langle status, tail\_sid\rangle$ = receive();                \label{ln:a37-3} \nl
      \return\ $status$;                                             \label{ln:a37-4} \ul
 \p $\rbrace$                                                                         \ul
                                                                                      \nl
    Data DequeueTail(\integer\ $cid$) $\lbrace$                      \label{ln:a37-5} \nl
 \n   send($tail\_sid, \langle \DEQT, \bot, cid, -1, \bot\rangle$);  \label{ln:a37-6} \nl
      $\langle status, tail\_sid\rangle$ = receive();                \label{ln:a37-7} \nl
      \return\ $status$;                                             \label{ln:a37-8} \ul
 \p $\rbrace$                                                                         \ul
                                                                                      \nl
    Data EnqueueHead(\integer\ $cid$, Data $data$) $\lbrace$         \label{ln:a37-9} \nl
 \n   send($head\_sid, \langle\ENQH, data, cid, -1, \bot\rangle $); \label{ln:a37-10} \nl
      $\langle status, head\_sid\rangle$ = receive();                                 \nl
      \return\ $status$;                                                              \ul	
 \p $\rbrace$                                                                         \ul
                                                                                      \nl
    Data DequeueHead(\integer\ $cid$) $\lbrace$                     \label{ln:a37-11} \nl
 \n   send($head\_sid, \langle\DEQH, \bot, cid, -1, \bot\rangle$);  \label{ln:a37-12} \nl
      $\langle status, head\_sid\rangle$ = receive();                                 \nl
      \return\ $status$;                                                              \ul
 \p $\rbrace$
\end{code}
\end{algorithm}
\end{minipage}

For enqueuing at the deque's tail, clients calls \texttt{EnqueueTail()} (line \lref{ln:a37-1}). 
This function sends an \ENQT\ message to the last known server, which has the token for the 
global tail (line \lref{ln:a37-2}). The server with the tail token may have changed, but the 
client is still unaware of the change. In that case, server $s$, which received the message, 
stores this message in its $clients$ array and then forwards the message to the next 
server in the order, as described in Algorithm \ref{alg33}. During this time, the client 
blocks waiting for a server's response. Once the client receives the response (line 
\lref{ln:a37-3}), it returns the contents of the $status$ variable (line \lref{ln:a37-4}). 


%
\begin{minipage}{.5\textwidth}
\begin{algorithm}[H]
\caption{Auxiliary functions for a server of the token-based deque.}
\label{alg38-help}
\small
\begin{code}
\firstline
     \void\ ServeOldTailOps(\void) $\lbrace$                   \label{ln:a38h-4} \nl
\n  	LocalSet $eliminated = \emptyset$    \bl\nl
  	 
	    \Foreach\ $cid1 \not\in eliminated$ {\sf such that} clients[$cid1$].$op == \ENQT\ \lbrace$  
	    																\label{ln:a38h-21} \nl
\n       \If\  {\sf there is} $cid2 \not\in eliminated$ {\sf such that} 
				clients[$cid2$].$op == \DEQT\ \lbrace$                  \label{ln:a38h-22} \nl
\n 			clients[$cid2$].$data$ = clients[$cid1$].$data$;            \label{ln:a38h-23} \nl
			clients[$cid1$].$isServed$ = \true;                			\label{ln:a38h-24}\nl
			clients[$cid2$].$isServed$ = \true;                			\label{ln:a38h-25}\nl
			$eliminated = eliminated \cup \{cid1, cid2\}$;     			\label{ln:a38h-26}\ul
  \p $\rbrace$                              \ul
  \p $\rbrace$							
  \p\nl     
  \n         \If\ (!$fullDeque$) $\lbrace$                           \label{ln:a38h-7} \nl
          \n     \Foreach\ $cid$ {\sf such that} clients[$cid$].$op == \ENQH\ \lbrace$  \nl
          \n       \If\ (!IsFull($ldeque$)) $\lbrace$                  \label{ln:a38h-8} \nl
          \n         enqueue\_tail($ldeque$, clients[$cid$].$data$);         \label{ln:a38h-81}\nl
                     clients[$cid$].$isServed$ = \true;                \label{ln:a38h-82}\ul
          \p $\rbrace$                              \ul
          \p $\rbrace$								\ul
          \p $\rbrace$								\nl
          
          \Foreach\ $cid$ {\sf such that} clients[$cid$].$op == \DEQT\ \lbrace$      \nl
  \n     \If\ (!IsEmpty($ldeque$)) $\lbrace$                   \label{ln:a38h-5} \nl
  \n       clients[$cid$].$data$ = dequeue\_tail($ldeque$);          \label{ln:a38h-51}\nl
           clients[$cid$].$isServed$ = \true;                                    \ul
\p \p \p \} \} \} \bl\bl\nl     
      
      \void\ ServeOldHeadOps(\void) $\lbrace$                   \label{ln:a38h-1} \nl
\n  	LocalSet $eliminated = \emptyset$    \bl\nl
  	 
	    \Foreach\ $cid1 \not\in eliminated$ {\sf such that} clients[$cid1$].$op == \ENQH\ \lbrace$  
	    																\label{ln:a38h-11} \nl
\n       \If\  {\sf there is} $cid2 \not\in eliminated$ {\sf such that} 
				clients[$cid2$].$op == \DEQH\ \lbrace$                  \label{ln:a38h-12} \nl
\n 			clients[$cid2$].$data$ = clients[$cid1$].$data$;            \label{ln:a38h-13} \nl
			clients[$cid1$].$isServed$ = \true;                			\label{ln:a38h-14}\nl
			clients[$cid2$].$isServed$ = \true;                			\label{ln:a38h-15}\nl
			$eliminated = eliminated \cup \{cid1, cid2\}$;     			\label{ln:a38h-16}\ul
  \p $\rbrace$                              \ul
  \p $\rbrace$							
  \p\nl     
  \n   \If\ (!$fullDeque$) $\lbrace$                           \label{ln:a38h-2} \nl
  \n     \Foreach\ $cid$ {\sf such that} clients[$cid$].$op == \ENQH\ \lbrace$  \nl
  \n       \If\ (!IsFull($ldeque$)) $\lbrace$                  \label{ln:a38h-3} \nl
  \n         enqueue\_head($ldeque$, clients[$cid$].$data$);         \label{ln:a38h-31}\nl
             clients[$cid$].$isServed$ = \true;                \label{ln:a38h-32}\ul
  \p $\rbrace$                              \ul
  \p $\rbrace$								\ul
  \p $\rbrace$								\nl
    \Foreach\ $cid$ {\sf such that} clients[$cid$].$op == \DEQH\ \lbrace$      \nl
    \n     \If\ (!IsEmpty($ldeque$)) $\lbrace$                   \label{ln:a38h-6} \nl
    \n       clients[$cid$].$data$ = dequeue\_head($ldeque$);          \label{ln:a38h-61}\nl
             clients[$cid$].$isServed$ = \true;                                    \ul
\p \p \p \} \} \}
\end{code}
\end{algorithm}
\end{minipage}

The enqueuing at the deque's head is symmetrical to this approach and is achieved with the 
function \texttt{EnqueueHead()} (line \lref{ln:a37-9}). The only difference is that the server 
which is send the message to, is the server with id $head\_sid$ (line \lref{ln:a37-10}) 
and the message's operation code is \ENQH\ instead of \ENQT.

For dequeuing at the deque's tail, clients call \texttt{DequeueTail()} (line \lref{ln:a37-5}). 
This function sends 
a \DEQT\ message to the last known server which has the token for the global tail 
(line \lref{ln:a37-6}). Then, the client waits for server to respond. 
Once the client receives the response (line \lref{ln:a37-7}), it returns the contents of the 
$status$ variable (line \lref{ln:a37-8}). 

The dequeuing at the deque's head is symmetrical to this approach and is achieved with the 
function \texttt{DequeueHead()} (line \lref{ln:a37-11}). The only difference is that the 
server, to whom the message was sent, is the server with id $head\_sid$ (line 
\lref{ln:a37-12}) and the message's operation code is \DEQH\ instead of \DEQT.

\subsubsection{Proof of Correctness}
Let $\alpha$ be an execution of the token-based deque algorithm presented 
in Algorithms \ref{alg32}, \ref{alg33}, \ref{alg34}, \ref{alg35}, \ref{alg36}, 
\ref{alg37}, and \ref{alg38-help}. 

Each server maintains local boolean variables $hasHead$ and $hasTail$, 
with initial values \false. Whenever some server $s_{i}$ receives a \TAIL\ 
message, i.e. a message with its $tk$ field equal to \TAIL\ (line \lref{ln:a33-1}, line \lref{ln:a34-1}), 
the value of $hasTail$ is set to \true\ (line \lref{ln:a33-2}, line \lref{ln:a34-2}). By inspection 
of the pseudocode, it follows that the value of $hasTail$ is set to \false\ 
if the local deque of $s_{i}$ is full (line \lref{ln:a33-11}, line \lref{ln:a34-10}); 
then, a \TAIL\ message is sent to the next or previous server (line \lref{ln:a33-15}, line \lref{ln:a34-14}). 
The same holds for $hasHead$ and \HEAD\ messages, i.e. messages with their 
$tk$ field equal to \HEAD. Thus, the following observations holds.

\begin{observation}
\label{obs:tdq:unique}
At each configuration in $\alpha$, there is at most one server 
for which the local variable $hasHead$ ($hasTail$) has the value \true.
\end{observation}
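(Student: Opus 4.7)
The plan is to prove the observation by induction on the length of the prefix of $\alpha$, but to do so I will strengthen the statement into an invariant that also tracks the ``token in transit''. Specifically, for the tail case I will maintain the invariant: \emph{in every configuration, the number of servers with $hasTail = \true$ plus the number of undelivered messages in $\alpha$ whose $tk$ field equals $\TAIL$ is exactly one}. The analogous invariant will be stated for $\HEAD$. Clearly this invariant directly implies the observation, since the second summand is non-negative.

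For the base case, the initial configuration is characterized by $s_0$ having $hasTail = \true$, every other server having $hasTail = \false$, and all message buffers being empty, so the total is exactly $1$. For the inductive step, I would argue that only three kinds of events can affect either summand: (i) a server receives a message with $tk = \TAIL$, (ii) a server executes a line that sets $hasTail$ to $\false$ and issues a $\send$ carrying $tk = \TAIL$, or (iii) a server executes a line that sets $hasTail$ to $\true$. Inspection of Algorithms~\ref{alg33} and~\ref{alg34} shows that the only lines setting $hasTail$ to $\true$ (lines~\lref{ln:a33-2} and~\lref{ln:a34-2}) are guarded by the check $tk = \TAIL$ on an incoming message; hence every such step removes one in-transit $\TAIL$ message and creates one $\true$ flag, preserving the total. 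Symmetrically, the only lines setting $hasTail$ to $\false$ (lines~\lref{ln:a33-11} and~\lref{ln:a34-10}) are followed in the same handler by a $\send$ whose $tk$ field is set to $\TAIL$ (lines~\lref{ln:a33-12}, \lref{ln:a33-15} and \lref{ln:a34-11}, \lref{ln:a34-14}), so one $\true$ flag is destroyed while one $\TAIL$ message is created, again preserving the total. All other steps of the algorithms leave both summands unchanged because they neither touch $hasTail$ nor emit a message carrying $tk = \TAIL$.

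The argument for $hasHead$ is completely symmetric, relying on Algorithms~\ref{alg35} and~\ref{alg36}: setting $hasHead$ to $\true$ happens only upon receiving a $\HEAD$ message (lines~\lref{ln:a35-2}, \lref{ln:a36-2}), and setting $hasHead$ to $\false$ (lines~\lref{ln:a35-11}, \lref{ln:a36-11}) is always paired with the emission of a message carrying $tk = \HEAD$ (lines~\lref{ln:a35-14}, \lref{ln:a35-15}, \lref{ln:a36-14}, \lref{ln:a36-15}).

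The main obstacle I anticipate is auditing every code path in the four server handlers to make sure that \emph{no} path (a) raises a flag without consuming a token message, or (b) lowers a flag without issuing exactly one matching token message. In particular, the branches that serve a request locally (e.g. line~\lref{ln:a33-6.1} or line~\lref{ln:a34-6}) and the branches that respond with $\NACK$ when the global deque is full/empty must be checked to verify that they neither set $hasTail$/$hasHead$ nor send a $\TAIL$/$\HEAD$-tagged message, so that the invariant is trivially preserved across them. Once this exhaustive code walk is complete, the invariant follows and the observation is an immediate corollary.
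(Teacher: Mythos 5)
Your token-conservation argument is the right way to make rigorous what the paper states only informally (the paper offers a one-sentence inspection of the pseudocode, not a formal induction), and your case analysis of which events can change either summand is exactly the correct decomposition. However, the invariant as stated --- the sum is \emph{exactly} one --- is not literally preserved at every configuration of $\alpha$, because the relevant transitions are not atomic in the model. Concretely: when a server receives a $\TAIL$ message and begins its handler, the message leaves the ``in transit'' count \emph{before} line~\lref{ln:a33-2} raises $hasTail$, so the sum briefly drops to $0$; symmetrically, after line~\lref{ln:a33-11} lowers $hasTail$ but before line~\lref{ln:a33-14}/\lref{ln:a33-15} emits the new $\TAIL$ message, the sum is again $0$. (The same holds on the $\HEAD$ side across lines~\lref{ln:a35-11}--\lref{ln:a35-15} and~\lref{ln:a36-11}--\lref{ln:a36-15}.) Since the model makes each line of the handler a separate step, these intermediate configurations genuinely occur.

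The repair is cheap and does not change the structure of your proof: weaken the invariant to \emph{at most} one. The base case is $1$, the ``raise flag on $\TAIL$ receipt'' step decreases then increases the sum by at most one each (so a $1$ can pass through $0$ but never reaches $2$), the ``lower flag then send $\TAIL$'' step is symmetric, and the key code audit you already performed --- that no path raises the flag without first consuming a $\TAIL$ message, and no path emits a $\TAIL$-tagged message without the preceding $\If\ (hasTail)$ guard having been true --- ensures the sum never increases past $1$. Since the in-transit count is non-negative, ``at most one'' for the sum immediately gives ``at most one server with $hasTail = \true$,'' which is the observation. With that adjustment the argument is complete, and it is a strictly more careful version of the informal justification the paper gives.
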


\begin{observation}
\label{obs:tdq:direction}
In some configuration $C$ of $\alpha$, a \TAIL\ message is sent from a server $s_j$, $0 \leq j < \maxser -1$, to a server 
$s_k$, where $k = (j+1) \mod{\maxser}$, only if the local deque of $s_j$ is full in $C$. A \TAIL\ message is sent from a server $s_j$, $0 \leq j < \maxser -1$, to a server 
$s_k$, where $k = (j-1) \mod{\maxser}$, only if the local deque of $s_j$ is empty in $C$. 

Similary, a \HEAD\ message is sent from $s_j$ to $s_k$, where $k = (j+1) \mod{\maxser}$, only if the local deque of $s_j$ is empty in $C$. \HEAD\ message is sent from $s_j$ to $s_k$, where $k = (j-1) \mod{\maxser}$, only if the local deque of $s_j$ is full in $C$. 
\end{observation}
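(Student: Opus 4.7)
}
The plan is to prove the observation by a direct case analysis on the pseudocode: the only places where a message with $tk = \TAIL$ or $tk = \HEAD$ is transmitted between servers are the four ``forward the token'' branches of Algorithms~\ref{alg33}--\ref{alg36}, so I need only inspect each of these branches and read off (i) the direction of the forward (next vs.\ previous server) and (ii) the condition on the local deque of the sender that makes that branch execute.

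First I would handle the \TAIL\ cases. In \textsc{ServerEnqueueTail} (Algorithm~\ref{alg33}), the only step that emits $tk = \TAIL$ is the send on line~\lref{ln:a33-14}/\lref{ln:a33-15} to the next server (obtained via \texttt{find\_next\_server} on line~\lref{ln:a33-10}), and only inside the branch that additionally sets $tk = \TAIL$ on line~\lref{ln:a33-12}; entering that branch requires both that the \If\ on line~\lref{ln:a33-5.1}/NACK branch ($fullDeque \wedge \mathsf{IsFull}(ldeque)$) was not taken, that the \Elseif\ of line~\lref{ln:a33-6.1} ($hasTail \wedge \neg \mathsf{IsFull}(ldeque)$) was not taken, and that $hasTail$ holds. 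Together these force $\mathsf{IsFull}(ldeque)$ to hold at $C$, which gives the next-server case. In \textsc{ServerDequeueTail} (Algorithm~\ref{alg34}), the only emission of $tk = \TAIL$ is on line~\lref{ln:a34-13}/\lref{ln:a34-14} to \texttt{find\_previous\_server}, inside the branch that first passes through the guards of line~\lref{ln:a34-4}/\lref{ln:a34-5} (deque empty NACK) and the dequeue-locally guard $hasTail \wedge \neg \mathsf{IsFull}(ldeque)$ on line~\lref{ln:a34-6} (which in the intended reading is $\neg \mathsf{IsEmpty}(ldeque)$, since the branch performs a \texttt{dequeue\_tail} on line~\lref{ln:a34-6.1}); together with $hasTail$ holding, these guards force $\mathsf{IsEmpty}(ldeque)$ at $C$, giving the previous-server case.

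Next I would handle the \HEAD\ cases by the symmetric argument on Algorithms~\ref{alg35} and~\ref{alg36}. In \textsc{ServerEnqueueHead}, $tk = \HEAD$ is set on line~\lref{ln:a35-12} and the send goes to \texttt{find\_previous\_server} on line~\lref{ln:a35-14}/\lref{ln:a35-15}; entry into this branch (after the NACK guard of line~\lref{ln:a35-5}/\lref{ln:a35-6} and the enqueue-locally guard $hasHead \wedge \neg \mathsf{IsFull}(ldeque)$) together with $hasHead$ holding forces $\mathsf{IsFull}(ldeque)$ at $C$. In \textsc{ServerDequeueHead}, $tk = \HEAD$ is set on line~\lref{ln:a36-12} and the send goes to \texttt{find\_next\_server} on line~\lref{ln:a36-14}/\lref{ln:a36-15}; entry into this branch (after the NACK guard of line~\lref{ln:a36-5}/\lref{ln:a36-6} and the dequeue-locally guard of line~\lref{ln:a36-7}, together with $hasHead$ holding) forces $\mathsf{IsEmpty}(ldeque)$ at $C$.

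The observation then follows by combining the four cases. The argument is entirely by inspection, so I expect no real mathematical obstacle; the one point that requires care is verifying that each ``else/final'' branch that transmits the token is indeed guarded as claimed and that no other line in Algorithms~\ref{alg33}--\ref{alg36} emits a message with $tk \in \{\TAIL, \HEAD\}$. A secondary subtlety is the apparent typo in the dequeue branches (the \Elseif\ conditions on lines~\lref{ln:a34-6} and \lref{ln:a36-7} are written with $\neg \mathsf{IsFull}$ but must be read as $\neg \mathsf{IsEmpty}$ for the algorithm to make sense); I would flag this reading explicitly when citing those guards. Finally, I would note that the observation holds only for \emph{inter-server} \TAIL/\HEAD\ messages, since the direct client-facing sends on lines~\lref{ln:a33-5}/\lref{ln:a33-8} etc.\ carry $tk = \bot$ and thus are not relevant to the claim.
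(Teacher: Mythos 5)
Your proof is correct and takes the same route the paper does, namely a case analysis of the four token-forwarding branches in Algorithms~\ref{alg33}--\ref{alg36}; the paper itself only handles this with a cursory (and slightly imprecise) inspection remark in the paragraph preceding the observation. Your flagging of the $\neg$\texttt{IsFull}-versus-$\neg$\texttt{IsEmpty} typo in the dequeue-branch guards is correct, and reading those guards as $\neg$\texttt{IsEmpty} is indeed necessary for the observation to hold as stated.
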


By inspection of the pseudocode, we see that a server performs an enqueue (dequeue) back 
operation on its local deque  $ldeque$ either when executing line \lref{ln:a33-7} 
(line \lref{ln:a34-6.1}) or when it executes {\tt ServeOldTailOps}. 
Further inspection of the pseudocode (lines \lref{ln:a33-1}-\lref{ln:a33-3}, line 
\lref{ln:a33-6.1}, as well as lines \lref{ln:a34-1}-\lref{ln:a34-3}, 
line \lref{ln:a34-6}), shows that these lines are executed when 
$hasTail = \true$. By inspection of the pseudocode, 
the same can be shown for $hasHead$. 
Then, the following observation holds.

\begin{observation}
\label{obs:tdq:tokenserver}
Whenever a server $s_j$ performs an enqueue or dequeue back (front) operation on its local 
deque, it holds that its local variable $hasTail$ ($hasHead$) is equal to \true. 
\end{observation}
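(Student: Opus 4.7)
\medskip
\noindent\textbf{Proof proposal for Observation~\ref{obs:tdq:tokenserver}.}
The plan is a straightforward case analysis by inspection of the pseudocode, enumerating every line at which a server $s_j$ invokes a modifying operation on its local deque $ldeque$ at the tail end (resp. head end), and verifying that at the configuration just before the step executing that line, the local flag $hasTail$ (resp. $hasHead$) is \true. Since the observation is symmetric, I will explain the tail case; the head case is completely analogous by swapping the roles of \ENQT/\DEQT\ with \ENQH/\DEQH, $hasTail$ with $hasHead$, and the functions \texttt{ServerEnqueueTail}/\texttt{ServerDequeueTail}/\texttt{ServeOldTailOps} with their head counterparts.

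First I would locate the direct modification points. Inspecting Algorithms~\ref{alg33} and~\ref{alg34}, the only lines at which $s_j$ itself performs an \texttt{enqueue\_tail} or \texttt{dequeue\_tail} on $ldeque$ outside the helper routine are line~\lref{ln:a33-7} (inside \texttt{ServerEnqueueTail}) and line~\lref{ln:a34-6.1} (inside \texttt{ServerDequeueTail}). In each case the line is guarded by an \Elseif\ whose boolean condition (lines~\lref{ln:a33-6.1} and~\lref{ln:a34-6}, respectively) explicitly contains the conjunct $hasTail$. Since $hasTail$ is not modified between the evaluation of the guard and the execution of the \texttt{enqueue\_tail}/\texttt{dequeue\_tail} call, we get $hasTail = \true$ at the step performing the modification.

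Next I would handle the indirect modification points inside \texttt{ServeOldTailOps} (Algorithm~\ref{alg38-help}). The calls to \texttt{enqueue\_tail} and \texttt{dequeue\_tail} appear at lines~\lref{ln:a38h-81} and~\lref{ln:a38h-51}; the preceding \Foreach\ loops over the elimination set do not touch $ldeque$ at all (they merely pair ENQT/DEQT entries in the $clients$ array). Thus it suffices to show that whenever \texttt{ServeOldTailOps} is invoked, $hasTail = \true$. There are exactly two call sites: line~\lref{ln:a33-4} in \texttt{ServerEnqueueTail} and line~\lref{ln:a34-3} in \texttt{ServerDequeueTail}. Both call sites lie in the \Then\ branch of the conditional $tk = \TAIL$ (lines~\lref{ln:a33-1} and~\lref{ln:a34-1}), immediately after the assignment $hasTail \leftarrow \true$ (lines~\lref{ln:a33-2} and~\lref{ln:a34-2}). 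Since $hasTail$ is not touched between the assignment and the call, the helper is executed with $hasTail = \true$, which, again, is preserved until the inner \texttt{enqueue\_tail}/\texttt{dequeue\_tail} is executed because no line on the path modifies $hasTail$.

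The only potential obstacle I anticipate is being exhaustive: one must verify that I have enumerated every site where $ldeque$ is modified at the tail (resp. head), and that no control path in the helper routine sets $hasTail$ to \false\ before reaching the modification lines. This is a purely mechanical check on the pseudocode, made easy by the fact that $hasTail$ is assigned to \false\ only at the two token-relinquishing sites (lines~\lref{ln:a33-11} and~\lref{ln:a34-10}), both of which lie in the outer \Else\ branches that are disjoint from the modification branches above. Concluding by symmetry for the head case completes the proof.
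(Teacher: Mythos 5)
Your proposal is correct and follows essentially the same approach as the paper: identify the three modification sites (lines~\lref{ln:a33-7}, \lref{ln:a34-6.1}, and the calls inside \texttt{ServeOldTailOps}), note that the first two are guarded by conditions containing the conjunct $hasTail$, and note that the two call sites of \texttt{ServeOldTailOps} lie immediately after the assignment $hasTail \leftarrow \true$ in the $tk == \TAIL$ branch. Your write-up is a bit more explicit than the paper's terse prose (in particular, spelling out that nothing on the path flips $hasTail$ back to \false\ before the inner $\texttt{enqueue\_tail}$/$\texttt{dequeue\_tail}$ executes, and that the elimination loop in the helper does not touch $ldeque$), but the decomposition and the key reasoning are the same.
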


If $hasTail = \true$ ($hasHead = \true$) for some server $s$ in some configuration $C$, 
then we say that $s$ has the tail (head) token. The server that has the tail token is 
referred to as {\em tail token server}. The server that has the head token is referred 
to as {\em head token server}.

By a straight-forward induction, the following lemma can be shown. 

\begin{lemma}
\label{tdq:client-msgs}
The mailbox of a client in any configuration of $\alpha$ contains at most one 
incoming message.
\end{lemma}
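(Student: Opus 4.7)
The plan is to prove the claim by induction on the length of the execution $\alpha$. In the initial configuration, all client mailboxes are empty, so the claim holds. For the inductive step, it suffices to consider events that add a message to some client's mailbox, since all other events either leave mailboxes unchanged or decrease their size via a {\tt receive} by the client.

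First, by inspection of the client pseudocode in Algorithm \ref{alg37}, each of the four client operations sends exactly one request and then blocks on {\tt receive} before returning. Thus, each client strictly alternates between {\tt send} and {\tt receive} events and issues a new request only after having consumed the response to its previous one. Consequently, the mailbox invariant follows if I establish the stronger auxiliary claim that, for each client request, at most one server-to-client message is ever generated in $\alpha$.

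Second, I would prove that auxiliary claim by tracking the lifecycle of a request from client $c$ to the first server $s_j$ it reaches. Upon receiving the request with $sid = -1$, $s_j$ either serves it directly (if it holds the appropriate token and local capacity permits), in which case a single response is sent to $c$ via {\tt send}($cid, \cdot$); or it inserts $\langle op, data, \false\rangle$ into $clients[c]$ and forwards the request around the ring with $sid = my\_sid = j$, so that any server that eventually serves it replies back to $s_j$. In the latter case, the response to $c$ is emitted by $s_j$ in exactly one of two ways: either $s_j$ receives an \ACK/\NACK\ reply to the forwarded message and forwards it to $c$ at lines \lref{ln:a32-5} or \lref{ln:a32-8}, or the forwarded request itself returns to $s_j$ after $s_j$ has already served $c$'s request locally via {\tt ServeOldTailOps} or {\tt ServeOldHeadOps}, triggering line \lref{ln:a32-1}. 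In every case the $clients[c]$ entry is cleared as part of the response-sending step, blocking any further response from ever being emitted for this request.

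The main obstacle is showing that these two response-generation paths are mutually exclusive, so that no request ever produces two messages to $c$. I would address this using: (i) the FIFO ordering of channels and the sequential processing of incoming messages at each server, which totally orders the relevant events at $s_j$; (ii) Observation \ref{obs:tdq:unique}, which bounds the number of token-holding servers at each configuration; and (iii) a case analysis of the possible interleavings at $s_j$ of the \ACK/\NACK\ reply and the returning forwarded request, showing that whichever event is processed first empties the $clients[c]$ slot, so that when the second event is processed it finds $clients[c] = \bot$ and therefore is silently consumed or merely forwarded onward without emitting a response to $c$. The technical heart of the argument is an auxiliary invariant stating that, at every configuration, for each active request of each client there is at most one ``live'' token of responsibility for generating the reply\textemdash carried either by an in-flight server-to-$c$ message, by an in-flight forwarded request, by an in-flight \ACK/\NACK\ reply to $s_j$, or by a non-cleared $clients[c]$ entry at $s_j$\textemdash and that each event in $\alpha$ preserves this count at most one.
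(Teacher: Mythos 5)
The paper does not actually give a proof of this lemma; it only asserts that it follows ``by a straight-forward induction.'' So there is no reference argument to compare yours against. That said, your high-level decomposition is the right one: each client, by Algorithm~\ref{alg37}, sends a single request and then blocks on {\tt receive} until it gets a reply, so the mailbox bound reduces to showing that at most one server-to-client message is ever emitted per request; and you correctly track the request's lifecycle through $clients[cid]$ at the first server $s_j$ it reaches.

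Two corrections, though. First, the concluding invariant is not right as stated: the in-flight forwarded copy of the request and the non-cleared $clients[cid]$ entry at $s_j$ coexist throughout the entire forwarding phase, so ``at most one live token of responsibility'' does not hold with the list of carriers you give. The invariant you actually want is that responsibility sits entirely in $s_j$'s $clients[cid]$ slot: the slot is set exactly once (because $c$ blocks, so $s_j$ can receive only one request from $c$ with $sid=-1$ per operation), and a message is sent to $c$ only at the step that clears the slot (lines~\lref{ln:a32-2}, \lref{ln:a32-4}, \lref{ln:a32-7}) or at the direct-service step when no slot was created. Second, the mutual exclusion of the \ACK/\NACK\ path and the returning-request path does not need the FIFO ordering or Observation~\ref{obs:tdq:unique} that you invoke: once a token server serves the forwarded request and emits the \ACK/\NACK, it does \emph{not} re-forward the request (see, e.g., lines~\lref{ln:a33-8}--\lref{ln:a33-9}), so the request cannot subsequently return to $s_j$; conversely, if the request does make it back to $s_j$, no intermediate server served it, hence no \ACK/\NACK\ was ever generated. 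The two events are disjoint by construction, not by a race that happens to resolve favorably. With those adjustments your sketch goes through.
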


Let $op$ be any operation in $\alpha$. We assign a linearization point to $op$ 
by considering the following cases: 
\begin{compactitem}
\item If  $op$ is an enqueue back operation for which a tail token server 
executes an instance  of Algorithm \ref{alg32}, then it is linearized in 
the configuration resulting from the execution of either line \lref{ln:a33-5}, 
or line \lref{ln:a33-7}, or line \lref{ln:a38h-23}, or line \lref{ln:a38h-81}, 
whichever is executed for $op$ in that instance of Algorithm \ref{alg32} by 
the tail token server. 
\item If $op$ is a dequeue back operation for which a head token server 
executes an instance of Algorithm \ref{alg32}, then it is linearized in 
the configuration resulting from the execution of either line \lref{ln:a34-4}, 
or line \lref{ln:a34-6.1}, or line \lref{ln:a38h-23}, or line \lref{ln:a38h-51}, 
whichever is executed for $op$ in that instance of Algorithm \ref{alg32} by 
the tail token server.
\item If  $op$ is an enqueue front operation for which a tail token server 
executes an instance  of Algorithm \ref{alg32}, then it is linearized in 
the configuration resulting from the execution of either line \lref{ln:a35-5}, 
or line \lref{ln:a35-7}, or line \lref{ln:a38h-13},  or line \lref{ln:a38h-31}, 
whichever is executed for $op$ in that instance of Algorithm \ref{alg32} by 
the head token server. 
\item If $op$ is a dequeue front operation for which a head token server 
executes an instance of Algorithm \ref{alg32}, then it is linearized in 
the configuration resulting from the execution of either line \lref{ln:a36-5}, 
or line \lref{ln:a36-7}, or line \lref{ln:a38h-13}, or line \lref{ln:a38h-61}, 
whichever is executed for $op$ in that instance of Algorithm \ref{alg32} by 
the head token server.
\end{compactitem}

\begin{lemma}
The linearization point of an enqueue (dequeue) operation $op$ is placed in its execution interval.
\end{lemma}

\begin{proof}
Assume that $op$ is an enqueue back operation and let $c$ be the client that invokes it. 
After the invocation of $op$, $c$ sends a message to some server $s$ (line \lref{ln:a37-2}) 
and awaits a response. Recall that routine {\tt receive()} (line \lref{ln:a37-3}) 
blocks until a message is received. The linearization point of $op$ is placed in the 
configuration resulting from the execution of either  line \lref{ln:a33-5}, or line 
\lref{ln:a33-7}, or line \lref{ln:a38h-23}, or line \lref{ln:a38h-81}  by  $s_t$ for $op$. Notice that since the 
execution of Algorithm \ref{alg32} by $s_t$ is triggered by a message that contains 
the request for $op$, either of these lines is executed after the request by $c$ is 
received, i.e. after $c$ invokes {\tt EnqueueTail}, and thus, after the execution interval 
of $op$ starts. 

By definition, the execution interval of $op$ terminates in the configuration resulting 
from the execution of line \lref{ln:a37-4}. By inspection of the pseudocode, this line 
is executed after line \lref{ln:a37-3}, i.e. after $c$ receives a response by some server. 
In the following, we show that the linearization point of $op$ occurs before this response 
is sent to $c$. 

Let $s_j$ be the server that $c$ initially sends the request for $op$ to. 
By observation of the pseudocode, we see that $c$ may either receive a response 
from $s_j$ if $s_j$ executes lines \lref{ln:a32-1}, or \lref{ln:a33-5}, 
or \lref{ln:a33-8}.

To arrive at a contradiction, assume that either of these lines is executed in $\alpha$ 
before the configuration in which the linearization point of $op$ is placed. Thus, a tail 
token server $s_t$ executes lines line \lref{ln:a33-5}, or line \lref{ln:a33-7}, or line \lref{ln:a38h-23}, or line \lref{ln:a38h-81},  
in a configuration following the execution of lines  \lref{ln:a32-1}, or \lref{ln:a33-5}, 
or \lref{ln:a33-8} by $s_j$. Since the algorithm is event-driven, inspection of the pseudocode 
shows that in order for a tail token server to execute these lines, it must receive a message 
containing the request for $op$ either from a client or from another server. 

Assume first that a tail token server executes the algorithm after receiving a message 
containing a request for $op$ from a client. This is a contradiction, since, on one hand, 
$c$ blocks until receiving a response, and thus, does not sent further messages requesting 
$op$ or any other operation, and since $op$ terminates after $c$ receives the response by 
$s_j$, and on the other hand, any other request from any other client concerns 
a different operation $op'$.

Assume next that a tail token server executes the algorithm after receiving a message 
containing the request for $op$ from some other server. This is also a contradiction since 
inspection of the pseudocode shows that after $s_j$ executes either of the lines that sends 
a response to $c$, it sends no further message to some other server and instead, 
terminates the execution of that instance of the algorithm.

The argumentation regarding dequeue back, enqueue front, and dequeue front operations is analogous.
\end{proof}

Denote by $L$ the sequence of operations which have been assigned linearization 
points in $\alpha$ in the order determined by their linearization points.
Let $C_i$ be the configuration at which the $i$-th operation $op_i$ of $L$ is 
linearized. Denote by $\alpha_i$, the prefix of $\alpha$ which ends with $C_i$ 
and let $L_i$ be the prefix of $L$ up until the operation that is linearized 
at $C_i$. Denote by $D_i$ the sequence of values that a sequential deque contains 
after applying the sequence of operations in $L_i$, in order, starting from an 
empty deque; let $D_0 = \epsilon$, i.e. $D_0$ is the empty sequence.
In the following, we denote by $s_{t_i}$ the tail token server at $C_i$ 
and by $s_{h_i}$ the head token server at $C_i$.

\begin{lemma}
For each  $i$, $i \geq 0$, if 
$ld_i^j$ are the contents of the local deque of server $s_j$ at 
$C_i$, $h_i \leq j \leq t_i$, at $C_i$, then it holds that 
$D_i = ld_i^{h_i} \cdot ld_i^{{h_i}+1} \cdot \ldots \cdot ld_i^{t_i}$ at $C_i$.
\end{lemma}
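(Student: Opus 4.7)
The plan is to prove this invariant by induction on $i$, paralleling the argument given for the token-based queue. The base case $i=0$ is trivial since $D_0 = \epsilon$ and every local deque is empty in the initial configuration, so the concatenation on the right-hand side is also empty.

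For the inductive step, assume the claim holds at $C_i$ and show it for $C_{i+1}$. I split into four cases according to the type of $op_{i+1}$: enqueue back (\ENQT), dequeue back (\DEQT), enqueue front (\ENQH), and dequeue front (\DEQH). By symmetry it suffices to treat \ENQT\ and \DEQT\ in detail; the front-end cases are analogous with the roles of $h_i$ and $t_i$ swapped and the direction of token movement reversed, using the corresponding clauses of Observation \ref{obs:tdq:direction}.

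In each case I further split based on whether the relevant token server changes between $C_i$ and $C_{i+1}$. If $t_i = t_{i+1}$ and $op_{i+1}$ is an \ENQT, then by the linearization rule $op_{i+1}$ is linearized either at the successful-enqueue line \lref{ln:a33-7} or at the full-deque NACK line \lref{ln:a33-5}. In the first subcase, $ld_{i+1}^{t_i}$ is $ld_i^{t_i}$ extended by the enqueued value, and by Observation \ref{obs:tdq:tokenserver} no other local deque changes between $C_i$ and $C_{i+1}$, so $D_{i+1} = D_i \cdot v_{i+1}$ matches the updated concatenation; in the second subcase the $fullDeque$ flag combined with the induction hypothesis forces $D_i$ to be full, so $D_{i+1} = D_i$ unchanged. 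If $t_i \neq t_{i+1}$, Observation \ref{obs:tdq:direction} implies the local deque of $s_{t_i}$ is full and the tail token has been forwarded to the next server $s_{t_i+1} = s_{t_{i+1}}$, whose local deque must be empty; the new token server then either appends the element to its local deque (extending the concatenation by a new slot) or, if the global deque is full, sends \NACK. The \DEQT\ case is symmetric, using the backward direction of tail-token movement when the local deque is empty, and the \NACK\ subcase only fires when $h_i=t_i$ and the single local deque is empty, matching $D_i = \epsilon$.

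The main obstacle will be handling the auxiliary routines \texttt{ServeOldTailOps} and \texttt{ServeOldHeadOps}, which can batch-serve several pending client requests and even perform elimination between matched \ENQT/\DEQT\ or \ENQH/\DEQH\ pairs within a single triggering event. Each served request corresponds to its own linearization point under the assignment above, so between any two consecutive $C_i, C_{i+1}$ at most one such serving step occurs; I will need to verify that the order in which these helpers process the $clients$ array is consistent with the order of linearization points, and that each eliminated pair (which does not touch the local deque) is linearized as an \emph{enqueue} immediately followed by the matching \emph{dequeue} of the same datum, so that the net effect on $D$ is also zero. Modulo this bookkeeping, the argument reduces to applying the queue-style proof at both endpoints simultaneously, aided by the fact that exactly one local deque is modified between $C_i$ and $C_{i+1}$ and that token boundaries always sit at empty local deques whenever they move.
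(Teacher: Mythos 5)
Your proof takes essentially the same route as the paper's: induction on $i$, one detailed case for \ENQT\ split by whether the tail token moves, using Observations~\ref{obs:tdq:tokenserver} and~\ref{obs:tdq:direction}, with the remaining three operation types dismissed by symmetry. The structure and the key steps match.

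One point worth flagging: you correctly identify the elimination branch of \texttt{ServeOldTailOps}/\texttt{ServeOldHeadOps} (lines~\lref{ln:a38h-23} and~\lref{ln:a38h-13}) as the delicate spot, since there a matched \ENQT/\DEQT\ (or \ENQH/\DEQH) pair is linearized at the same program step without the local deque being touched. The paper's own proof never treats this case at all — it discusses only the direct-enqueue lines (\lref{ln:a33-7}, \lref{ln:a38h-81}) and the NACK line, even though the elimination lines are explicitly listed among the linearization points — so your attention to it is an improvement, not a weakness. The resolution you sketch is the right one: linearize the eliminated enqueue immediately before its matching dequeue at the same configuration, so their net effect on $D$ cancels and the right-hand concatenation is unchanged, preserving the invariant. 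To fully close the argument you would also want to note explicitly that the loop in the serving routine visits the $clients$ array entries one at a time, so the linearization points produced within a single invocation of the helper are totally ordered by the order of iteration, and no two non-eliminated served requests share a configuration.
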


\begin{proof}
We prove the claim by induction on $i$.
The claim holds trivially at $i = 0$.

Fix any $i \geq 0$ and assume that at $C_i$, it holds that 
$D_i$ = $ld_i^{h_i} \cdot ld_i^{{h_i}+1} \cdot \ldots \cdot ld_i^{t_i}$. 
We show that the claim holds for $i+1$.

Assume that $op_{i+1}$ is an enqueue back operation by client $c$. 
Furthermore, distinguish the following two cases: 
\begin{itemize}
\item Assume that $t_i = t_{i+1}$. Then, by the induction hypothesis, 
$D_i = ld_i^{h_i} \cdot ld_i^{{h_i}+1} \cdot \ldots \cdot ld_i^{t_i}$. In case the local 
queue of $s_{t_i}$ is not full, $s_{t_i}$ enqueues the value $v_{i+1}$ of the 
$data$ field of the request for $op_{i+1}$ in the local deque (line \lref{ln:a33-7} 
or line \lref{ln:a38h-81}). Notice that, by Observation \ref{obs:tdq:tokenserver} changes on the local deques of servers occur only 
on token servers. Notice also that those changes occur only in a step that immediately 
precedes a configuration in which an operation is linearized. Thus, no further 
change occurs on the local deques of $s_{h_i}, s_{{h_i}+1}, \ldots, s_{t_i}$ 
between $C_i$ and $C_{i+1}$, other than the enqueue on $ld_i^t$. Then, it holds 
that $D_{i+1} = D_i \cdot v_{i+1} = ld_i^{h_i} \cdot ld_i^{{h_i}+1} \cdot \ldots \cdot ld_i^{t_i} \cdot v_{i+1} = 
ld_i^{h_i} \cdot ld_i^{{h_i}+1} \cdot \ldots \cdot ld_{i+1}^{t_i} = 
ld_{i+1}^{h_i} \cdot ld_{i+1}^{{h_i}+1} \cdot \ldots \cdot ld_{i+1}^{t_i}$, and
if the head token server does not change between $C_i$ and $C_{i+1}$, then $h_{i+1} = h_i$ and 
$D_{i+1} = ld_{i+1}^{h_{i+1}} \cdot ld_{i+1}^{{h_{i+1}}+1} \cdot \ldots \cdot ld_{i+1}^{t_{i+1}}$ 
and the claim holds. If the head token server changes, i.e., if $h_{i+1} \neq h_i$, then by 
Observation \ref{obs:tdq:direction}, $ld_{i+1}^{h_i} = \emptyset$ and the claim holds again.

In case the local deque of $s_{t_i}$ is full and since by assumption, $s_{t_i} = s_{t_{i+1}}$, 
it follows by inspection of the pseudocode (line \lref{ln:a30-5}) and the definition of linearization 
points, that $s_{t_{i+1}} = s_{h_{i+1}}$. In this case, $s_{t_{i+1}}$ responds with a \NACK\ 
to $c$ and the local deque remains unchanged. Since no token server changes between $C_i$ and $C_{i+1}$, 
$D_{i+1} = D_i = ld_i^{h_i} \cdot ld_i^{{h_i}+1} \cdot \ldots \cdot ld_i^{t_i} = 
ld_{i+1}^{h_{i+1}} \cdot ld_{i+1}^{{h_{i+1}}+1} \cdot \ldots \cdot ld_{i+1}^{t_{i+1}}$ 
and the claim holds.

\item Next, assume that $t_i \neq t_{i+1}$. This implies that the local deque of $s_{t_i}$ 
is full just after $C_i$. Observation \ref{obs:tdq:direction} implies that $s_{t_i}$ 
forwarded the token to $s_{t_i+1}$ in some configuration between $C_i$ and $C_{i+1}$. 
Notice that then, $s_{t_i+1} = s_{t_{i+1}}$. If the local deque of $s_{t_{i+1}}$ is 
not full, then the condition of line \lref{ln:a33-6.1} evaluates to \true\ and therefore, line 
\lref{ln:a33-7} is executed, enqueueing value $v_{i+1}$ to it. Then at $C_{i+1}$, $ld_{i+1}^{t_{i+1}} = v_{i+1}$. 
By definition, $D_{i+1} = D_i \cdot v_{i+1}$, and therefore, 
$D_{i+1} = ld_i^{h_i} \cdot ld_i^{{h_i}+1} \cdot \ldots \cdot ld_i^{t_i} \cdot v_{i+1} = 
ld_{i+1}^{h_{i+1}} \cdot ld_{i+1}^{{h_{i+1}}+1} \cdot \ldots \cdot ld_{i+1}^{t_i} \cdot v_{i+1} = 
ld_{i+1}^{h_{i+1}} \cdot ld_{i+1}^{{h_{i+1}}+1} \cdot \ldots \cdot ld_{i+1}^{t_i} \cdot ld_{i+1}^{t_{i+1}}$ 
and the claim holds.  If the local deque of $s_{t_{i+1}}$ is full, then the condition of 
line \lref{ln:a33-6.1} evaluates to \false\ and therefore, line \lref{ln:a33-5} is executed. The operation 
is linearized in the resulting configuration and \NACK\ is sent to $c$. Notice that in that 
case, the local deque of the server is not updated. Then, 
$D_{i+1} = D_i = ld_i^{h_i} \cdot ld_i^{{h_i}+1} \cdot \ldots \cdot ld_i^{t_i} \cdot ld_{i+1}^{t_{i+1}} = 
ld_{i+1}^{h_{i+1}} \cdot ld_{i+1}^{{h_{i+1}}+1} \cdot \ldots \cdot ld_{i+1}^{t_i} \cdot ld_{i+1}^{t_{i+1}}$, 
and the claim holds. 
\end{itemize}
The reasoning for the case where $op_{i+1}$ is an instance of a dequeue back, enqueue front, or enqueue back operation is symmetrical.
\end{proof}

From the above lemmas and observations we have the following theorem.

\begin{theorem}
The token-based distributed deque implementation is linearizable. 
The time complexity and the communication complexity of each operation $op$ is $O(\maxser)$.
\end{theorem}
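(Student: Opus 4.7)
The plan is to assemble linearizability from the two results already established: the lemma placing each operation's linearization point inside its execution interval, and the lemma showing that at every configuration $C_i$ the concatenation $ld_i^{h_i}\cdot ld_i^{h_i+1}\cdots ld_i^{t_i}$ coincides with $D_i$, the state obtained by applying $L_i$ sequentially to an initially empty deque. The first ensures that $L$ respects $\prec_{\alpha}$, while the second captures the state invariant. What remains is to match the response each operation returns in $\alpha$ to the response it would return in $\sigma$ (the sequential execution induced by $L$). To handle the complexity claim, I will separately bound the number of messages and the time each operation takes.

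For the response-matching part I would do a short case analysis parallel to the structure used for the directory-based deque. For an enqueue back operation $op_{i+1}$ linearized at line~\lref{ln:a33-7} or line~\lref{ln:a38h-81}, the preceding state lemma yields $D_{i+1}=D_i\cdot v_{i+1}$, so the \ACK\ returned to the client is consistent with sequential semantics. For the \NACK\ case (line~\lref{ln:a33-5}), the executing server is simultaneously tail and head token server with $fullDeque$ true and local deque full; by Observation~\ref{obs:tdq:unique} and Observation~\ref{obs:tdq:direction}, the token completed a full revolution of the ring leaving every local deque full, so $D_i$ is indeed full, matching the \NACK\ response in $\sigma$. Dequeue back is symmetric: when line~\lref{ln:a34-6.1} is executed the removed element equals $ld_i^{t_i}$, which by the state lemma is the last element of $D_i$; when \NACK\ is returned at line~\lref{ln:a34-4} the guard forces $hasHead \wedge hasTail \wedge \neg fullDeque \wedge \mathrm{IsEmpty}(ldeque)$, which together with the invariant implies $D_i=\epsilon$. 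Enqueue front and dequeue front are symmetric, and I would also cover the case where the linearization point lies on one of the elimination lines \lref{ln:a38h-23} or \lref{ln:a38h-13} inside \texttt{ServeOldTailOps}/\texttt{ServeOldHeadOps}: there the paired enqueue and dequeue cancel, leaving $D_{i+1}=D_i$ (apart from the reordering they induce in $L$) and the dequeuer receives exactly the value the matched enqueuer provided.

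For complexity, I will show that a request $r$ issued by client $c$ to server $s_j$ causes at most $\maxser$ inter-server messages. The request is forwarded along the ring (lines \lref{ln:a33-14}/\lref{ln:a33-15} and analogues), making at most $\maxser-1$ hops before either (i) arriving at the token server relevant to $op$, which directly replies with \ACK/\NACK\ (one additional message, plus the final response to $c$), or (ii) arriving back at $s_j$, which by the \If\ test of line~\lref{ln:a32-19} finds the entry in its $clients$ array already marked $isServed$ and responds to $c$ from the stored data. In both cases the total message count is $O(\maxser)$; since each hop incurs one unit of message delay and constant local processing, the time complexity is also $O(\maxser)$.

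The hard part will be rigorously justifying the termination clause in~(ii): that whenever a request $r$ returns to $s_j$, it has already been served. The argument relies crucially on FIFO channels and on the fact that servers process incoming messages atomically in arrival order. After $s_j$ forwards $r$ and records it in $clients[cid]$, the only way $r$ can come back is for the relevant token to have traversed the ring and reached $s_j$ in the meantime. When that token message arrives, it is processed before any subsequent message on the same channel, and on entry the server invokes \texttt{ServeOldTailOps} or \texttt{ServeOldHeadOps}, which scans $clients$ and sets $isServed=\true$ for $r$ (either directly via the local deque or through elimination). I would formalize this as an invariant: for every live entry $\mathit{clients}[cid]$ at server $s$ awaiting a head (tail) token, once the corresponding token reaches $s$ the entry is marked served before any further message from the ring direction of the forwarded request is processed. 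Combined with Observations~\ref{obs:tdq:unique} and~\ref{obs:tdq:direction}, this guarantees that $r$ completes within one round of the ring, closing both the progress and the $O(\maxser)$ complexity arguments.
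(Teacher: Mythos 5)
Your proposal follows essentially the same route as the paper: the paper proves this theorem by a single sentence citing the preceding lemmas and observations (linearization points inside execution intervals, the state invariant $D_i = ld_i^{h_i}\cdots ld_i^{t_i}$, token uniqueness and direction), with the FIFO/\texttt{clients}-array termination argument carried over from the queue section by analogy. You simply make explicit the response-matching case analysis and the $O(\maxser)$ hop-counting that the paper's one-line proof leaves to the reader, which is the expected expansion of the same argument.
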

\subsection{Hierarchical approach.}
In this section, we outline how the hierarchical approach, 
described in Section~\ref{sec:impl-paradigms}, is applied to the 
token-based designs.

Only the island masters play the role of clients to the algorithms described in this section.
So, it is each island master $m_i$ that keeps track of the last server(s), which responded
to its batches of requests. In the stack and deque implementations, $m_i$ performs elimination before
contacting a server. In the queue implementation, batching is done by having each batch containing requests of the same type.
In the deque implementation, each batch contains requests of the same type that are to be applied to the same endpoint. 
A batch can be sent to a server using DMA; the same could be done for getting back the responses. 
A server that does not hold the appropriate token to serve a batch of requests, 
forwards the entire batch to the next (or previous) server. Since token-based algorithms exploit
locality, a batch of requests will be processed by at most two servers.


\subsection{Dynamic Versions of the Implementations}
\label{app:directory-dyn}

\begin{algorithm}[!b]
\caption{Events triggered in a server of a dynamic token-based deque.}
\label{alg30-d}
\begin{code}
\lreset
	\firstline
    a message $\langle op, data, cid, sid, tk\rangle$ is received:           \nl
 \n   \If\ (!clients[$cid$] AND clients[$cid$].$isServed$) $\lbrace$ \ul
                                       \cm{If message has been served earlier.} \nl
 \n     send($cid, \langle\ACK,$ clients[$cid$].$data,\ my\_sid\rangle$);       \nl
        clients[$cid$] = $\bot$;                                                \nl
 \p   $\rbrace$ \Else\ $\lbrace$                                                \nl
 \n     \Switch\ ($op$) $\lbrace$                                               \nl
 \n 	  \Case\ \ENQ:                                      \label{ln:alg30d-1} \nl
 \n         \If\ ($tk$ == \TAIL) $\lbrace$                                      \nl
 \n           $hasTail$ = \true;                                                \nl
              ServeOldEnqueues();                                               \ul
 \p         $\rbrace$                                                           \nl
            \If\ (!$hasTail$) $\lbrace$         \cm{Server does not have token} \nl
 \n           $nsid$ = find\_next\_server($my\_sid$);                           \nl
              \If\ ($sid == -1$) $\lbrace$                   \cm{From client.}  \nl
 \n             clients[$cid$] = $\langle\ENQ, data, \false\rangle$;            \nl
                send($nsid,\langle\ENQ, data, cid, my\_sid, \bot\rangle$);      \nl
 \p           $\rbrace$ \Else\ $\lbrace$                      \cm{From server.} \nl
 \n             send($nsid,\langle\ENQ, data, cid, sid, \bot\rangle$);          \ul
 \p           $\rbrace$                                                         \nl
 \p         $\rbrace$ \Elseif\ (!IsFull($lqueue$)) $\lbrace$                    \nl
 \n          enqueue($lqueue, data, tail\_round$);                              \nl
             \If\ ($sid == -1$)                               \cm{From client.} \nl
 \n            send($cid, \langle\ACK, \bot, my\_sid\rangle$);                  \nl
 \p          \Else                                            \cm{From server.} \nl
 \n            send($sid, \langle\ACK, \bot, cid, my\_sid, \bot\rangle$);       \nl
 \p\p       $\rbrace$ \Else\ $\lbrace$ \cm{Server moves the tail token} 
                                                            \label{ln:alg30d-2} \nl                   
 \n           $nsid$ = find\_next\_server($my\_sid$);                           \nl
              send($nsid, \langle op, data, cid, my\_sid, \TAIL\rangle$);       \nl
              $tail\_round++$;                              \label{ln:alg30d-3} \nl
              allocate\_new\_space($lqueue, tail\_round$);  \label{ln:alg30d-4} \nl
              $hasTail$ = \false;                                               \ul
 \p        $\rbrace$                                                            \nl
\Break;
\end{code}
\end{algorithm}

\begin{algorithm}[!t]
\begin{code}
     \firstline
 \p~~~~~~~~~~\Case\ \DEQ:                                                       \nl
 \n        \If\ ($tk$ == \HEAD) $\lbrace$                                       \nl
 \n          $hasHead$ = \true;                                                 \nl
             ServeOldDequeues();                                                \ul
 \p        $\rbrace$                                                            \nl
           \If\ (!$hasHead$) $\lbrace$                                          \nl
 \n          $nsid$ = find\_next\_server($my\_sid$);                            \nl
           \If\ ($sid == -1$) $\lbrace$                        \cm{From client} \nl
 \n            clients[$cid$] = $\langle\DEQ, \bot, \false\rangle$;             \nl
               send($nsid, \langle\DEQ, \bot, cid, my\_sid\rangle$);            \nl
 \p          $\rbrace$ \Else\ $\lbrace$                        \cm{From server} \nl
 \n            send($nsid, \langle\DEQ, \bot, cid, sid, \bot\rangle$);          \ul
 \p          $\rbrace$                                                          \nl
 \p        $\rbrace$ \Elseif\ (!IsEmpty($lqueue$)) $\lbrace$  \cm{can dequeue.} \nl
 \n          $data$ = dequeue($lqueue, head\_round$);                           \nl
             \If\ ($sid == -1$)                                \cm{From client} \nl
 \n            send($cid, \langle\ACK, data, my\_sid\rangle$);                  \nl
 \p          \Else\                                            \cm{From server} \nl
 \n            send($sid, \langle\ACK, data, cid, my\_sid, \bot\rangle$);       \nl
 \p\p      $\rbrace$ \Elseif\ ($tail\_round == head\_round$) $\lbrace$ 
                                                            \label{ln:alg30d-5} \nl
 \n          $tail\_round = head\_round = 0$;               \label{ln:alg30d-6} \nl
 \n          \If\ ($sid == -1$)                            \cm{empty to client} \nl
 \n            send($cid, \langle\NACK, \bot, my\_sid\rangle$);                 \nl
 \p          \Else \                                       \cm{empty to server} \nl
 \n            send($sid, \langle\NACK, \bot, cid, my\_sid, \bot\rangle$ );     \nl
 \p\p      $\rbrace$ \Else\ $\lbrace$          \cm{Move the head token to next} \nl
 \n          $nsid$ = find\_next\_server($my\_sid$);                            \nl
             send($nsid, \langle op, \bot, cid, my\_sid, \HEAD\rangle$);        \nl
             $head\_round++$;                               \label{ln:alg30d-7} \nl
             $hasHead$ = \false;                                                \ul
 \p        $\rbrace$                                                            \nl
           \Break;                                                              \nl
 \p      \Case\ \ACK:                                                           \nl
 \n        clients[$cid$] = $\bot$;                                             \nl
           send($cid, \langle\ACK, data, sid\rangle$);                          \nl
           \Break;                                                              \nl
 \p      \Case\ \NACK:                                                          \nl
 \n        clients[$cid$] = $\bot$;                                             \nl
           send($cid, \langle\NACK, \bot, sid\rangle$);                         \nl
           \Break;                                                              \ul
 \p\p  $\rbrace$                                                                \ul
 \p  $\rbrace$
 \p\p 
\end{code}
\end{algorithm}

The implementations presented above (in Section~\ref{app:token}) are static. 
Their dynamic versions retain the placement of servers in a logical ring, and the token that 
renders the server able to execute operations in its local partition. In the static 
versions of the algorithms, when the servers consume all their predefined space for 
the data structure, the global (implemented) data structure is considered full, and the token server 
was sending \NACK\ to clients to notify them of this event.

In the dynamic version, though, there is no upper bound to the number of elements 
that can be stored in the data structure. In order to modify the static version of
the structures of this section, we remove the mechanism that sends \NACK\ messages 
to clients. Instead, every time a server $s$ receives the token (regarding inserts), it allocates 
an additional chunk of memory for its local partition. Because of this circular 
movement of the token, the elements are stored along a spiral path, that spans over 
all servers. Each chunk is marked with 
a sequence number, associated with the coil of the spiral, to distinguish the order 
of allocation.

An example of the transformation of a static algorithm to a dynamic is the dynamic 
version of the queue algorithm, presented in Algorithm~\ref{alg30-d}. In this design
a server $s$ uses two tokens, the head and tail token. In analogy, $s$ deploys two 
variables ($tail\_round$ and $head\_round$) to count the times the 
tokens have come to its possession. When a server $s$ receives an \ENQ\ message (line 
\lref{ln:alg30d-1}) but has no space left to store the element (line \lref{ln:alg30d-2}), 
it forwards the request along with the token to the next server in the ring. 
Afterwards, $s$ increases by one the variable $tail\_round$ and allocates a new 
memory chunk, by calling \texttt{allocate\_new\_space()}, to be used during the next time 
the token comes to its possession.  

For the \DEQ\ operation, the server performs additional actions concerning the 
empty queue state (line \lref{ln:alg30d-5}), where after responding with a \NACK, 
it re-initializes $tail\_round$ and $head\_round$ to be equal to zero
(line \lref{ln:alg30d-6}). An empty queue implies that the allocated chunks for
\texttt{lqueue} are also empty, hence they can be recycled and be used again anew. 
During the head token transition, $s$ increases $head\_round$ by one
chunk (line \lref{ln:alg30d-7}), so that when the head token comes to its possession 
to dequeue from the next memory. 

The double ended queue (deque) algorithm is going to work verbatim after these 
modifications. For the stack implementation the modifications are analogous. In this 
design there is one token, hence each server associates one counter with the token 
rounds. Each time a server $s$ moves the token to another server because the local 
stack is full, it increases the counter and allocates a new chunk for future use, and 
$s$ moves the token due to an empty stack, the counter is decreased by one.  
However, the dynamic design for the stack would introduce the 
termination problem described for queues. Nevertheless, the problem can be solved by 
applying the same technique of using client arrays as we did to solve the problem in the 
queue implementation.

\section{Distributed Lists}
\label{app:list}
A {\em list} is an ordered collection of elements.
It can either be {\em sorted}, in which case 
the elements appear in the list in increasing (or decreasing) order of their keys, 
or {\em unsorted}, in which case the elements appear  in the list in some arbitrary order 
(e.g. in the order of their insertion).
A list $L$  supports the operations {\em Insert}, {\em Delete}, and {\em Search}. 
Operation {\em Insert(L, k, I)} inserts an element with key $k$ and associated info $I$ to 
$L$. Operation {\em Delete(L, k)} removes the element with key $k$ from $L$ (if it exists), 
while operation {\em Search(L, k)} detects whether an element with key $k$ is present in 
$L$ and returns the information $I$ that is associated with $k$. 


In this section, we first provide an implementation of an unsorted distributed list 
in which we follow a token-based approach for implementing $Insert$. In this implementation, 
Search and Delete are highly parallel.
We then build on this approach in order to get a distributed implementation of a sorted list. 

\subsection{Unsorted List}
\label{app:unsorted-list}
The list state is stored distributedly in the local memories of several of the available servers, 
potentially spreading among all of them, if its size is large enough. The proposed implementation  
follows a token-based approach for implementing insert. Thus, we assume that the servers are arranged 
on a logical ring, based on their ids.

\begin{wrapfigure}{L}{0.7\textwidth}
	\begin{minipage}{.65\textwidth}
		\begin{algorithm}[H]
			\small
			\caption{Events triggered in a server of the distributed unsorted list.}
			\label{fig:dl-server}
			\begin{code}
				\lreset
				\firstline
				List $llist$ = $\varnothing$; \nl
				\integer\ $my\_id$, $next\_id$, $token$ = $0$, $round$ = $0$; \bl\bl\nl
				
				a message $\langle op, cid, key, data, mloop, tk\rangle$ is received:       \nl
				\n   \Switch\ ($op$) $\lbrace$                                                 \nl
				\n     \Case\ \INSERT:                                   \label{code:ser:ins1} \nl
				\n       \If\ ($tk$ == \TOKEN) $\lbrace$                 \label{code:ser:tok1} \nl
				\n         $token$ = $my\_id$;                           \label{code:ser:tok2} \nl
				allocate\_new\_memory\_chunk($llist, round$); \label{code:ser:tok3} \ul
				\p       $\rbrace$                                                             \nl
				$status_1$ = search($llist$, $key$);    \label{code:ser:ins2} \nl 
				\If\ ($status_1$) send($cid$, \NACK);           \label{code:ser:ins3} \nl 
				\Else\ $\lbrace$                                \label{code:ser:ins4} \nl 
				\n         \If\ ($token \neq my\_id$) $\lbrace$          \label{code:ser:ins5} \nl 
				\n           $next\_id$ = get\_next($my\_id$);           \label{code:ser:ins7} \nl 
				\If\ ($my\_id  \neq \maxser-1$) $\lbrace$ 
				\label{code:ser:ins6} \nl 
				\n               send($next\_id$, $\langle op, cid, key, data, mloop, tk\rangle$);
				\label{code:ser:ins8} \nl 
				\p           $\rbrace$ \Else\ 
				\n             send($next\_id$, $\langle op, cid, key, data, \true, tk\rangle$);
				\label{code:ser:ins9} \nl 
				\p\p       $\rbrace$ \Else\ $\lbrace$                   \label{code:ser:ins10} \nl 
				\n           \If\ (($my\_id \neq \maxser - 1$) AND ($round>0$) AND !($mloop$)) 
				$\lbrace$ \label{code:ser:ins11} \nl
				\n             $next\_id$ = get\_next($my\_id$);        \label{code:ser:ins12} \nl
				send($next\_id$, $\langle op, cid, key, data, mloop, tk\rangle$);
				\label{code:ser:ins13} \nl 
				\p           $\rbrace$ \Else\ $\lbrace$                 \label{code:ser:ins14} \nl 
				\n             $status_2$ = insert($llist, round, key, data$); 
				\label{code:ser:ins15} \nl 
				\If\ ($status_2$ == \false) $\lbrace$    \label{code:ser:ins16} \nl
				\n               $round++$;                             \label{code:ser:ins17} \nl 
				$token$ = get\_next($my\_id$);      	 \label{code:ser:ins18} \nl
				send($token$, $\langle op, cid, key, data, mloop, \TOKEN\rangle$);
				\label{code:ser:ins21} \nl
				\p             $\rbrace$ \Else\ send($cid$, \ACK);      \label{code:ser:ins22} \ul
				\p           $\rbrace$                                                         \ul
				\p         $\rbrace$                                                           \ul
				\p       $\rbrace$                                                             \nl
				\Break;                                                               \nl
				\p     \Case\ \SEARCH:                                  \label{code:ser:srch1} \nl
				\n       $status_1$ = search($llist$, $key$);   \label{code:ser:srch2} \nl 
				\If\ ($status_1$) 
				send($cid$, $\langle \ACK, my\_id \rangle$); \label{code:ser:srch4} \nl 
				\Else\   
				send($cid$, $\langle \NACK, my\_id \rangle$);\label{code:ser:srch6} \nl			
				\Break;                                        \label{code:ser:srch7} \nl
				\p     \Case\ \DELETE:                                   \label{code:ser:del1} \nl
				\n       $status_1$ = delete($llist$, $key$);      \label{code:ser:del2} \nl 
				\If\ ($status_1$)
				send($cid$, \ACK);                            \label{code:ser:del4} \nl 
				\Else\  
				send($cid$, \NACK);                           \label{code:ser:del6} \nl			
				\Break;                                         \label{code:ser:del7} \ul
				\p\p $\rbrace$
				\p
			\end{code}
		\end{algorithm} 
	\end{minipage}
\end{wrapfigure}

At each point in time, there is a server (not necessarily always the same), denoted by $s_t$, 
which holds the insert token, and serves insert operations. Initially, server $s_0$ has the 
token, thus the first element to be inserted in the list is stored on server $s_0$. Further 
element insertions are also performed on it, as long as the space it has allocated for the 
list does not exceed a threshold.
In case server $s_0$ has to service an insertion but its space is filled up, it 
forwards the token by sending a message to the next server, 
i.e. server $s_1$. Thus, if server $s_i$, $0 \leq i < \maxser$, has the token, but cannot 
service an insertion request without exceeding the threshold, it forwards the token to server 
$s_{(i+1) \mod \maxser}$. When the next server receives the token, it allocates a memory chunk 
of size equal to threshold, to store list elements. When the token reaches $s_{\maxser -1}$, if 
$s_{\maxser -1}$ has filled all the local space up to a threshold, it sends the token again to 
$s_0$. Then, $s_0$ allocates more memory (in addition to the memory chunk it had initially allocated 
for storing list elements) for storing more list elements. The token might go through the server 
sequence again without having any upper-bound restrictions concerning the number of round-trips. 
In order for a server to know whether the token has performed a round-trip on the ring, and hence 
all servers have stored list elements, it deploys a variable to count the number of ring round-trips 
it knows that the token has performed.

\begin{wrapfigure}{L}{0.65\textwidth}
	\begin{minipage}{.6\textwidth}
		\begin{algorithm}[H]
			\small
			\caption{Insert, Search and Delete operation for a client of the distributed list.}
			\label{fig:dl-client}
			\begin{code}
				\firstline 
				\bool\ ClientInsert(int $cid$, int $key$, data $data$) \{ \label{code:ins:0} \nl
				\n  \bool\ $status$;                                                         \bl \nl
				
				send($0$, $\langle\INSERT, cid, key, data, \false, -1\rangle$); \label{code:ins:1} \nl
				$status$ = receive();                                   \label{code:ins:2} \nl
				\return\ $status$;                                      \label{code:ins:3} \ul 
				\p \}
				\bl \nl
				\bool\ ClientSearch(int $cid$, int $key$) \{             \label{code:srch:0} \nl
				\n  \integer\ $sid$;                                                             \nl 
				\integer\ $c$ = $0$;                                                         \nl 
				\bool\ $status$;                                           			      \nl
				\bool\ $found$ = \false;                                                 \bl \nl
				
				send\_to\_all\_servers($\langle\SEARCH, cid, key, \bot, \false, -1\rangle$); \label{code:srch:1} \nl
				\Do\ \{                                                  \label{code:srch:2} \nl
				\n    $\langle status, sid \rangle$ = receive();             \label{code:srch:3} \nl 
				\If\ ($status$ == \ACK) $found$ = \true;               \label{code:srch:4} \nl
				$c++$;                                                 \label{code:srch:5} \nl	
				\p \} \While\ ($c < \maxser$);                               \label{code:srch:6} \nl
				\return\ $found$;                                         \label{code:srch:7} \ul 
				\p\}                                                                         \bl \nl
				
				\bool\ ClientDelete(int $cid$, int $key$) \{              \label{code:del:0} \nl
				\n  \integer\ $sid$;                                                             \nl 
				\integer\ $c$ = $0$;                                                         \nl 
				\bool\ $status$;                                           			      \nl
				\bool\ $deleted$ = \false;                                               \bl \nl
				
				send\_to\_all\_servers($\langle\DELETE, cid, key, \bot, \false, -1\rangle$);  
				\label{code:del:1} \nl
				\Do\ \{                                                    \label{code:del:2} \nl
				\n    $\langle status, sid \rangle$ = receive();               \label{code:del:3} \nl 
				\If\ ($status$ == \ACK)   $deleted$ = \true;             \label{code:del:4} \nl
				$c++$;                                                   \label{code:del:5} \nl	
				\p  \} \While\ ($c < \maxser$);                                \label{code:del:6} \nl
				\return\ $deleted$;                                         \label{code:del:7} \ul 
				\p \}
			\end{code}
		\end{algorithm}
	\end{minipage}
\end{wrapfigure}

Event-driven code for the server is presented in Algorithm \ref{fig:dl-server}. 
Each server $s$ maintains a local list ($llist$ variable) allocated 
for storing list elements, a $token$ variable which indicates whether $s$ 
currently holds the token, and a variable $round$ to mark the ring round-trips 
the token has performed; $round$ is initially $0$, and is incremented after 
every transmission of the token to the next server. 

Each message a server receives 
has five fields: (1) $op$ that denotes the operation to be executed, (2) 
$cid$ that holds the id of the client that initiated a request, (3) 
$key$ that holds the value to be inserted, (4) $mloop$ stands for 
``message loop'', a \bool\ value that denotes if the message has traversed the 
whole server sequence and (5) $tk$ that is set when a forwarded message 
also denotes a token transition from one server to the other. 

When a message is received, the server $s$ first checks its type. If the message 
is of type \INSERT\ (line \lref{code:ser:ins1}), $s$ first checks whether the 
message has the $tk$ field marked. If it is marked (line \lref{code:ser:tok1}), 
$s$ sets a local variable $token$ equal to its own id (line \lref{code:ser:tok2}) 
and allocates additional space for its local part of the list (line \lref{code:ser:tok3}). 

Afterwards, $s$ searches the part of the list 
that it stores locally, for an element with the same key ($key$ variable in 
the algorithm) as the one to be inserted (line \lref{code:ser:ins2}). Searching
$llist$ for the element has to be performed independently of whether the server holds the
token or not. Since this design does not permit duplicate entries, if such an 
element is found, the server responds with \NACK\ to the client (line 
\lref{code:ser:ins3}). Otherwise (line \lref{code:ser:ins4}), $s$ checks whether 
the new element can be stored in $llist$.

In case $s$ does not hold the token (line \lref{code:ser:ins5}), it is not allowed 
to perform an insertion, therefore it must forward the message to the next server 
in the ring. If $s$ is not $s_{\maxser -1}$  (line 
\lref{code:ser:ins6}), it forwards to the next server the request (\lref{code:ser:ins8}). 
In case $s$ is $s_{\maxser -1}$, it means that all servers have 
been searched for the element and the element was not found. Server $s$ 
sends the message to the next server (in order to eventually reach the token server), 
after marking the $mloop$ field 
of the message as \true, to indicate that the message has completed a full round-trip on the ring 
(line \lref{code:ser:ins9}).

On the other hand, if $s$ holds the token (line \lref{code:ser:ins10}), it must 
first check whether there is room in $llist$ to insert the element in it. 
If there is room in $llist$ and the local variable $round$ of $s$
equals to \false\ (which means that the list does not expand to the next servers)
or the message has already performed a round-trip on the ring, 
then $s$ inserts the element and returns \ACK. 
If however, $round > 0$ and the message has not performed a round trip
on the ring ($mloop==\false$), $s$ continues forwarding the message.  

If the token server's local memory is out of sufficient space (line 
\lref{code:ser:ins16}) (i.e. the \texttt{insert()} function was unsuccessful), 
$s$ forwards the message to the next server 
the $tk$ field with \TOKEN\ (line \lref{code:ser:ins21}) to indicate
that this server will become the new token server after $s$. Also, $s$
increments $round$ by one to count the number of times the token has passed from it. 
The $round$ variable is also used by function \texttt{allocate\_new\_memory\_chunk()} 
that allocates additional space for the list (line \lref{code:ser:tok3}). 

Notice that, contrary to other token-based implementations presented in previous 
sections, the token server of the unsorted list does not need to rely on client tables in 
order to stop a message from being incessantly forwarded from one server to another, without 
ever being served. By virtue of having clients always sending their insert requests to $s_0$, 
an insert request $r_j$ that arrives at $s_0$ before some other insert request $r_k$, is 
necessarily served before $r_k$. The scenario where insert requests constantly arrive at the 
token server before $r_j$, making the token travel to the next server before $r_j$ can be served, 
is thus avoided.

Upon receiving a \SEARCH\ request from a client (line \lref{code:ser:srch1}), a server 
searches for the requested element in its local list (line \lref{code:ser:srch2}) and 
sends \ACK\ to the server if the element is found (line \lref{code:ser:srch4}) and \NACK\ 
otherwise (line \lref{code:ser:srch6}). 

Upon receiving a \DELETE\ request from a client (line \lref{code:ser:del1}), a server 
attempts to delete the requested element from its local list (line \lref{code:ser:del2}) 
and sends \ACK\ to the server if the deletion was successful (line \lref{code:ser:del4}). 
Otherwise it sends \NACK\ (line \lref{code:ser:del6}).

The pseudocode of the client is presented in Algorithm \ref{fig:dl-client}. Notice 
that insert operations in the proposed implementation are executed in sequence and 
must necessarily pass through server $0$ and be forwarded through the server ring, 
if necessary due to space constraints. Search and Delete operations, on the contrary, 
are executed in parallel. 

In order to execute an insertion, a client calls \texttt{ClientInsert()}
(line \lref{code:ins:0}) which sends an \INSERT\ message (line \lref{code:ins:1}) to 
server $0$, regardless of which server holds the token in any given configuration, and 
then blocks waiting for a response (line \lref{code:ins:2}). If the client receives 
\ACK\ from a server, then the element was inserted correctly. If the client receives 
\NACK, then the insertion failed, due to either limited space, or the existence of 
another element with the same key value.

For a search operation the client calls \texttt{ClientSearch()} (line \lref{code:srch:0}). 
The client sends a \SEARCH\ request to all servers (line \lref{code:srch:1}) and 
waits to receive a response message (line \lref{code:srch:3}) from each server (\Do\ 
\While\ loop of lines \lref{code:srch:2}-\lref{code:srch:6}). The requested element is 
in the list if the client receives \ACK\ from some server (line \lref{code:srch:4}). 
A delete operation proceeds similarly to \texttt{ClientSearch()}. It is initiated by 
a client by sending a \DELETE\ request to all servers (line \lref{code:del:1}). 
The client then waits to receive 
a response message (line \lref{code:del:3}) from each server (\Do\ \While\ loop of 
lines~\lref{code:del:2}-\lref{code:del:6}). The requested element has been found in 
the list of some client and deleted from there, if the client receives \ACK\ from 
some server $s$.

\subsubsection{Proof of Correctness}
We sketch the correctness argument for the proposed implementation by providing 
linearization points.  Let $\alpha$ be an execution of the distributed unsorted 
list algorithm presented in Algorithms \ref{fig:dl-server} and \ref{fig:dl-client}. 
We assign linearization points to insert, delete and search operations in $\alpha$ 
as follows: 
\begin{compactitem}
\item {\em Insert.} Let $op$ be any instance of {\tt ClienInsert} for which an \ACK\ or 
a \NACK\ message is sent by a token server. Then, if \ACK\ is sent by a token server for 
$op$ (line \lref{code:ser:ins22}), the linearization point is placed in the configuration 
resulting from the execution of line \lref{code:ser:ins15} that successfully inserted the 
required element into the server's local list. If \NACK\ is sent for $op$ (line \lref{code:ser:ins3}), 
then the linearization point is placed in the configuration resulting from the execution 
of line \lref{code:ser:ins2}, where the search operation on the local list of the server 
returned \true. 

\item Let $op$ be any instance of {\tt ClientDelete} for which an \ACK\ or a \NACK\ 
message is sent by a server. Then, if \ACK\ is sent by a server $s$ for $op$, the 
linearization point is  placed in the configuration resulting from the execution of 
line \lref{code:ser:del2} by the server that sent the \ACK. Otherwise, if the key $k$ 
that $op$ had to delete was not present in any of the local lists of the servers in the 
beginning of the execution interval of $op$, then the linearization point of $op$ is 
placed at the beginning of its execution interval. Otherwise, if $k$ was present but was 
deleted by a concurrent instance $op'$ of {\tt ClientDelete}, then the linearization 
point is placed right after the linearization point of $op'$.

\item Let $op$ be any instance of {\tt ClientSearch} for which an \ACK\ or a \NACK\ 
message is sent by a server. Then, if \ACK\ is sent by a server $s$ for $op$, the 
linearization point is  placed in the configuration resulting from the execution of 
line \lref{code:ser:srch2} by the server that sent the \ACK. Otherwise, if the key $k$ 
that $op$ had to find was not present in the list in the beginning of its execution 
interval, then the linearization point is placed there. Otherwise, if $k$ was present 
but was deleted by a concurrent instance $op'$ of {\tt ClientDelete}, then the 
linearization point is placed right after the linearization point of $op'$.
\end{compactitem}

\begin{lemma}
\label{lemma:list-lin}
Let $op$ be any instance of an insert, delete, or a search operation executed 
by some client $c$ in $\alpha$. Then, the linearization point of $op$ is placed 
in its execution interval.
\end{lemma}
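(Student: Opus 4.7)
The plan is to handle the three operation types separately, since the linearization rules differ. In each case I will verify two things: (a) that any line used to define a linearization point is actually executed between the client's invocation and the client's response, and (b) that the fall-back rules (placing a linearization point at the start of the execution interval, or immediately after the linearization point of a concurrent operation) still land inside the execution interval of $op$.

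\textbf{Insert.} By inspection of {\tt ClientInsert} (Algorithm~\ref{fig:dl-client}), after invocation the client executes line~\lref{code:ins:1} and then blocks on {\tt receive()} at line~\lref{code:ins:2}, so the execution interval of $op$ extends from the \send\ of the \INSERT\ message to the corresponding \receive. Any \ACK\ that the client receives is produced by the token server at line~\lref{code:ser:ins22} after executing the successful insert at line~\lref{code:ser:ins15}; any \NACK\ received is produced at line~\lref{code:ser:ins3} after the search at line~\lref{code:ser:ins2} evaluated to \true. Because the server's event handler can only run after the client's message was sent, and because the client unblocks only after the server's response is dispatched, both candidate linearization configurations occur strictly between the invocation and the response of $op$.

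\textbf{Delete and Search (the \ACK\ case).} For {\tt ClientDelete} and {\tt ClientSearch}, the client broadcasts to all servers (lines~\lref{code:del:1},~\lref{code:srch:1}) and then, in the \Do-\While\ loop, waits for $\maxser$ responses before returning. If some server $s$ sends \ACK\, then the success at line~\lref{code:ser:del2} or \lref{code:ser:srch2} necessarily occurs after $s$ receives the broadcast message (hence after the invocation) and before the \ACK\ is dispatched to the client (hence before the response, since the client does not return until all $\maxser$ responses, including this one, have been collected). So the chosen linearization configuration lies in the execution interval of $op$.

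\textbf{Delete and Search (the non-\ACK\ cases).} Suppose $op$ is a delete/search for key $k$ and no server replies \ACK. Two sub-cases arise. If $k$ was absent from every local list at the beginning of $op$'s execution interval, the linearization point is placed exactly at the invocation configuration, which is trivially inside the execution interval. Otherwise, $k$ was present at the start but deleted by some concurrent delete operation $op'$; here the point is placed immediately after the linearization point of $op'$. The main obstacle is this last sub-case: I must argue that this placement falls inside $op$'s interval. By the first two parts of the proof applied to $op'$, the linearization point of $op'$ lies within $op'$'s execution interval. Since $op$ and $op'$ are concurrent and $k$ is in the list at the start of $op$, the invocation of $op'$ precedes the response of $op$, and $op$'s invocation precedes $op'$'s response; combined with the fact that $op'$ successfully deleted $k$ (so its linearization point occurs no later than its response), the configuration immediately after $op'$'s linearization point sits inside the overlap of the two intervals, and in particular inside the execution interval of $op$. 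This completes the argument for all three operation types.
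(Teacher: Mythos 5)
Your proof for insert and for the \ACK-based delete and search cases follows the paper closely and is correct. The gap is in the final subcase, when $op$ is a delete (or search) for a key $k$ that was present at the start of $op$'s interval and is later removed by a concurrent delete $op'$. You invoke only two facts: that $op$ and $op'$ are concurrent (so their intervals overlap) and that $op'$'s linearization point lies inside $op'$'s interval. These together do not pin $op'$'s linearization point inside $op$'s interval: as far as that reasoning goes, the deletion could happen after $op$ has already returned but while $op'$ is still running, i.e. in the portion of $op'$'s interval that sticks out past $\mathrm{resp}(op)$. "The intervals overlap" only bounds $\mathrm{inv}(op')$ and $\mathrm{inv}(op)$ relative to the other's response; it does not bound where inside $op'$'s interval the linearization point of $op'$ falls.

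What closes the gap, and what the paper's proof makes explicit, is a protocol-specific observation. The server $s$ that held $k$ at $\mathrm{inv}(op)$ receives $op$'s broadcast delete request and must reply \NACK, because by assumption $op$ receives \NACK\ from every server. So when $s$ executed line~\lref{code:ser:del2} on behalf of $op$, $k$ was already gone from $s$'s local list, hence $op'$'s deletion at $s$ happened strictly before $s$ dispatched its \NACK\ to $c$; and since $c$ does not return until it has collected all $\maxser$ responses in the \Do\ \While\ loop of lines~\lref{code:del:2}--\lref{code:del:6}, that reply precedes $\mathrm{resp}(op)$. Combined with $k$ being present at $\mathrm{inv}(op)$, this places $op'$'s linearization point strictly inside $op$'s execution interval. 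Your argument established only the lower bound; the upper bound needs this step.
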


\begin{proof}
Let $op$ be an instance of an insert operation invoked by client $c$. 
A message with the insert request is sent on line \lref{code:ins:1}, after 
the invocation of the operation. Recall that routine {\tt receive()} blocks 
until a message is received. 
Notice that both line \lref{code:ser:ins15} as well as line \lref{code:ser:ins2} 
are executed by a server before it sends a message to the client. 
Therefore, whether $op$ is linearized at the 
point some server sends it a message on line \lref{code:ser:ins22} or on line 
\lref{code:ser:ins3}, it terminates only after receiving it. 
Notice also that the operation terminates only after the client receives it.
Thus, the linearization point is included in its execution interval. 

By similar reasoning, if $op$ is an instance of a delete operation that 
is linearized in the configuration resulting from the execution of line 
\lref{code:ser:del2} or a search operation that is linearized in the configuration 
resulting from the execution of line \lref{code:ser:srch2}, then the linearization 
point is included in the execution interval of $op$. 

Let $op$ be an instance of a delete operation that deletes key $k$ and that 
terminates after receiving only \NACK\ messages on line \lref{code:del:3}. If 
$k$ is not present in the list in the beginning of the execution interval of 
$op$, then $op$ is linearized at that point and the claim holds. 

Consider the case where $k$ is included in the list when $op$ is invoked. 
By observation of the pseudocode (lines \lref{code:ser:del1}-\lref{code:ser:del7}), 
we have that when a server receives a delete request by a client, it traverses 
its local part of the list and deletes the element with key equal to $k$ (line 
\lref{code:ser:del2}), if it is included in it. By further observation of the 
pseudocode (lines \lref{code:del:1}-\lref{code:del:7}), we have that after $c$ 
invokes $op$, it sends a delete request to all servers (line \lref{code:del:1}) 
and then awaits for a response from all of them (\Do\ \While\ loop of lines 
\lref{code:del:2}-\lref{code:del:6}). 
By assumption, all servers responds with \NACK. Notice that this implies that between 
the execution of line \lref{code:del:3} and \lref{code:del:5} the element with key $k$ 
is removed from the local list of $s$ because of some other concurrent 
delete operation $op'$ invoked by some client $c'$. By scrutiny of the pseudocode, 
we have that a server that deletes an element from its local list, does so on line 
\lref{code:ser:del2}, which occurs before the server sends a response to the delete 
request. By definition, then, $op'$ is linearized at the point $s$ executes line 
\lref{code:ser:del2}, before it sends an \ACK\ message to $c'$. Since $op'$ causes 
the element with key $k$ to be removed from the local list of $s$ between the 
execution of lines \lref{code:del:3} and \lref{code:del:5} by $c$, its 
linearization point is included in the execution interval of $op$. Since 
we place the linearization point of $op$ right after the linearization point 
of $op'$, the claim holds. 

The argument is similar for when $op$ is an instance of a search operation 
for key $k$ that terminates after receiving a \NACK\ message from all the 
servers on lines \lref{code:srch:2}-\lref{code:srch:6}.
\end{proof}

Each server maintains a local variable $token$ with initial value $0$. 
Let some server $s$ receive a message $m$ in some configuration $C$. If the field 
{\tt tk} of $m$ is equal to \TOKEN, we say that {\em receives a token message}. 
Observe that when $s$ receives a token message (line \lref{ln:60}), the value of 
$token$ is set to $s$. Furthermore, when $s$ executes line \lref{code:ser:ins18}, 
where the value of $token$ changes from $s$ to $s+1$, $s$ also sends a token message 
to $s+1$ (line \lref{code:ser:ins21}). Notice that $s$ can only reach and execute 
this line if the condition of the \If\ clause of line \lref{code:ser:ins5} evaluates 
to \false, i.e. if $token$ = $s$. Then, the following holds:

\begin{observation}
\label{obs:ul-unique}
At each configuration in $\alpha$, there is at most one server $s$ for 
which the local variable $token$ has the value $s$.
\end{observation}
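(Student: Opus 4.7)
The plan is to prove the observation by induction on the length of the execution prefix, establishing the slightly stronger invariant: at every configuration of $\alpha$, either (a) exactly one server $s$ has $token = my\_id$ and no \TOKEN\ message is in transit, or (b) no server has $token = my\_id$ and exactly one \TOKEN\ message is in the system. Either way, at most one server satisfies $token = my\_id$, which is what the observation claims.

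For the base case, in the initial configuration every server has $token = 0$; only $s_0$ satisfies $my\_id = 0 = token$, and no message is in transit. This is case (a), and the invariant holds. For the inductive step, assume the invariant holds at configuration $C$ and consider any event producing $C'$. By inspection of Algorithm~\ref{fig:dl-server}, the only lines that modify any $token$ variable are line~\lref{code:ser:tok2} (which sets $token := my\_id$ upon receiving a message whose $tk$ field equals \TOKEN) and line~\lref{code:ser:ins18} (which sets $token := \text{get\_next}(my\_id)$). Any other event leaves the invariant trivially intact: no $token$ variable changes, and the set of in-transit \TOKEN\ messages is unchanged because the lines that create a \TOKEN\ message (line~\lref{code:ser:ins21}) and the line that consumes one (line~\lref{code:ser:tok2}) are exactly the two lines we are treating specially.

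If the event executes line~\lref{code:ser:ins18} at some server $s$, then by the guard of the \If\ on line~\lref{code:ser:ins5} (taking its \false\ branch) and by the path through line~\lref{code:ser:ins10}, it must be that $token = my\_id$ at $s$ just before the step. By the induction hypothesis we are in case (a), and $s$ is the unique such server. The step sets $s.token$ to the next server's id (so $s.token \neq my\_id$ at $C'$) and, immediately afterward, line~\lref{code:ser:ins21} puts a single \TOKEN\ message into the network. No other $token$ variables change, so $C'$ satisfies case~(b). Symmetrically, if the event executes line~\lref{code:ser:tok2} at some server $s$, then $s$ is receiving a message with $tk = \TOKEN$. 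By the induction hypothesis we must be in case~(b), so this is the unique in-transit \TOKEN\ message and no server has $token = my\_id$. After the step, $s.token = my\_id$ and the \TOKEN\ message has been consumed, so $C'$ satisfies case~(a).

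The only obstacle worth flagging is the bookkeeping for case~(b): one must rule out that a \TOKEN\ message can be manufactured from a non-\TOKEN\ message or duplicated. A quick scan of Algorithm~\ref{fig:dl-server} confirms that every outgoing message forwarded by the \If\ of line~\lref{code:ser:ins5} preserves the $tk$ field of the incoming message, and the sole line that writes \TOKEN\ into a fresh outgoing message is line~\lref{code:ser:ins21}, which is guarded by the just-discussed transition at line~\lref{code:ser:ins18}. Thus \TOKEN\ messages are neither spontaneously generated nor duplicated, closing the induction and yielding the observation.
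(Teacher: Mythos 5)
Your invariant-based induction is a cleaner formalization of the paper's informal argument, which just observes that line~\lref{code:ser:tok2} sets $token$ on receipt of a \TOKEN\ message while line~\lref{code:ser:ins18} (paired with line~\lref{code:ser:ins21}) both relinquishes it and emits exactly one new \TOKEN\ message. Packaging this as ``exactly one of a holder or an in-flight \TOKEN\ message'' is the right way to make the claim rigorous, and it is the same argument in spirit.

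The gap is in the bookkeeping for case~(b). You assert that the only $tk$-preserving forwards live in the $token \neq my\_id$ branch (line~\lref{code:ser:ins5}) and that line~\lref{code:ser:ins21} is the sole source of a \TOKEN-flagged message. But line~\lref{code:ser:ins13} also forwards the incoming message verbatim, $tk$ field included, and it sits inside the \emph{else} branch at line~\lref{code:ser:ins10} --- the branch entered precisely when $token = my\_id$. Consider a server $s$ with $round > 0$ and $my\_id \neq \maxser-1$ that receives an \INSERT\ with $tk = \TOKEN$, $mloop = \false$, and a key not in its local list. In a single handler, $s$ executes line~\lref{code:ser:tok2} (so $s.token = my\_id$), the test at line~\lref{code:ser:ins11} succeeds, and line~\lref{code:ser:ins13} re-emits a \TOKEN\ message. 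After this one event $s$ holds the token \emph{and} a \TOKEN\ message is in transit, so neither your case~(a) nor case~(b) holds and the induction step fails. To close the argument you would need an auxiliary invariant excluding this configuration (e.g.\ that any \TOKEN\ message arriving at a server with $round > 0$ carries $mloop = \true$), but the pseudocode as written does not seem to guarantee this: a request sent by a client to $s_0$ and forwarded down the ring while $s_{\maxser-1}$ holds the token reaches $s_{\maxser-1}$ with $mloop = \false$, and the \TOKEN\ message that $s_{\maxser-1}$ then sends to $s_0$ via line~\lref{code:ser:ins21} inherits $mloop = \false$.
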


This server is referred to as {\em token server}. By the pseudocode, namely 
the \If\ \Else\ clause of lines \lref{code:ser:ins5}, \lref{code:ser:ins10}, 
and by line \lref{code:ser:ins15}, the following observations holds.

\begin{observation}
\label{obs:ul-tokenserver}
A server $s$ performs insert operations on its local list in $\alpha$ only 
during those subsequences of $\alpha$ in which it is the token server. 
\end{observation}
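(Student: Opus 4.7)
The plan is to verify the observation by direct inspection of Algorithm \ref{fig:dl-server}, enumerating every line at which a server modifies $llist$ by an insert and showing that each such line lies on a control-flow path that can be reached only when $token = my\_id$.

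First I would locate every occurrence of the insert primitive applied to $llist$. The only such line is line \lref{code:ser:ins15}, where $s$ executes $status_2 = \mathrm{insert}(llist, round, key, data)$. The other local-list modifications in the pseudocode are the delete on line \lref{code:ser:del2}, which the observation does not concern, and the call to {\tt allocate\_new\_memory\_chunk} on line \lref{code:ser:tok3}, which reserves empty space but inserts no element. An exhaustive check of the outer \Switch\ statement confirms that only the \INSERT\ case performs an insert on $llist$, so there is no hidden modification elsewhere.

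Next I would trace the control flow leading to line \lref{code:ser:ins15}. This line is nested inside the \Else\ branch of the \If\ clause of line \lref{code:ser:ins10}, which is itself the \Else\ branch of the \If\ clause of line \lref{code:ser:ins5} whose guard is $token \neq my\_id$. Hence line \lref{code:ser:ins15} can be executed only in configurations in which $token = my\_id$ holds at the moment line \lref{code:ser:ins5} is evaluated. Because the algorithm is event-driven, one instance of the server code runs to completion per received message, and no step between lines \lref{code:ser:ins5} and \lref{code:ser:ins15} writes to $token$; therefore the equality $token = my\_id$ persists until the insert is actually performed.

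The main point requiring care is to match this code-level condition with the definition of ``token server'': by the convention introduced right after Observation \ref{obs:ul-unique}, $s$ is the token server in exactly those configurations in which its local variable $token$ equals $my\_id$. Combining this with the control-flow analysis above yields the observation. No case analysis beyond inspection is needed; the hard part, if any, is simply to be rigorous about the uniqueness and persistence of $token$ between the guard and the insert, both of which follow from the preceding observation and the event-driven single-threaded nature of the per-server code.
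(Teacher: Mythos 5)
Your proposal is correct and matches the paper's own justification, which is a one-line inspection citing exactly the \If/\Else\ clause of lines \lref{code:ser:ins5} and \lref{code:ser:ins10} together with line \lref{code:ser:ins15}; you simply spell out the control-flow trace that the paper leaves implicit, and you correctly note the persistence of the local variable $token$ through the event-driven handler and the matching of ``$token = my\_id$'' with the definition of token server introduced after Observation~\ref{obs:ul-unique}. (One minor slip: line \lref{code:ser:ins10} is an \Else\ clause, not an \If\ clause, and line \lref{code:ser:ins15} sits inside the further nested \Else\ at line \lref{code:ser:ins14}, but this does not affect the argument since the entire nest lies under the $token = my\_id$ branch.)
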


Each server maintains a local list collection, $llist$. By observation of the 
pseudocode, lines \lref{code:ser:ins2} and \lref{code:ser:ins3}, we have that 
if an insert operation attempts to insert key $k$ in either of the lists of 
a server $s$, but an element with that key already exists, then no second 
element for $k$ is inserted and the operation terminates. Thus, the following 
holds: 

\begin{observation}
\label{obs:ul-llset}
The keys contained in the list collection of $s$ in any configuration 
$C$ of $\alpha$ form a set. 
\end{observation}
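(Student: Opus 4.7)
The plan is to prove this observation by a straightforward induction on the prefix of $\alpha$, tracking only those steps that modify the local collection $llist$ of server $s$. The key preliminary remark is that, by inspection of Algorithm \ref{fig:dl-server}, $llist$ is mutated only in two places: the call to insert() on line \lref{code:ser:ins15} during the handling of an \INSERT\ message, and the call to delete() on line \lref{code:ser:del2} during the handling of a \DELETE\ message. Because the server is event-driven and processes one incoming message at a time, each such mutation is executed atomically with respect to the other reads of $llist$ performed inside the same handler invocation.

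First I would establish the base case: in the initial configuration, $llist = \varnothing$, so its keys trivially form a set. For the inductive step, I would fix an arbitrary modifying step and assume, as inductive hypothesis, that the keys in $llist$ form a set in the configuration just before it. A delete operation on line \lref{code:ser:del2} can only remove a key from $llist$, so the set property is preserved. The interesting case is an insert at line \lref{code:ser:ins15} with some key $k$. I would argue that this line is reached within a handler invocation only if the search on line \lref{code:ser:ins2}, executed earlier in the same handler, returned $status_1 = \false$: otherwise the \If\ branch of line \lref{code:ser:ins3} would have sent \NACK\ and the \Else\ branch of line \lref{code:ser:ins4} (in which line \lref{code:ser:ins15} is nested) would never have been entered. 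Consequently, $k \notin llist$ at the moment of the search.

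Next, I would argue that $k$ is still absent from $llist$ at the moment insert() is invoked on line \lref{code:ser:ins15}. This is where the main care is required: one must rule out any intervening mutation of $llist$ between lines \lref{code:ser:ins2} and \lref{code:ser:ins15}. Since only insert() and delete() modify $llist$, and both can occur only inside another handler invocation on the same server $s$, the event-driven (sequential) execution of handlers at $s$ implies that no other handler can run during this interval. Therefore the result of the search on line \lref{code:ser:ins2} remains valid, $k$ is not present in $llist$, and after the insert $k$ appears exactly once; the set property is preserved.

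The only real obstacle is formalising the atomicity of a single handler invocation, i.e. stating precisely that the server's local state cannot be altered between two lines of the same event handler. This is implicit in the event-driven presentation of Algorithm \ref{fig:dl-server}, but to make the proof airtight I would first record it as an explicit observation (analogous to Observation \ref{obs:seq-dstack} in the directory-based stack proof) stating that instances of the server's message handler at $s$ execute sequentially, and then appeal to this observation inside the inductive step above.
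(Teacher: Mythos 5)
Your proposal is correct and matches the paper's reasoning: the observation in the paper is justified precisely by noting (from lines \lref{code:ser:ins2}--\lref{code:ser:ins3}) that the server searches $llist$ for the key before attempting an insert and refuses the insert (sending \NACK) if the key is already present, which is exactly the crux of your inductive step. Your version merely makes the induction and the sequential-handler assumption explicit, which is sound but adds detail the paper leaves implicit.
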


We denote this set by $ll^s$. By scrutiny of the pseudocode, we see that 
a new list object is allocated in $llist$ each time a server receives a 
token message (lines \lref{code:ser:tok1}-\lref{code:ser:tok3}). The new 
object is identified by the value of local variable $round$. By observation 
of the pseudocode, we further have that each time a server inserts a key 
into $ll^s$, it does so on the list object identified by $round$ (line 
\lref{code:ser:ins15}). We refer to this object as {\em current list object}. 
Then, based on lines \lref{code:ser:ins16}-\lref{code:ser:ins21} we have the 
following:

\begin{observation}
\label{obs:ul-direction}
A token message is sent from a server $s$ to a server $((s+1) \mod \maxser)$
in some configuration $C$ only if the current local list object of server $s$ 
is full at $C$.
\end{observation}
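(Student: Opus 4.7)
The plan is to prove this observation by direct inspection of the pseudocode in Algorithm \ref{fig:dl-server}, tracing the unique execution path that results in a server transmitting a token message. First, I would identify every line where a message containing $\TOKEN$ in its $tk$ field is sent. Inspection of the server pseudocode shows that line \lref{code:ser:ins21} is the only such line; every other $\send$ invocation either forwards the incoming $tk$ field unchanged (lines \lref{code:ser:ins8}, \lref{code:ser:ins9}, \lref{code:ser:ins13}) or sends a response to a client containing no token. Thus I only need to argue that whenever $s$ executes line \lref{code:ser:ins21}, the current list object of $s$ at configuration $C$ is full.

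Next, I would trace upward the conditions guarding line \lref{code:ser:ins21}. This line lies inside the \If\ block starting at line \lref{code:ser:ins16}, so it is executed only when the value $status_2$ returned by the local $\texttt{insert}(llist, round, key, data)$ invocation on line \lref{code:ser:ins15} equals $\false$. By the semantics of $\texttt{insert}$ on the local list structure, $status_2 = \false$ holds exactly when the current local list object (the one identified by $round$ and accessed by the insert) has no remaining space, i.e., when it is full. Hence at the configuration preceding the execution of line \lref{code:ser:ins21}, the current local list object of $s$ is full.

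Finally, I would check that the current list object of $s$ remains full through to configuration $C$. Between the execution of line \lref{code:ser:ins15} and line \lref{code:ser:ins21}, $s$ executes only the increment of $round$ (line \lref{code:ser:ins17}) and the update of $token$ (line \lref{code:ser:ins18}); neither of these modifies any list object. Moreover, by Observation \ref{obs:ul-tokenserver}, no other server performs insertions into the local lists of $s$, and by Observation \ref{obs:ul-unique}, $s$ is the unique token server during this subsequence, so no concurrent modification of the current list object of $s$ can occur. Therefore the current list object is still full at $C$, establishing the observation.

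I do not foresee a substantive obstacle here: the statement is a straightforward invariant about the code path leading to a token emission, and the proof reduces to a short case analysis of which lines in Algorithm \ref{fig:dl-server} can emit a $\TOKEN$-flagged message. The only mildly delicate point is ensuring that nothing changes the fullness status of the current list object between lines \lref{code:ser:ins15} and \lref{code:ser:ins21}, which is handled by invoking the two prior observations on token-server uniqueness and on the localization of insertions to the token server.
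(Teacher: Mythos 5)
Your opening step --- the claim that line~\lref{code:ser:ins21} is the only place a message carrying $tk = \TOKEN$ can be emitted --- is asserted rather than established, and under a literal reading of the pseudocode it fails. You correctly note that lines~\lref{code:ser:ins8}, \lref{code:ser:ins9} and~\lref{code:ser:ins13} pass the incoming $tk$ field through unchanged, but you never argue that $tk \neq \TOKEN$ when control reaches them. For lines~\lref{code:ser:ins8}--\lref{code:ser:ins9} this is easy: they sit in the $token \neq my\_id$ branch (line~\lref{code:ser:ins5}), which is unreachable immediately after processing a token message since line~\lref{code:ser:tok2} has just set $token = my\_id$. Line~\lref{code:ser:ins13}, however, sits in the $token = my\_id$ branch and \emph{is} reachable after processing a token message whenever the guard at line~\lref{code:ser:ins11} is satisfied. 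This can occur after the token's first full round trip: server $\maxser-1$, holding the token, fills up on an $\INSERT$ request $m$ whose $mloop$ field is still \false\ (server $\maxser-1$ never executed line~\lref{code:ser:ins9} for $m$, having been the token server), emits $m$ with $tk = \TOKEN$ to server~$0$ via line~\lref{code:ser:ins21}; server~$0$, which now has $round_0 > 0$, processes the token flag at lines~\lref{code:ser:tok1}--\lref{code:ser:tok3}, allocates a fresh chunk, then satisfies the guard at line~\lref{code:ser:ins11} and re-forwards $m$ at line~\lref{code:ser:ins13} with $tk$ still equal to $\TOKEN$ --- even though its current list object is empty. This both contradicts the observation and, since server~$1$ would in turn set $token = my\_id$ at line~\lref{code:ser:tok2}, would also violate Observation~\ref{obs:ul-unique}.

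The gap is closed only by making explicit a convention the pseudocode leaves implicit: that a server clears its local copy of $tk$ after executing lines~\lref{code:ser:tok1}--\lref{code:ser:tok3}, so that the forward at line~\lref{code:ser:ins13} never carries the token flag. This is clearly the authors' intent (the informal argument preceding Observation~\ref{obs:ul-unique} depends on it as well), but your proof should state it rather than silently assume it. Granting this, your remaining two steps go through: $status_2 = \false$ at line~\lref{code:ser:ins15} is equivalent to the current list object being full, and nothing modifies it between lines~\lref{code:ser:ins15} and~\lref{code:ser:ins21}. The appeals to Observations~\ref{obs:ul-tokenserver} and~\ref{obs:ul-unique} in your last step are unnecessary --- each server's local list is private to it and the event handler runs without interleaving between message receptions --- but they do no harm.
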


Further inspection of the pseudocode shows that the local list object of 
a server is only accessed by the execution of line \lref{code:ser:ins2}, 
\lref{code:ser:ins15}, \lref{code:ser:srch2}, or \lref{code:ser:del2}. 
From this, we have the following observation.

\begin{observation}
\label{obs:ul-mod}
If an operation $op$ modifies the local list object of some server, 
then this occurs in the configuration in which $op$ is linearized. 
\end{observation}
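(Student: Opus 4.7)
The plan is to proceed by a straightforward case analysis on the type of operation $op$ and whether $op$ is successful (returns \ACK) or unsuccessful (returns \NACK), using the fact that only two lines in Algorithms~\ref{fig:dl-server} can alter a server's local list object, namely line~\lref{code:ser:ins15} (the successful \texttt{insert} on the token server's current list object) and line~\lref{code:ser:del2} (the \texttt{delete} on a server's local list). Since search operations invoke only \texttt{search} on line~\lref{code:ser:srch2}, which by definition does not modify the list, the claim for them is vacuous.

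First I would handle insert operations. If $op$ is a successful insert, then by the pseudocode $op$'s modification of the local list happens exactly when line~\lref{code:ser:ins15} is executed and returns \true; by definition, this is also precisely the configuration at which we placed the linearization point of $op$, so the claim matches. If $op$ is an unsuccessful insert (\NACK\ returned on line~\lref{code:ser:ins3} because line~\lref{code:ser:ins2}'s local search succeeded), then by inspection of lines~\lref{code:ser:ins2}--\lref{code:ser:ins3}, $op$ performs no write on any $llist$, so the claim holds vacuously. Other code paths (forwarding on lines~\lref{code:ser:ins8}, \lref{code:ser:ins9}, \lref{code:ser:ins13}, \lref{code:ser:ins21}, or token acquisition on lines~\lref{code:ser:tok1}--\lref{code:ser:tok3}) either only allocate a fresh empty chunk or forward the request; none inserts an element into any existing list object.

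Next, I would treat delete operations symmetrically. A successful delete's sole modification of the local list is the call on line~\lref{code:ser:del2} that returns \true, which by the linearization-point assignment is precisely the configuration chosen for $op$. An unsuccessful delete, by the semantics of \texttt{delete} on line~\lref{code:ser:del2}, leaves every local list object untouched across all servers, so the observation holds vacuously regardless of where its linearization point was placed (at the start of its interval or right after another delete's linearization). Finally, for search operations, a direct inspection of lines~\lref{code:ser:srch1}--\lref{code:ser:srch7} confirms no write to any $llist$, making the claim vacuous.

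The argument here is essentially a verification rather than a genuine combinatorial claim, so there is no real obstacle; the only point requiring a little care is making explicit that for every unsuccessful operation (\NACK-returning insert, \NACK-returning delete, and every search) the local list objects of all servers remain unchanged by $op$, which is needed to ensure that the ``vacuous'' branches of the case analysis truly apply. This is immediate from the pseudocode but should be stated explicitly, since the linearization points of unsuccessful deletes and searches are placed at points other than a line that touches the local list, and we need to rule out any hidden modification elsewhere along $op$'s execution path.
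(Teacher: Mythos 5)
Your proof is correct and takes essentially the same approach as the paper: both rest on a direct inspection of Algorithm~\ref{fig:dl-server} to identify that only the \texttt{insert()} call on line~\lref{code:ser:ins15} and the \texttt{delete()} call on line~\lref{code:ser:del2} can modify a server's local list object, and then note that the linearization points of the (successful) insert and delete operations are by definition placed at exactly those executions. The paper's version is terser---a single sentence preceding the observation remarking which four lines access the local list---but the substance of the argument is identical.
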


Let $C_i$ be the configuration in which the $i$-th linearization point 
in $\alpha$ is placed. Denote by $\alpha_i$, the prefix of $\alpha$ which 
ends just after $C_i$ and let $L_i$ be the sequence of linearization points 
that is defined by $\alpha_i$. Denote by $S_i$ the set of keys that a 
sequential list contains after applying the sequence of operations that $L_i$ 
imposes. Denote by $S_i = \epsilon$ the empty sequence (the list is empty).


\begin{lemma}
\label{obs:not-in-list}
Let $k$ be the token server in some configuration $C$ in which it receives a 
message $m$ for an insert operation $op$ with key $k$ invoked by client $c$. 
Then at $C$, no element with key $k$ is contained in the local list set of 
any other server $s \neq k$.
\end{lemma}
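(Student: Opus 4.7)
My plan is to argue by contradiction, exploiting the forwarding discipline of insert messages together with the search-before-insert check every server performs on receipt. Suppose some server $s' \neq k$ contains an element with key $k$ in its local list at $C$. Since client $c$ sends $m$ to server $0$ (line \lref{code:ins:1}) and the server code only ever forwards $m$ to the next server in the ring (lines \lref{code:ser:ins5}--\lref{code:ser:ins9} and \lref{code:ser:ins11}--\lref{code:ser:ins13}), I would trace $m$ along the ring from $0$ up to $k$. At every server it visits, $m$ triggers the local search of line \lref{code:ser:ins2}; if $k$ were present there, line \lref{code:ser:ins3} would respond with \NACK\ and forwarding would terminate before $m$ could reach $k$.

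The central technical step is to show that by the time $m$ is handled at $k$, it has been checked at every other server. I would do a case analysis on the condition under which $k$ enters the insertion branch (lines \lref{code:ser:ins14}--\lref{code:ser:ins22}) rather than relaying $m$ further (lines \lref{code:ser:ins11}--\lref{code:ser:ins13}): if $round = 0$ at $k$, then by Observations \ref{obs:ul-tokenserver} and \ref{obs:ul-direction} no server other than $k$ has ever held the token and hence none stores any element at $C$, giving the conclusion at once; if $mloop$ is \true, then line \lref{code:ser:ins9} forces $m$ to have been dispatched from server $\maxser-1$, so by the ring ordering $m$ has already visited every server on its way to $k$; and if $k=\maxser-1$ the same holds via the direct traversal $0,1,\dots,\maxser-1$. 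Together with the first paragraph, this produces, for every $s'\neq k$, an earlier configuration $C_{s'}\preceq C$ at which $s'$ executed line \lref{code:ser:ins2} for $m$ and found no element with key $k$.

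The main obstacle, and where I expect the bulk of the technical work, is to rule out that key $k$ is inserted into $s'$'s local list in the interval between $C_{s'}$ and $C$. By Observation \ref{obs:ul-tokenserver} any such insert would require $s'$ to be a token server during that interval, and by Observation \ref{obs:ul-unique} the token sits at $k$ throughout configurations surrounding $C$. I would set this up as an outer induction over the prefix of $\alpha$ ending at $C$, with inductive invariant ``no two distinct servers simultaneously hold the same key in their local lists''. The inductive step applies the present lemma to the most recent successful prior insert of $k$, showing that no earlier configuration could have placed $k$ at $s'$ without being detected by $m$'s search at $C_{s'}$; checking that deletions (line \lref{code:ser:del2}) and intermediate forwarded inserts on non-token servers preserve this invariant, and that the invariant is maintained across each token handoff of lines \lref{code:ser:ins17}--\lref{code:ser:ins21}, is the delicate part of the argument.
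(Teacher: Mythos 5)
Your plan follows essentially the same line as the paper's proof: trace $m$ hop-by-hop through the ring from server $0$, observe that every server visited by $m$ executes the search on line~\lref{code:ser:ins2} and would NACK rather than forward if $k$ were present, and use the $mloop$ discipline to justify that $m$ has visited ``enough'' of the ring by the time it is handled at the token server. The paper's version confines itself to establishing the $mloop$ facts (initialised to \false; set to \true\ at most once, by server $\maxser-1$ in a configuration where $\maxser-1$ does not hold $k$). Your version additionally spells out a case analysis on the condition under which the token server enters the insertion branch ($round=0$, $mloop=\true$, or $k=\maxser-1$), which makes the ``$m$ has been checked at every relevant server'' claim clearer than in the paper.

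One imprecision in your $round=0$ case: ``no server other than $k$ has ever held the token'' is too strong. For $k>0$ the token must have travelled through $0,\dots,k-1$ to arrive at $k$, so those servers may well hold elements. What $round=0$ at $k$ actually gives you (via Observation~\ref{obs:ul-direction} and the fact that the token advances only to the next server) is that servers $k+1,\dots,\maxser-1$ have never been token servers and hence hold nothing. That, combined with $m$'s traversal of $0,\dots,k$, is what is needed.

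The concern in your third paragraph --- that a concurrent insert could place key $k$ at $s'$ after $m$'s search executed at $s'$ but before $m$ arrives at the token server --- is a genuine open step in your sketch; you flag it but do not close it. It is worth noting that the paper's proof as written does not close it either: the paper only establishes that each visited server lacked $k$ at the time of the visit. If you want to finish the argument, the most direct route is not the global invariant you propose but the observation that every insert request enters at server $0$ (line~\lref{code:ins:1}) and is relayed strictly hop-by-hop over FIFO channels with FIFO processing, so any competing insert $m'$ of key $k$ that entered server $0$ \emph{before} $m$ stays strictly ahead of $m$ at every server it visits --- if $m'$ ever places $k$ at some $s'$, then $m$ later finds $k$ at $s'$ and NACKs there, contradicting the hypothesis that $m$ is received by the token server --- and any $m'$ that entered server $0$ \emph{after} $m$ lags behind $m$ at every server and so cannot catch up to a token server and insert $k$ before $m$ has been handled.
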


\begin{proof}
By inspection of the pseudocode, when a client $c$ sends a message $m$ 
to some server either on line \lref{code:ins:1}, line \lref{code:srch:1}, 
line \lref{code:del:1}, or line \lref{code:del:5}, the $mloop$ field of 
$m$ is equal to \false. This field is set to \true\ when server $s_{\maxser-1}$ 
executes line \lref{code:ser:ins9}. Notice that in the configuration in 
which this line is executed by $s_{\maxser-1}$, it is not the token server 
(otherwise the condition of line \lref{code:ser:ins5} would not evaluate to 
\true\ and the line would not be executed).   


Consider the case where $m$ reaches a server $s$ at some configuration $C$ 
and let $ll^s$ contain an element with key $k$ in $C$. By inspection of the 
pseudocode (lines \lref{code:ser:ins2}-\lref{code:ser:ins3}) we have that in 
that case, $m$ is not forwarded to a subsequent server.

Furthermore, by lines \lref{code:ser:ins5}-\lref{code:ser:ins9}, we have that 
if $s$ is not the token server and not $s_{\maxser-1}$, and provided that $ll^s$ 
does not contain an element with key $k$, then $s$ forwards $m$ without modifying 
the $mloop$ field. This implies that the $mloop$ field of $m$ is changed at most 
once in $\alpha$ from \false\ to \true, and that by server $\maxser-1$, in a 
configuration $C'$ in which $k$ is not contained in  $ll^{\maxser-1}$.
\end{proof}

\begin{lemma}
Let $C_i$, $i \geq 0$, be a configuration in $\alpha$ in which server $s_{t_i}$ is the 
token server. Let $ll_i^j$ be the local list set of server $s_j$, $0 \leq j < \maxser$, 
in $C_i$. Then it holds that $S_i = \bigcup_{j=0}^{\maxser-1}ll_i^j$. 
\end{lemma}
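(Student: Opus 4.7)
The plan is to proceed by induction on $i$, with the base case $i=0$ being immediate since no operation has been linearized and both $S_0$ and every $ll_0^j$ are empty. For the inductive step, I will fix some $i \geq 0$, assume $S_i = \bigcup_{j=0}^{\maxser-1} ll_i^j$, and show the equality is preserved at $C_{i+1}$ by case analysis on the type of $op_{i+1}$, using Observation~\ref{obs:ul-mod} to isolate which local list (if any) is modified in the step that brings the system to $C_{i+1}$.

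First I would handle the search case: by inspection of lines~\lref{code:ser:srch1}--\lref{code:ser:srch7}, no local list is modified, and a search operation does not modify a sequential list either, so both sides remain unchanged and the claim follows directly from the hypothesis. Next, I would handle insert. If $op_{i+1}$ is linearized at line~\lref{code:ser:ins22} via the execution of line~\lref{code:ser:ins15} by some server $s$, then Observation~\ref{obs:ul-tokenserver} gives that $s$ is the token server and line~\lref{code:ser:ins15} adds key $k$ to $ll^s$. To argue that this matches the sequential semantics (i.e. that $k \notin S_i$), I would invoke Lemma~\ref{obs:not-in-list}: at the configuration in which the token server received the message, no other server had $k$ in its local list; combined with the fact that line~\lref{code:ser:ins2} evaluated to \false\ on the token server itself (otherwise line~\lref{code:ser:ins3} would have been executed instead), we get $k \notin \bigcup_j ll_i^j = S_i$, so adding $k$ to both sides preserves the invariant. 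The symmetric subcase in which $op_{i+1}$ is a failing insert linearized at line~\lref{code:ser:ins2} is easier: the search found $k$ in some $ll_i^j$, so $k \in S_i$ by the hypothesis, the sequential insert is a no-op, and no local list changes.

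Then I would handle delete. If $op_{i+1}$ is linearized at the configuration resulting from line~\lref{code:ser:del2} at some server $s$ sending \ACK, then $k$ was present in $ll_i^s$ (hence in $S_i$) and is now removed from it, matching the effect of the sequential delete on $S_{i+1}$; by Observation~\ref{obs:ul-mod} no other local list changes, so the invariant is preserved. If $op_{i+1}$ is a delete linearized at the beginning of its execution interval or right after the linearization point of a concurrent successful delete $op'$, then by the definition of the linearization points the key $k$ is not in $S_i$; combined with the induction hypothesis $k \notin \bigcup_j ll_i^j$, and since no local list is modified in this step (every server returns \NACK\ on line~\lref{code:ser:del6}), both sides remain unchanged.

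The main obstacle I anticipate is the successful-insert case, specifically justifying that the key being added is genuinely new with respect to the entire union $\bigcup_j ll_i^j$ and not only with respect to the token server's local list. This is precisely what Lemma~\ref{obs:not-in-list} provides, but one must also be careful that the configuration referenced in that lemma (when the token server receives the message) agrees with $C_i$ in the sense that no intervening operation has reintroduced $k$ on some other server between message reception and linearization; this can be resolved by observing that only the token server can insert (Observation~\ref{obs:ul-tokenserver}), and only deletes can remove keys, so intervening operations can only shrink the union. A minor subtlety for the delete case is to argue that the \NACK-only delete placed right after $op'$ is well-defined, i.e.\ that $op'$'s linearization point falls within $op_{i+1}$'s execution interval, which is exactly what Lemma~\ref{lemma:list-lin} establishes.
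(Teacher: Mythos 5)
Your proposal is correct and follows essentially the same inductive structure and case analysis as the paper's own proof. The one point of difference is that you give a more careful justification, via Lemma~\ref{obs:not-in-list} and the monotonicity observation (only the token server inserts, deletes only shrink the union), that the key in a successful insert is fresh with respect to $\bigcup_j ll_i^j$ — a step the paper's proof glosses over with the bare assertion that a server \ACK\ implies $S_{i+1} = S_i \cup \{k\}$.
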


\begin{proof}
We prove the claim by induction on $i$. 

{\bf Base case (i = 0).} 
The claim holds trivially at $C_0$.

{\bf Hypothesis.} Fix any $i > 0$ and assume that at $C_i$, it holds that 
$S_i$ = $\bigcup _{j=0}^{\maxser-1}ll_i^j$. We show that the claim holds 
for $i+1$.

{\bf Induction step.} 
Let $op_{i+1}$ be the operation that corresponds to the linearization 
point placed in $C_{i+1}$. We proceed by case study. 

Let $op_{i+1}$ be an insert operation for key $k$. Assume first that the linearization 
point of $op_{i+1}$ is placed at the execution of line \lref{code:ser:ins2} by $s_{t_{i+1}}$ 
for it. Notice that when this line is executed, $k$ is searched for in the local 
list of $s_{t_{i+1}}$. Recall that, by the way linearization points are assigned, 
the client $c$ that invoked $op_{i+1}$ receives \NACK\ as response. Notice also that 
$s_{t_{i+1}}$ sends \NACK\ as a response to $c$ if $k$ is present in the local list 
of $s_{t_{i+1}}$, and thus $status_1 = \true$. In that case, lines \lref{code:ser:ins5} 
to \lref{code:ser:ins22} are not executed, and therefore, no new element is inserted 
into the local list of $s_{t_{i+1}}$ (line \lref{code:ser:ins15}). 
Thus $ll_{i+1}^{s_{t_{i+1}}} = ll_{i}^{s_{t_{i+1}}}$.
By the induction hypothesis, $S_i$ = $\bigcup _{j=0}^{\maxser-1}ll_i^j$.
By Observation \ref{obs:ul-mod} it follows that for any other server $s_j$, 
where $j \neq t_{i+1}$, $ll_{i+1}^{s_j} = ll_{i}^{s_j}$ as well. Then, 
$\bigcup_{j=0}^{\maxser-1}ll_{i+1}^j = \bigcup_{j=0}^{\maxser-1}ll_i^j$. 
Notice that since the server responds with \NACK, $S_{i+1} = S_i$ by definition.
Thus, $S_{i+1} = \bigcup_{j=0}^{\maxser-1}ll_{i+1}^j$ and the claim holds. 

Now, assume that $op_{i+1}$ is linearized at the execution of line \lref{code:ser:ins15} 
by the token server for it. By the way linearization points are assigned, this implies 
that when this line is executed, $status_2 = \true$, and the insertion of an element 
with key $k$ into the local list of $s_t$ was successful. This in turn implies that 
at $C_{i+1}$, $ll_{i+1}^{s_t} = ll_{i}^{s_t}\cup \{k\}$. 
By Observation \ref{obs:ul-mod} it follows that for any other server $s_j$, 
where $j \neq t_{i+1}$, $ll_{i+1}^{s_j} = ll_{i}^{s_j}$ as well.
Notice that since the server responds with \ACK, by definition the insertion is 
successful and thus $S_{i+1} = S_i \cup \{k\}$. Since by the induction hypothesis, 
$S_i$ = $\bigcup _{j=0}^{\maxser-1}ll_i^j$, it holds that 
$S_{i+1} = \bigcup _{j=0}^{\maxser-1}ll_i^j \cup \{k\} = \bigcup _{j=0}^{\maxser-1}ll_{i+1}^j$, 
thus, the claim holds.

Now consider that $op_{i+1}$ is a delete operation for key $k$. Assume first that 
some server $s_d$ responds with \ACK, by executing line \lref{code:ser:del4}, to the 
client $c$ that invoked $op_{i+1}$. Then $op_{i+1}$ is linearized at the execution 
of this line by $s_d$. Notice that this line is executed by a server if $status_1 = \true$, 
i.e. if the server was successful in locating and deleting an element with key 
$k$ from its local list. Thus, $ll_{i+1}^{s_t} = ll_{i}^{s_t} \setminus \{k\}$. 
Furthermore, by definition, $S_{i+1} = S_i \setminus \{k\}$. By the induction 
hypothesis, $S_i$ = $\bigcup _{j=0}^{\maxser-1}ll_i^j$ and since by Observation 
\ref{obs:ul-mod} no other modification occurred on the local list of some other 
server between $C_i$ and $C_{i+1}$, it follows that 
$S_{i+1} = S_i \setminus\ \{k\} = \bigcup _{j=0}^{\maxser-1}ll_i^j \setminus \{k\} = \bigcup _{j=0}^{\maxser-1}ll_{i+1}^j$.

Assume now that $op_{i+1}$ is a delete operation for which no server responds with \ACK\ 
to the invoking client. Recall that in this case, by definition, $S_{i+1} = S_i$. 
By inspection of the pseudocode, it follows that no server finds 
an element with key $k$ in its local list when it is executing line \lref{code:ser:del2} 
for $op_{i+1}$. We examine two cases: (i) either no element with key $k$ is contained 
in any local list of any server in the beginning of the execution interval of $op_{i+1}$,
or (ii) an element with key $k$ is contained in the local list of some server $s_d$ in 
the beginning of $op_{i+1}$'s execution interval, but $s_d$ deletes it while serving a 
different delete operation $op'$, before it executes line \lref{code:ser:del2} for 
$op_{i+1}$.

Assume that case (i) holds. Then, the linearization point is placed in the beginning 
of the execution interval of $op_{i+1}$. Notice that in this case, the invocation (nor 
in fact the further execution) of $op_{i+1}$ has no effect on the local list of any 
server. Thus, between $C_i$ and $C_{i+1}$ no server local list is modified and, by the 
induction hypothesis, the claim holds. 


Assume now that case (ii) holds. By Lemma \ref{lemma:list-lin}, we have that a concurrent 
delete operation $op'$ removes the element with key $k$ from the local list of $s_d$ during 
the execution interval of $op_{i+1}$. By the assignment of linearization points, Observation 
\ref{obs:ul-mod} and Lemma \ref{lemma:list-lin}, it further follows that $op' = op_i$. Notice 
that in this case (ii) also, $op_{i+1}$ has no effect on the local list of any server. 
Thus, since by the induction hypothesis it holds that $S_i = \bigcup_{j=0}^{\maxser-1}ll_i^j$, 
it also holds that $S_i = \bigcup_{j=0}^{\maxser-1}ll_{i+1}^j$, and since $S_i = S_{i+1}$, 
the claim holds. 

Since a search operation does not modify the local list of any server, the argument is 
analogous as for the case of the delete operation. 
\end{proof}

From the above lemmas and observations, we have the following.

\begin{theorem}
The distributed unsorted list is linearizable. The insert operation has time and communication complexity $O(\maxser)$. The search and delete operations have communication complexity $O(1)$.
\end{theorem}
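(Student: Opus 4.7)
The plan is to obtain linearizability as an almost immediate corollary of the preceding lemmas, and then to handle the complexity bounds by a direct case analysis of the message flow in the pseudocode. For linearizability, I would first observe that Lemma~\ref{lemma:list-lin} already places every operation's linearization point inside its execution interval, and that the invariant $S_i = \bigcup_{j=0}^{\maxser-1} ll_i^j$ proved in the final lemma of the section captures the state of the abstract list after the $i$-th linearization point. Combining these, the response of each insert is consistent with $S_{i-1}$ (by the \texttt{search} on $llist$ performed at line~\lref{code:ser:ins2}, respectively the successful insertion at line~\lref{code:ser:ins15}); the response of each successful delete is consistent with $S_{i-1}$ by the same argument applied at line~\lref{code:ser:del2}; and for unsuccessful delete/search operations, Lemma~\ref{obs:not-in-list} together with the definition of the linearization point (either at the invocation if the key is absent, or right after a concurrent delete) guarantees the response agrees with the induced sequential execution. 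Thus a sequential $\sigma$ respecting real-time order and reproducing all responses exists.

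For the insert complexity, I would trace the worst-case path of an \INSERT\ message. A client always sends the request to $s_0$ (line~\lref{code:ins:1}); from there the message is forwarded along the ring, either because the recipient is not the token server (lines~\lref{code:ser:ins5}--\lref{code:ser:ins9}), or because the token server must still check whether the key already exists in subsequent servers ($round > 0$ and $mloop = \false$, lines~\lref{code:ser:ins11}--\lref{code:ser:ins13}), or because the token server itself is full and must hand off the token (lines~\lref{code:ser:ins16}--\lref{code:ser:ins21}). I would argue that in all three cases the message can cross each server at most a constant number of times before either being answered or triggering the $mloop$ flag at $s_{\maxser-1}$, giving a total of $O(\maxser)$ forwarded messages and, since forwarding is sequential, $O(\maxser)$ time.

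For search and delete, I would observe that the client fans out a single request to every server in parallel (lines~\lref{code:srch:1} and~\lref{code:del:1}) and then collects one response from each (the \Do--\While\ loops). Each server independently executes a constant number of steps at the message level --- one local \texttt{search}/\texttt{delete} call followed by one reply --- so the per-server, per-operation message count is $O(1)$, which is the communication metric claimed by the theorem; in parallel time this also yields $O(1)$ rounds of inter-server communication, since no server needs to forward the request further along the ring.

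The main obstacle I anticipate is the unsuccessful-delete / unsuccessful-search case, where linearization points are placed either at the invocation or immediately after a concurrent successful delete. I would need to verify carefully that such placements are always consistent with the real-time order and with the invariant on $\bigcup_j ll_i^j$: in particular, if key $k$ is present in some $ll^{s_d}$ when $op$ is invoked but every server returns \NACK, I must show there is a concurrent delete $op'$ for $k$ whose linearization point precedes every server's local \texttt{search} on line~\lref{code:ser:del2} for $op$, and that placing $op$'s linearization point right after $op'$ yields a point that still lies inside $op$'s execution interval. This amounts to a small scheduling argument using the FIFO property of message channels and the fact that each server processes its incoming messages sequentially, but it is the subtlest piece of the proof and is where I would spend the most care.
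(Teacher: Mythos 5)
Your linearizability argument and the $O(\maxser)$ bound for insert are exactly the derivation the paper intends: the theorem is left as an immediate corollary of Lemma~\ref{lemma:list-lin}, Observations~\ref{obs:ul-unique}--\ref{obs:ul-mod}, Lemma~\ref{obs:not-in-list}, and the invariant $S_i = \bigcup_{j=0}^{\maxser-1} ll_i^j$, and these are precisely the pieces you assemble. Your caution about the unsuccessful-delete/search case is well-placed --- the linearization point placed ``right after a concurrent delete $op'$'' is the delicate one, and Lemma~\ref{lemma:list-lin} is the lemma that discharges it, via the FIFO processing argument you sketch. Your trace of the insert message path (to the token server, around the ring once to set $mloop$, back to the token server, at most one token handoff) is correct: each server is crossed a bounded number of times, giving $O(\maxser)$ hops in sequence and hence $O(\maxser)$ time.

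The place I would push back is the $O(1)$ communication claim for search and delete. You assert that ``per-server, per-operation message count'' is ``the communication metric claimed by the theorem,'' but this does not match the formal definition in Section~\ref{sec:model}: inter-island communication complexity of an operation instance is there defined as the \emph{total} number of messages sent by all cores for that instance. Under that reading, a search or delete generates roughly $2\maxser$ messages --- one request to each of the $\maxser$ servers plus one reply from each --- so its communication complexity is $\Theta(\maxser)$, the same asymptotic bound as insert. What actually distinguishes search/delete from insert is that these $O(\maxser)$ messages travel in parallel rather than along a chain, so the \emph{time} complexity is $O(1)$ (one round-trip) rather than $O(\maxser)$. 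Your argument does establish both the per-server $O(1)$ count and the $O(1)$ parallel-round latency, which is surely what the theorem means in spirit; but as written, you state without justification that the per-server count is the metric the theorem uses, when the paper's own formal definition says otherwise. You should either note this discrepancy as a weakness of the theorem statement, or explicitly argue that the theorem intends time complexity (or a per-server bound) for this clause rather than the total-message count defined earlier.
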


\ignore{
Using a proof technique similar to that followed for the token-based stack implementation 
(Section \ref{app:stack-token}), we can prove:
\begin{theorem}
The distributed unsorted list is linearizable. The insert operation has time and communication complexity $O(\maxser)$. The search and delete operations have communication complexity $O(1)$.
\end{theorem}
}

\subsubsection{Alternative Implementation}
\label{app:unsorted-list-alt}


At each point in time, there is a server (not necessarily always the same), 
denoted by $s_t$, which holds the insert token, and serves insert operations. 
Initially, server $s_0$ has the token, thus the first element to be inserted 
in the list is stored on server $s_0$. Further element insertions are also performed on it, 
as long as the space it has allocated for the 
list does not exceed a threshold.
In case server $s_0$ has to service an insertion but its space is filled up, it 
forwards the token by sending a message to the next server, 
i.e. server $s_1$. Thus, if server $s_i$, $0 \leq i < \maxser$, has the token, but cannot 
service an insertion request without exceeding the threshold, it forwards the token to server 
$s_{(i+1) \mod \maxser}$. When the next server receives the token, it allocates a memory chunk 
of size equal to threshold, to store list elements. When the token reaches $s_{\maxser -1}$, if 
$s_{\maxser -1}$ has filled all the local space up to a threshold, it sends the token again to 
$s_0$. Then, $s_0$ allocates more memory (in addition to the memory chunk it had initially allocated 
for storing list elements) for storing more list elements. The token might go through the server 
sequence again without having any upper-bound restrictions concerning the number of round-trips. 

Event-driven code for the server is presented in Algorithm \ref{fig:dl-server-alt}. 
Each server $s$ maintains a local list ($llist$ variable) allocated 
for storing list elements, a $token$ variable which indicates whether $s$ 
currently holds the token, and a variable $round$ to mark the ring round-trips 
the token has performed; $round$ is initially $0$, and is incremented after 
every transmission of the token to the next server. 
The pseudocode of the client is presented in Algorithm~\ref{fig:dl-client-alt}. 

A client $c$ sends an insert request for an element with key $k$ to all 
servers in parallel and awaits a response. If any of the servers contains 
$k$ in its local list, it sends \ACK\ to $c$ and the insert operation 
terminates. If no server finds $k$, then all reply \NACK\ to $c$. 
In addition, the token server $s_t$ encapsulates its id in the \NACK\ reply.
After that, $c$ sends an insert request for $k$ to $s_t$ only. If $s_t$ 
can insert it, it replies \ACK\ to $c$. If $k$ has in the meanwhile been 
inserted, $s_t$ replies \NACK\ to $c$. If $s_t$ is no longer the token 
server, it forwards the request along the server ring until it reaches 
the current token server. Servers along the ring should check whether 
they contain $k$ or not, and if some server does, then it replies \NACK\ 
to $c$. Let $s_t'$ be a token server that receives such a request. It 
also checks whether it contains $k$ or not. If not, it attempts to insert 
$k$ into its local list. Otherwise it replies \NACK. When attempting to 
insert the element in the local list, it may occur that the allocated 
space does not suffice. In this case, the server forwards the request 
as well as the token to the next server in the ring, and increments the 
value of $round$ variable. If the insertion at a token server is successful, 
the server then replies \ACK\ to $c$.

To perform a search for an element $e$, a client $c$ sends a search request 
to all servers and awaits their responses. A server $s$ that receives a search request, 
checks whether $e$ is present in its local part of the list and if so, it responds with \ACK\ to 
$c$. Otherwise, the response is \NACK. If all responses that $c$ receives are \NACK, $e$ is not 
present in the list. Notice that if $e$ is contained in the list, exactly one server responds with \ACK. 
Delete works similarly; if a server $s$ responds with \ACK, then $s$ has found and deleted $e$ from its local list. 
Given that communication is fast 
and the number of servers is much less than the total number of cores, 
forwarding a request to all servers does not flood the network. 

\begin{wrapfigure}{L}{0.55\textwidth}
	\begin{minipage}{.5\textwidth}
		\begin{algorithm}[H]
\small
\caption{Events triggered in a server of the distributed unsorted list.}
\label{fig:dl-server-alt}
\begin{code}
\lreset
\firstline
List $llist$ = $\varnothing$; \nl
\integer\ $my\_id$, $next\_id$, $token$ = $0$, $round$ = $0$; \bl\bl\nl

    a message $\langle op, cid, key, data,tk\rangle$ is received:       \nl
 \n   \Switch\ ($op$) $\lbrace$                                                 \nl
 \n     \Case\ \INSERT:                                   \label{code:ser:ins1-alt} \nl
 \n       \If\ ($tk$ == \TOKEN) $\lbrace$                 \label{code:ser:tok1-alt} \nl
 \n         $token$ = $my\_id$;                           \label{code:ser:tok2-alt} \nl
            allocate\_new\_memory\_chunk($llist, round$); \label{code:ser:tok3-alt} \ul
 \p       $\rbrace$                                                             \nl
          $status_1$ = search($llist$, $key$);    		  \label{code:ser:ins2-alt} \nl 
          \If\ ($tk$ == $-2$) $\lbrace$ 					 \label{code:ser:ins2.1}\nl 
\n          \If\ ($status_1$) $\lbrace$               \label{code:ser:tok1.1-alt} \nl
\n          	\If\ ($token == my\_id$) send($cid$, $\langle\ACK, \true\rangle$); \label{code:ser:ins3-alt} \nl 
				\Else\ send($cid$, $\langle\ACK, \false\rangle$);           		\label{code:ser:ins3.1} \nl 
\p			$\rbrace$ \Else\  $\lbrace$               \label{code:ser:tok1.1} \nl
\n          	\If\ ($token == my\_id$) send($cid$, $\langle\NACK, \true\rangle$); \label{code:ser:ins33} \nl 
				\Else\ send($cid$, $\langle\NACK, \false\rangle$);           		\label{code:ser:ins3.13} \ul 
\p			$\rbrace$ \nl
\p        $\rbrace$ 
		  \Else\ $\lbrace$                                \label{code:ser:ins4-alt} \nl 
\n 			\If\ ($status_1$) send($cid$, \NACK);      \label{code:ser:ins22.1} \nl 
			\Else\ $\lbrace$ \nl
 \n         \If\ ($token \neq my\_id$) $\lbrace$          \label{code:ser:ins5-alt} \nl 
 \n           $next\_id$ = get\_next($my\_id$);           \label{code:ser:ins7-alt} \nl 
  send($next\_id$, $\langle op, cid, key, data, tk\rangle$);
                                                          \label{code:ser:ins8-alt} \nl 
 \p       $\rbrace$ \Else\ $\lbrace$                   \label{code:ser:ins10-alt} \nl 
 \n             $status_2$ = insert($llist, round, key, data$); 
 														 \label{code:ser:ins15-alt} \nl 
                \If\ ($status_2$ == \false) $\lbrace$    \label{code:ser:ins16-alt} \nl
 \n               $round++$;                             \label{code:ser:ins17-alt} \nl 
                  $token$ = get\_next($my\_id$);      	 \label{code:ser:ins18-alt} \nl
                  send($token$, $\langle op, cid, key, data, \TOKEN\rangle$);
                                                         \label{code:ser:ins21-alt} \nl
 \p             $\rbrace$ \Else\ send($cid$, \ACK);      \label{code:ser:ins22-alt} \ul
 \p           $\rbrace$                                                         \ul
 \p         $\rbrace$                                                           \ul
 \p       $\rbrace$                                                             \nl
          \Break;                                                               \nl
 \p     \Case\ \SEARCH:                                  \label{code:ser:srch1-alt} \nl
 \n       $status_1$ = search($llist$, $key$);   \label{code:ser:srch2-alt} \nl 
          \If\ ($status_1$) 
            send($cid$, $\langle \ACK, my\_id \rangle$); \label{code:ser:srch4-alt} \nl 
          \Else\   
            send($cid$, $\langle \NACK, my\_id \rangle$);\label{code:ser:srch6-alt} \nl			
          \Break;                                        \label{code:ser:srch7-alt} \nl
 \p     \Case\ \DELETE:                                   \label{code:ser:del1-alt} \nl
 \n       $status_1$ = delete($llist$, $key$);      \label{code:ser:del2-alt} \nl 
          \If\ ($status_1$)
            send($cid$, \ACK);                            \label{code:ser:del4-alt} \nl 
          \Else\  
            send($cid$, \NACK);                           \label{code:ser:del6-alt} \nl			
          \Break;                                         \label{code:ser:del7-alt} \ul
 \p $\rbrace$
\p
\end{code}
\end{algorithm} 
\end{minipage}
\end{wrapfigure}

\begin{wrapfigure}{L}{0.55\textwidth}
	\begin{minipage}{.5\textwidth}
		\begin{algorithm}[H]
\small
\caption{Insert, Search and Delete operation for a client of the distributed list.}
\label{fig:dl-client-alt}
\begin{code}
\firstline 
    \bool\ ClientInsert(int $cid$, int $key$, data $data$) \{ \label{code:ins:0-alt} \nl
 \n  \bool\ $status$;												\nl
 	  \bool\ $found$ = \false;                                      \nl
 	 \integer\ $tid$; 												\bl \nl

       send\_to\_all\_servers($\langle\INSERT, cid, key, \bot, -2\rangle$); \label{code:ins:1-alt} \nl
       \Do\ \{                                                  \label{code:ins:2-alt} \nl
\n    		$\langle status, sid, is\_token \rangle$ = receive();   \label{code:ins:3-alt} \nl 
            \If\ ($status$ == \ACK) $found$ = \true;               \label{code:ins:4-alt} \nl
            \If\ ($is\_token$) $tid = sid$;						 \label{code:ins:4.1-alt} \nl
            $c++$;                                                 \label{code:ins:5} \nl	
\p 	   \} \While\ ($c < \maxser$);                               \label{code:ins:6} \nl
       \If\ ($found$ == \true) \return\ \false;						 \label{code:ins:4.1} \nl
       send($tid$, $\langle\INSERT, cid, key, data, -1\rangle$); \label{code:ins:7} \nl
       $status$ = receive();                                   \label{code:ins:8} \nl
       \If\ ($status$ == \NACK) \return\ \false;              \label{code:ins:9} \nl
       \Else\ \return\ \true;              					\label{code:ins:10} \ul  
 \p \}
                                                                              \bl \nl
     \bool\ ClientSearch(int $cid$, int $key$) \{             \label{code:srch:0-alt} \nl
 \n  \integer\ $sid$;                                                             \nl 
     \integer\ $c$ = $0$;                                                         \nl 
     \bool\ $status$;                                           			      \nl
     \bool\ $found$ = \false;                                                 \bl \nl
	
     send\_to\_all\_servers($\langle\SEARCH, cid, key, \bot, -1\rangle$); \label{code:srch:1-alt} \nl
     \Do\ \{                                                  \label{code:srch:2-alt} \nl
 \n    $\langle status, sid \rangle$ = receive();             \label{code:srch:3-alt} \nl 
       \If\ ($status$ == \ACK) $found$ = \true;               \label{code:srch:4-alt} \nl
       $c++$;                                                 \label{code:srch:5-alt} \nl	
 \p \} \While\ ($c < \maxser$);                               \label{code:srch:6-alt} \nl
	\return\ $found$;                                         \label{code:srch:7-alt} \ul 
 \p\}                                                                         \bl \nl

    \bool\ ClientDelete(int $cid$, int $key$) \{              \label{code:del:0-alt} \nl
 \n  \integer\ $sid$;                                                             \nl 
     \integer\ $c$ = $0$;                                                         \nl 
     \bool\ $status$;                                           			      \nl
     \bool\ $deleted$ = \false;                                               \bl \nl
	
     send\_to\_all\_servers($\langle\DELETE, cid, key, \bot, -1\rangle$);  
     															\label{code:del:1-alt} \nl
     \Do\ \{                                                    \label{code:del:2-alt} \nl
 \n    $\langle status, sid \rangle$ = receive();               \label{code:del:3-alt} \nl 
       \If\ ($status$ == \ACK)   $deleted$ = \true;             \label{code:del:4-alt} \nl
       $c++$;                                                   \label{code:del:5-alt} \nl	
 \p  \} \While\ ($c < \maxser$);                                \label{code:del:6-alt} \nl
	\return\ $deleted$;                                         \label{code:del:7-alt} \ul 
 \p \}
\end{code}
\end{algorithm}
\end{minipage}
\end{wrapfigure}
\subsection{Sorted List}
\label{app:sorted-list}
\begin{wrapfigure}{l}{0.7\textwidth}
	\begin{minipage}{.65\textwidth}
		\begin{algorithm}[H]
			\scriptsize
			\caption{Events triggered in a server of the distributed sorted list.}
			\label{fig:ol-server}
			\begin{code}
				\firstline
				List $llist = \varnothing$;  																		\nl
				\integer\ $my\_id$, $next\_id$, $k_{max}$, $cv[MC]$, $nbr\_cv[MC]$;									\nl
				data[$0 \ldots CHUNKSIZE$] $chunk1$, $chunk2$;  													\nl
				\bool\ $status$ = \false, $served$ = \false;												  \bl\bl\nl
				
				a message $\langle op, cid, key, data\rangle$ is received:                						    \nl
				\n  \Switch\ ($op$) $\lbrace$                                                  						\nl
				\n    \Case\ $REQC$:  \label{code:ser-o:mov1} 						       							\nl
				\n      send($cid$, $cv$);          							\label{code:ser-o:req1}  			\nl 
				$chunk2$ = receive($cid$);								\label{code:ser-o:req2}				\nl
				\If\ (not enough free space in local list to fit elements of $chunk2$ )
				$\lbrace$ 				\label{code:ser-o:req3}				\nl
				\n      	\If\ ($my\_sid$ == $\maxser-1$)	$status$ = \false;	\label{code:req-o:0.1} 				\nl
				\Else\ $\lbrace$ 									\label{code:req-o:0.2} 				\nl
				\n 				$chunk1$ = getChunkOfElementsFromLocalList($llist$);
				\label{code:ser-o:req4}				\nl
				$status$ = ServerMove($next\_id$, $chunk1$);	\label{code:ser-o:req5}				\ul
				\p			$\rbrace$																				\nl
				\p 		$\rbrace$ \Else\ $status$ = \true;						\label{code:ser-o:req5.0}			\nl
				
				\If\ ($status$ == \true) $\lbrace$ 			 			\label{code:ser-o:req5.1}			\nl
				\n			insertChunkOfElementsInLocalList($llist, chunk2$);\label{code:ser-o:req9}				\nl
				send($cid$, \ACK);									\label{code:ser-o:req10}			\nl															
				\p		$\rbrace$ \Else\ send($cid$, \NACK);					\label{code:ser-o:req7}				\nl
				\Break; \p                                                             						\nl
				\Case\ $INSERT$: 											\label{code:ser-o:ins1}             \nl
				\n      \While\ ($served \neq \true$) \{						\label{code:ser-o:ins1.1}           \nl
				\n			$k_{max}$ = find\_max($llist$); 					\label{code:ser-o:ins2}				\nl 
				\If\ ($k_{max} > key$ and isFull($llist$) $\neq$ \true) \{    							\nl
				\n        		status = insert($llist, key, data$); 			\label{code:ser-o:ins3}	    		\nl
				send($cid$, $status$);    						\label{code:ser-o:ins4} 			\nl
				$served$ = \true;	    						\label{code:ser-o:ins4.3} 			\nl
				\p      	\} \Else\ \If\ ($k_{max} > key$) \{ 				\label{code:ser-o:ins4.1}			\nl
				\n        		$chunk1$ = getChunkOfElementsFromLocalList($llist$);\label{code:ser-o:ins4.2}		\nl	
				$status$ = ServerMove($next\_id$, $chunk1$);	\label{code:ser-o:ins5}				\nl
				\If\ ($status$ == \true) \{ 					\label{code:ser-o:ins6}				\nl
				\n					removeChunkOfElementsFromLocalList($llist$, $chunk1$);\label{code:ser-o:ins6.1} \nl	
				\p		 		\} \Else\  \{	    							\label{code:ser-o:ins6.2} 			\nl
				\n					send($cid$, \NACK); 						\label{code:ser-o:ins10.1}			\nl
				$served$ = \true;							\label{code:ser-o:ins10.1.1}		\ul
				\p				\}																					\nl
				\p     		\} \Else\  \{  										\label{code:ser-o:ins10.2}			\nl
				\n				\If\ ($my\_id \neq \maxser-1$) send($next\_id$, $\langle\INSERT, cid, key, data\rangle$);		\label{code:ser-o:ins10.4}			\nl
				\Else\  send($cid$, \NACK); 					\label{code:ser-o:ins10.5}			\nl
				$served$ = \true;								\label{code:ser-o:ins10.6}			\ul
				\p			\}																						\ul
				\p 		\}                                                                       \nl
				\Break;   \nl
				%
				\p \Case\ $SEARCH$: 										\label{code:ser-o:srch1}			\nl
				\n      $cv[cid]++$;                                    		\label{code:ser-o:s1}				\nl
				$status$ = search($llist,key$);             			\label{code:ser-o:s2}				\nl
				\If\ ($status$ == $\false$) send($cid$, \NACK));         \label{code:ser-o:s3}				\nl
				\Else\ send($cid$, \ACK);                                \label{code:ser-o:s4}				\nl
				\Break;                                                               						\nl
				\p    \Case\ $DELETE$:                                                       						\nl
				\n      $cv[cid]++$;		                                       \label{code:ser-o:del1}			\nl
				status = search($llist,key$); 							  \label{code:ser-o:del2}		    \nl
				\If\ ($status == \true$) \{ 								\label{code:ser-o:del3}			\nl
				\n        delete($llist$, $key$);                                \label{code:ser-o:del4}			\nl
				send($cid$, \ACK);                                     \label{code:ser-o:del5}			\nl
				\p      \} \Else\ send($cid$, \NACK);            				 \label{code:ser-o:del7}			\nl
				\Break;                                                        								\ul 
				\p\p \}         
				\p
			\end{code}
		\end{algorithm} 
	\end{minipage}
\end{wrapfigure}

The proposed implementation is based on the distributed unsorted list, 
presented in Section~\ref{app:unsorted-list}. 
%
%
Each server $s$ has a memory chunk of predetermined size where it maintains a part 
of the implemented list so that all elements stored on server $s_i$ have smaller keys 
than those stored on server $s_{i+1}$, $0 \leq i < \maxser-1$. Because of this sorting 
property, an element with key $k$ is not appended to the end of the list, so a token 
server is useless in this case. This is an essential difference with the unsorted list 
implementation.

Similarly to the unsorted case, a client sends an insert request for key $k$ to server 
$s_0$. The server searches its local part of the list for a key that is greater than or 
equal to $k$. In case that it finds such an element that is not equal to $k$, it can try 
to insert $k$ to its local list, $llist$. More specifically, if the server has sufficient 
storage space for a new element, it simply creates a new node with key $k$ and inserts 
it to the list. However, in case that the server does not have enough storage space, it 
tries to free it by forwarding a chunk of elements of $llist$ to the next server. If 
this is possible, it serves the request. In case $s_0$ does not find a key that is greater 
than or equal to $k$ in its $llist$, if forwards the message with the insert request 
to the next server, which in turn tries to serve the request accordingly. Notice that this 
way, a request may be forwarded from one server to the next, as in the case of the unsorted 
list. However, for ease of presentation, in the following we present a static algorithm 
where this forwarding stops at $s_{\maxser-1}$. In case that an element with $k$ is already 
present in the $llist$ of some server $s$ of the resulting sequence, then $s$ sends an 
\NACK\ message to the client that requested the insert.
 


As in the case of the unsorted list, a client performs a search or delete operation for key 
$k$ by sending the request to all servers. If not handled correctly, then the interleaving 
of the arrival of requests to servers may cause a search operation to ``miss'' the key $k$ 
that it is searching, because the corresponding element may be in the process to be moved 
from one server to a neighboring one. In order to avoid this, servers maintain a sequence 
number for each client that is incremented at every search and delete operation. Neighboring 
servers that have to move a chunk of elements among them, first verify that the latest (search 
or delete) requests that they have served for each client have compatible sequence numbers and 
perform the move only in this case.

Event-driven code for the server is presented in Algorithms \ref{fig:ol-server} 
and \ref{fig:ol-server-aux}. The clients access the sorted list using the same 
routines as they do in the case of the unsorted list (see Algorithm \ref{fig:dl-client}).

When an insert request for key $k$ reaches a server $s$, $s$ compares the maximal 
key stored in its local list to $k$. If $k$ is greater than the maximal key and $s$ is 
not $s_{\maxser-1}$, the request must be forwarded to the next server (line~\lref{code:ser-o:ins10.4}). 
Otherwise, if $k$ is to be stored on $s$, $s$ checks if $llist$ has enough space 
to serve the insert. If it does, $s$ inserts the element and sends an \ACK\ to the 
client (line~\lref{code:ser-o:ins3}-\lref{code:ser-o:ins4}). If $s$ does not have 
space for inserts, the operation cannot be executed, hence $s$ must check whether 
a chunk of its elements can be forwarded to the next server to make room for further 
inserts. To move a chunk, $s$ calls {\tt ServerMove()} (presented in Algorithm~\ref{fig:ol-server-aux})
(line \lref{code:ser-o:ins5}). If {\tt ServerMove()} succeeds in making room in $s$'s 
$llist$, the insert can be accommodated (line~\lref{code:ser-o:ins6}). In any other 
case, $s$ responds to the client with \NACK\ (line~\lref{code:ser-o:ins10.1}).

A server process a search request as described for the unsorted list, but it now pairs
each such request with a sequence number (line~\lref{code:ser-o:s1}). Delete 
is processed by a server in a way analogous to search.

In order to move a chunk of $llist$ to the next server, a server $s_i$ 
invokes the auxiliary routine {\tt ServerMove()} (line \lref{code:ser-o:ins5}).  
{\tt ServerMove()} sends a {\sf REQC} message to server $s_{i+1}$ (line \lref{code:sm:1}). 
When $s_{i+1}$ receives this request, it sends its client vector to $s_i$ (line 
\lref{code:ser-o:req1}). Upon reception (line \lref{code:sm:2}), $s_i$ compares 
its own client vector to that of $s_{i+1}$ and as long as it lags behind $s_{i+1}$ for 
any client, it services search and delete requests until it catches up to $s_{i+1}$ 
(lines \lref{code:sm:3}-\lref{code:sm:5}). Notice that during this time, $s_{i+1}$ 
does not serve further client request, in order allow $s_i$ to catch up with it. 
As soon as $s_i$ and $s_{i+1}$ are compatible in the client delete and search requests 
that they have served, $s_i$ sends to $s_{i+1}$ a chunk of the elements in its local 
list (lines \lref{code:sm:6}-\lref{code:sm:7}) and awaits the response of $s_{i+1}$. 
We remark that in order to perform this kind of bulk transfer, as the one carried out 
between a server executing line \lref{code:sm:8} and another server executing line 
\lref{code:ser-o:req2}, we consider that remote DMA transfers are employed. 
This is omitted from the pseudocode for ease of presentation.
 
If $s_{i+1}$ can store the chunk of elements, then it does so and sends \ACK\ to $s_i$. 
Upon reception, $s_i$ may now remove this chunk from its local list (line \lref{code:ser-o:ins6}) 
and attempt to serve the insert request. Notice that if $s_{i+1}$ cannot store the 
chunk of elements of $s_i$, then it itself initiates the same chunk moving procedure 
with its next neighbor (lines \lref{code:req-o:0.2}-\lref{code:ser-o:req5}), and if 
it is successful in moving a chunk of its own, then it can accommodate the chunk 
received by $s_i$.
Notice that in the static sorted list that is presented here, this protocol may 
potentially spread up to server $s_{\maxser-1}$ (line \lref{code:req-o:0.1}). 
If $s_{\maxser-1}$  does not have available space, then the moving of the chunk 
fails (line \lref{code:ser-o:ins6.2}). The client then receives a \NACK\ response, 
corresponding to a full list. 

We remark that this implementation can become dynamic by appropriately exploiting 
the placement of the servers on the logical ring, in a way similar to what we do in 
the unsorted version.

\ignore{The semantics here are relaxed, as concurrent delete 
operations may have freed up space in other servers. Such issues may be overcome 
in a dynamic version of the protocol, where $s_{\maxser-1}$ is allowed to request 
the chunk move from $s_0$. }

\begin{wrapfigure}{l}{0.7\textwidth}
	\begin{minipage}{.65\textwidth}
\begin{algorithm}[H]
\caption{Auxiliary routine ServerMove for the servers of the distributed sorted list.}
\label{fig:ol-server-aux}
\footnotesize
\begin{code}
\firstline
 \bf \bool\ ServerMove(int $cid$, data $chunk1$)  \{             	\label{code:sm:0} 				\nl
 \n  \bool\ $status$;                                                         					 	\nl
 	 data $chunk2$;																				 \bl\nl

       	send($next\_id$, $\langle REQC, cid, 0, \bot\rangle$);     	\label{code:sm:1} 				\nl
       	$nbr\_cv$ = receive($next\_id$);                        	\label{code:sm:2} 				\nl
      	\While\ (for any element $i$, $cv[i] < nbr\_cv[i]$) \{  	\label{code:sm:3}               \nl
\n      	receiveMessageOfType($SEARCH$ or $DELETE$);     		\label{code:sm:4}    			\nl
			$service\ request$                                  	\label{code:sm:5}   			\ul
\p      \}                                                                							\nl
       	$chunk2$ = getChunkOfElementsFromLocalList($llist$);   		\label{code:sm:6}   			\nl
        send($next\_id$, $chunk2$);                       			\label{code:sm:7}				\nl
       	$status$ = receive($next\_id$);                        		\label{code:sm:8} 				\nl
		\If\ ($status$ == \true) $\lbrace$ 			 			\label{code:ser-o:sm5.1}			\nl
\n			removeChunkOfElementsFromLocalList($llist$, $chunk2$); \label{code:ser-o:sm6}			\nl
			insertChunkOfElementsInLocalList($llist, chunk1$);	\label{code:ser-o:sm9}				\nl
			\return\ \true;										\label{code:ser-o:sm10}				\ul
\p		$\rbrace$ \Else\ \return\ \false;						\label{code:ser-o:sm7}				\ul
 \p \}
\end{code}
\end{algorithm} 
\end{minipage}
\end{wrapfigure}


\section{Details on Hierarchical Approach}
\label{app:hierarchical}
We interpolate one or more communication layers 
by using intermediate servers between the servers that maintain parts of the 
data structure and the clients. The number of intermediate servers and the number of
layers of intermediate servers between clients and servers can be tuned for achieving
better performance. 

For simplicity of presentation, we focus on the case where there is a single layer of 
intermediate servers. We present first the details for a fully non cache-coherent architecture.

For each island $i$, we appoint one process executing on a core of this island,
called the island master (and denoted by $m_i$), as the intermediate server. 
Process $m_i$ is responsible to gather messages from all the 
other cores of the island, and batch them together before 
forwarding them to the appropriate server. This way, we exploit the fast communication between the cores 
of the same island and we minimize the number of messages sent to the servers by 
putting many small messages to one batch.
We remark that each batch can be sent to a server by performing \DMA. In this case,
$m_i$ initiates the \DMA\ to the server's memory, and once the  \DMA\ is completed,
it sends a small message to the server to notify it about the new data that it 
has to process.

Algorithm \ref{alg:hier-master} presents the events triggered in an island master 
$m_i$ and its actions in order to handle them. $m_i$ receives messages from clients
that have type OUT (outgoing messages) and from servers that have type IN (incoming 
messages). The outbatch messages are stored in the $outbuf$ array. Each time a 
client from island $i$ wants to execute an operation, it sends a message to $m_i$;
$m_i$ checks the destination server id ($sid$), recorded in the message, 
and packs this message together with other messages directed to 
$sid$ (lines~\lref{ln:hier-outgoing1}-\lref{ln:hier-outgoing2}). Server $m_i$
has set a timer, and it will submit this batch
of messages to server $sid$ (as well as other batches of messages to other servers), 
when the timer expires. When $m_i$ receives an incoming message from 
a server, it unpacks it, and sends each message to the appropriate client on its 
island. When the timer is triggered, $m_i$ places each batch of messages in an
$outbuf$ array that $m_i$ maintains for the appropriate server. The
transfer of all these messages to the server may occur using \DMA. For simplicity,
we use two auxiliary functions: \texttt{add\_message} to add a message to an $outbuf$ buffer of $m_i$,
and split to split a batch of messages $msg$ that have arrived to 
the particular messages of the batch which are then placed in the $inbuf$ buffer of $m_i$.
 
The code of the client does not change much. Instead of sending messages directly to the 
server, it sends them to the board master.

\begin{algorithm}[!t]
\caption{Events triggered in an island master - Case of fully non cache-coherent architectures.}
\label{alg:hier-master}
\begin{code}
\lreset
\firstline
      LocalArray $outbuf= \varnothing$;	    \cm{stores outgoing messages}  \nl
      LocalArray $inbuf = \varnothing$;    \cm{stores incoming messages}  \ul
	  \nl
    a message $\langle type, msg\rangle$ is received:                      \nl
 \n	  \If ($type$ == OUT) $\lbrace$              \label{ln:hier-outgoing1} \nl 
 \n     $sid$ = read field $sid$ from $msg$;                                               \nl
        add\_message($outbuf$, $sid$, $msg$);    \label{ln:hier-outgoing2} \nl
 \p   $\rbrace$ \Elseif\ ($op$ == IN) $\lbrace$  \label{ln:hier-incoming1}
                          \cm{$m_i$ received a batch of messages from a server } \nl
 \n     $inbuf$ = split($msg$);		        \cm{unbundle the message to many small ones }  \nl
        \Foreach\ message $m$ in $inbuf$ $\lbrace$  \cm{the batch is for all cores in the island }  \nl
 \n       $cid$ = read field $client$ from $msg$;                                 \nl
          send($cid$, $msg$);                    \ul
 \p     $\rbrace$    \ul
 \p   $\rbrace$    \ul
    \nl
 \p timer is triggered:  \cm{Every timeout, $m_i$ sends outgoing messages} \nl
 \n   \Foreach\ batch of messages in $outbuf$ $\lbrace$ 
                                    \cm{send a bach to a server} \nl
 \n     send($sid$, batch); \cm{this send can be done using DMA}                         \nl
       	delete($outbuf$, $sid$, batch); \ul
 \p   $\rbrace$
 \p
\end{code}
\end{algorithm}

\ignore{
\begin{algorithm}
\caption{Client Algorithm.}
\label{alg101}
\begin{code}
 \n \void\ send(msg) $\lbrace$               \nl
 \n   mid = get the island master id;        \nl
      send(mid, $\langle OUT, msg\rangle$);  \nl
 \p $\rbrace$                                \nl
	                                         \nl
    msg receive(\integer\ mid) $\lbrace$     \nl
 \n   $\langle OUT, msg\rangle$              \nl
      \return\ msg;                          \ul
 \p $\rbrace$
\end{code}
\end{algorithm} 
}

We remark that in order to improve the scalability 
of a directory-based algorithm, a locality-sensitive hash functions
could be a preferable choice. For instance, the simple currently employed $\mod$ hash function,  
can be replaced by a  hash function that divides by some integer $k$.
Then, elements of up to $k$ subsequent insert (i.e. push or enqueue) operations may be sent to the same directory server.
This approach suites better to bulk transfers since it allows for exploiting locality. 
Specifically, consider that $m_i$ sends a batch of elements to be inserted to the 
synchronizer $s_s$ of a directory-based data structure. Server $s_s$ 
unpacks the batch and processes each of the requests contained therein separately.
Thus, if $\mod$ is used, each of these elements will be stored in different buckets (of different directory servers).  
As a sample alternative, if the ${\tt div}$ hash function is used, 
more than one elements may end up to be stored in the same bucket. 
When later on a batch of remove (i.e. pops or dequeues) operations arrives to $s_s$,
it can request from the directory server that stores the first of the elements
to be removed, to additionally remove and send back further elements with subsequent keys 
that are located in the same bucket. 
Notice that in this case, the use of 
DMA can optimize these transfers.

We now turn attention to partially non cache-coherent architectures where the cores of an island communicate via 
cache-coherent shared memory.
Algorithm \ref{alg:ccsynch} presents code for the hierarchical approach in this case. 
For each island, we use an instance of the \CCSYNCH\ combining synchronization algorithm, presented in~\cite{FK12}. 
All clients of island $i$ participate to the instance of \CCSYNCH\ for island $i$,
i.e. each such client  calls \CCSYNCH\ (see Algorithm \ref{alg:ccsynch}) to execute an operation.

\begin{algorithm}[!t]
\caption{Pseudocode of hierarchical approach - Case of partially non cache-coherent architectures.}
\footnotesize
\label{alg:ccsynch}
\begin{code}
~~~~~struct Node \{                                                                          \ul
\n  Request req;                                                                        \ul
    RetVal ret;                                                                         \ul
    \integer\ id;                                                                       \ul
    boolean wait;                                                                       \ul
    boolean completed;                                                                  \ul
    int sid;                                                                  \ul
    Node *next;                                                                         \ul
\p\};                                                                                   \bl
                                                                                        \ul
shared Node *Tail;                                                                      \ul
private Node *$node_i$;                                                                 \ul
LocalArray $outbuf= \varnothing$;	    \cm{stores outgoing messages}                   \ul
                                                                                        \ul
RetVal CC-Synch(Request req) \{                \cm{ Pseudocode for thread $p_i$ }       \ul
\n  Node *nextNode, *tmpNode, *tmpNodeNext;                                             \ul
    \integer\ counter = 0;                                                               \bl
                                                                                        \nl
    $node_i\rightarrow wait$ = true; \label{alg:ccsynch:init_wait}                      \nl
    $node_i\rightarrow next$ = null;\label{alg:ccsynch:init_next}                       \nl
    $node_i\rightarrow completed$ = false;\label{alg:ccsynch:init_completed}            \nl
    $nextNode = node_i$;\label{alg:ccsynch:next_node}                                   \nl
    $node_i = \textit{Swap}(Tail, node_i)$;  \label{alg:ccsynch:swap}           \nl
    $node_i\rightarrow req$ = req;            \cm{ $p_i$ announces its request\label{alg:ccsynch:init_reg} }      \nl
    $node_i\rightarrow sid$ = destination server;                  \nl
    $node_i\rightarrow next$ = nextNode;\label{alg:ccsynch:connect_to_next}                                       \nl
    \While\ ($node_i\rightarrow wait$ == true)\cm{ $p_i$ spins until it is unlocked\label{alg:ccsynch:spinning} }  \ul 
\n      nop;                                                                            \nl
\p  \If\ ($node_i\rightarrow completed$==true)\cm{ if $p_i$'s req is already applied\label{alg:ccsynch:operation_completed} }\nl
\n      \return\ $node_i\rightarrow ret$;      \cm{ $p_i$ returns its return value\label{alg:ccsynch:return} }     \nl
\p  tmpNode = $node_i$;                     \cm{ otherwise $p_i$ is the combiner\label{alg:ccsynch:tmpnode} }     \nl
    \While\ (tmpNode $\rightarrow$ next $\neq$ $null$ AND counter $<$ $h$)\label{alg:ccsynch:combine}\{            \nl 
\n      counter = counter + 1;\label{alg:ccsynch:inc_counter}                          \nl
        tmpNodeNext=tmpNode$\rightarrow$next;\label{alg:ccsynch:go_next}               \nl
        add\_message($outbuf$, tmpNode$\rightarrow$sid, tmpNode$\rightarrow$req);       \nl
        tmpNode = tmpNodeNext;             \cm{ proceed to the next node\label{alg:ccsynch:combine_next} }    \ul
\p  \}                                                                                 \nl
    \Foreach\ batch of messages in $outbuf$  $\lbrace$  \cm{send a batch of messages to  servers}\label{alg:ccsynch:send_fat}       \nl
\n      send($sid$, batch of message);       \cm{where $sid$ is the destination server for this batch}                               \nl
       	delete($outbuf$, $sid$, $fatm$);                                      \ul
\p  $\rbrace$                                                                          \nl
    $inbuf$ = split(receive());		                                                       \nl
    tmpNode = $node_i$;                                                                \nl
    \While\ (counter $\geq$ 0) \{ \label{alg:ccsynch:while2}                           \nl 
\n      counter = counter - 1;                                                         \nl
        tmpNodeNext=tmpNode$\rightarrow$next;                                          \nl
        tmpNode$\rightarrow$ret = find in $inbuf$ the response for tmpNode$\rightarrow$id;                     \nl
        tmpNode$\rightarrow$completed = true;     \nl
        tmpNode$\rightarrow$wait = false;  \cm{ unlock the spinning thread\label{alg:ccsynch:set_wait} }          \nl
        tmpNode = tmpNodeNext; \label{alg:ccsynch:combine_next2}                       \ul
\p  \}                                                                                 \nl
    tmpNode$\rightarrow$wait = false;         \cm{ unlock next node's owner\label{alg:ccsynch:unlock_next} }      \nl
    \return\ $node_i\rightarrow$ret;\label{alg:ccsynch:combiner_returns}                \ul
\p\}                                                                                  
\end{code}
\end{algorithm}

\CCSYNCH\ employs a list which contains
one node for each client that has initiated an operation; 
the last node of the list is a dummy node.
After announcing its request by 
by recording it
in the last node of the list (i.e., in the 
dummy node) and by inserting a new node as the last node of the list
(which  will comprise the new dummy node),
a client tries to acquire 
a global lock (line~\lref{alg:ccsynch:swap}) which is implemented as a queue lock.
The client that manages to acquire the lock,
called the {\em combiner}, batches those active requests, recorded in the list,
that target the same server 
(lines~\lref{alg:ccsynch:combine}-\lref{alg:ccsynch:combine_next}) 
and forwards them to this server (line~\lref{alg:ccsynch:send_fat}).
Thus, at each point in time, the combiner plays the role of the island master.
When the island master receives (a batch of) responses from a server, it records 
each of them in the appropriate element of the request list to inform active clients 
of the island about the completion of their requests 
(lines\lref{alg:ccsynch:while2}-\lref{alg:ccsynch:combine_next2}). 
In the meantime, each such client performs spinning (on the element
in which it recorded its request) until either the
response for its request has been fulled by the island master or the global lock has 
been released (line~\lref{alg:ccsynch:spinning}).

We use the list of requests to implement the global lock as a 
{\em queue lock}~\cite{C93, MLH94}.
The process that has recorded its request in the head node of the list plays the role of the combiner.

\section{Experimental Evaluation}
\label{perf}
We run our experiments on the Formic-Cube~\cite{formic},
which is a hardware prototype of a $512$ core,  
non-cache-coherent machine.
It consists of $64$ boards with $8$ cores each (for a total of 512 cores).
Each core owns $8$ KB of private L1 cache, and $256$ KB of private L2 cache. 
None of these caches is hardware coherent.
The boards are connected with a fast, lossless 
packet-based network forming a 3D-mesh with a diameter of $6$ hops.
Each core is equipped with its own local {\em hardware mailbox},
an incoming hardware FIFO queue, whose size is $4$ KB. 
It can be written by any core and read by the core that owns it. 
One core per board plays the role of the island master
(and could be one of the algorithm's servers), 
whereas the remaining $7$ cores of the board serve as clients.

Our experiments are similar to those presented in~\cite{fatourou:spaa11,FK12,MS96}.
More specifically, $10^7$ pairs of requests (\PUSH\ and \POP\ or \ENQUEUE\ and \DEQUEUE) 
 are executed in total, as the number of cores increases.
To make the experiment more realistic, a random local work (up to $512$ dummy loop iterations) is 
simulated between the execution of two consecutive requests by the same thread
as in~\cite{fatourou:spaa11,FK12,MS96}. 
To reduce the overheads for the memory allocation of the stack nodes, we allocate a pool of nodes
(instead of allocating one node each time). 

In Figure~\ref{fig:perf-queue}.a, we experimentally compare the performance of the
centralized queue (\CQUEUE)
to the performance of its hierarchical version (\HQUEUE), 
and to those of the hierarchical versions of the directory-based queue (\DQUEUE)
and the token-based queue (\TQUEUE).
We measure the average throughput achieved by each algorithm.
As expected, \CQUEUE\ does not scale well. 
Specifically, the experiment shows that for more 
than $16$ cores in the system,
the throughput of the algorithm remains almost the same.
We remark that the clients running on these $16$ cores 
do not send enough messages to fill up the mailbox of the server. 
This allows us to conclude that 
when the server receives about $16$ messages or more,
for reading these messages, processing the requests
they contain, and sending back the responses to clients, 
the server ends up to be always busy.
We remark that reading each message from the mailbox causes
a cache miss to the server. So, 
the dominant factor at the server side in this case 
is to perform the reading of these messages from the mailbox.


These remarks are further supported by studying the experiment for the \HQUEUE\ implementation.
Specifically, the throughput of \HQUEUE\ does not further increase when the number of cores
becomes $64$ or more. 
Remarkably, the $64$ active cores are located in $8$ boards,
so there exist $8$ island masters in the system. Each island master sends 
two messages to the server -- one for batched enqueue and 
one for batched dequeue requests. So, again, the server becomes saturated 
when it receives about $16$ messages. These messages
are read from the mailbox in about the same time  as 
in the setting of \CQUEUE\ with $16$ running cores. 
However, in the case of \HQUEUE, each message
contains more requests to be processed by the server.
Therefore, the average time needed to process a request is now smaller
since the overhead 
of reading and processing a message is divided over the number of 
requests it contains.
Notice that now, the server has more work to do in terms of processing
requests.
The experiment shows that the time required for this 
is evened out by the time saved for processing each request. 

Similarly to \CQUEUE\ and \HQUEUE, the \DQUEUE\ 
implementation uses a centralized component, namely 
the synchronizer. However, contrary to the \HQUEUE\ case, 
in the \DQUEUE, each island master sends one 
message instead of two, which contains the number of 
both the enqueues and dequeues requests. 
Since we do not see the throughput
stop increasing at $128$ cores, 
we conclude that now the dominant factor is not 
the time that the server requires to read
the messages from its mailbox.
The \DQUEUE\ graph of Figure~\ref{fig:perf-queue}.a 
shows that \DQUEUE\ scales well
for up to $512$ cores.
Therefore, in the \DQUEUE\ approach, the synchronizer does not 
pose a scalability problem. The reason for this 
is, not only that the synchronizer receives a smaller number of messages, 
but also that it has to do a simple 
arithmetic addition or subtraction for each batch of 
requests that it receives. This computational effort 
is significantly smaller than those 
carried out by the centralized component in the \CQUEUE\ 
and \HQUEUE\ implementations.  Therefore, it is important 
that the local computation done by a server be small. 

\begin{figure}[!t]
\centering
\subfigure{\includegraphics[width=0.99\linewidth]{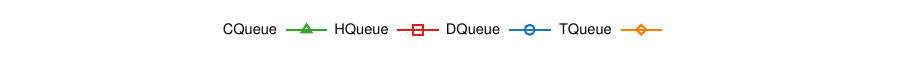}}
\subfigure{\includegraphics[width=0.99\linewidth]{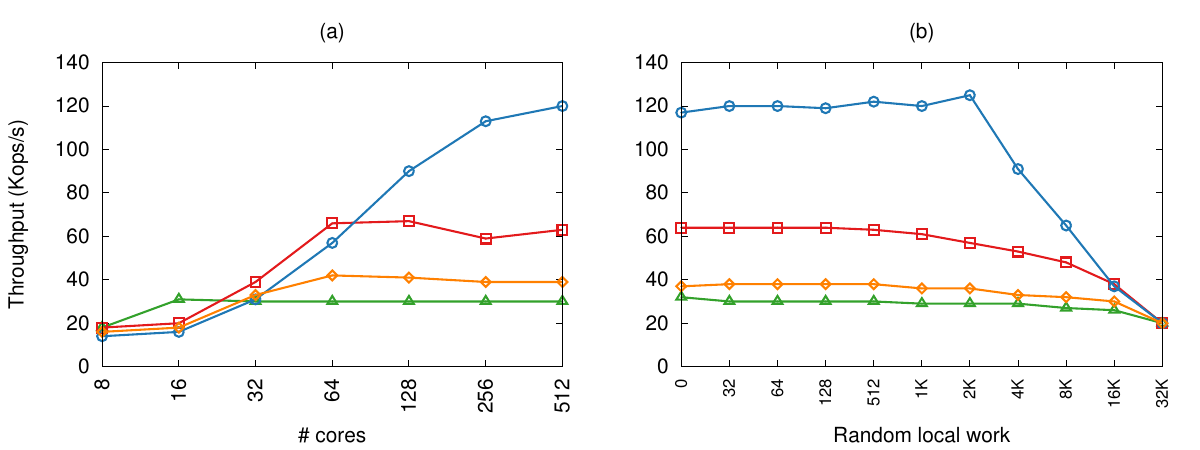}}
\caption{Performance evaluation of (a) distributed queue implementations,
(b) distributed queue implementations while executing different 
amounts of local work ($512$ cores).
}
\label{fig:perf-queue}
\end{figure}

\begin{figure}[!t]
	\centering
	\subfigure{\includegraphics[width=0.99\linewidth]{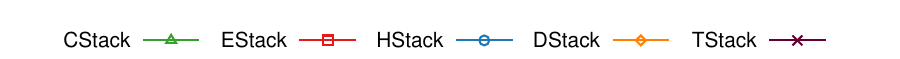}}
	\subfigure{\includegraphics[width=0.99\linewidth]{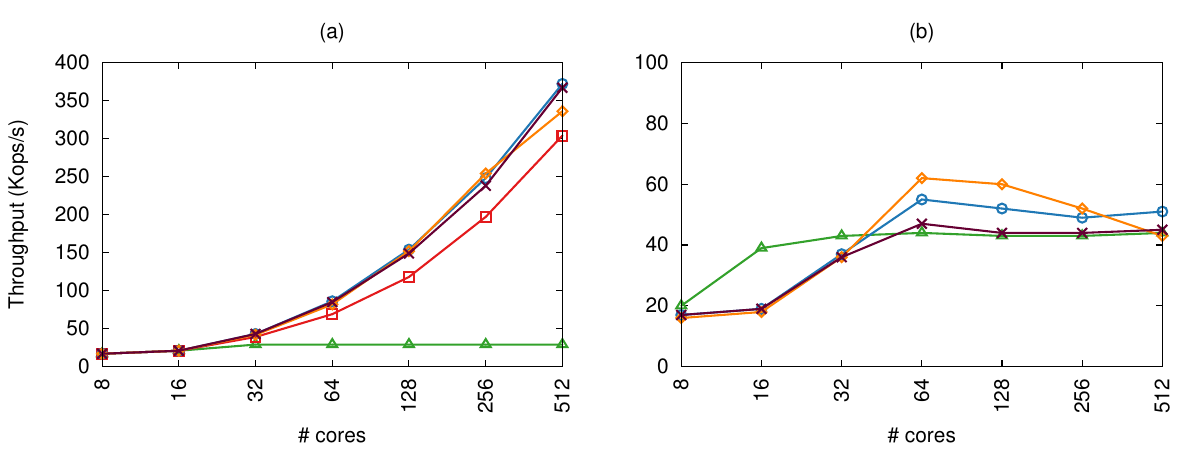}}
	\caption{Performance evaluation of (a) distributed stack implementations with elimination,
		(b) distributed stack implementations without elimination.
	}
	\label{fig:perf-stack}
\end{figure}

Notice that in the \DQUEUE\ implementation, 
the actual request processing takes place on the hash 
table servers. 
So, clients do not initiate requests 
as frequently as in the previous algorithms, 
since they also have to communicate with the hash table servers. 
Moreover, the processing in this 
case is shared among the hash-table servers and therefore, 
this processing does not 
cause scalability problems. It follows that load 
balancing is also an important factor affecting 
scalability. In the case of the \DQUEUE\, the local 
work on the synchronizer is a linear function of the 
amount of island masters. On the contrary, in the other 
two implementations, it is a function of the amount of 
clients. It follows that this is another reason 
for the good scalability observed on the \DQUEUE\ 
implementation.

We remark that when the amount of clients, and therefore, of island 
masters, is so large as to cause saturation on the 
synchronizer, a tree-like hierarchy of island masters 
would solve the scalability problem. There, the \DQUEUE\ 
algorithm can offer a trade-off between overloaded 
activity of the centralized component and the latency 
that is caused by the height of the tree hierarchy. 

\ignore{
It seems that 
the hash table computation plus the work of the clients 
leads to non-linear speed-up. The non-linear speed-up may 
also be due to the fact that more messages have to pass 
through the same number of paths, leading the server 
underutilized. 
}

Figure~\ref{fig:perf-queue}.a further shows the observed 
throughput of \TQUEUE. The behavior of \TQUEUE\ in terms 
of scalability follows that of \HQUEUE. Notice that 
\TQUEUE\ works in a similar way as \HQUEUE, with the 
difference that the identity of the centralized component 
may change, as the token moves from server to server. 
This makes \TQUEUE\ more complex than \HQUEUE. 
As a result, 
its throughput is lower than that of \HQUEUE. 
However, \TQUEUE\ is a nice 
generalization of \HQUEUE, which can be used in cases where 
the expected size of the required queue is not known in 
advance and where a moderate number of cores are active. 
\ignore{
Although in the \TQUEUE\ case, each message received by 
the token server batches $7$ requests, the local computation 
that is necessary for the handling of the token creates a 
significant overhead, that starts to dominate when more 
than $64$ cores are active, i.e. when merely $8$ messages 
are in the server's mailbox. 
}

In Figure~\ref{fig:perf-queue}.b, we fix the number of 
cores to $512$ and perform the experiment for several 
different random work values ($0$ - $32$K). It is shown 
that, for a wide range of values ($0$-$512$), we see no 
big difference on the performance of each algorithm. 
This is so because, for this range of values, 
the cost to perform the requests dominates 
the cost introduced by the random work. When the random work becomes too
high (greater than $32$K dummy loop iterations), the 
throughput of all algorithms degrades and the performance 
differences among them become minimal, since the amount 
of random work becomes then the dominant performance 
factor.


In Figure~\ref{fig:perf-stack}.a, we experimentally compare the 
performance of the centralized stack (\CSTACK) with its hierarchical 
version where the island master performs elimination (\ESTACK), an 
improved version of \ESTACK\ where the island master performs batching 
(\BSTACK), and the hierarchical versions of the directory-based stack (\DSTACK)
and the token based-stack (\TSTACK). As expected, the centralized 
implementation does not scale for more than $16$ cores. All other 
algorithms scale well for up to $512$ cores. This shows that the 
elimination technique is highly-scalable. It is so efficient that 
it results in no significant performance differences between the 
algorithms that apply it. 

\ignore{
Remarkably, this experiment additionally 
shows that \TSTACK\ does not have any significant overheads in 
comparison to \BSTACK\ if the number of servers that store stack 
data is not too big. Thus, \TSTACK\ is a nice generalization of 
\BSTACK\ since \TSTACK\ fairly distributes the stack data in a 
locality-sensitive way and ensures load balancing when necessary.
}

In order to get a better estimation of the effect that the 
elimination technique has on the algorithms, we experimentally 
compared the performance that \CSTACK, \ESTACK, \BSTACK, \TSTACK, 
and \DSTACK\ can achieve, when they are not performing elimination. 
Figure~\ref{fig:perf-stack}.b shows the obtained throughput. 
Notice that the scalability characteristics of \CSTACK, \BSTACK,
and \TSTACK\ are similar to those of \CQUEUE, \HQUEUE, and \TQUEUE.
This is not the case for \DSTACK. 
The reason for this is the following. In our experiment, each client performs pairs of 
push and pops. By the way the synchronizer works, the push request and the pop request 
of each pair are often assigned the same key. This results in contention at the hash table.
In an experiment where the number of pushes is not equal to the number of pops,
the observed performance would be much better than that of other algorithms for all numbers of cores.

Figure~\ref{fig:msgs}.b shows the total number 
of messages sent in each experiment presented in Figure~\ref{fig:perf-queue}.a. 
Remarkably, there is an inverse relationship between these results and 
those of Figure~\ref{fig:perf-queue}.a. 
For instance, \DQUEUE\  circulates the largest number of messages
in the system. However, \DQUEUE\ is the implementation that scales best. 
Thus, the total number of messages is not necessarily 
an indicative factor of the actual performance.
\ignore{sheds more 
light on this issue, as it presents the maximum number of messages 
that a single server receives for a given implementation. }
This is so because a server may easily become saturated even by a moderate 
amount of messages, if the required processing is important. This 
is for example the case in the \CQUEUE\ implementation. 
Therefore, in order to be scalable, an implementation should avoid 
overloading the servers in this manner. 
Backed up by Figure~\ref{fig:msgs}.b,
becomes aparent that good scalability is offered when the implementation 
ensures good load balancing between the messages that the servers 
have to process.

\ignore{
\begin{figure}[!t]
\centering
\includegraphics[width=0.99\linewidth]{queues.eps}
\caption{Performance evaluation of (a) distributed queue implementations,
(b) distributed queue implementations while executing different 
amounts of local work ($512$ cores),
and (c) total number of messages.}
\label{fig:msgs-queue}
\end{figure}
}

\begin{figure}[!t]
	\centering
	\includegraphics[width=0.99\linewidth]{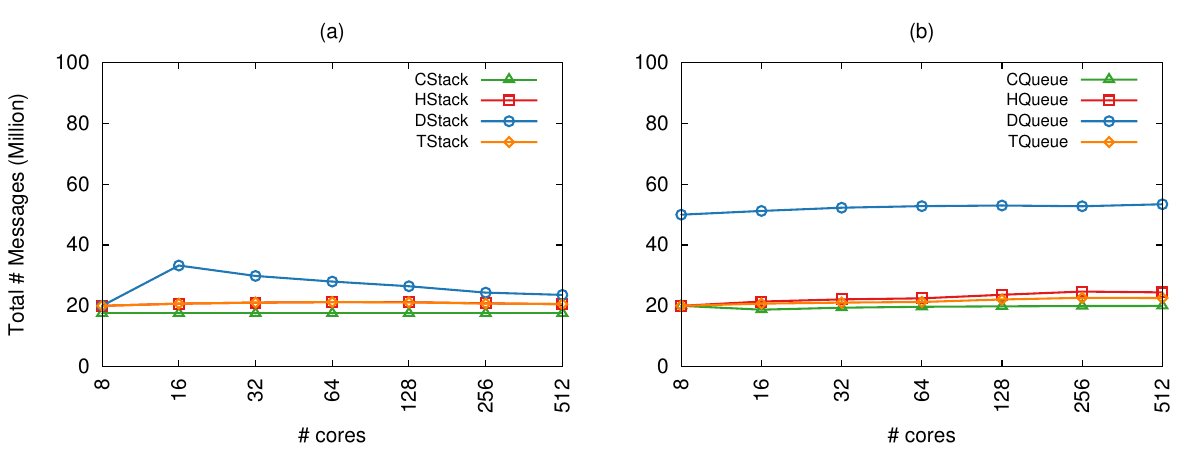}
	\caption{Total number of messages received in the proposed implementations by all servers in the system for $10^7$ operations.}
	\label{fig:msgs}
\end{figure}

\begin{figure}[!t]
\centering
\includegraphics[width=0.99\linewidth]{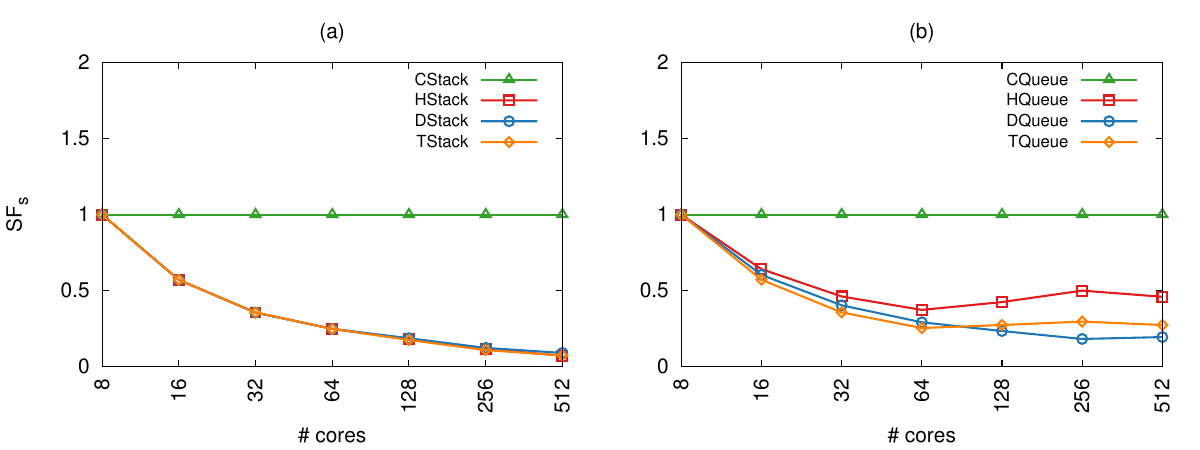}
\caption{Scalability factors of presented algorithms.}
\label{fig:sfs-stack-queue}
\end{figure}

We distill the empirical observations of this 
section into a metric. 
As observed, 
achieving load balancing in terms of evenly distributing both messages 
and the processing of requests to servers is important for ensuring scalability.
If we fix an algorithm ${\cal A}$, 
this is reflected in the number of messages received 
by each server $s$ for performing $m$ requests when executing ${\cal A}$. By 
denoting this number as $msg_s$, we can define 
the {\em scalability factor} $sf_s$ of a server $s$ as follows:
\begin{equation}
sf_s = \lim_{m \rightarrow \infty} \frac{msg_s}{m} 
\end{equation}
where we assume that a maximum number of clients repeatedly initiate requests
and these clients are scheduled fairly.
We define the scalability factor $sf$ of ${\cal A}$ as the maximum 
of the scalability factors of all servers.
\begin{equation}
sf = \max_{s}\{sf_s\} 
\end{equation}

We remark that a low scalability factor indicates good scalability behavior.
The intuition behind this metric is that the lower the $sf_s$ fraction is, 
the more requests are batched in a message that reaches a server. 
Computational overhead for the reading, handling and decoding 
of a single message is then spread over multiple requests.
As a scalability indicator, the scalability 
factor shows that it is of more interest to attempt not to minimize 
the total number of messages that are trafficked in a system, but 
to design implementations that minimize the total number 
of messages that a server has to process.
Figure~\ref{fig:sfs-stack-queue} shows graphs of the scalability factors 
of the the proposed implementations. The results agree with this
theoretical perception.

\ignore{
\lchanged{In summary, the insights from our experiments 
are the following: It is apparent that the hierarchical approach has 
significant performance benefits since it decreases the number of 
messages received by servers and combined with batching, communication 
and computation are greatly minimized. Communication in considered 
architectures is fast and not always the dominant factor. 
\here{no experiments for lists, though...:}
\sout{Experiments 
show that parallelism should not be sacrificed for less communication: 
in list search and delete, sending a message to all servers results in 
better performance than reducing parallelism (as is the case for list 
insert).} Waiting must be avoided by the servers as much as possible but 
it could be the preferable choice for clients.
}
}

\section{Implementation of Shared-Memory Primitives}
\label{sec:primitives}
\ignore{The conclusions drawn from the previous study are that in order to support most of the 
methods the library \texttt{sun.misc.unsafe} uses, we need to implement the {\em atomic 
updates}, the {\em compare-and-swap primitives}, and the pair of \texttt{pak()/unpar()} 
methods used for thread synchronization. 
In addition to these, we are going to implement the 
primitives {\em fetch-and-increment} and {\em swap}, to implement numeric operations used by 
the methods described in section \ref{sec:primitives}, i.e. \texttt{getAndAdd()}. 

These primitives need to be implemented on a lower level than the Java Virtual Machine, 
since \texttt{sun.misc.unsafe} mostly calls primitives that need either to access the 
system or hardware resource. Thus, when the system can provide these primitives, the 
implementations of the aforementioned packages will come for free since these classes are 
dependent to the methods called from the library \texttt{sun.misc.unsafe}. 
}

\lchanged{In Java, synchronization primitives  are provided as methods of the library
sun.misc.unsafe.}
\lchanged{In order to support most of the methods {\tt sun.misc.unsafe} 
uses, we need to implement the atomic updates, the compare-and-swap primitives, 
and the pair of {\tt park()/unpark()} methods used for thread synchronization. 
Once these primitives have been supported, the implementations of several data 
structures that are provided in {\tt java.util.concurrent} will come for free. 

In addition to these primitives, we have implemented the primitives 
fetch-and-increment and swap, to implement numeric operations used 
by the methods of package \texttt{java.util.concurrent.atomic}, described in Section~\ref{sec:primitives}, e.g. {\tt getAndSet()}. 
%
These primitives need to be implemented on a lower level than the Java
Virtual Machine, since {\tt sun.misc.unsafe} mostly calls primitives that 
need either to access the system or hardware resource.}

We have to point out that Formic does support reads and stores to remote memory locations 
but does not provide coherence. Thus, this remote store is different from the atomic write 
primitive we want to implement, due to the atomicity that the second provides. By adding this 
primitive, we aim to give safety when simple writes occur concurrently with other atomic 
primitives, such as \CAS, etc..

Hence, we implemented the following atomic primitives:
\begin{itemize}
	\item Read, which taks as arguments a memory address and returns the value that 
	is stored in this address.
	\item Write, which takes as arguments a memory address and a value, and stores 
	the new value to this address. 
	\item Fetch-And-Increment, which takes as arguments a memory address and a value,
	and adds the new value to the existing one stored in the address. It then returns 
	the new value. 
	\item Swap, which takes as arguments a memory address, and a new value, and replaces
	the stored value with the new one. It then returns the previous value. 
	\item Compare-And-Swap, which takes as arguments a memory address, an old value and
	a new value. If the value stored in the memory address equals to the old value, then 
	it is replaced by the new value. Then, the function returns the old value. 
\end{itemize}

\section{Atomic Accesses Support}
\label{sec:atomics}

The \texttt{java.util.concurrent.atomic} package provides a set of
atomic types (as classes).  Instances of these classes must be
atomically accessed.  In shared-memory architectures these classes are
implemented by delegating the complexity of synchronization handling
to instructions provided by the underlying architecture, e.g., memory
barriers, compare and swap etc.  In the case of Formic, however, such
instructions are not available.  As a result the atomic types need to
be implemented in software.

A naive implementation is to delegate the handling of such operations
to a manager similar to the monitor manager, presented in Section 2.3 
of Deliverable 1.1. This manager would be responsible for holding
the values associated to instances of atomic types, as well as, for
performing atomic operations on them.  
Such a manager is capable of reducing the memory traffic
regarding synchronization in some cases.  


\begin{algorithm}[!ht]
\caption{Example of getAndSet implementation.}
\label{code:java0}
\begin{code}
  \lreset
  \firstline
public final int getAndSet(int newValue) \{									\nl
\n    \For\ (;;) \{															\nl
\n      int current = get(); 				\cm{Synchronization point}	    \nl
        \If\ (compareAndSet(current, newValue)) \cm{Synchronization point}	\nl
\n            \return\ current;												\nl
\p \p   \} \p 																\nl
\} 
\end{code}
\end{algorithm}

\lchanged{For instance, in the implementation depicted in Algorithm~\ref{code:java0}}, 
the JVM constantly tries to set the value of the object to
\texttt{newValue} before its value changes between the \texttt{get()}
and the \texttt{compareAndSet} invocations.  This may result in many
failures and unnecessary synchronizations in case of high contention.
When using a manager there is no need for such a loop.  Since the
accesses are only possible by the manager, the manager can assume that
the value will never change between the \texttt{get()} and a
consequent \texttt{set()} invocation.  As a result we could implement
the logic of the above function in the manager and avoid the extra
communication from the loop iterations.  The equivalent algorithm in
the manager \lchanged{is shown in Algorithm~\ref{code:java1}.}


\begin{algorithm}[!ht]
\caption{Implementation of getAndSet through the monitor manager.}
\label{code:java1}
\begin{code}
  \firstline
public final int getAndSet(int newValue) \{		\nl
\n    int current = get();						\nl
      set(newValue);							\nl 
      \return\ current;							\nl
\p \} 
\end{code}
\end{algorithm}

A similar effect can be achieved by using the \texttt{synchronized}
classifier.  Since the code inside a \texttt{synchronized} block or
method can be seen as atomic we can implement getAndSet as above but
without the need of a centralized manager by just adding
\texttt{synchronized} to the method classifiers, 
\lchanged{as shown in Algorithm~\ref{code:java2}.}


\begin{algorithm}[!h]
\caption{Implementation of getAndSet using \texttt{synchronized}.}
\label{code:java2}
\begin{code}
  \firstline
public final synchronized int getAndSet(int newValue) \{\nl
\n    int current = get();								\nl
      set(newValue);									\nl
      \return\ current;									\nl
\p\} 
\end{code}
\end{algorithm}

Note that, in shared-memory processors, this approach is probably less
efficient than the first one when there is no contention on the
object.  In the lack of contention the first implementation would just
perform a read instruction and a compare and swap, while in the third
implementation it would also need to go through the monitor
implementation of the virtual machine (at minimum an extra compare and
swap, and a write).  On the Formic, however, where there is no compare
and swap instruction, the first implementation would require at least
two request messages (\texttt{get} and \texttt{compareAndSet}) to the
manager and the corresponding two replies per iteration.  The third
implementation would also need two requests (a monitor acquire and a
monitor release) and the corresponding two replies but with the
difference that it never needs to repeat this process, resulting in
reduced memory traffic.

To further improve this approach we avoid the use of regular monitors
and replace them with readers-writers monitors.  This way in the case
of contented reads we do not restrict access to a single thread,
allowing for increased parallelism.  As a further optimization we
slightly change the semantics of \texttt{compareAndSet} and make it
\textit{lazely} return \texttt{False} in case that the object is owned
by another writer.  This behavior is based on the heuristic that if an
object is owned by a writer it is going to be written and the value
will probably be different than the one the programmer provided as
expected to the \texttt{compareAndSet}.

\subsection{Readers-Writers Implementation}
\label{sec:rw-monitors}

The readers-writers monitors are implemented using the same monitor
manager, presented in Section 2.3 of Deliverable 1.1, but with different
operation codes.  We introduce four new operations code,
\texttt{READ\_LOCK}, \texttt{WRITE\_LOCK}, \texttt{READ\_UNLOCK}, and
\texttt{WRITE\_UNLOCK}.  For each readers-writers monitor we keep two
different queues, the readers queue and the writers queue.  The
readers queue holds the thread IDs of the threads waiting to acquire a
read-lock on the monitor.  Correspondingly, the writers queue holds
the thread IDs of the threads waiting to acquire a write-lock on the
monitor.

When a monitor is read-locked we hold the count of threads sharing
that read monitor, while for write-locked monitors we hold the thread
ID of the thread owning that write monitor.  As long as the writers
lock is empty the monitor manager services read-lock requests by
increasing the counter and sending an acknowledgement message to the
requester.  When a write-lock request arrives, it gets queued to the
writers queue and read-lock requests start being queued in the
readers-queue instead of being served, essentially giving priority
to the writer-lock requests. Eventually, when all the readers release 
the read-locks they hold the write-lock requests will start being 
served, and read-lock requests will be served  again only after 
the writers-queue gets empty again.

We chose to give priority to the write locks since in most algorithms
writing a variable is less common than reading it (e.g., polling).  In
order to implement a more fair mechanism we could set a threshold on
the number of read-lock and write-lock requests being served at each
phase.

Finally, to support the \textit{lazy fail} for \texttt{compareAndSet}
we introduce another operation code, the \texttt{TRY\_LOCK}.  This is 
a special write-lock request that if the monitor is not free, it does
not get queued in the write-queue, but a negative acknowledgement is
send back to the requester, notifying him that the monitor is not
free.

\section{Conclusions}
\label{discussion}
We have presented a comprehensive collection of data structures 
for future many-core architectures. The collection could be utilized 
by runtimes of high-productivity languages ported to such 
architectures. 
Our collection provides 
all types of concurrent data structures 
supported in Java's concurrency utilities. Other high-level 
productivity languages that provide shared memory for thread 
communication could also benefit from our library. Specifically, 
we provide several different kinds of queues, including static, 
dynamic and synchronous; our queue (or deque) implementations can 
be trivially adjusted to provide the functionality of {\em delay 
queues} (or {\em delay deques})~\cite{lea:book06}. We do not provide 
a priority queue implementation, since it is easy to adopt
a simplified version of the priority queue presented  
in~\cite{mans98} in our setting. Our list implementations
provide the functionality of sets, whereas the simple hash 
table that we utilize to design some of our data structures 
can serve as a hash-based map. 

We have outlined 
hierarchical versions of the data structures that we have implemented. 
These implementations take into consideration challenges that are 
raised in realistic scalable multicore architectures where 
communication is implicit only between the cores of an island
whereas explicit communication is employed among islands.

We have performed experimental evaluation of the implemented data 
structures in order to examine both their throughput and energy 
efficiency.  Our 
experiments show the performance and scalability characteristics 
of some of the techniques on top of a non cache-coherent hardware 
prototype. They also illustrate the scalability power of the 
hierarchical approach in such machines. We believe that the proposed 
implementations will exhibit the same performance characteristics, 
if programmed appropriately, in prototypes with similar 
characteristics as FORMIC, like Tilera or SCC. We expect this also 
to be true for future, commercially available, such machines. 

\ignore{
We have proposed a new metric, the {\em scalability factor}, 
in an effort to provide deeper insight into how to design 
data structures suitable for the considered architectures. 
We have deduced this metric by analyzing our experimental 
results and have shown experimental measurements that verify 
it.

As a further example of customizable scalable data structure, 
we have implemented a leaf-oriented (a,b)-tree, where $a, b$ 
are integers such that $a \geq 2$ and $b \geq 2a$. 
A leaf-oriented tree stores all keys currently in the dictionary 
in its leaves. Internal nodes of the tree store keys that may 
or may not be in the dictionary, used to direct a search operation 
along the path to the correct leaf. To keep the tree balance, 
we apply {\em preventive splitting} during inserts and {\em preventive 
merging} during deletes in a top-down approach (from the root 
to the leaves). To do so, we introduce the concepts of safe nodes. 
A node is {\em insert-safe} if it has less than $4$ children; 
it is insert-unsafe otherwise. A node is {\em delete-safe} if 
it has more than $2$ children; it is {\em delete-unsafe} otherwise. 
A node is always {\em search-safe}. The tree is optimized for non 
cache-coherent architectures. Details of the implementation are 
provided in the submitted {\em additional material}.

\ignore{We present a distributed implementation of a dictionary using a leaf oriented (a,b)-tree, where 
$a \geq 2$ and $b \geq 2a$; for simplicity, we describe the case where $a = 2$ and $b = 4$. 
A (2,4)-tree is a perfectly balanced tree, each node of which has either $2$, $3$ or $4$ children.
A leaf-oriented tree stores all keys currently in the dictionary in its leaves. 
Internal nodes of the tree store keys that may or may not be in the dictionary,
used to direct a search operation along the path to the correct leaf. 
To keep the tree balance, we apply {\em preventive splitting} during inserts 
and {\em preventive merging} during deletes in a top-down approach (from the root to the leaves). 
To do so, we introduce 
the concepts of safe nodes. 
A node is {\em insert-safe}
if it has less than $4$ children; it is insert-unsafe otherwise.
A node is {\em delete-safe} if it has more than $2$ children; it is {\em delete-unsafe} otherwise. 
A node is always {\em search-safe}.}

Our work is a first step towards designing customized scalable data 
structures for future non cache-coherent many-core architectures. It 
therefore reveals several interesting open problems. Our list 
implementations are highly parallel for search and delete operations 
but slow for insert. It is an interesting open problem to come up
(if possible) with a more update-efficient implementation of a list.
A simple variant of the sorted list could support range queries
for free. It is interesting to investigate whether such queries 
can be supported without sacrificing efficiency for updates. 

Investigating whether caching or replication could increase scalability 
or enhance performance is also of interest. Coping with fault-tolerance 
is another fruitful research direction. To make key-based implementations 
such our tree more scalable, we can partition the key universe into ranges 
and have one distributed (a,b)-tree for each range (similarly to what is 
suggested in~\cite{ailamaki}). It would be interesting to design different 
types of trees for our setting.

\ignore{
In our tree implementation, all requests are sent to the root node $s$, 
which might become overloaded. The hierarchical approach reduces the number 
of messages send to $s$ but it does not reduce the amount of work $s$ has 
to perform. To make our implementation more scalable, we can partition 
the key universe into ranges and have one distributed (a,b)-tree for each 
range (similarly to what is suggested in~\cite{ailamaki}). Then, the root 
of each (a,b)-tree of the forest can be stored in different servers. 
It would be interesting to design different types of trees for our setting.
}
%
}


\bibliographystyle{unsrt}
\bibliography{main}

\end{document}